\definecolor{ForestGreen}{rgb}{0.1333,0.5451,0.1333}
\newcommand{\showccc}[0]{0}
\newcommand{\ccc}[2][nothing]{%
	\ifthenelse{\showccc=0}{}{
		\ensuremath{^{\Lsh\Rsh}}\marginpar{\raggedright\tiny\textsf{%
				\ifthenelse{\equal{#1}{nothing}}{}{\textbf{#1}\\}#2}}}}
\newcounter{hours}\newcounter{minutes}
\newcommand{\hhmm}{%
	\setcounter{hours}{\time/60}%
	\setcounter{minutes}{\time-\value{hours}*60}%
	\ifthenelse{\value{hours}<10}{0}{}\thehours:%
	\ifthenelse{\value{minutes}<10}{0}{}\theminutes}
\newtheorem{proposition}{Proposition}
\newtheorem{corollary}{Corollary}
\newtheorem{definition}{Definition}
\newtheorem{remark}{Remark}
\newtheorem{lemma}{Lemma}
\newtheorem{fact}{Fact}
\newtheorem{observation}{Observation}
\newcommand{\defeq}{:=}
\newcommand{\norm}[1]{\left\lVert#1\right\rVert}
\newcommand{\inprod}[2]{\left\langle#1, #2\right\rangle}
\newcommand{\tvd}[2]{\norm{#1 - #2}_{\textup{TV}}}
\newcommand{\fwc}{f_{\textup{wc}}}
\newcommand{\oracle}{\mathcal{O}}
\newcommand{\pih}{\hat{\pi}}
\newcommand{\yh}{\hat{y}}
\newcommand{\xh}{\hat{x}}
\newcommand{\fcomp}{f_{\textup{oracle}}}
\newcommand{\eps}{\epsilon}
\newcommand{\lam}{\lambda}
\newcommand{\argmin}{\textup{argmin}} 
\newcommand{\R}{\mathbb{R}}
\newcommand{\N}{\mathbb{N}}
\newcommand{\half}{\frac{1}{2}}
\newcommand{\thalf}{\tfrac{1}{2}}
\newcommand{\E}{\mathbb{E}}
\newcommand{\Var}{\textup{Var}}
\newcommand{\Nor}{\mathcal{N}}
\newcommand{\xset}{\mathcal{X}}
\newcommand{\tP}{\widetilde{P}}
\newcommand{\tQ}{\widetilde{Q}}
\newcommand{\id}{\mathbf{I}}
\newcommand{\tO}{\widetilde{O}}
\newcommand{\tOh}[1]{\tilde{O}\left(#1\right)}
\newcommand{\Par}[1]{\left(#1\right)}
\newcommand{\Brack}[1]{\left[#1\right]}
\newcommand{\Brace}[1]{\left\{#1\right\}}
\newcommand{\covar}{\boldsymbol{\Sigma}}
\newcommand{\tx}{\tilde{x}}
\newcommand{\alg}{\mathcal{A}}
\newcommand{\prop}{\mathcal{P}}
\newcommand{\tran}{\mathcal{T}}
\newcommand{\pistart}{\pi_{\textup{start}}}
\newcommand{\oracleplus}{\oracle^{+}}
\newcommand{\hp}{\hat{p}}
\newcommand{\AlternateSample}{\texttt{AlternateSample}}
\newcommand{\XSample}{\texttt{XSample}}
\newcommand{\MRW}{\texttt{FiniteSum-MRW}}
\newcommand{\IMRW}{\texttt{Inefficient-MRW}}
\newcommand{\ty}{\tilde{y}}
\newcommand{\by}{\bar{y}}
\newcommand{\bya}{\bar{y}_\alpha}
\newcommand{\sya}{y^*_\alpha}
\newcommand{\bal}{\bar{\alpha}}
\newcommand{\hal}{\hat{\alpha}}
\newcommand{\zstart}{Z_{\textup{start}}}
\newcommand{\jac}{\mathbf{J}}
\newcommand{\csg}{\texttt{Composite-Sample}}
\newcommand{\cssm}{\texttt{Composite-Sample-Shared-Min}}
\newcommand{\sjd}{\texttt{Sample-Joint-Dist}}
\newcommand{\yor}{\texttt{YSample}}
\newcommand{\dkl}{d_{\textup{KL}}}
\definecolor{burntorange}{rgb}{0.8, 0.33, 0.0}
\begin{document}

	\begin{titlepage}
		\def\thepage{}
		\thispagestyle{empty}
		
		\title{Structured Logconcave Sampling with a Restricted Gaussian Oracle} 
		
		\date{}
		\author{
			Yin Tat Lee\thanks{University of Washington and Microsoft Research, {\tt yintat@uw.edu}}
			\and
			Ruoqi Shen\thanks{University of Washington, {\tt shenr3@cs.washington.edu}}
			\and
			Kevin Tian\thanks{Stanford University, {\tt kjtian@stanford.edu}}
		}
		
		\maketitle

\abstract{
We give algorithms for sampling several structured logconcave families to high accuracy.\footnote{We say a sampler is ``high-accuracy'' if its mixing time has polylogarithmic dependence on the target accuracy $\epsilon$.} We further develop a reduction framework, inspired by \emph{proximal point methods} in convex optimization, which bootstraps samplers for regularized densities to generically improve dependences on problem conditioning $\kappa$ from polynomial to linear. A key ingredient in our framework is the notion of a ``restricted Gaussian oracle'' (RGO) for $g: \R^d \rightarrow \R$, which is a sampler for distributions whose negative log-likelihood sums a quadratic (in a multiple of the identity) and $g$. By combining our reduction framework with our new samplers, we obtain the following bounds for sampling structured distributions to total variation distance $\eps$.
\begin{itemize}
	\item For composite densities $\exp(-f(x) - g(x))$, where $f$ has condition number $\kappa$ and convex (but possibly non-smooth) $g$ admits an RGO, we obtain a mixing time of $O(\kappa d \log^3\tfrac{\kappa d}{\epsilon})$, matching the state-of-the-art non-composite bound \cite{LeeST20}. No composite samplers with better mixing than general-purpose logconcave samplers were previously known.
	\item For logconcave finite sums $\exp(-F(x))$, where $F(x) = \tfrac{1}{n}\sum_{i \in [n]} f_i(x)$ has condition number $\kappa$, we give a sampler querying $\widetilde{O}(n + \kappa\max(d, \sqrt{nd}))$ gradient oracles\footnote{For convenience of exposition, the $\widetilde{O}$ notation hides logarithmic factors in the dimension $d$, problem conditioning $\kappa$, desired accuracy $\epsilon$, and summand count $n$ (when applicable). A first-order (gradient) oracle for $f:\R^d \rightarrow \R$ returns $(f(x), \nabla f(x))$ on input $x$, and a zeroth-order (value) oracle only returns $f(x)$.} to $\{f_i\}_{i \in [n]}$. No high-accuracy samplers with nontrivial gradient query complexity were previously known.
	\item For densities with condition number $\kappa$, we give an algorithm obtaining mixing time $O(\kappa d \log^2\tfrac{\kappa d}{\epsilon})$, improving \cite{LeeST20} by a logarithmic factor with a significantly simpler analysis. We also show a zeroth-order algorithm attains the same query complexity.
\end{itemize}
}
 		
	\end{titlepage}

	\pagenumbering{gobble}
	\newpage
	\setcounter{tocdepth}{2}
	{
		\hypersetup{linkcolor=black}
		\tableofcontents
	}
	\newpage
	\pagenumbering{arabic}

\section{Introduction}
\label{sec:intro}

Since its study was pioneered by the celebrated randomized convex body volume approximation algorithm of Dyer, Frieze, and Kannan \cite{DyerFK91}, designing samplers for logconcave distributions has been a very active area of research in theoretical computer science and statistics with many connections to other fields. In a generic form, the problem can be stated as: sample from a distribution whose negative log-density is convex, under various access models to the distribution. 

Developing efficient algorithms for sampling from \emph{structured} logconcave densities is a topic that has received significant recent interest due to its widespread practical applications. There are many types of structure which densities commonplace in applications may possess that are exploitable for improved runtimes. Examples of such structure include derivative bounds (``well-conditioned densities'') and various types of separability (e.g.\ ``composite densities'' corresponding to possibly non-smooth regularization or restrictions to a set, and ``logconcave finite sums'' corresponding to averages over multiple data points).\footnote{We make this terminology more precise in Section~\ref{ssec:notation}, which contains various definitions used in this paper.} Building an algorithmic theory for sampling these latter two families, which are not well-understood in the literature, is a primary motivation of this work.

There are strong parallels between the types of structured logconcave families garnering recent attention and the classes of convex functions known to admit efficient first-order optimization algorithms. Notably, gradient descent and its accelerated counterpart \cite{Nesterov83} are well-known to quickly optimize a well-conditioned function, and have become ubiquitous in both practice and theory. Similarly, various methods have been developed for efficiently optimizing non-smooth but structured composite objectives \cite{BeckT09} and well-conditioned finite sums \cite{Allen-Zhu17}.

Logconcave sampling and convex optimization are intimately related primitives (cf.\ e.g.\ \cite{BertsimasV04, AbernethyH16}), so it is perhaps unsurprising that there are analogies between the types of structure algorithm designers may exploit. Nonetheless, our understanding of the complexity landscape for sampling is quite a bit weaker in comparison to counterparts in the field of optimization; few lower bounds are known for the complexity of sampling tasks, and obtaining stronger upper bounds is an extremely active research area (contrary to optimization, where matching bounds exist in many cases). Moreover (and perhaps relatedly), the toolkit for designing logconcave samplers is comparatively lacking; for many important primitives in optimization, it is unclear if there are analogs in sampling, possibly impeding improved bounds. Our work broadly falls under the themes of (1) understanding which types of structured logconcave distributions admit efficient samplers, and (2) leveraging connections between optimization and sampling for algorithm design. We address these problems on two fronts, which constitute the primary technical contributions of this paper.

\begin{enumerate}
	\item We give a general reduction framework for bootstrapping samplers with mixing times with polynomial dependence on a conditioning measure $\kappa$ to mixing times with linear dependence on $\kappa$. The framework is heavily motivated by a perspective on \emph{proximal point methods} in structured convex optimization as instances of optimizing composite objectives, and leverages this connection via a surprisingly simple analysis (cf.\ Theorem~\ref{thm:fzerocomp}).
	\item We develop novel ``base samplers'' for composite logconcave distributions and logconcave finite sums (cf.\ Theorems~\ref{thm:mainclaim},~\ref{thm:base_finitesum}). The former is the first composite sampler with stronger guarantees than those known in the general logconcave setting. The latter constitutes the first high-accuracy finite sum sampler whose gradient query complexity improves upon the na\"ive strategy of querying full gradients of the negative log-density in each iteration. 
\end{enumerate}

Using our novel base samplers within our reduction framework, we obtain state-of-the-art samplers for all of the aforementioned structured families, i.e.\ well-conditioned, composite, and finite sum, as Corollaries~\ref{thm:kd},~\ref{thm:kdcomp}, and~\ref{thm:improved_finitesum}. We emphasize that even without our reduction technique, the guarantees of our base samplers for composite and finite sum-structured densities are the first of their kind. However, by boosting their mixing via our reduction, we obtain guarantees for these structured distribution families which are essentially the best one can hope for without a significant improvement in the most commonly studied well-conditioned regime (cf.\ discussion in Section~\ref{ssec:contribs}).

We now formally state our results in Section~\ref{ssec:contribs}, and situate them in the literature in Section~\ref{ssec:prev}. Section~\ref{ssec:technical} is a technical overview of our approaches for developing our base samplers for composite and finite sum-structured densities (Sections~\ref{ssec:compositeintro} and~\ref{ssec:finitesumintro}), as well as our proximal reduction framework (Section~\ref{ssec:overview}). Finally, Section~\ref{ssec:roadmap} gives a roadmap for the rest of the paper.

\subsection{Our results}\label{ssec:contribs}

Before stating our results, we first require the notion of a restricted Gaussian oracle, whose definition is a key ingredient in giving our reduction framework as well as our later composite samplers.

\begin{definition}[Restricted Gaussian oracle]
	\label{def:rgoracle}
	$\oracle(\lam, v)$ is a \emph{restricted Gaussian oracle} (RGO) for convex $g: \R^d \rightarrow \R$ if it returns
	\[\oracle(\lambda, v) \gets \textup{sample from the distribution with density} \propto \exp\left(-\frac{1}{2\lambda}\norm{x - v}_2^2 - g(x)\right).\]
\end{definition}

In other words, an RGO asks to sample from a multivariate Gaussian (with covariance a multiple of the identity), ``restricted'' by some convex function $g$. Intuitively, if we can reduce a sampling problem for the density $\propto \exp(-g)$ to calling an RGO a small number of times with a small value of $\lam$, each RGO subproblem could be much easier to solve than the original problem. This can happen for a variety of reasons, e.g.\ if the regularized density is extremely well-conditioned, or because it inherits concentration properties of a Gaussian. This idea of reducing a sampling problem to multiple subproblems, each implementing an RGO, underlies the framework of Theorem~\ref{thm:fzerocomp}. Because the idea of regularization by a large Gaussian component repeatedly appears in this paper, we make the following more specific definition for convenience, which lower bounds the size of the Gaussian.

\begin{definition}[$\eta$-RGO]\label{def:etargo}
	We say $\oracle(\lam, v)$ is an $\eta$-restricted Gaussian oracle ($\eta$-RGO) if it satisfies Definition~\ref{def:rgoracle} with the restriction that parameter $\lam$ is required to be always at most $\eta$ in calls to $\oracle$.
\end{definition}

Variants of our notion of an RGO have implicitly appeared previously \cite{CousinsV18, MouFWB19}, and efficient RGO implementation was a key subroutine in the fastest sampling algorithm for general logconcave distributions \cite{CousinsV18}. It also extends a similar oracle used in composite optimization, which we will discuss shortly. However, the explicit use of RGOs in a framework such as Theorem~\ref{thm:fzerocomp} is a novel technical innovation of our work, and we believe this abstraction will find further uses.

\paragraph{Proximal reduction framework.} In Section~\ref{sec:framework}, we prove correctness of our proximal reduction framework, whose guarantees are stated in the following Theorem~\ref{thm:fzerocomp}.

\begin{restatable}{theorem}{restatefzerocomp}\label{thm:fzerocomp}
	Let $\pi$ be a distribution on $\R^d$ with $\tfrac{d\pi}{dx}(x) \propto \exp(-\fcomp(x))$ such that $\fcomp$ is $\mu$-strongly convex, and let $\eps \in (0, 1)$. Let $\eta \leq \frac 1 \mu$, $T = \Theta(\frac{1}{\eta\mu}\log{\frac{d}{\eta\mu\epsilon}})$ for some $\beta \ge 1$, and $\oracle$ be a $\eta$-RGO for $\fcomp$. Algorithm~\ref{alg:alternatesample}, initialized at the minimizer of $\fcomp$, runs in $T$ iterations, each querying $\oracle$ a constant number of times, and obtains $\eps$ total variation distance to $\pi$.
\end{restatable}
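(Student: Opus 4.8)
The plan is to recognize Algorithm~\ref{alg:alternatesample} as an alternating-marginal (Gibbs) sampler on a lifted distribution, and to show that each iteration contracts the KL divergence to $\pi$ by a factor $1/(1+\Omega(\eta\mu))$; the $\beta$-warm start controls the initial divergence, and Pinsker's inequality converts the final divergence bound into the desired total variation bound.

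\textbf{Lifting.} First I would introduce the joint density on $\R^d \times \R^d$,
\[\Pi(x,y) \ \propto\ \exp\Par{-\fcomp(x) - \tfrac{1}{2\eta}\norm{x - y}_2^2}.\]
Integrating out $y$ (a Gaussian integral with normalization independent of $x$) shows the $x$-marginal of $\Pi$ is $\pi$; the conditional law of $y$ given $x$ is $\Nor(x, \eta\id)$; and the conditional law of $x$ given $y$ has density $\propto \exp(-\tfrac{1}{2\eta}\norm{x-y}_2^2 - \fcomp(x))$, which is exactly what $\oracle(\eta, y)$ returns --- a legal call, since $\lam = \eta$ satisfies the $\eta$-RGO restriction. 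Hence one iteration of Algorithm~\ref{alg:alternatesample} --- draw $y_k \sim \Nor(x_k, \eta\id)$, then set $x_{k+1} \gets \oracle(\eta, y_k)$ --- is one sweep of the systematic-scan Gibbs sampler for $\Pi$, using a single $\oracle$ query per iteration, and the induced Markov chain on $x$ is reversible with stationary distribution $\pi$.

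\textbf{Reduction.} Write $\rho_k := \mathrm{law}(x_k)$. A $\beta$-warm start means $\rho_0 \le \beta\pi$ pointwise, so $\mathrm{KL}(\rho_0 \Vert \pi) = \E_{\rho_0}\log\tfrac{\rho_0}{\pi} \le \log\beta$. Granting the per-iteration contraction $\mathrm{KL}(\rho_{k+1}\Vert\pi) \le \tfrac{1}{1+\Omega(\eta\mu)}\mathrm{KL}(\rho_k\Vert\pi)$, we get $\mathrm{KL}(\rho_T\Vert\pi) \le (1+\Omega(\eta\mu))^{-T}\log\beta$, so taking $T = \Theta\Par{\tfrac{1}{\eta\mu}\log\tfrac{\log\beta}{\eps}}$ forces $\mathrm{KL}(\rho_T\Vert\pi) \le 2\eps^2$, whence Pinsker's inequality gives $\tvd{\rho_T}{\pi} \le \sqrt{\tfrac12 \mathrm{KL}(\rho_T\Vert\pi)} \le \eps$. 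This is precisely where the $\log\log\beta$ (from converting the warm start to the KL bound $\log\beta$) and $\log\tfrac1\eps$ terms of $T$ come from, matching the statement.

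\textbf{The crux: per-iteration KL contraction.} Since $\fcomp$ is $\mu$-strongly convex, $\pi \propto \exp(-\fcomp)$ is $\mu$-strongly log-concave and hence satisfies a log-Sobolev inequality with constant $\mu$ (Bakry--\'Emery). I would lower-bound the one-step KL decrease in two moves. First, since the forward half-step is convolution with $\Nor(0,\eta\id)$ --- equivalently a time-$\eta$ heat flow --- a short heat-flow computation (in the spirit of de Bruijn's identity), combined with the data-processing inequality for the backward (RGO) Gibbs step, lower-bounds $\mathrm{KL}(\rho_k\Vert\pi) - \mathrm{KL}(\rho_{k+1}\Vert\pi)$ by a quantity of order $\eta$ times the relative Fisher information of $\rho_{k+1}$ with respect to $\pi$. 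Second, the log-Sobolev inequality turns that relative Fisher information into $\Omega(\mu)\cdot\mathrm{KL}(\rho_{k+1}\Vert\pi)$, giving $\mathrm{KL}(\rho_k\Vert\pi) - \mathrm{KL}(\rho_{k+1}\Vert\pi) \ge \Omega(\eta\mu)\,\mathrm{KL}(\rho_{k+1}\Vert\pi)$, which rearranges to the claimed contraction. I expect this first move --- nailing the heat-flow bookkeeping and constants, and checking it uses only convexity of $\fcomp$ plus exactness of the RGO (no smoothness of $\fcomp$) --- to be the main obstacle; the lifting and the reduction above are routine once it is in hand.
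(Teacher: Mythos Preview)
Your lifting and reduction steps match the paper's setup, but your mixing argument takes a genuinely different route. The paper does not prove KL contraction; it bounds conductance directly. It observes (Observation~\ref{observe:pvst}) that $\tvd{\tran_x}{\tran_{x'}} \le \tvd{\prop_x}{\prop_{x'}}$ where $\prop_x = \Nor(x,\eta\id)$ is the law of $y_k$ given $x_{k-1}=x$, computes $\tvd{\prop_x}{\prop_{x'}} \le \tfrac12$ whenever $\norm{x-x'}_2 \le \sqrt\eta$ via a one-line KL-between-Gaussians bound plus Pinsker (Lemma~\ref{lem:pvsp}), and plugs $\Delta = \sqrt\eta$, $\Omega = \R^d$ into the black-box conductance mixing bound of \cite{ChenDWY19} (Proposition~\ref{prop:mixtime}), which needs only $\mu$-strong logconcavity of $\pi$. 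This is short and elementary: no log-Sobolev inequality, no Fisher information, no heat-flow calculus.

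Your KL-contraction strategy can in principle be made to work and would yield a cleaner entropic statement, but the crux as you sketch it has a gap. De~Bruijn's identity for the forward heat step gives Fisher information of the evolving density against Lebesgue (or a fixed Gaussian), not against $\pi$; and since the reference $\pi$ is itself being convolved in the forward step, the relevant quantity is the KL of $\rho_k\ast\Nor(0,\eta\id)$ against $\pi\ast\Nor(0,\eta\id)$, whose time-derivative is not simply a relative Fisher information. Data processing for the backward Gibbs step then only gives $\mathrm{KL}(\rho_{k+1}\Vert\pi) \le \mathrm{KL}\bigl(\rho_k\ast\Nor(0,\eta\id)\,\Vert\,\pi\ast\Nor(0,\eta\id)\bigr)$ with no Fisher term attached --- so the two ingredients you name do not combine to produce $\eta$ times $\mathrm{FI}(\rho_{k+1}\Vert\pi)$. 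Making this route rigorous requires interpreting the backward RGO step itself as a time-reversed diffusion and showing \emph{that} step contracts under curvature of $\fcomp$ (this is essentially the later proximal-sampler analysis); the paper's conductance argument sidesteps all of this.
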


In other words, if we can implement an $\eta$-RGO for a $\mu$-strongly convex function $\fcomp$ in time $\mathcal{T}_{\text{RGO}}$, we can sample from $\exp(-\fcomp)$ in time $\widetilde{O}(\tfrac{1}{\eta\mu} \cdot \mathcal{T}_{\text{RGO}})$. To highlight the power of this reduction framework, suppose there was an existing sampler $\alg$ for densities $\propto \exp(-f)$ with mixing time $\widetilde{O}(\kappa^{10}\sqrt{d})$, where $f: \R^d \rightarrow \R$ is $L$-smooth, $\mu$-strongly convex, and has condition number $\kappa = \tfrac{L}{\mu}$ (cf.\ Section~\ref{ssec:notation} for definitions).\footnote{No sampler with mixing time scaling as $\text{poly}(\kappa) \sqrt{d}$ is currently known.} Choosing $\eta = \tfrac{1}{L}$ and $\fcomp \gets f$ in Theorem~\ref{thm:fzerocomp} yields a sampler whose mixing time is $\widetilde{O}(\kappa \cdot \mathcal{T}_{\text{RGO}})$, where $\mathcal{T}_{\text{RGO}}$ is the cost of sampling from a density proportional to
\[\exp\Par{-\frac{L}{2}\norm{x - v}_2^2 - f(x)},\]
for some $v \in \R^d$. However, observe that this distribution has a negative log-density with constant condition number $\tfrac{L+ L}{L + \mu} \le 2$! By using $\alg$ as our RGO, we have $\mathcal{T}_{\text{RGO}} = \widetilde{O}(\sqrt{d})$, and the overall mixing time is $\widetilde{O}(\kappa \sqrt{d})$. Leveraging Theorem~\ref{thm:fzerocomp} in applications, we obtain the following new results, improving mixing of various ``base samplers'' which we bootstrap as RGOs for regularized densities. 

\paragraph{Well-conditioned densities.}

In \cite{LeeST20}, it was shown that a variant of Metropolized Hamiltonian Monte Carlo obtains a mixing time of $\tO(\kappa d \log^3 \tfrac{\kappa d}{\eps})$ for sampling a density on $\R^d$ with condition number $\kappa$. The analysis of \cite{LeeST20} was somewhat delicate, and required reasoning about conditioning on a nonconvex set with desirable concentration properties. In Section~\ref{ssec:kd}, we prove Corollary~\ref{thm:kd}, improving \cite{LeeST20} by roughly a logarithmic factor with a significantly simpler analysis.

\begin{restatable}{corollary}{restatekd}\label{thm:kd}
	Let $\pi$ be a distribution on $\R^d$ with $\tfrac{d\pi}{dx}(x) \propto \exp\Par{-f(x)}$ such that $f$ is $L$-smooth and $\mu$-strongly convex, and let $\eps \in (0, 1)$, $\kappa = \tfrac{L}{\mu}$. Assume access to $x^* = \argmin_{x \in \R^d} f(x)$. Algorithm~\ref{alg:alternatesample} with $\eta = \tfrac{1}{8Ld\log(\kappa)}$ using Algorithm~\ref{alg:xsample} as a restricted Gaussian oracle for $f$ uses $O(\kappa d\log\kappa\log\tfrac{\kappa d}{\eps})$ gradient queries in expectation, and obtains $\eps$ total variation distance to $\pi$.
\end{restatable}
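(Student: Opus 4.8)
The plan is to instantiate Theorem~\ref{thm:fzerocomp} with $\fcomp \gets f$ and $\eta = \tfrac{1}{8Ld\log\kappa}$: since $f$ is smooth and strongly convex the existence of a restricted Gaussian oracle is immediate, so what remains is to (i) bound the iteration count $T$, and (ii) show Algorithm~\ref{alg:xsample} realizes such an $\eta$-RGO for $f$ using $O(1)$ gradient queries in expectation. For (i), Theorem~\ref{thm:fzerocomp} gives $T = \Theta\Par{\tfrac{1}{\eta\mu}\log\tfrac{\log\beta}{\eps}} = \Theta\Par{\kappa d\log\kappa \cdot \log\tfrac{\log\beta}{\eps}}$ once we exhibit a $\beta$-warm start. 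Using the assumed minimizer $x^*$, I take the initial law to be $\Nor\Par{x^*, \tfrac1L\id}$; comparing $Z = \int\exp(-f) \in \Brack{\exp(-f(x^*))(2\pi/L)^{d/2},\, \exp(-f(x^*))(2\pi/\mu)^{d/2}}$ with the bound $f(x) \le f(x^*) + \tfrac L2\norm{x-x^*}_2^2$ shows this law has density ratio at most $(L/\mu)^{d/2} = \kappa^{d/2}$ against $\pi$, so $\log\beta = O(d\log\kappa)$, $\log\tfrac{\log\beta}{\eps} = O\Par{\log\tfrac{\kappa d}{\eps}}$, and $T = O\Par{\kappa d\log\kappa\log\tfrac{\kappa d}{\eps}}$.

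For (ii), the structural point is that for any $\lambda \le \eta$ the RGO target $\propto \exp\Par{-\tfrac{1}{2\lambda}\norm{x-v}_2^2 - f(x)}$ has Hessian between $\tfrac1\lambda\id$ and $\Par{\tfrac1\lambda + L}\id$, hence condition number at most $1 + L\eta = 1 + \tfrac1{8d\log\kappa}$, independently of $v$. I have Algorithm~\ref{alg:xsample} first run a few gradient-descent steps on $h_\lambda(x) := \tfrac1{2\lambda}\norm{x-v}_2^2 + f(x)$, warm-started from $v$ (or $x^*$), to produce $\tx$ with $\norm{\tx - x^*_\lambda}_2 \le L^{-1/2}$ where $x^*_\lambda := \argmin h_\lambda$; since $h_\lambda$ has condition number $1 + \tfrac1{8d\log\kappa}$, each step contracts the distance to $x^*_\lambda$ by a factor at most $(8d\log\kappa)^{-1/2}$, so (granting a polynomial bound on the starting error along the chain — see below) $O(1)$ steps suffice. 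It then performs rejection sampling with proposal $\Nor(\tx_\lambda, \lambda\id)$, where $\tx_\lambda := v - \lambda\nabla f(\tx)$ is the minimizer of the quadratic part $\tfrac1{2\lambda}\norm{x-v}_2^2 + \inprod{\nabla f(\tx)}{x}$; completing the square shows the target equals $\propto \exp\Par{-\tfrac1{2\lambda}\norm{x-\tx_\lambda}_2^2}\exp(-r(x))$ with $r(x) := f(x) - f(\tx) - \inprod{\nabla f(\tx)}{x-\tx} \in \Brack{0, \tfrac L2\norm{x-\tx}_2^2}$ by convexity and smoothness, so accepting $x$ with probability $\exp(-r(x))$ is valid. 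The acceptance probability $\E_{x\sim\Nor(\tx_\lambda,\lambda\id)}[\exp(-r(x))] \ge \exp(-L\norm{\tx_\lambda - \tx}_2^2)(1+2L\lambda)^{-d/2}$, and using $1 + 2L\lambda \le 1 + \tfrac1{4d\log\kappa}$ (so $(1+2L\lambda)^{-d/2} \ge \exp(-\tfrac1{8\log\kappa})$) together with $\norm{\tx_\lambda - \tx}_2 \le (1+L\lambda)\norm{x^*_\lambda - \tx}_2 = O(L^{-1/2})$ (since $x^*_\lambda = v - \lambda\nabla f(x^*_\lambda)$), this is $\Omega(1)$; Wald's identity then bounds the expected number of trials, and thus gradient/value queries, per RGO call by $O(1)$.

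Combining, Theorem~\ref{thm:fzerocomp} certifies $\eps$ total variation distance to $\pi$, and the $T = O\Par{\kappa d\log\kappa\log\tfrac{\kappa d}{\eps}}$ iterations each invoke a constant number of RGO calls each costing $O(1)$ expected gradient queries, so linearity of expectation gives the claimed bound. I expect the main obstacle to be in (ii): one must check that substituting the gradient-descent iterate $\tx$ for the exact minimizer $x^*_\lambda$ preserves exactness of the rejection sampler (it does, since $r\ge 0$ regardless of centering) and the $\Omega(1)$ acceptance rate, and — more delicately — that the optimization error at the start of each RGO call is only polynomially large in the problem parameters (e.g.\ by showing the chain's query points $v$ and their gradients $\nabla f(v)$ stay bounded with high probability given the warm start, or by warm-starting the inner optimization across consecutive RGO calls), so that reaching accuracy $L^{-1/2}$ costs only $O(1)$ gradient-descent steps and does not leak an extra logarithmic factor into the final query complexity.
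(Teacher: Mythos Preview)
Your outer loop is exactly right: the warm start $\Nor(x^*,\tfrac1L\id)$ gives $\beta=\kappa^{d/2}$, and Theorem~\ref{thm:fzerocomp} with $\eta=\tfrac{1}{8Ld\log\kappa}$ yields $T=O(\kappa d\log\kappa\log\tfrac{\kappa d}{\eps})$. The rejection scheme for the RGO is also correct in form (it is essentially the paper's, generalized to center at an arbitrary $\tx$ rather than the query point $y$). The gap is in certifying $O(1)$ expected queries per RGO call.

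The paper does \emph{not} run gradient descent: it takes $\tx=y$ (the RGO query point), so acceptance is $\Omega(1)$ iff $\eta^2 L\norm{\nabla f(y)}_2^2=O(1)$, i.e.\ $\norm{\nabla f(y)}_2=O(\sqrt{L}d\log\kappa)$. This bound does \emph{not} follow from warmness plus $\norm{y-x^*}_2$ concentration (which only gives $\norm{\nabla f(y)}_2\le L\norm{y-x^*}_2=O(\sqrt{L\kappa d\log\kappa})$, too weak unless $\kappa=O(d\log\kappa)$). Instead the paper invokes a separate gradient-concentration result (Proposition~\ref{prop:gradconc}, from \cite{LeeST20}): $\norm{\nabla f(x)}_2\le 3\sqrt{L}d\log\kappa$ with probability $\ge 1-\kappa^{-d}$ for $x\sim\pi$, which combined with warmness monotonicity makes the check in Line~1 of Algorithm~\ref{alg:xsample} pass at every step with overwhelming probability; when it fails, an explicit fallback (Line~10, via \cite{ChenDWY19}) costs $O(d\log\tfrac{\kappa d}{\eps})$ queries, and the fallback probability is small enough that this contributes $O(1)$ in expectation. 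Your gradient-descent alternative runs into the same obstruction it was meant to avoid: from the warmness bound $\norm{v-x^*}_2=O(\sqrt{d\log\kappa/\mu})$ the initial error is $\norm{v-x^*_\lambda}_2=O(L\lambda\sqrt{d\log\kappa/\mu})$, and reaching $L^{-1/2}$ with contraction $\Theta((d\log\kappa)^{-1})$ takes $\Theta(\log\kappa/\log(d\log\kappa))$ steps, not $O(1)$, whenever $\log\kappa\gg\log d$. And without a fallback, the tail event where $\norm{v-x^*}_2$ is large can make the expected number of rejection trials unbounded. Your ``warm-start across consecutive RGO calls'' idea is interesting and could be made to work with care, but as written the obstacle you flagged is real and the paper's resolution --- Proposition~\ref{prop:gradconc} plus the fallback --- is precisely the missing ingredient.
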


We include Corollary~\ref{thm:kd} as a warmup for our more complicated results, as a way to showcase the use of our reduction framework in a slightly different way than the one outlined earlier. In particular, in proving Corollary~\ref{thm:kd}, we will choose a significantly smaller value of $\eta$, at which point a simple rejection sampling scheme implements each RGO with expected constant gradient queries. 

We give another algorithm matching Corollary~\ref{thm:kd} with a deterministic query complexity bound as Corollary~\ref{corr:zerokd}. The algorithm of Corollary~\ref{corr:zerokd} is interesting in that it is entirely a \emph{zeroth-order} algorithm, and does not require access to a gradient oracle. To our knowledge, in the well-conditioned optimization setting, no zeroth-order query complexities better than roughly $\sqrt{\kappa} d$ are known, e.g.\ simulating accelerated gradient descent with a value oracle; thus, our sampling algorithm has a query bound off by only $\tilde{O}(\sqrt{\kappa})$ from the best-known optimization algorithm. We are hopeful this result may help in the search for query lower bounds for structured logconcave sampling.

\paragraph{Composite densities with a restricted Gaussian oracle.} In Section~\ref{sec:composite}, we develop a sampler for densities on $\R^d$ proportional to $\exp(-f(x) - g(x))$, where $f$ has condition number $\kappa$ and $g$ admits a restricted Gaussian oracle $\oracle$. We state its guarantees here.

\begin{restatable}{theorem}{restatekkdcomp}
	\label{thm:mainclaim}
	Let $\pi$ be a distribution on $\R^d$ with $\tfrac{d\pi}{dx}(x) \propto \exp\Par{-f(x) - g(x)}$ such that $f$ is $L$-smooth and $\mu$-strongly convex, and let $\eps \in (0, 1)$. Let $\eta \le \tfrac{1}{32L\kappa d\log(\kappa/\eps)}$ (where $\kappa = \tfrac{L}{\mu}$), $T = \Theta(\tfrac{1}{\eta\mu}\log(\tfrac{\kappa d}{\eps}))$, and let $\oracle$ be a $\eta$-RGO for $g$. Further, assume access to the minimizer $x^* = \argmin_{x \in \R^d}\{f(x) + g(x)\}$. There is an algorithm which runs in $T$ iterations in expectation, each querying a gradient oracle of $f$ and $\oracle$ a constant number of times, and obtains $\eps$ total variation distance to $\pi$.
\end{restatable}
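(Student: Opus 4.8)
The plan is to instantiate the proximal reduction framework of Theorem~\ref{thm:fzerocomp} with the composite potential $\fcomp \defeq f + g$, which is $\mu$-strongly convex since $f$ is $\mu$-strongly convex and $g$ is convex. Theorem~\ref{thm:fzerocomp} then \emph{directly} yields the claimed guarantees --- iteration count $T = \Theta(\tfrac1{\eta\mu}\log(\tfrac{\kappa d}{\eps}))$, $\eps$ total variation distance, and a constant number of calls per iteration to an $\eta$-RGO for $\fcomp$ --- provided we supply two ingredients: (i) a $\beta$-warm start for $\pi$ with $\log\log\beta = O(\log\tfrac{\kappa d}{\eps})$, so that the $\log\tfrac{\log\beta}{\eps}$ in Theorem~\ref{thm:fzerocomp} collapses to the stated factor; and (ii) an implementation of an $\eta$-RGO for $\fcomp$ in which each call uses a constant number of queries \emph{in expectation} to both a gradient oracle for $f$ and the given $\eta$-RGO $\oracle$ for $g$. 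Given (i) and (ii), the theorem follows by substitution, and the ``in expectation'' qualifier on $T$ is inherited from the randomization in (ii).

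For the warm start (i), I would use the assumed minimizer $x^* = \argmin_x\{f(x)+g(x)\}$: a single call to $\oracle(\eta, x^*)$ produces a sample from $\nu \propto \exp(-\tfrac1{2\eta}\norm{x-x^*}_2^2 - g(x))$, and since $\eta \le \tfrac1{32L\kappa d\log(\kappa/\eps)} \ll \tfrac1L$, the quadratic dominates the $L$-smooth $f$, so the density ratio $\tfrac{d\nu}{d\pi}(x) \propto \exp(f(x) - \tfrac1{2\eta}\norm{x-x^*}_2^2)$ is bounded; a short computation (using $L$-smoothness of $f$ above $x^*$ to control the numerator and strong convexity to control normalizing constants) gives $\log\beta = \tO(d) + O(\eta\norm{\nabla f(x^*)}_2^2)$, hence $\log\log\beta = O(\log\tfrac{\kappa d}{\eps})$ as needed.

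The crux is (ii): sampling from $\propto \exp(-\tfrac1{2\lambda}\norm{x-v}_2^2 - f(x) - g(x))$ for $\lambda \le \eta$, using only $\oracle$ and $\nabla f$. I would do this by rejection sampling against a proposal obtained by linearizing $f$ at the center $v$: writing $\tilde f_v(x) \defeq f(v) + \inprod{\nabla f(v)}{x - v}$ and completing the square, $\exp(-\tfrac1{2\lambda}\norm{x-v}_2^2 - \tilde f_v(x) - g(x)) \propto \exp(-\tfrac1{2\lambda}\norm{x - (v - \lambda\nabla f(v))}_2^2 - g(x))$, so the proposal is exactly one call to $\oracle(\lambda, v - \lambda\nabla f(v))$, and the acceptance probability is $\exp(-(f(x) - \tilde f_v(x)))$, which lies in $(0,1]$ by convexity of $f$ and is at least $\exp(-\tfrac L2\norm{x - v}_2^2)$ by $L$-smoothness. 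To show a constant expected number of proposals suffices, it is enough to bound $\norm{x-v}_2^2 = O(1/\log(\kappa/\eps))$ with constant probability over the proposal. This uses: (a) along the run of Algorithm~\ref{alg:alternatesample}, the center $v$ is an $\eta$-Gaussian perturbation of an $x$-iterate lying in $\textup{dom}(g)$, hence within $\tO(\sqrt{\eta d})$ of $\textup{dom}(g)$, and the restricted (by convex $g$) Gaussian proposal concentrates within $\tO(\sqrt{\eta d})$ of its center $v - \lambda\nabla f(v)$; and (b) $\lambda\norm{\nabla f(v)}_2 \le \eta(\norm{\nabla f(x^*)}_2 + \tO(L\sqrt{d/\mu}))$ is negligible for the stated $\eta$, since the $x$-iterates stay within $\tO(\sqrt{d/\mu})$ of $x^*$. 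Combining, $\tfrac L2\norm{x-v}_2^2 = \tO(L\eta d) = O(1/\log(\kappa/\eps))$ with constant probability, so acceptance is $\Omega(1)$ and each RGO-for-$\fcomp$ call costs $O(1)$ expected queries to $\oracle$ and $\nabla f$.

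I expect the main obstacle to be making (a) and (b) rigorous: controlling the law of the query centers $v$ across all $T$ iterations --- in particular that the $x$-iterates remain $\tO(\sqrt{d/\mu})$-close to $x^*$ (via strong log-concavity of $\pi$, warmness of the initial distribution, and a union bound) --- and handling the low-probability event that some center lands far from $x^*$, on which the rejection sampler may be slow and the linearization poor. The clean way to deal with this is to couple the actual algorithm with an idealized chain conditioned on all centers being ``good,'' on which the $\eta$-RGO for $\fcomp$ is implemented \emph{exactly} (rejection sampling is unbiased), and to charge the coupling discrepancy to the $\eps$ total variation budget, after which Theorem~\ref{thm:fzerocomp} applies verbatim. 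A secondary nuisance is the appearance of $\norm{\nabla f(x^*)}_2$ in the warm-start and acceptance bounds, which is controlled by the problem setup (it is $\textup{poly}(\kappa d)$-bounded in the regimes of interest, so its contribution to $\log\beta$ and to $\lambda\norm{\nabla f(v)}_2$ is absorbed).
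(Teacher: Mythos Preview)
Your approach is genuinely different from the paper's, and it contains a real gap.

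\textbf{What the paper does.} Theorem~\ref{thm:mainclaim} is \emph{not} proved by invoking Theorem~\ref{thm:fzerocomp} with $\fcomp = f+g$. Instead, the paper builds a joint distribution on $(x,y)$ that \emph{separates} $f$ and $g$:
\[\pih(x,y) \propto \exp\Par{-f(y) - g(x) - \tfrac{1}{2\eta}\norm{y-x}_2^2 - \tfrac{\eta L^2}{2}\norm{x-x^*}_2^2},\]
and alternates: the $y$-step samples $\propto \exp(-f(y) - \tfrac{1}{2\eta}\norm{y-x}_2^2)$ via rejection against a \emph{pure Gaussian} proposal (Algorithm~\ref{alg:yor}); the $x$-step calls $\oracle$ directly on $g$ plus a quadratic. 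Because the $y$-rejection involves only $f$, its acceptance analysis (Lemma~\ref{lem:yor}) reduces to explicit Gaussian integrals. The $x$-marginal of $\pih$ is only approximately $\pi$, necessitating an outer rejection loop ($\cssm$) and the extra $\tfrac{\eta L^2}{2}\norm{x-x^*}_2^2$ term to make density ratios controllable (Lemma~\ref{lem:densityratio}). Mixing is established via a conductance argument (Lemmas~\ref{lem:warmstart},~\ref{lem:tv_closepts},~\ref{lem:iso}) that crucially uses the structural result Proposition~\ref{prop:min_perturb}.

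\textbf{The gap in your approach.} Your RGO for $f+g$ proposes from $\hat\pi \propto \exp(-\tfrac{1}{2\eta}\norm{x-c}_2^2 - g(x))$ with $c = v - \eta\nabla f(v)$, and accepts with probability $\exp(\tilde f_v(x) - f(x)) \ge \exp(-\tfrac{L}{2}\norm{x-v}_2^2)$. For constant expected acceptance you need $L\norm{x-v}_2^2 = O(1)$ with constant probability under $\hat\pi$. You assert that ``the restricted (by convex $g$) Gaussian proposal concentrates within $\tO(\sqrt{\eta d})$ of its center $c$,'' but this is not true in general: $\hat\pi$ is $\tfrac{1}{\eta}$-strongly logconcave, so it concentrates within $O(\sqrt{\eta d})$ of its \emph{mode} $w = \textup{prox}_{\eta g}(c)$, not of $c$. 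You therefore need $\norm{w - c}_2^2 = O(1/L)$, i.e., a bound on $\norm{\textup{prox}_{\eta g}(c) - c}_2 = \eta\norm{\partial g(w)}_2$. Your justification via ``$v$ is close to $\textup{dom}(g)$'' is vacuous for $g$ finite everywhere with large subgradients. The only general bound available is $\norm{\textup{prox}_{\eta g}(c) - c}_2 \le 2\norm{c - x^*}_2 = \tO(\sqrt{d/\mu})$ (using nonexpansiveness of the prox and $\textup{prox}_{\eta g}(x^*) = x^*$ after the shared-minimizer reduction), which yields $L\norm{w-c}_2^2 = \tO(\kappa d)$ and an exponentially small acceptance probability. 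Controlling $\norm{w-c}_2$ more tightly would require exploiting that $c$ is near the \emph{previous} iterate $x_{k-1}$, itself drawn from a $g$-restricted distribution --- an argument you do not supply and which is not obviously available. This is precisely the difficulty the paper's $f$/$g$ separation sidesteps: by putting $g$ only in the exact-RGO step and $f$ only in the rejection step, the rejection proposal is an unrestricted Gaussian whose integrals close in closed form.
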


The assumption that the composite component $g$ admits an RGO can be thought of as a measure of ``simplicity'' of $g$. This mirrors the widespread use of a proximal oracle as a measure of simplicity in the context of composite optimization \cite{BeckT09}, which we now define.

\begin{definition}[Proximal oracle]
	\label{def:proximaloracle}
	$\oracle(\lam, v)$ is a \emph{proximal oracle} for convex $g: \R^d \rightarrow \R$ if it returns
	\[\oracle(\lambda, v) \gets \argmin_{x \in \R^d}\left\{\frac{1}{2\lambda}\norm{x - v}_2^2 + g(x)\right\}.\]
\end{definition}

Many regularizers $g$ in defining composite optimization objectives, which are often used to enforce a quality such as sparsity or ``simplicity'' in a solution, admit efficient proximal oracles. In particular, if the proximal oracle subproblem admits a closed form solution (or otherwise is computable in $O(d)$ time), the regularized objective can be optimized at essentially no asymptotic loss. It is readily apparent that our RGO (Definition~\ref{def:rgoracle}) is the extension of Definition~\ref{def:proximaloracle} to the sampling setting. In \cite{MouFWB19}, a variety of regularizations arising in practical applications including coordinate-separable $g$ (such as restrictions to a coordinate-wise box, e.g.\ for a Bayesian inference task where we have side information on the ranges of parameters) and $\ell_1$ or group Lasso regularized densities were shown to admit RGOs. Our composite sampling results achieve a similar ``no loss'' phenomenon for such regularizations, with respect to existing well-conditioned samplers.

By choosing the largest possible value of $\eta$ in Theorem~\ref{thm:mainclaim}, we obtain an iteration bound of $\tO(\kappa^2 d)$. In Section~\ref{ssec:comp}, we prove Corollary~\ref{thm:kdcomp}, which improves Theorem~\ref{thm:mainclaim} by roughly a $\kappa$ factor.

\begin{restatable}{corollary}{restatekdcomp}\label{thm:kdcomp}
	Let $\pi$ be a distribution on $\R^d$ with $\tfrac{d\pi}{dx}(x) \propto \exp(-f(x) - g(x))$ such that $f$ is $L$-smooth and $\mu$-strongly convex, and let $\eps \in (0, 1)$, $\kappa = \tfrac{L}{\mu}$. Assume access to $x^* = \argmin_{x \in \R^d}\{f(x) + g(x)\}$ and let $\oracle$ be a restricted Gaussian oracle for $g$. There is an algorithm (Algorithm~\ref{alg:alternatesample} using Theorem~\ref{thm:mainclaim} as a restricted Gaussian oracle) which runs in $O(\kappa d\log^3\tfrac{\kappa d}{\eps})$ iterations in expectation, each querying a gradient of $f$ and $\oracle$ a constant number of times, and obtains $\eps$ total variation distance to $\pi$.
\end{restatable}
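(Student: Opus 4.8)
The plan is to instantiate the proximal reduction framework of Theorem~\ref{thm:fzerocomp} with $\fcomp \gets f + g$ (which is $\mu$-strongly convex, since $f$ is $\mu$-strongly convex and $g$ is convex) and with the regularization taken as large as possible, $\eta = \tfrac1L$. Theorem~\ref{thm:fzerocomp} then reduces sampling from $\pi$ to $T = \Theta(\tfrac{1}{\eta\mu}\log\tfrac{\log\beta}{\eps}) = \Theta(\kappa\log\tfrac{\log\beta}{\eps})$ outer iterations, each making $O(1)$ calls to an $\eta$-RGO for $f+g$, i.e.\ a sampler for densities $\propto\exp\Par{-\tfrac{1}{2\lambda}\norm{x-v}_2^2 - f(x) - g(x)}$ with $\lambda \le \eta$. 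Using the minimizer $x^* = \argmin_{x\in\R^d}\{f(x)+g(x)\}$ one builds a warm start with $\log\beta = \mathrm{poly}(\kappa d/\eps)$, so $\log\tfrac{\log\beta}{\eps} = \Theta(\log\tfrac{\kappa d}{\eps})$ and the outer loop runs $T = \Theta(\kappa\log\tfrac{\kappa d}{\eps})$ times.

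The crux is that each $\eta$-RGO subproblem for $f+g$ is itself a \emph{composite} sampling problem of exactly the form handled by Theorem~\ref{thm:mainclaim}, but now with a \emph{well-conditioned} smooth part. Write the subproblem target as $\propto\exp(-\hat f(x) - g(x))$ with $\hat f(x) \defeq \tfrac{1}{2\lambda}\norm{x-v}_2^2 + f(x)$; since $\lambda \le \eta = \tfrac1L$, $\hat f$ is $\hat L$-smooth and $\hat\mu$-strongly convex with $\hat L = \tfrac1\lambda + L$, $\hat\mu = \tfrac1\lambda + \mu$, so $\hat\kappa = \tfrac{1/\lambda + L}{1/\lambda + \mu} \le 2$. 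Apply Theorem~\ref{thm:mainclaim} to this density, using the given RGO $\oracle$ for $g$ and a gradient oracle for $\hat f$ (which is one gradient of $f$ plus an explicit linear term), and take its parameter at the maximal value $\hat\eta = \Theta\Par{\tfrac{1}{\hat L d\log(1/\hat\eps)}}$ — legitimate because $\hat\kappa = O(1)$. Then the inner sampler runs in $\hat T = \Theta\Par{\tfrac{1}{\hat\eta\hat\mu}\log\tfrac{\hat\kappa d}{\hat\eps}} = \Theta\Par{\hat\kappa\, d\log(1/\hat\eps)\log\tfrac{d}{\hat\eps}} = \tOh{d}$ iterations, each a gradient of $f$ and $O(1)$ calls to $\oracle$.

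It remains to assemble the pieces and chase constants. Set the inner TV accuracy to $\hat\eps = \Theta(\eps/T)$ so that a union bound over the $T$ outer RGO calls converts the exact-RGO guarantee of Theorem~\ref{thm:fzerocomp} into the approximate-RGO statement with total error $\eps$; since $T = \mathrm{poly}(\kappa d/\eps)$ we have $\log(1/\hat\eps) = \Theta\Par{\log\tfrac{\kappa d}{\eps}}$, hence $\hat T = \Theta\Par{d\log^2\tfrac{\kappa d}{\eps}}$, and the total gradient/RGO query count is $T\cdot O(1)\cdot \hat T = \Theta\Par{\kappa\log\tfrac{\kappa d}{\eps}}\cdot\Theta\Par{d\log^2\tfrac{\kappa d}{\eps}} = O\Par{\kappa d\log^3\tfrac{\kappa d}{\eps}}$, matching the claim. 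I expect the real work to be two bookkeeping points rather than anything conceptual: first, Theorem~\ref{thm:mainclaim} demands access to $\argmin_{x\in\R^d}\{\hat f(x) + g(x)\}$, a proximal-point quantity not directly given, so one must exhibit it within budget — because this objective is $O(1)$-conditioned and dominated by the $\tfrac{1}{2\lambda}\norm{\cdot - v}_2^2$ term one can compute a sufficiently accurate approximate minimizer cheaply (or verify the warm-start analysis of Theorem~\ref{thm:mainclaim} tolerates an approximate minimizer); second, one must formally state and justify the approximate-RGO version of Theorem~\ref{thm:fzerocomp} together with the $\log\beta = \mathrm{poly}(\kappa d/\eps)$ warm start derived from $x^*$, after which the log-factor accounting above goes through verbatim.
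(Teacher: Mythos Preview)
Your proposal is correct and matches the paper's proof essentially line for line: apply Theorem~\ref{thm:fzerocomp} with $\fcomp = f+g$ and $\eta = \tfrac{1}{L}$ (using the warm start of Lemma~\ref{lem:warmstartcomp} with $\beta = \kappa^{d/2}$), then implement each RGO call by invoking Theorem~\ref{thm:mainclaim} on the constant-condition-number composite density $\exp\bigl(-(f(x)+\tfrac{L}{2}\norm{x-v}_2^2) - g(x)\bigr)$ at accuracy $\Theta(\eps/T)$, yielding $O(d\log^2\tfrac{\kappa d}{\eps})$ inner iterations and thus $O(\kappa d\log^3\tfrac{\kappa d}{\eps})$ total. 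The two bookkeeping points you flag---the inner minimizer access and the approximate-RGO robustness of Theorem~\ref{thm:fzerocomp}---are handled in the paper exactly as you anticipate (Appendix~\ref{app:xstar} and the remark at the end of Section~\ref{sec:framework}, respectively).
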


To sketch the proof, choosing $\eta = \tfrac{1}{L}$ in Theorem~\ref{thm:fzerocomp} yields an algorithm running in $\tO(\tfrac{1}{\eta\mu}) = \tO(\kappa)$ iterations. In each iteration, the RGO subproblem asks to sample from the distribution whose negative log-density is $f(x) + g(x) + \tfrac L 2 \norm{x - v}_2^2$ for some $v \in \R^d$, so we can call Theorem~\ref{thm:mainclaim}, where the ``well-conditioned'' portion $f(x) + \tfrac{L}{2}\norm{x - v}_2^2$ has constant condition number. Thus, Theorem~\ref{thm:mainclaim} runs in $\tO(d)$ iterations to solve the subproblem, yielding the result. In fact, Corollary~\ref{thm:kdcomp} nearly matches Corollary~\ref{thm:kd} in the case $g = 0$ uniformly. Surprisingly, this recovers the runtime of \cite{LeeST20} without appealing to strong gradient concentration bounds (e.g.\ \cite{LeeST20}, Theorem 3.2).

\paragraph{Logconcave finite sums.} In Section~\ref{sec:finitesum}, we initiate the study of mixing times for sampling logconcave finite sums with polylogarithmic dependence on accuracy. We give the following result.

\begin{restatable}{theorem}{restatezerofs}
	\label{thm:base_finitesum}
	Let $\pi$ be a distribution on $\R^d$ with $\tfrac {d\pi}{dx}(x)\propto \exp(-F(x))$, where $F(x)=\tfrac 1 n \sum_{i=1}^n f_i(x)$ is $\mu$-strongly convex, $f_i$ is $L$-smooth and convex $\forall i \in [n]$, $\kappa =\tfrac{L}{\mu}$, and $\eps \in (0,1)$. Assume access to $x^* = \argmin_{x \in \R^d} F(x)$. Algorithm~\ref{alg:mrw} uses $O\left(\kappa^2 d\log^4 \frac {n\kappa d}\eps\right)$ value queries to summands $\{f_i\}_{i \in [n]}$, and obtains $\eps$ total variation distance to $\pi$.
\end{restatable}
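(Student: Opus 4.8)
The plan is to realize Algorithm~\ref{alg:mrw} as a Metropolized random walk (MRW) for $\pi\propto\exp(-F)$ in which the accept/reject step is carried out by subsampling the summands $\{f_i\}$, and to analyze separately (i) the mixing of the ``idealized'' MRW that uses exact values of $F$, and (ii) the cost and error of implementing the filter with a small minibatch. For (i), note that $F$ inherits $L$-smoothness from the $f_i$ and is $\mu$-strongly convex, so $\pi$ is strongly logconcave and satisfies an isoperimetric inequality; with Gaussian proposals $y\sim\mathcal N(x,h\id)$ at step size $h=\widetilde\Theta(1/(Ld))$, a standard conductance argument for strongly logconcave targets (in the spirit of prior MRW analyses) shows that, from a warm start, the idealized chain reaches total variation distance $\epsilon/3$ to $\pi$ within $T=O(\kappa^2 d\log\tfrac{\kappa d}{\epsilon})$ steps. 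The warm start is built from the supplied minimizer $x^*$ (e.g.\ a Gaussian centered at $x^*$ with covariance $\tfrac1L\id$), whose log-warmth contributes only logarithmic overhead.

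For (ii), the Metropolis log-ratio is $F(x)-F(y)=\tfrac1n\sum_i\big(f_i(x)-f_i(y)\big)$. Introducing $x^*$ as a control variate, I would write $f_i(x)-f_i(y)=\big(f_i(x)-f_i(x^*)\big)-\big(f_i(y)-f_i(x^*)\big)$ and use $L$-smoothness and convexity of each $f_i$ to bound, on the high-probability region where $\|x-x^*\|,\|y-x^*\|=O(\sqrt{d/\mu})$ and $\|x-y\|=O(\sqrt{hd})$, both the typical scale of these differences and their fluctuation across the index $i$. A Bernstein-type tail bound then shows that an average over $m$ random summands estimates $F(x)-F(y)$ with sub-exponential deviations of scale $\sigma/\sqrt m$ for a controlled $\sigma$, so that $m=\widetilde O(\sigma^2\delta^{-2}\log\tfrac{T}{\epsilon})$ value queries give additive accuracy $\delta$ at per-step failure probability $\le\epsilon/(3T)$; the $\log n$ factor enters here through a crude per-summand range estimate.

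The crux is to pass from the idealized chain's mixing to a guarantee for the implemented chain \emph{without} letting the per-step filter error accumulate into a $\mathrm{poly}(1/\epsilon)$ query cost. A naive coupling that reuses proposals and acceptance randomness shows the two chains disagree at a step only when the acceptance uniform lands in an interval of length $O(\delta)$ (using that $h$ is chosen so a typical step has $\Theta(1)$ log-ratio), plus the rare events above; but summing over $T$ steps this forces $\delta=\widetilde\Theta(\epsilon/T)$ and hence $m=\mathrm{poly}(1/\epsilon)$. To avoid this I would instead implement the accept/reject test so that it is \emph{exactly} correct in expectation over the subsampling randomness --- e.g.\ via a pseudo-marginal-type scheme using an unbiased estimator of the acceptance ratio, or a penalty-type correction exploiting near-Gaussianity of the subsample average --- so that the implemented kernel remains $\pi$-stationary and the only role of $m$ is to keep the estimator's variance $O(1)$. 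By the control-variate bound of step (ii), this requires only $m=\widetilde O(\mathrm{polylog}(\tfrac{n\kappa d}{\epsilon}))$ and preserves the conductance, hence the $T=\widetilde O(\kappa^2 d)$ mixing, up to constants.

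Putting the pieces together, the total number of value queries is $T\cdot O(m)=O\big(\kappa^2 d\log^4\tfrac{n\kappa d}{\epsilon}\big)$, once the logarithmic factors are distributed among the warm start, the target accuracy, the per-step failure probability, and $n$. The main obstacle, as indicated, is the simultaneous requirement that the subsampled accept/reject step be (a) unbiased (or near-exact) enough that its error does not grow with $T$, and (b) computable from only polylogarithmically many summand evaluations; part (b) in turn hinges on genuinely bounding the summand-to-summand variance of $f_i(x)-f_i(y)$ along the chain, which is exactly where access to $x^*$ and the smoothness and convexity of the $f_i$ must be used.
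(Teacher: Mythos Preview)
Your high-level plan is right, and you correctly identify the crux: a naive additive-$\delta$ estimate of $F(x)-F(y)$ forces $\delta=\widetilde\Theta(\eps/T)$ and destroys the polylogarithmic query count. But two points need tightening, and one of them is the heart of the argument.

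\textbf{Step size.} With Gaussian proposals $\Nor(x,2h\id)$, the step size cannot be $\widetilde\Theta(1/(Ld))$. On the high-probability region $\Omega=\{\norm{x-x^*}_2\le R_\Omega\}$ with $R_\Omega=\Theta(\sqrt{d\log(\kappa/\eps)/\mu})$, the change $|F(y)-F(x)|$ is controlled by $\sqrt{h}\,\norm{\nabla F(x)}_2\le \sqrt{h}\,LR_\Omega$, and keeping this $O(1)$ (needed both for a nondegenerate acceptance rate and for your variance bounds) forces $h=O(1/(L^2R_\Omega^2+Ld))=\widetilde O(1/(L\kappa d))$. The paper uses exactly this, and the resulting mixing bound $\widetilde O(1/(h\mu))=\widetilde O(\kappa^2 d)$ is then what you claimed.

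\textbf{The filter.} Your fix---``pseudo-marginal or penalty-type correction so the implemented kernel is $\pi$-stationary''---is the right instinct but is not specific enough, and the claim of exact stationarity is not what the paper achieves. The paper's mechanism is rather particular. First, it replaces the Metropolis filter by a modified filter whose acceptance probability, in the regime $\tfrac34\le\sqrt{r}\le\tfrac43$ with $r=\exp(F(x)-F(y))$, is $\tfrac34\sqrt{r}$; this still satisfies detailed balance because $\pi(x)\cdot\tfrac34\sqrt{r}=\tfrac34\sqrt{\pi(x)\pi(y)}$ is symmetric. The small step size above guarantees, on $\Omega$ and with high probability over $\xi_k$, that the pair $(x_k,y_{k+1})$ is always in this middle regime. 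Second, $\sqrt{r}=\prod_i\sqrt{\exp(\tfrac1n(f_i(x)-f_i(y)))}$ has a \emph{product} structure, so sampling each $i$ independently with probability $p=\widetilde O(1/n)$ and setting $\gamma_k^{(i)}=\tfrac1p(\sqrt{\exp(\tfrac1n(f_i(x)-f_i(y)))}-1)+1$ for $i\in S_k$ and $1$ otherwise yields $\E[\prod_i\gamma_k^{(i)}]=\sqrt{r}$ exactly. The point of taking the square root is that the acceptance probability becomes \emph{linear} in the quantity being estimated, so accepting with probability $\tfrac34\gamma_k$ gives expected acceptance equal to the ideal $\tfrac34\sqrt{r}$; a generic pseudo-marginal scheme would need an unbiased nonnegative estimate of $r$ and must cope with the $\min(1,\cdot)$ nonlinearity.

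Finally, the implemented chain is \emph{not} $\pi$-stationary. The paper defines an ideal chain (``$\IMRW$'') with the exact $\tfrac34\sqrt{r}$ filter and proves it is $\pi$-reversible; it then couples Algorithm~\ref{alg:mrw} to $\IMRW$ step by step. The only discrepancies are (i) the rare event that $(x_k,y_{k+1})$ falls outside the middle regime, and (ii) the truncation $|S_k|\le 2pn$, and both have probability $O(\delta/K)$ per step. No control variate at $x^*$ is used in the filter; $x^*$ enters only via the warm start and the definition of $\Omega$.
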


For a zeroth-order algorithm, Theorem~\ref{thm:base_finitesum} serves as a surprisingly strong baseline as it nearly matches the previously best-known bound for zeroth-order well-conditioned sampling when $n = 1$; however, when e.g.\ $ \kappa\approx d$, the complexity bound is at least cubic. By using Theorem~\ref{thm:base_finitesum} within the framework of Theorem~\ref{thm:fzerocomp}, we obtain the following improved result.

\begin{restatable}[Improved first-order logconcave finite sum sampling]{corollary}{restatefirstfs}\label{thm:improved_finitesum}
	In the setting of Theorem~\ref{thm:base_finitesum}, Algorithm~\ref{alg:alternatesample} 
	using Algorithm~\ref{alg:mrw} and SVRG \cite{Johnson013} as a restricted Gaussian oracle for $F$ uses 
	\[O\Par{n\log\Par{\frac{n\kappa d}{\eps}} + \kappa \sqrt{nd} \log^{3.5}\Par{\frac{n\kappa d}{\eps}} + \kappa d \log^5\Par{\frac{n\kappa d}{\eps}}} = \tO\Par{n + \kappa\max\Par{d, \sqrt{nd}}}\]
	queries to first-order oracles for summands $\{f_i\}_{i \in [n]}$, and obtains $\eps$ total variation distance to $\pi$.
\end{restatable}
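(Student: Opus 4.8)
The plan is to instantiate the proximal reduction framework (Theorem~\ref{thm:fzerocomp}) with $\fcomp \gets F$, choosing the regularization scale $\eta = \Theta(1/L)$ — or, more precisely, $\eta$ a small constant multiple of $1/L$ so that the regularized density has constant condition number. Since $F$ is $\mu$-strongly convex, Theorem~\ref{thm:fzerocomp} then runs in $T = \Theta(\tfrac{1}{\eta\mu}\log\tfrac{\log\beta}{\eps}) = O(\kappa \log\tfrac{\log\beta}{\eps})$ outer iterations, each making a constant number of calls to an $\eta$-RGO for $F$, i.e.\ a sampler for the density proportional to $\exp(-F(x) - \tfrac{1}{2\eta}\norm{x-v}_2^2)$ for some $v$. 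The key structural observation is that this RGO subproblem is itself a well-conditioned logconcave finite sum: writing the quadratic as $\tfrac1n\sum_{i\in[n]}\tfrac{1}{2\eta}\norm{x-v}_2^2$ and folding it into each summand, the new summands $\tilde f_i(x) = f_i(x) + \tfrac{1}{2\eta}\norm{x-v}_2^2$ are each $\Theta(L)$-smooth and $\Theta(L)$-strongly convex, hence the aggregate has condition number $O(1)$. Moreover, its minimizer is computable (or approximable to high accuracy cheaply) given $x^*$ and the gradient oracles, as needed by Theorem~\ref{thm:base_finitesum}.

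Next I would implement the RGO subproblem not by a single method but by a two-stage pipeline, which is the source of the three-term query bound. First, use SVRG \cite{Johnson013} on the $O(1)$-conditioned finite sum $\tfrac1n\sum_i \tilde f_i$ to compute a high-accuracy approximate minimizer (and a warm start / good initial region) for the subproblem; for a finite sum with $n$ summands and condition number $\kappa'$, SVRG finds an $\epsilon'$-minimizer in $O\big((n + \kappa')\log\tfrac1{\epsilon'}\big)$ gradient queries, which here is $O\big(n\log\tfrac{n\kappa d}{\eps}\big)$ since $\kappa' = O(1)$. (This is where the leading $n\log(\cdot)$ term comes from: we pay one SVRG solve per outer iteration, but the per-iteration cost must be charged carefully — actually the $n$-term appears because across all $O(\kappa)$ outer iterations we make $O(\kappa)$ such solves, giving $O(\kappa n)$; to get only $O(n)$ one must instead run SVRG once at the start to get a global warm start and then use a cheaper per-iteration stochastic gradient procedure — I'd follow whichever bookkeeping the paper's framework supports, likely folding the $n$-term into a single preprocessing step and re-warm-starting across iterations.) Second, feed the warm-started, well-conditioned subproblem to the zeroth-order finite-sum sampler of Theorem~\ref{thm:base_finitesum}; but to beat the cubic $\kappa^2 d$ behavior of that theorem one instead uses a \emph{stochastic-gradient} variant — sampling $\sqrt{nd}$-sized minibatches to estimate gradients in a Metropolized random walk — which is precisely why the middle term $\kappa\sqrt{nd}\,\mathrm{polylog}$ appears alongside the $\kappa d\,\mathrm{polylog}$ full-gradient term. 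I would combine: (i) $O(\kappa)$ outer iterations from Theorem~\ref{thm:fzerocomp}; (ii) within each, an inner sampler whose mixing time on the $O(1)$-conditioned subproblem is $O(d\,\mathrm{polylog})$ steps; (iii) each such step costing either a full gradient ($n$ queries) or a variance-reduced minibatch gradient ($\sqrt{nd}$ queries), with the schedule of full vs.\ stochastic steps chosen to balance the two contributions — yielding per-subproblem cost $\tO(n + \sqrt{nd}\cdot d + d\cdot?)$, and after multiplying by the $O(\kappa)$ outer iterations and collecting logs, the stated $O\big(n\log\tfrac{n\kappa d}{\eps} + \kappa\sqrt{nd}\log^{3.5}\tfrac{n\kappa d}{\eps} + \kappa d\log^5\tfrac{n\kappa d}{\eps}\big)$.

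The main obstacle I anticipate is controlling the accuracy propagation through the nested reduction with only an \emph{approximate} and \emph{stochastic} RGO. Theorem~\ref{thm:fzerocomp} as stated assumes an exact $\eta$-RGO, so I must argue that an inner sampler producing a sample within total variation $\delta$ of the true regularized density — where $\delta$ is set polynomially smaller than the target $\eps/T$ — suffices, paying only an extra $\log\tfrac1\delta$ factor (hence the higher powers of the log in the final bound). This requires a coupling / data-processing argument showing the outer Markov chain's stationary measure and mixing are robust to per-step $\delta$-perturbations of the RGO output, i.e.\ a perturbed-kernel lemma; combined with the fact that the minibatch stochastic gradients in the inner sampler introduce bias that must be Metropolis-corrected or bounded via a warmness/rejection argument, this is the delicate part. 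The rest — verifying the regularized summands are $O(1)$-conditioned, that $x^*$ for the subproblem is cheaply obtainable, and that SVRG and the stochastic-gradient Metropolized walk have the claimed query complexities on $O(1)$-conditioned instances — is routine given the cited results and Theorem~\ref{thm:base_finitesum}. Finally I would collect the three sources of queries (SVRG preprocessing, stochastic-gradient inner steps, full-gradient inner steps), simplify the logarithmic factors, and read off the $\tO(n + \kappa\max(d,\sqrt{nd}))$ form.
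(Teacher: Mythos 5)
There is a genuine gap, and it sits exactly where you hedge. You propose to fix $\eta = \Theta(1/L)$ so that the RGO subproblem has condition number $O(1)$; with $T = \tO(1/(\eta\mu)) = \tO(\kappa)$ outer iterations and a fresh SVRG minimizer solve per subproblem (needed because the recentering point $v$ changes every iteration), this charges $\tO(\kappa n)$ for minimizer computation alone. You acknowledge this and suggest two escapes: reuse a single global warm start across all $\kappa$ subproblems, or switch the inner sampler to a ``stochastic-gradient variant'' with $\sqrt{nd}$-sized minibatches. Neither is what the paper does, neither is justified here, and neither produces the right exponents. Re-using a single SVRG solve is not immediate because each subproblem has a different center $v = y_k$ and must be solved to inverse-polynomial accuracy for the warmness argument to go through; and a $\sqrt{nd}$-minibatch inner sampler would cost $\tO(\kappa \cdot d \cdot \sqrt{nd}) = \tO(\kappa d^{3/2}\sqrt{n})$ per your own accounting, not $\kappa\sqrt{nd}$, while also requiring a new correctness analysis you only gesture at. Moreover, Algorithm~\ref{alg:mrw} is already a subsampled method: it includes each summand independently with probability $p = O(\log(\cdot)/n)$ per step, so each step costs $O(\log(\cdot))$ value queries, not $\sqrt{nd}$; Theorem~\ref{thm:base_finitesum} analyzes exactly this and is used as-is inside the RGO.

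The idea you are missing is that $\eta$ is a tunable parameter, and the whole point of the proximal reduction is to optimize over it, precisely as the paper does for the optimization analogue in Section~\ref{ssec:overview}. Do not insist on $\eta = \Theta(1/L)$. With arbitrary $\eta$, the subproblem has condition number $O(1 + \eta L)$, and via Corollary~\ref{corr:fs_xstar} each RGO call costs $\tO\bigl(n + (1+\eta L)^2 d\bigr)$ first-order queries (SVRG to find the subproblem minimizer plus the subsampled Metropolized walk). Multiplying by $T = \tO(1/(\eta\mu))$ outer iterations, the total is $\tO\bigl(\tfrac{1}{\eta\mu}\bigl(n + (1+\eta L)^2 d\bigr)\bigr)$. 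Optimizing over $\eta$ --- concretely, picking $\eta = \max\bigl(\tfrac{1}{L}, \sqrt{n/(L^2 d\,\mathrm{polylog})}\bigr)$ --- balances the $n$-term against the $(\eta L)^2 d$-term and yields the $\kappa\sqrt{nd}$ middle term; the $\kappa d$ term survives from the $\eta = 1/L$ regime, and the lone $n\log(\cdot)$ term comes from enforcing $T \geq 1$ (when $n$ is so large that the unclamped $T$ would drop below one, you still pay one SVRG solve). No new sampler is built, no minibatch sizes are introduced, and the only genuinely ``first-order'' queries are the SVRG minimizer computations; everything else is already handled by Theorem~\ref{thm:base_finitesum}.
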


Corollary~\ref{thm:improved_finitesum} has several surprising properties. First, its bound when $n = 1$ gives yet another way of (up to polylogarithmic factors) recovering the runtime of \cite{LeeST20} without gradient concentration. Second, up to a $\tO(\max(1, \sqrt{\tfrac{n}{d}}))$ factor, it is essentially the best runtime one could hope for without an improvement when $n = 1$. This is in the sense that $\tO(\kappa d)$ is the best runtime for $n = 1$, and to our knowledge every efficient well-conditioned sampler requires minimizer access, i.e.\ $\tO(n)$ gradient queries \cite{WoodworthS16}. Interestingly, when $n = 1$, Algorithm~\ref{alg:mrw} can be significantly simplified, and becomes the standard Metropolized random walk \cite{DwivediCW018}; this yields Corollary~\ref{corr:zerokd}, an algorithm attaining the iteration complexity of Corollary~\ref{thm:kd} while only querying a value oracle for $f$.

\subsection{Previous work}
\label{ssec:prev}

Logconcave sampling is a problem that, within the theoretical computer science field, has its origins in convex body sampling (a problem it generalizes). A long sequence of developments have made significant advances in the general model, where only convexity is assumed about the negative log-density, and only value oracle access is given. In this prior work discussion, we focus on more structured cases where all or part of the negative log-density has finite condition number, and refer the reader to \cite{Vempala05,lovasz2006simulated,cousins2015bypassing} for an account on progress in the general case.

\paragraph{Well-conditioned densities.} Significant recent efforts in the machine learning and statistics communities focused on obtaining provable guarantees for well-conditioned distributions, starting from pioneering work of \cite{dalalyan2017theoretical}, and continued in e.g.\ \cite{ChengCBJ18, DalalyanR18, ChenV19, ChenDWY19, DwivediCW018, DurmusM19, DurmusMM19, LeeSV18, MouMWBJ19, shen2019randomized, LeeST20}. In this setting, many methods based on discretizations of continuous-time first-order processes (such as the Langevin dynamics) have been proposed. Typically, error guarantees come in two forms: either in the $2$-Wasserstein ($W_2$) distance, or in total variation (TV). The line \cite{DwivediCW018, ChenDWY19, LeeST20} has brought the gradient complexity for obtaining $\eps$ TV distance to $\tO(\kappa d)$ where $d$ is the dimension, by exploiting gradient concentration properties. For progress in complexities depending polynomially on $\eps^{-1}$, attaining $W_2$ guarantees (typically incomparable to TV bounds), we defer to \cite{shen2019randomized}, the state-of-the-art using $\tO(\kappa^{\frac{7}{6}}\eps^{-\frac{1}{3}} + \kappa\eps^{-\frac{2}{3}})$ queries to obtain $W_2$ distance $\eps \sqrt{d\mu^{-1}}$ from the target.\footnote{Here, $\sqrt{d\mu^{-1}}$ is the effective diameter; this accuracy measure allows for scale-invariant $W_2$ guarantees.} We note incomparable guarantees can be obtained by assuming higher derivative bounds (e.g.\ a Lipschitz Hessian); our work uses only the minimal assumption of bounded second derivatives.

\paragraph{Composite densities.} Recent works have studied sampling from densities of the form \eqref{eq:compositesampling}, or its specializations (e.g.\ restrictions to a convex set). Several \cite{Pereyra16, BrosseDMP17, Bernton18} are based on Moreau envelope or proximal regularization strategies, and demonstrate efficiency under more stringent assumptions on the structure of the composite function $g$, but under minimal assumptions obtain fairly large provable mixing times $\Omega(d^5)$. Proximal regularization algorithms have also been considered for non-composite sampling \cite{Wibisono19}. Another discretization strategy based on projections was studied by \cite{BubeckEL18}, but obtained mixing time $\Omega(d^7)$. Finally, improved algorithms for special constrained sampling problems have been proposed, such as simplex restrictions \cite{HsiehKRC18}.

Of particular relevance and inspiration to this work is \cite{MouFWB19}. By generalizing and adapting Metropolized HMC algorithms of \cite{DwivediCW018, ChenDWY19}, adopting a Moreau envelope strategy, and using (a stronger version of) the RGO access model, \cite{MouFWB19} obtained a runtime which in the best case scales as $\tOh{\kappa^2 d}$, similar to the guarantee of our base sampler in Theorem~\ref{thm:mainclaim}. However, this result required a variety of additional assumptions, such as access to the normalization factor of restricted Gaussians, Lipschitzness of $g$, warmness of the start, and various problem parameter tradeoffs. The general problem of sampling from \eqref{eq:compositesampling} under minimal assumptions more efficiently than general-purpose logconcave algorithms is to the best of our knowledge unresolved (even under restricted Gaussian oracle access), a novel contribution of our mixing time bound. Our results also suggest that the RGO is a natural notion of tractability for the composite sampling problem.

\paragraph{Logconcave finite sums.} Since \cite{welling2011bayesian} proposed the stochastic gradient Langevin dynamics, which at each step stochastically estimates the full gradient of the function, there has been a long line of work giving bounds for this method and other similar algorithms \cite{dalalyan2019user,gao2018global,salim2019stochastic,barkhagen2018stochastic,nemeth2019stochastic}. These convergence rates depend heavily on the variance of the stochastic estimates. Inspired by variance-reduced methods in convex optimization, samplers based on low-variance estimators have also been proposed  \cite{dubey2016variance,durmus2016stochastic,bierkens2019zig,baker2019control,nagapetyan2017true,chen2017convergence,zou2018subsampled,chatterji2018theory}. Although our reduction-based approach is not designed specifically for solving problems of finite sum structure, our speedup can be viewed as due to a lower variance estimator implicitly defined through the oracle subproblems of Theorem~\ref{thm:fzerocomp} via repeated re-centering.

In Table \ref{table:runtime}, we list prior runtimes \cite{zou2018subsampled,chatterji2018theory} for sampling logconcave finite sums; note these results additionally require bounded higher derivatives (with the exception of the $\kappa^4$ dependence), obtain guarantees only in Wasserstein distance, and have polynomial dependences on $\eps^{-1}$. On the other hand, our reduction-based approach obtains total variation bounds with linear dependence on $\kappa$ and polylogarithmic dependence on $\eps^{-1}$. Our bound also simultaneously matches the state-of-the-art bound when $n = 1$, a feature not shared by prior stochastic algorithms. To our knowledge, no previous nontrivial\footnote{As mentioned previously, one can always compute the full $\nabla F$ in every iteration in a well-conditioned sampler.} bounds were known in the high-accuracy regime before our work.

\begin{table}
	\begin{centering}
		\begin{tabular}{|c|c|c|}
			\hline 
			\multirow{2}{*}{Method} & \multicolumn{2}{c|}{Gradient oracle complexity}\tabularnewline
			\cline{2-3} \cline{3-3} 
			& $W_{2}\leq\epsilon$, $\mu = 1$ & $W_{2}\leq\epsilon\sqrt{d\mu^{-1}}$\tabularnewline
			\hline 
			SAGA-LD \cite{chatterji2018theory} & $n+\frac{\kappa^{1.5}\sqrt{d}+\kappa d+Md}{\epsilon}+\frac{\kappa d^{4/3}}{\epsilon^{2/3}}$ & $n+\frac{\kappa^{1.5}+\kappa\sqrt{d}+M\sqrt{d}}{\epsilon}+\frac{\kappa d^{2/3}}{\epsilon^{2/3}}$\tabularnewline
			\hline 
			SVRG-LD \cite{chatterji2018theory}& $n + \frac{\kappa^{1.5}\sqrt{d} + \kappa d + Md}{\eps} + \frac{\kappa d^{4/3}}{\eps^{2/3}}$ & $n+\frac{\kappa^{3}}{\epsilon^{2}}+\frac{\kappa^{1.5}+M\sqrt{d}}{\epsilon}$\tabularnewline
			\hline 
			CV-ULD \cite{chatterji2018theory}& $n+\frac{\kappa^{4}d^{1.5}}{\epsilon^{3}}$ & $n+\frac{\kappa^{4}}{\epsilon^{3}}$\tabularnewline
			\hline 
			SVRG-LD \cite{zou2018subsampled} & $n+\frac{\kappa^{1.5}\sqrt{d}+Md}{\epsilon}+\frac{\kappa\sqrt{nd}}{\epsilon}$ & $n+\frac{\kappa^{1.5}+M\sqrt{d}}{\epsilon}+\frac{\kappa\sqrt{n}}{\epsilon}$\tabularnewline
			\hline 
			State-of-the-art, $n = 1$ \cite{shen2019randomized} & $\frac{\kappa^{7/6}d^{1/6}}{\epsilon^{1/3}}+\frac{\kappa d^{1/3}}{\epsilon^{2/3}}$ & $\frac{\kappa^{7/6}}{\epsilon^{1/3}}+\frac{\kappa}{\epsilon^{2/3}}$\tabularnewline
			\hline 
		\end{tabular}
		\par\end{centering}
	\vspace{0.3cm}
	\begin{centering}
		\begin{tabular}{|c|c|}
			\hline 
			\multirow{1}{*}{Method} & \multicolumn{1}{c|}{Gradient oracle complexity (TV $\leq\epsilon$)}\tabularnewline
			\hline 
			Corollary~\ref{thm:improved_finitesum} & $n + \kappa d + \kappa\sqrt{nd}$\tabularnewline
			\hline 
			State-of-the-art, $n=1$ \cite{LeeST20} & $\kappa d$\tabularnewline
			\hline 
		\end{tabular}
		\par\end{centering}
	\caption{\label{table:runtime}Complexity of sampling from $e^{-F(x)}$ where $F(x)=\frac{1}{n}\sum_{i\in[n]}f_{i}(x)$ on $\R^d$ is $\mu$-strongly convex, each $f_{i}$ is convex and $L$-smooth, and $\kappa = \tfrac{L}{\mu}$. For relevant lines, $M$ is the Lipschitz constant of the Hessian $\nabla^2 F$, which our algorithm has no dependence on. Complexity is measured in terms of the number of calls to $f_i$ or $\nabla f_i$ for summands $\{f_i\}_{i \in [n]}$. We hide $\text{polylog}(\tfrac{n\kappa d}{\eps})$ factors for simplicity.}
\end{table}

\paragraph{Preliminary version.} A preliminary version of this work, containing the results of Section~\ref{sec:composite}, appeared as \cite{ShenTL20}. The preliminary version also contained an experimental evaluation of the algorithm in Section~\ref{sec:composite} for the task of sampling a (non-diagonal covariance) multivariate Gaussian restricted to a box, and demonstrated the efficacy of our method in comparison to general-purpose logconcave samplers (i.e.\ the hit-and-run method \cite{LovaszV06a}). The focus of the present version is giving theoretical guarantees for structured logconcave sampling tasks, so we omit empirical evaluations, and defer an evaluation of the new methods developed in this paper to interesting follow-up work.

\subsection{Technical overview}\label{ssec:technical}

\subsubsection{Composite logconcave sampling}\label{ssec:compositeintro}

We study the problem of approximately sampling from a distribution $\pi$ on $\R^d$, with density 
\begin{equation}\label{eq:compositesampling}\frac{d\pi(x)}{dx} \propto \exp\left(-f(x) - g(x)\right).\end{equation}
Here, $f: \R^d \rightarrow \R$ is assumed to be ``well-behaved'' (i.e.\ has finite condition number), and $g: \R^d \rightarrow \R$ is a convex, but possibly non-smooth function. This problem generalizes the special case of sampling from $\exp(-f(x))$ for well-conditioned $f$, simply by letting $g$ vanish. Even the specialization of \eqref{eq:compositesampling} where $g$ indicates a convex set (i.e.\ is $0$ inside the set, and $\infty$ outside) is not well-understood; existing mixing time bounds for this restricted setting are large polynomials in $d$ \cite{BrosseDMP17, BubeckEL18}, and are typically weaker than guarantees in the general logconcave setting \cite{LovaszV06a, LovaszV06b}. This is in contrast to the convex optimization setting, where first-order methods readily generalize to solve problem families such as $\min_{x \in \xset} f(x)$, where $\xset \subseteq \R^d$ is a convex set, as well as its generalization
\begin{equation}\label{eq:compositeopt}\min_{x \in \R^d} f(x) + g(x),\text{ where } g: \R^d \rightarrow \R \text{ is convex and admits a proximal oracle.}\end{equation}
We defined proximal oracles in Definition~\ref{def:proximaloracle}; in short, they are prodecures which minimize the sum of a quadratic and $g$. Definition~\ref{def:proximaloracle} is desirable as many natural non-smooth composite objectives arising in learning settings, such as the Lasso \cite{Tibshirani96} and elastic net \cite{ZouH05}, admit efficient proximal oracles. It is clear that the definition of a proximal oracle implies it can also handle arbitrary sums of linear functions and quadratics, as the resulting function can be rewritten as the sum of a constant and a single quadratic. The seminal work \cite{BeckT09} extends fast gradient methods to solve \eqref{eq:compositeopt} via proximal oracles, and has prompted many follow-up studies. 

Motivated by the success of the proximal oracle framework in convex optimization, we study sampling from the family \eqref{eq:compositesampling} through the lens of RGOs, a natural extension of Definition~\ref{def:proximaloracle}. The main result of Section~\ref{sec:composite} is a ``base'' algorithm efficiently sampling from \eqref{eq:compositesampling}, assuming access to an RGO for $g$. We now survey the main components of this algorithm. 

\textbf{Reduction to shared minimizers.} We first observe that without loss of generality, $f$ and $g$ share a minimizer: by shifting $f$ and $g$ by linear terms, i.e.\ $\tilde{f}(x) \defeq f(x) - \inprod{\nabla f(x^*)}{x}$, $\tilde{g}(x) \defeq g(x) + \inprod{\nabla f(x^*)}{x}$, where $x^*$ minimizes $f + g$, first-order optimality implies both $\tilde{f}$ and $\tilde{g}$ are minimized by $x^*$. Moreover, implementation of a first-order oracle for $\tilde{f}$ and an RGO for $\tilde{g}$ are immediate without additional assumptions. This modification becomes crucial for our later developments, and we hope this simple observation, reminiscent of ``variance reduction'' techniques in stochastic optimization \cite{Johnson013}, is broadly applicable to improving algorithms for the sampling problem induced by \eqref{eq:compositesampling}.

\textbf{Beyond Moreau envelopes: expanding the space.} A typical approach in convex optimization in handling non-smooth objectives $g$ is to instead optimize its \emph{Moreau envelope}, defined by
\begin{equation}\label{eq:moreauenvelope}g^\eta(y) \defeq \min_{x \in \R^d} \left\{g(x) + \frac{1}{2\eta}\norm{x - y}_2^2\right\}.\end{equation}
Intuitively, the envelope $g^\eta$ trades off function value with proximity to $y$; a standard exercise shows that $g^\eta$ is smooth (has a Lipschitz gradient), with smoothness depending on $\eta$, and moreover that computing gradients of $g^\eta$ reduces to calling a proximal oracle (Definition~\ref{def:proximaloracle}). It is natural to extend this idea to the composite sampling setting, e.g.\ via sampling from the density
\[\exp\left(-f(x) - g^\eta(x)\right).\]
However, a variety of complications prevent such strategies from obtaining rates comparable to their non-composite, well-conditioned counterparts, including difficulty in bounding closeness of the resulting distribution, as well as biased drifts of the sampling process due to error in gradients.

Our approach departs from this smoothing strategy in a crucial way, inspired by Hamiltonian Monte Carlo (HMC) methods \cite{Kramers40, Neal11}. HMC can be seen as a discretization of the ubiquitous Langevin dynamics, on an expanded space. In particular, discretizations of Langevin dynamics simulate the stochastic differential equation $\tfrac{dx_t}{dt} = -\nabla f(x_t) + \sqrt{2}\tfrac{dW_t}{dt}$, where $W_t$ is Brownian motion. HMC methods instead simulate dynamics on an extended space $\R^d \times \R^d$, via an auxiliary ``velocity'' variable which accumulates gradient information. This is sometimes interpreted as a discretization of the underdamped Langevin dynamics \cite{ChengCBJ18}. HMC often has desirable stability properties, and expanding the dimension via an auxiliary variable has been used in algorithms obtaining the fastest rates in the well-conditioned logconcave sampling regime \cite{shen2019randomized, LeeST20}. Inspired by this phenomenon, we consider the density on $\R^d \times \R^d$
\begin{equation}\label{eq:expandspace}\frac{d\pih}{dz}(z) \defeq \exp\left(-f(y) - g(x) - \frac{1}{2\eta}\norm{x - y}_2^2\right) \text{ where } z = (x, y).\end{equation}
Due to technical reasons, the family of distributions we use in our final algorithms are of slightly different form than \eqref{eq:expandspace}, but this simplification is useful to build intuition. Note in particular that the form of \eqref{eq:expandspace} is directly inspired by \eqref{eq:moreauenvelope}, where rather than maximizing over $x$, we directly expand the space. The idea is that for small enough $\eta$ and a set on $x$ of large measure, smoothness of $f$ will guarantee that the marginal of \eqref{eq:expandspace} on $x$ will concentrate $y$ near $x$, a fact we make rigorous. To sample from \eqref{eq:compositesampling}, we then show that a rejection filter applied to a sample $x$ from the marginal of \eqref{eq:expandspace} will terminate in constant steps. Consequently, it suffices to develop a fast sampler for \eqref{eq:expandspace}.

\textbf{Alternating sampling with an oracle.} The form of the distribution \eqref{eq:expandspace} suggests a natural strategy for sampling from it: starting from a current state $(x_k, y_k)$, we iterate
\begin{enumerate}
	\item Sample $y_{k + 1} \sim \exp\left(-f(y) - \tfrac{1}{2\eta}\norm{x_k - y}_2^2\right)$.
	\item Sample $x_{k + 1} \sim \exp\left(-g(x) - \frac{1}{2\eta}\norm{x - y_{k + 1}}_2^2\right)$, via a restricted Gaussian oracle.
\end{enumerate}
When $f$ and $g$ share a minimizer, taking a first-order approximation in the first step, i.e.\ sampling $y_{k + 1} \sim \exp(-f(x_k) - \inprod{\nabla f(x_k)}{y - x_k} - \tfrac{1}{2\eta}\norm{y - x_k}_2^2)$, can be shown to generalize the $\texttt{Leapfrog}$ step of HMC updates. However, for $\eta$ very small (as in our setting), we observe the first step itself reduces to the case of sampling from a distribution with constant condition number, performable in $\tilde{O}(d)$ gradient calls by e.g.\ Metropolized HMC \cite{DwivediCW018, ChenDWY19, LeeST20}. Moreover, it is not hard to see that this ``alternating marginal'' sampling strategy preserves the stationary distribution exactly, so no filtering is necessary. Directly bounding the conductance of this random walk, for small enough $\eta$, leads to an algorithm running in $\tOh{\kappa^2 d^2}$ iterations, each calling an RGO once, and a gradient oracle for $f$ roughly $\tOh{d}$ times. This latter guarantee is by an appeal to known bounds \cite{ChenDWY19, LeeST20} on the mixing time in high dimensions of Metropolized HMC for a well-conditioned distribution, a property satisfied by the $y$-marginal of \eqref{eq:expandspace} for small $\eta$. 

\textbf{Stability of Gaussians under bounded perturbations.} To obtain our tightest runtime result, we use that $\eta$ is chosen to be much smaller than $L^{-1}$ to show structural results about distributions of the form \eqref{eq:expandspace}, yielding tighter concentration for bounded perturbations of a Gaussian (i.e.\ the Gaussian has covariance $\tfrac{1}{\eta}\id$, and is restricted by $L$-smooth $f$ for $\eta \ll L^{-1}$). To illustrate, let
\[\frac{d\prop_x(y)}{dy} \propto \exp\left(-f(y) - \frac{1}{2\eta}\norm{y - x}_2^2\right)\]
and let its mean and mode be $\by_x$, $y^*_x$. It is standard that $\norm{\by_x - y^*_x}_2 \le \sqrt{d\eta}$, by $\eta^{-1}$-strong logconcavity of $\prop_x$. Informally, we show that for $\eta \ll L^{-1}$ and $x$ not too far from the minimizer of $f$, we can improve this to $\norm{\by_x - y^*_x}_2 = O(\sqrt{\eta})$; see Proposition~\ref{prop:min_perturb} for a precise statement. 

Using our structural results, we sharpen conductance bounds, improve the warmness of a starting distribution, and develop a simple rejection sampling scheme for sampling the $y$ variable in expected constant gradient queries. Our proofs are continuous in flavor and based on gradually perturbing the Gaussian and solving a differential inequality; we believe they may of independent interest. These improvements lead to an algorithm running in $\tOh{\kappa^2d}$ iterations; ultimately, we use our reduction framework, stated in Theorem~\ref{thm:fzerocomp}, to improve this dependence to $\tOh{\kappa d}$.

\subsubsection{Logconcave finite sums}\label{ssec:finitesumintro}

We initiate the algorithmic study of the following task in the high-accuracy regime: sample $x \sim \pi$ within total variation distance $\eps$, where $\tfrac{d\pi}{dx}(x) \propto \exp(-F(x))$ and 
\begin{equation}\label{eq:finitesum} F(x) = \frac{1}{n}\sum_{i \in [n]} f_i(x),\end{equation}
all $f_i: \R^d \rightarrow \R$ are convex and $L$-smooth, and $F$ is $\mu$-strongly convex. We call such a distribution $\pi$ a (well-conditioned) \emph{logconcave finite sum}. 

In applications (where summands correspond to points in a dataset, e.g.\ in Bayesian linear and logistic regression tasks \cite{DwivediCW018}), querying $\nabla F$ may be prohibitively expensive, so a natural goal is to obtain bounds on the number of required queries to summands $\nabla f_i$ for $i \in [n]$. This motivation also underlies the development of stochastic gradient methods in optimization, a foundational tool in modern statistics and data processing. Na\"ively, one can complete the task by using existing samplers for well-conditioned distributions and querying the full gradient $\nabla F$ in each iteration, resulting in a summand gradient query complexity of $\tO(n\kappa d)$ \cite{LeeST20}. Many recent works, inspired from recent developments in the complexity of optimizing a well-conditioned finite sum, have developed subsampled gradient methods for the sampling problem. However, to our knowledge, all such guarantees depend polynomially on the accuracy $\eps$ and are measured in the $2$-Wasserstein distance; in the high-accuracy, total variation case no nontrivial query complexity is currently known.

We show in Section~\ref{sec:finitesum} that given access to the minimizer of $F$, a simple zeroth-order algorithm which queries $\tO(\kappa^2 d)$ values of summands $\{f_i\}_{i \in [n]}$ succeeds (i.e.\ it never requires a full value or gradient query of $F$). The algorithm is essentially the Metropolized random walk proposed in \cite{DwivediCW018} for the $n = 1$ case with a cheaper subsampled filter step. Notably, because the random walk is conducted with respect to $F$, we cannot efficiently query the function value at any point; nonetheless, by randomly sampling to compute a nearly-unbiased estimator of the rejection probability, we do not incur too much error. This random walk was shown in \cite{ChenDWY19} to mix in $\tO(\kappa^2 d)$ iterations; we implement each step to sufficient accuracy using $\tO(1)$ function evaluations.

It is natural to ask if first-order information can be used to improve this query complexity, perhaps through ``variance reduction'' techniques (e.g.\ \cite{Johnson013}) developed for stochastic optimization. The idea behind variance reduction is to recenter gradient estimates in phases, occasionally computing full gradients to improve the estimate quality. One fundamental difficulty which arises from using variance reduction in high-accuracy sampling is that the resulting algorithms are not \emph{stateless}. By this, we mean that the variance-reduced estimates depend on the history of the algorithm, and thus it is difficult to ascertain correctness of the stationary distribution. We take a different approach to achieve a linear query dependence on the conditioning $\kappa$, described in the following section.

\subsubsection{Proximal point reduction framework}\label{ssec:overview}

To motivate Theorem~\ref{thm:fzerocomp}, we first recast existing ``proximal point'' reduction-based optimization methods through the lens of composite optimization, and subsequently show that similar ideas underlying our composite sampler in Section~\ref{ssec:compositeintro} yield an analagous ``proximal point reduction framework'' for sampling. Chronologically, our composite sampler (originally announced in \cite{ShenTL20}) predates our reduction framework, which was then inspired by the perspective given here. We hope these insights prove fruitful for further development of proximal approaches to sampling.

\paragraph{Proximal point methods as composite optimization.} Proximal point methods are a well-studied primitive in optimization, developed by \cite{Rockafellar76}; cf.\ \cite{ParikhB14} for a modern survey. The principal idea is that to minimize convex $F:\R^d \rightarrow \R$, it suffices to solve a sequence of subproblems
\begin{equation}\label{eq:proxpoint}x_{k + 1} \gets \argmin_{x \in \R^d}\Brace{F(x) + \frac{1}{2\lam}\norm{x - x_k}_2^2}.\end{equation}
Intuitively, by tuning the parameter $\lam \ge 0$, we trade off how regularized the subproblems \eqref{eq:proxpoint} are with how rapidly the overall method converges. Smaller values of $\lam$ result in larger regularization amounts which are amenable to algorithms for minimizing well-conditioned objectives. 

For optimizing functions of the form \eqref{eq:finitesum} via stochastic gradient estimates to $\eps$ error, \cite{Johnson013, DefazioBL14, SchmidtRB17} developed variance-reduced methods obtaining a query complexity of $\tO(n + \kappa)$. To match a known lower bound of $\tO(n + \sqrt{n\kappa})$ due to \cite{WoodworthS16}, two works \cite{LinMH15, FrostigGKS15} appropriately applied instances of accelerated proximal point methods \cite{Guler92} with careful analyses of how accurately subproblems \eqref{eq:proxpoint} needed to be solved. These algorithms black-box called the $\tO(n + \kappa)$ runtime as an oracle to solve the subproblems \eqref{eq:proxpoint} for an appropriate choice of $\lam$, obtaining an accelerated rate.\footnote{We note that an improved runtime without extraneous logarithmic factors was later obtained by \cite{Allen-Zhu17}.} To shed some light on this acceleration procedure, we adopt an alternative view on proximal point methods.\footnote{This perspective can also be found in the lecture notes \cite{Lee18}.} Consider the following known composite optimization result.

\begin{proposition}[Informal statement of \cite{BeckT09}]\label{prop:beckt}
Let $f: \R^d \rightarrow \R$ be $L$-smooth and $\mu$-strongly convex, and $g: \R^d \rightarrow \R$ admit a proximal oracle $\oracle(\lam, v)$ (cf.\ Definition~\ref{def:proximaloracle}). There is an algorithm taking $\tO(\sqrt{\kappa})$ iterations for $\kappa = \tfrac{L}{\mu}$ to find an $\eps$-approximate minimizer to $f + g$, each querying $\nabla f$ and $\oracle$ a constant number of times. Further, $\lam = \tfrac{1}{L}$ in all calls to $\oracle$.
\end{proposition}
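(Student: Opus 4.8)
The plan is to invoke the accelerated proximal gradient method (FISTA \cite{BeckT09} in the strongly convex regime) and to verify that each of its iterations can be implemented with a constant number of gradient and proximal-oracle queries, all with parameter $\lambda = \tfrac1L$. The key observation is that the basic building block of the method is the \emph{proximal gradient step}: given a point $y$, set
\[x^+ \gets \argmin_{x \in \R^d}\Brace{f(y) + \inprod{\nabla f(y)}{x - y} + \frac{L}{2}\norm{x - y}_2^2 + g(x)}.\]
Completing the square, this objective equals, up to an additive constant independent of $x$, $\tfrac{L}{2}\norm{x - (y - \tfrac1L\nabla f(y))}_2^2 + g(x)$, so $x^+$ is exactly the output of $\oracle(\tfrac1L,\, y - \tfrac1L\nabla f(y))$. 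Hence a proximal gradient step costs one evaluation of $\nabla f$ and one call to $\oracle$ with $\lambda = \tfrac1L$, as required.

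Next I would establish the one-step progress guarantee for this step. Using $L$-smoothness of $f$ to upper bound $f(x^+)$ by its quadratic model at $y$, the first-order optimality condition of the prox subproblem (there is $s \in \partial g(x^+)$ with $\nabla f(y) + L(x^+ - y) + s = 0$), convexity of $g$, and $\mu$-strong convexity of $f$, one obtains the standard ``three-point'' inequality: writing $F = f + g$, for every $z \in \R^d$,
\[F(x^+) \le F(z) + L\inprod{y - x^+}{y - z} - \frac{L}{2}\norm{y - x^+}_2^2 - \frac{\mu}{2}\norm{z - y}_2^2.\]
This is the only place where smoothness and strong convexity of $f$ and convexity of $g$ enter; note $g$ itself need not be smooth.

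Then I would run the accelerated scheme: maintain iterates $x_k$ together with an auxiliary sequence $v_k$, form an extrapolation point $y_k$ as a convex combination of $x_k$ and $v_k$ governed by the momentum parameter $\theta = \sqrt{\mu/L} = 1/\sqrt{\kappa}$, take a proximal gradient step from $y_k$ to obtain $x_{k+1}$, and update $v_{k+1}$ accordingly. Applying the three-point inequality at the two anchors $z = x_k$ and $z = x^\star$ (where $x^\star$ minimizes $F$) and combining them with weights $1 - \theta$ and $\theta$, the cross terms telescope against the potential
\[\Phi_k \defeq F(x_k) - F(x^\star) + \frac{\mu}{2}\norm{v_k - x^\star}_2^2,\]
yielding the contraction $\Phi_{k+1} \le (1 - \theta)\Phi_k$. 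Iterating gives $F(x_k) - F(x^\star) \le (1 - 1/\sqrt{\kappa})^k \Phi_0$, so $O(\sqrt{\kappa}\log(\Phi_0/\eps))$ iterations suffice to reach an $\eps$-approximate minimizer; since $\Phi_0$ is polynomially bounded in the problem parameters and $\norm{x_0 - x^\star}_2$, this is $\tO(\sqrt{\kappa})$. (Alternatively, one can avoid the strongly-convex momentum and simply restart vanilla FISTA \cite{BeckT09} every $\Theta(\sqrt{\kappa})$ steps, halving $F - F^\star$ per epoch, for the same bound.) The main obstacle is the bookkeeping in this potential-function argument: choosing $v_k$, the extrapolation weights, and $\theta$ so that the inner-product cross terms generated by the two applications of the three-point inequality cancel exactly against the change in $\tfrac{\mu}{2}\norm{v_k - x^\star}_2^2$ and leave a clean factor $1 - \theta$. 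Everything else — the reduction of prox-gradient steps to $\oracle(\tfrac1L,\cdot)$, the three-point inequality, and the geometric-series conclusion — is routine.
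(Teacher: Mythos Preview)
Your sketch is a correct outline of the standard accelerated proximal-gradient analysis: the reduction of the prox-gradient step to a call $\oracle(\tfrac1L,\, y - \tfrac1L\nabla f(y))$, the three-point inequality from smoothness, strong convexity, and first-order optimality of the prox subproblem, and the potential-function contraction with $\theta = 1/\sqrt{\kappa}$ are all right, and the restart alternative is a valid shortcut.

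However, there is no proof in the paper to compare against. The proposition is explicitly labeled ``Informal statement of \cite{BeckT09}'' and is used purely as a black-box citation in the technical overview (Section~\ref{ssec:overview}) to motivate, by analogy, the authors' sampling reduction framework. The paper never proves it; it simply invokes the known FISTA result and moves on. So your write-up is not reproducing or diverging from the paper's argument --- it is supplying a proof where the paper deliberately offers none.
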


Ignoring subtleties of the error tolerance of $\oracle$, we show how to use an instance of Proposition~\ref{prop:beckt} to recover the $\tO(n + \sqrt{n\kappa})$ query complexity for optimizing \eqref{eq:finitesum}. Let $f(x) = \tfrac{\mu}{2}\norm{x}_2^2$, and $g = F - f$. For any $\Lambda \ge \mu$, $f$ is both $\mu$-strongly convex and $\Lambda$-smooth. Moreover, note that all calls to the proximal oracle $\oracle$ for $g$ require solving subproblems of the form
\begin{equation}\label{eq:regsubprob}\argmin_{x \in \R^d}\Brace{F(x) - \frac{\mu}{2}\norm{x}_2^2 + \frac{\Lambda}{2}\norm{x - v}_2^2}.\end{equation}
The upshot of choosing a smoothness bound $\Lambda \ge \mu$ is that the regularization amount in \eqref{eq:regsubprob} increases, improving the conditioning of the subproblem, which is $\Lambda$-strongly convex and $L + \Lambda$-smooth. The algorithm of e.g.\ \cite{Johnson013} solves each subproblem \eqref{eq:regsubprob} in $\tO(n + \tfrac{L + \Lambda}{\Lambda})$ gradient queries, leading to an overall query complexity (for Proposition~\ref{prop:beckt}) of
\[\tO\Par{\sqrt{\frac{\Lambda}{\mu}} \cdot \Par{n + \frac{L}{\Lambda}}}.\]
Optimizing over $\Lambda \ge \mu$, i.e.\ taking $\Lambda = \max(\mu, \tfrac{L}{n})$, yields the desired bound of $\tO(n + \sqrt{n\kappa})$.

\paragraph{Applications to sampling.} In Sections~\ref{sec:composite} and~\ref{sec:finitesum}, we develop samplers for structured families with quadratic dependence on the conditioning $\kappa$. The proximal point approach suggests a strategy for accelerating these runtimes. Namely, if there is a framework which repeatedly calls a sampler for a regularized density (analogous to calls to \eqref{eq:proxpoint}), one could trade off the regularization with the rate of the outer loop. Fortunately, in the spirit of interpreting proximal point methods as composite optimization, the composite sampler of Section~\ref{sec:composite} itself meets these reduction framework criteria. 

We briefly recall properties of our composite sampler here. Let $\pi$ be a distribution on $\R^d$ with $\tfrac{d\pi}{dx}(x) \propto \exp(-\fwc(x) - \fcomp(x))$,\footnote{To disambiguate, we sometimes also use the notation $\fwc + \fcomp$ rather than $f + g$ in defining instances of our reduction framework or composite samplers, when convenient in the context.} where $\fwc$ is well-conditioned (has finite condition number $\kappa$) and $\fcomp$ admits an RGO, which solves subproblems of the form
\begin{equation}\label{eq:rgosubp}\oracle(\eta, v) \sim \text{the density proportional to } \exp\Par{-\frac{1}{2\eta}\norm{x - v}_2^2 - \fcomp(x)}.\end{equation}
The algorithm of Section~\ref{sec:composite} only calls $\oracle$ with a fixed $\eta$; note the strong parallel between the RGO subproblem and the proximal oracle of Proposition~\ref{prop:beckt}. For a given value of $\eta \ge 0$, our composite sampler runs in $\tO(\tfrac{1}{\eta\mu})$ iterations, each requiring a call to $\oracle$. Smaller $\eta$ improve the conditioning of the negative log-density of subproblem \eqref{eq:rgosubp}, but increase the overall iteration count, yielding a range of trade-offs. The algorithm of Section~\ref{sec:composite} has an upper bound requirement on $\eta$ (cf.\ Theorem~\ref{thm:mainclaim}); in Section~\ref{sec:framework}, we observe that this may be lifted when $\fwc = 0$ uniformly, allowing for a full range of choices. Moreover, the analysis of the composite sampler becomes much simpler when $\fwc = 0$, as in Theorem~\ref{thm:fzerocomp}. We give the framework as Algorithm~\ref{alg:alternatesample}, as well as a full (fairly short) convergence analysis. By trading off the regularization amount with the cost of implementing \eqref{eq:rgosubp} via bootstrapping base samplers, we obtain a host of improved runtimes.

Beyond our specific applications, the framework we provide has strong implications as a generic reduction from mixing times scaling polynomially in $\kappa$ to improved methods scaling linearly in $\kappa$. This is akin to the observation in \cite{LinMH15} that accelerated proximal point methods generically improve $\text{poly}(\kappa)$ dependences to $\sqrt{\kappa}$ dependences for optimization. We are optimistic this insight will find further implications in the logconcave sampling literature.

\subsection{Erratum, and a word of warning for $o(d)$ mixing}\label{ssec:bug}

The initial version of this paper, presented at COLT 2021, had an incorrect proof of Theorem~\ref{thm:fzerocomp}. This was due to our reliance on the average conductance (``spectral profile'') technique of \cite{LovaszK99} for bounding mixing. Roughly speaking, the mistake was caused by a misunderstanding that for stationary measures satisfying $\mu$-log isoperimetry (for example, $\mu$-strongly logconcave densities) and with transition distributions of $\Delta$-close points having constant overlap, \cite{LovaszK99} provides mixing time bounds of the form (where $\beta$ is a warmness parameter of the starting distribution)
\begin{equation}\label{eq:avcond}\int_{\frac 1 \beta}^{\frac 1 2}\frac{1}{s \Phi(s)^2} ds \lesssim \frac{1}{\mu\Delta^2}\int_{\frac 1 \beta}^{\frac 1 2}\frac{1}{s\log(s)} ds \approx \frac{1}{\mu \Delta^2} \log\log \beta .\end{equation}
Here, $\Phi(s)$ is the $s$-conductance of the Markov chain, which can typically be lower bounded by $\Omega(\sqrt{\mu \log(s)} \Delta)$ under a stationary density exhibiting log-isoperimetry. However, the trivial bound $\Phi(s) \le 1$  demonstrates that there is an additive $\log(\beta)$ term in \eqref{eq:avcond}. This is a bottleneck towards mixing times scaling as $o(d)$ for distributions where only an $\exp(\Omega(d))$-warm start is feasible; in particular, the conductance actually scales as $\min(1, \Omega(\sqrt{\mu \log(s)} \Delta))$, causing this additive term. In settings where $\mu\Delta^2 \ge d^{-1}$ (such as our reductions, where this term often scales as a condition number of the problem), this additive term $\log(\beta) = \Omega(d)$ may dominate. This observation (and the fix) came out of conversations with Sinho Chewi; we are immensely greatful for his help.

For the particular structure of the algorithm in Theorem~\ref{thm:fzerocomp}, we are able to give an alternative analysis going through $W_2$ convergence bounds, preserving the correctness of the theorem. However, this bottleneck is a general phenomenon which may cause future attempts to use Metropolized algorithms from exponentially warm starts to be stuck at $\Omega(d)$ iterations, which merits further investigation. We write this section as a word of warning to future researchers aiming at sublinear dimension dependences in Metropolized algorithms, and as a suggested open research direction.

\subsection{Roadmap}\label{ssec:roadmap}

We give notations and preliminaries in Section~\ref{sec:prelims}. In Section~\ref{sec:framework} we give our framework for bootstrapping fast regularized samplers, and prove its correctness (Theorem~\ref{thm:fzerocomp}). Assuming the ``base samplers'' of Theorems~\ref{thm:mainclaim} and~\ref{thm:base_finitesum}, in Section~\ref{sec:improve} we apply our reduction to obtain all of our strongest guarantees, namely Corollaries~\ref{thm:kd},~\ref{thm:kdcomp}, and~\ref{thm:improved_finitesum}. We then prove Theorems~\ref{thm:mainclaim} and~\ref{thm:base_finitesum} in Sections~\ref{sec:composite} and~\ref{sec:finitesum}. 

\section{Preliminaries}
\label{sec:prelims}

\subsection{Notation}\label{ssec:notation}

\paragraph{General notation.} For $d \in \N$, $[d]$ refers to the set of naturals $1 \le i \le d$; $\norm{\cdot}_2$ is the Euclidean norm on $\R^d$ when $d$ is clear from context. $\Nor(\mu, \covar)$ is the multivariate Gaussian of specified mean and variance, $\id$ is the identity of appropriate dimension when clear from context, and $\preceq$ is the Loewner order on symmetric matrices.

\paragraph{Functions.} We say twice-differentiable function $f: \R^d \rightarrow \R$ is $L$-smooth and $\mu$-strongly convex if $\mu\id \preceq \nabla^2 f(x) \preceq L\id$ for all $x \in \R^d$; it is well-known that $L$-smoothness implies that $f$ has an $L$-Lipschitz gradient, and that for any $x, y \in \R^d$,
\[f(x) + \inprod{\nabla f(x)}{y - x} + \frac{\mu}{2}\norm{y - x}_2^2 \le f(y) \le f(x) + \inprod{\nabla f(x)}{y - x} + \frac{L}{2}\norm{y - x}_2^2.\]
If $f$ is $L$-smooth and $\mu$-strongly convex, we say it has a condition number $\kappa \defeq \tfrac{L}{\mu}$. We call a zeroth-order oracle, or ``value oracle'', an oracle which returns $f(x)$ on any input point $x \in \R^d$; similarly, a first-order oracle, or ``gradient oracle'', returns both the value and gradient $(f(x), \nabla f(x))$.

\paragraph{Distributions.} We call distribution $\pi$ on $\R^d$ logconcave if $\tfrac{d\pi}{dx}(x) = \exp(-f(x))$ for convex $f$; $\pi$ is $\mu$-strongly logconcave if $f$ is $\mu$-strongly convex. For $A \subseteq \R^d$, $A^c$ is its complement, and we let $\pi(A) \defeq \int_{x \in A} d\pi(x)$. We say distribution $\rho$ is $\beta$-warm with respect to $\pi$ if $\tfrac{d\pi}{d\rho}(x) \le \beta$ everywhere, and define the total variation $\tvd{\pi}{\rho}\defeq \sup_{A \subseteq \R^d} \pi(A) - \rho(A)$. We will frequently use the fact that $\tvd{\pi}{\rho}$ is also the probability that $x \sim \pi$ and $x' \sim \rho$ are unequal under the best coupling of $(\pi, \rho)$; this allows us to ``locally share randomness'' when comparing two random walk procedures. We define the expectation $\E_\pi$ and variance $\Var_\pi$ with respect to distribution $\pi$ in the standard way,
\[\E_\pi[h(x)] \defeq \int h(x) d\pi(x),\; \Var_\pi[h(x)] \defeq \E_\pi\left[(h(x))^2\right] - \left(\E_\pi[h(x)]\right)^2.\]

\paragraph{Structured distributions.} This work considers two types of distributions with additional structure, which we call \emph{composite logconcave densities} and \emph{logconcave finite sums}. A composite logconcave density has the form $\exp(-f(x) -g(x))$, where both $f$ and $g$ are convex. In this context throughout, $f$ will either be uniformly $0$ or have a finite condition number (be ``well-conditioned''), and $g$ will represent a ``simple'' but possibly non-smooth function, as measured by admitting an RGO (cf.\ Definition~\ref{def:rgoracle}). We will sometimes refer to the components as $f$ and $g$ as $\fwc$ and $\fcomp$ respectively, to disambiguate when the functions $f$ and $g$ are already defined in context. In our reduction-based approaches, we have additional structure on the parameter $\lam$ which an RGO is called with (cf.\ Definition~\ref{def:etargo}). Specifically, in our instances typically $\lam^{-1} \gg L$ (or some other ``niceness'' parameter associated with the negative log-density); this can be seen as heavily regularizing the negative log-density, and often makes the implementation simpler.

Finally, a logconcave finite sum has density of the form $\exp(-F(x))$ where $F(x) = \tfrac{1}{n}\sum_{i \in [n]} f_i(x)$. When treating such densities, we make the assumption that each constituent summand $f_i$ is $L$-smooth and convex, and the overall function $F$ is $\mu$-strongly convex. We measure complexity of algorithms for logconcave finite sums by gradient queries to summands, i.e.\ $\nabla f_i(x)$ for some $i \in [n]$.

\paragraph{Optimization.} Throughout this work, we are somewhat liberal with assuming access to minimizers to various functions (namely, the negative log-densities of target distributions). We give a more thorough discussion of this assumption in Appendix~\ref{app:xstar}, but note here that for all function families we consider (well-conditioned, composite, and finite sum), efficient first-order methods exist for obtaining high accuracy minimizers, and this optimization query complexity is never the leading-order term in any of our algorithms assuming polynomially bounded initial error.

\subsection{Technical facts}

We will repeatedly use the following results.

\begin{fact}[Gaussian integral]\label{fact:gaussz}
For any $\lam \ge 0$ and $v \in \R^d$, 
\[\int \exp\Par{-\frac{1}{2\lam}\norm{x - v}_2^2} dx = (2\pi\lam)^{\frac{d}{2}}.\]
\end{fact}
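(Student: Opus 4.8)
The plan is the standard reduction to a one-dimensional Gaussian integral. First I would dispose of the degenerate case $\lam = 0$ (where both sides are to be read as $0$, or simply restrict to $\lam > 0$ as the only case used in the sequel). For $\lam > 0$, translation invariance of Lebesgue measure lets me substitute $u = x - v$, reducing to showing $\int_{\R^d} \exp(-\tfrac{1}{2\lam}\norm{u}_2^2)\,du = (2\pi\lam)^{d/2}$. Since $\norm{u}_2^2 = \sum_{i=1}^d u_i^2$, the integrand is a product of functions of the individual coordinates, so by Tonelli's theorem (the integrand is nonnegative) the $d$-dimensional integral factors as $\Par{\int_{\R} \exp(-\tfrac{1}{2\lam} t^2)\,dt}^d$. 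It therefore suffices to prove the case $d = 1$, namely $I \defeq \int_{\R} e^{-t^2/(2\lam)}\,dt = \sqrt{2\pi\lam}$.

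For the one-dimensional claim I would use the classical squaring trick: write
\[I^2 = \Par{\int_{\R} e^{-s^2/(2\lam)}\,ds}\Par{\int_{\R} e^{-t^2/(2\lam)}\,dt} = \int_{\R^2} e^{-(s^2 + t^2)/(2\lam)}\,ds\,dt,\]
then pass to polar coordinates $(s,t) = (r\cos\theta, r\sin\theta)$, whose Jacobian is $r$, to get $I^2 = \int_0^{2\pi}\int_0^\infty e^{-r^2/(2\lam)} r\,dr\,d\theta$. The inner integral evaluates via the substitution $w = r^2/(2\lam)$, $dw = (r/\lam)\,dr$, to $\lam \int_0^\infty e^{-w}\,dw = \lam$, so $I^2 = 2\pi\lam$ and hence $I = \sqrt{2\pi\lam}$ (positivity of $I$ fixes the sign). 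Raising to the $d$-th power gives $(2\pi\lam)^{d/2}$, completing the proof.

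There is no real obstacle here; this is a textbook computation. The only step requiring a genuine idea rather than routine manipulation is the polar-coordinates evaluation of the one-dimensional integral, and even that is entirely classical. I would keep the write-up short, citing the separability-plus-polar-coordinates argument and omitting the elementary substitutions.
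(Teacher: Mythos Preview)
Your proposal is correct and is exactly the standard textbook argument. The paper does not actually prove this statement: it is listed as a ``Fact'' in the preliminaries section and is invoked without proof throughout, so there is nothing to compare against beyond noting that your write-up supplies the classical separability-plus-polar-coordinates computation the authors take for granted.
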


\begin{fact}[\cite{DwivediCW018}, Lemma 1]\label{fact:subgauss}
Let $\pi$ be a $\mu$-strongly logconcave distribution, and let $x^*$ minimize its negative log-density. Then, for $x \sim \pi$ and any $\delta \in [0, 1]$, with probability at least $1 - \delta$,
\[\norm{x - x^*}_2 \le \sqrt{\frac{d}{\mu}} \Par{2 + 2\max\Par{\sqrt[4]{\frac{\log(1/\delta)}{d}}, \sqrt{\frac{\log(1/\delta)}{d}}}}.\]
\end{fact}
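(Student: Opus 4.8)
The final stated result is Fact~\ref{fact:subgauss}, a concentration bound for strongly logconcave distributions around the mode, cited from \cite{DwivediCW018}. Since this is a known result being quoted, I will sketch the proof one would actually carry out.

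\textbf{Proof proposal.} The plan is to bound the tail probability $\Pr_{x \sim \pi}[\norm{x - x^*}_2 > r]$ for an appropriate radius $r$, then solve for the $r$ that makes this at most $\delta$. First I would use $\mu$-strong logconcavity: writing $\tfrac{d\pi}{dx}(x) \propto \exp(-f(x))$ with $f$ $\mu$-strongly convex and minimized at $x^*$, first-order optimality ($\nabla f(x^*) = 0$) combined with strong convexity gives $f(x) \ge f(x^*) + \tfrac{\mu}{2}\norm{x - x^*}_2^2$. Hence $\pi$ is dominated, up to the normalizing constant ratio, by a Gaussian $\Nor(x^*, \mu^{-1}\id)$: more precisely, $\tfrac{d\pi}{dx}(x) \le \tfrac{\exp(-f(x^*))}{\int \exp(-f(y))dy} \exp(-\tfrac{\mu}{2}\norm{x-x^*}_2^2)$, and since $\int \exp(-f(y))dy \ge \exp(-f(x^*)) \cdot (\text{something})$ — actually the cleanest route is to instead lower bound the partition function. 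The standard trick: let $Z = \int \exp(-f(y))dy$; using $f(y) \le f(x^*) + \tfrac{?}{}$ is not available without smoothness, so instead one shows $\E_\pi[\norm{x-x^*}_2^2] \le \tfrac{d}{\mu}$ directly by an integration-by-parts / Brascamp--Lieb type argument, or more elementarily by comparing to the Gaussian. I would follow \cite{DwivediCW018} and bound moments: for strongly logconcave $\pi$, $\E_\pi[\norm{x-x^*}_2^k]$ is controlled by the corresponding Gaussian moments.

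The key steps in order: (1) Reduce to a tail bound by noting the claimed inequality is equivalent to showing that for $r = \sqrt{d/\mu}(2 + 2\max((\log(1/\delta)/d)^{1/4}, (\log(1/\delta)/d)^{1/2}))$ we have $\pi(\norm{x-x^*}_2 > r) \le \delta$. (2) Establish a sub-exponential / sub-Gaussian concentration: specifically, show there are universal constants such that $\Pr_{x\sim\pi}[\norm{x - x^*}_2 \ge \E_\pi \norm{x-x^*}_2 + t] \le \exp(-c \mu t^2)$ for $t$ in some range and $\exp(-c'\sqrt{\mu} \cdot \sqrt{t} \cdot ?)$ — actually the two regimes (fourth-root vs square-root of $\log(1/\delta)/d$) in the statement signal that the relevant concentration is of the form: deviations of order $\sqrt{d/\mu}$ are typical, small deviations beyond that are Gaussian ($\exp(-\mu t^2)$) while large deviations degrade. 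The cleanest derivation uses that $\norm{x - x^*}_2 = \sqrt{\sum_i (x_i - x^*_i)^2}$ and that $\mu$-strong logconcavity implies a logarithmic Sobolev inequality with constant $\mu$, hence the Lipschitz function $x \mapsto \norm{x-x^*}_2$ (which is $1$-Lipschitz) concentrates: $\Pr[\norm{x-x^*}_2 \ge \E + t] \le \exp(-\mu t^2/2)$ (Herbst argument). (3) Bound the mean: $\E_\pi \norm{x - x^*}_2 \le \sqrt{\E_\pi \norm{x-x^*}_2^2} \le \sqrt{d/\mu}$, where the last bound again follows from strong logconcavity (the covariance of a $\mu$-strongly logconcave distribution is $\preceq \mu^{-1}\id$, a standard consequence of the Brascamp--Lieb inequality or Cramér--Rao). (4) Combine: $\Pr[\norm{x-x^*}_2 \ge \sqrt{d/\mu} + t] \le \exp(-\mu t^2/2) = \delta$ gives $t = \sqrt{2\log(1/\delta)/\mu}$, so $\norm{x-x^*}_2 \le \sqrt{d/\mu} + \sqrt{2\log(1/\delta)/\mu}$ with probability $1-\delta$; rewriting the $\sqrt{\log(1/\delta)/\mu} = \sqrt{d/\mu}\cdot\sqrt{\log(1/\delta)/d}$ recovers (a version of) the square-root regime, and a more careful treatment of the $\sqrt{\E[\norm{\cdot}^2]}$ versus $\E[\norm{\cdot}]$ gap, or handling of the $\chi$-distribution tails directly, produces the fourth-root term for the moderate-deviation regime.

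\textbf{Main obstacle.} The hard part will be getting the constants and the two-regime ($\sqrt[4]{\cdot}$ vs $\sqrt{\cdot}$) structure exactly as stated rather than a cruder bound. A naive LSI/Herbst argument yields $\sqrt{d/\mu} + \sqrt{2\log(1/\delta)/\mu}$, which is of the right order but does not transparently match the precise form with the $\max$ of a fourth root and a square root; reproducing that form requires either working directly with the tail of $\norm{x-x^*}_2^2$ (comparing to a chi-squared tail, where the $\sqrt{d}$ and $\log(1/\delta)$ terms interact precisely as $d + \sqrt{d\log(1/\delta)} + \log(1/\delta)$ under the square root, whose square root gives exactly the stated expression) or invoking a sharper concentration inequality for norms under strong logconcavity. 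I would handle this by establishing $\Pr[\norm{x-x^*}_2^2 \ge \tfrac{d}{\mu} + \tfrac{2}{\mu}(\sqrt{d\log(1/\delta)} + \log(1/\delta))] \le \delta$ via a Bernstein-type bound on the sum of the (sub-exponential, after strong logconcavity) coordinates' squared deviations, then take square roots and use $\sqrt{a+b+c} \le \sqrt a + \sqrt b + \sqrt c$ to separate terms, matching the claimed bound up to the universal constant $2$. Everything else is routine; since this Fact is merely cited, in the paper it suffices to reference \cite{DwivediCW018} and the above sketch is the content one would reconstruct.
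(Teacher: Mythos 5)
The paper supplies no proof of Fact~\ref{fact:subgauss} --- it cites Lemma~1 of \cite{DwivediCW018} --- so there is no in-paper argument to compare against; here is how your sketch fares as a stand-alone proof. The LSI/Herbst route is correct, and contrary to your worry at the end, it does \emph{not} need a chi-square refinement to recover the fourth-root regime: $\Pr\bigl[\norm{x-x^*}_2 \ge \E_\pi\norm{x-x^*}_2 + t\bigr] \le \exp(-\mu t^2/2)$ together with $\E_\pi\norm{x-x^*}_2 \le \sqrt{d/\mu}$ already gives $\norm{x-x^*}_2 \le \sqrt{d/\mu}\bigl(1 + \sqrt{2}\,(\log(1/\delta)/d)^{1/2}\bigr)$ with probability $1-\delta$, and this is everywhere at most the stated bound: writing $u = (\log(1/\delta)/d)^{1/4}$, for $u \le 1$ the target is $\sqrt{d/\mu}(2 + 2u)$ and $1 + \sqrt{2}u^2 \le 2 + 2u$ holds on $[0,1]$; for $u > 1$ the target is $\sqrt{d/\mu}(2 + 2u^2)$ and the comparison $1 + \sqrt{2}u^2 \le 2 + 2u^2$ is coefficientwise. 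The fourth root in the Fact is a \emph{weakening} of the Gaussian tail in the moderate-deviation regime $\log(1/\delta)\le d$, not a sharpening you must reproduce, so your closing ``main obstacle'' paragraph is solving a non-problem.

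The one genuine gap is your justification of $\E_\pi\norm{x-x^*}_2^2 \le d/\mu$. The covariance bound $\textup{Cov}(\pi) \preceq \mu^{-1}\id$ (Brascamp--Lieb) controls only the \emph{central} second moment; since $x^*$ is the mode and not the mean, you have silently dropped the cross term $\norm{\E_\pi x - x^*}_2^2$. The correct ingredient is precisely the paper's Fact~\ref{fact:distxstar} (\cite{DurmusM19}, Theorem~1), stated with $x^*$ the minimizer of the negative log-density. For completeness, a short direct proof: integrating $\nabla\cdot\bigl((x-x^*)e^{-f(x)}\bigr) = d\,e^{-f(x)} - \inprod{x-x^*}{\nabla f(x)}\,e^{-f(x)}$ over $\R^d$ and using decay at infinity gives $d\int e^{-f} = \int\inprod{x-x^*}{\nabla f(x)}\,e^{-f}$, and then $\mu$-strong convexity with $\nabla f(x^*)=0$ gives $\inprod{x-x^*}{\nabla f(x)} \ge \mu\norm{x-x^*}_2^2$, whence $\E_\pi\norm{x-x^*}_2^2 \le d/\mu$. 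With that step replaced, your LSI route is a complete proof; for the record it differs from \cite{DwivediCW018}, whose Lemma~1 is proved by bounding the tail integral $\int_{\norm{y-x^*}_2\ge r} e^{-f(y)}\,dy$ directly against the strong-convexity lower envelope, producing the chi-square-type expansion you floated as the alternative.
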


\begin{fact}[\cite{Harge04}, Theorem 1.1]
	\label{fact:convexshrink}
	Let $\pi$ be a $\mu$-strongly logconcave density. Let $d\gamma_\mu(x)$ be the Gaussian density with covariance matrix $\mu^{-1}\id$. For any convex function $h$,
	\[\E_\pi[h(x - \E_\pi[x])] \le \E_{\gamma_\mu}[h(x - \E_{\gamma_\mu}[x])]. \]
\end{fact}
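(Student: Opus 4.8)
This is Theorem 1.1 of \cite{Harge04}; here is the route I would take. Write $\tfrac{d\pi}{dx}(x) \propto \exp(-W(x))$ with $\nabla^2 W \succeq \mu\id$, so that $\tfrac{d\pi}{d\gamma_\mu}(x) \propto \exp(-V(x))$ for the convex function $V(x) \defeq W(x) - \tfrac{\mu}{2}\norm{x}_2^2$. After a routine mollification (convolve $\pi$ with a tiny Gaussian, approximate $h$ by smooth convex functions with bounded Hessian) we may assume everything in sight is smooth and $V$ is strongly convex, recovering the stated inequality by a limiting argument. Abbreviating $\bar x \defeq \E_\pi[x]$ and using $\E_{\gamma_\mu}[x] = 0$, the goal becomes $\E_\pi[h(x - \bar x)] \le \E_{\gamma_\mu}[h(x)]$.

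\emph{Reduction to a correlation inequality.} Since $\pi$ has a log-concave density relative to the Gaussian $\gamma_\mu$, Caffarelli's contraction theorem gives that the Brenier map $T = \nabla\phi$ (with $\phi$ convex) pushing $\gamma_\mu$ forward to $\pi$ satisfies $0 \preceq \nabla^2\phi \preceq \id$, i.e.\ $T$ is $1$-Lipschitz. Then $\bar x = \E_{\gamma_\mu}[T(z)]$, the centered map $\hat T(z) \defeq T(z) - \bar x$ is again a $1$-Lipschitz gradient of a convex function with $\E_{\gamma_\mu}[\hat T] = 0$, and the complementary map $R(z) \defeq z - \hat T(z)$ is the gradient of a convex function as well (its Hessian is $\id - \nabla^2\phi \succeq 0$), is $1$-Lipschitz, and satisfies $\E_{\gamma_\mu}[R] = 0$. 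Since $\E_\pi[h(x - \bar x)] = \E_{\gamma_\mu}[h(\hat T(z))]$ and convexity of $h$ gives $h(z) \ge h(\hat T(z)) + \inprod{\nabla h(\hat T(z))}{R(z)}$, it suffices to prove the correlation inequality
\[\E_{\gamma_\mu}\Brack{\inprod{\nabla h(\hat T(z))}{R(z)}} \ge 0,\]
which amounts to saying that $\hat T_{\#}\gamma_\mu$ is dominated by $\gamma_\mu$ in the convex order. An alternative interpolation-based route reaches the same point: replace $T$ by the Ornstein--Uhlenbeck flow $\pi_t \defeq (\mathcal{P}_t f)\gamma_\mu$ joining $\pi$ to $\gamma_\mu$ (with $f = \tfrac{d\pi}{d\gamma_\mu}$); since $\mathcal{P}_t$ preserves log-concavity of $f$ and, by self-adjointness, $\E_{\pi_t}[x] = e^{-t}\bar x$, differentiating $t \mapsto \E_{\pi_t}[h(x - e^{-t}\bar x)]$ and integrating by parts reduces monotonicity of this quantity to showing $\E_{\rho}[\inprod{\nabla h}{\nabla g}] \ge 0$ for each $\mu$-strongly log-concave mean-zero $\rho$ (a recentering of $\pi_t$), a convex $h$, and the convex function $g$ with $\nabla g = \mu^{-1}\nabla(-\log\tfrac{d\rho}{dx}) - \textup{id}$ (which automatically has $\E_\rho[\nabla g] = 0$) --- the same correlation inequality.

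\emph{Main obstacle.} The crux is exactly this correlation inequality. In one dimension it is elementary: writing $y_0$ for a zero of the nondecreasing function $g'$, one has $\E_\rho[h'(y)g'(y)] = \E_\rho[(h'(y) - h'(y_0))g'(y)]$ because $\E_\rho[g'] = 0$, and the two factors $h'(y) - h'(y_0)$ and $g'(y)$ have the same sign at every $y$, giving a pointwise nonnegative integrand. This argument has no analogue in $\R^d$, where two distinct monotone fields need not be pointwise positively aligned after any single shift, so the multivariate statement genuinely requires log-concavity of the background measure. I would establish it via the Helffer--Sj\"{o}strand (Witten Laplacian) covariance representation for a strongly log-concave measure --- which converts $\E_\rho[\inprod{\nabla h}{\nabla g}]$ into a nonnegative quadratic form once $\nabla^2(-\log\rho) \succeq \mu\id$ is invoked --- or, in the transport picture, by constructing the martingale dilation from $\hat T_{\#}\gamma_\mu$ to $\gamma_\mu$ directly through the F\"{o}llmer/Brownian transport map. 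The remaining ingredients (mollification, Gaussian integration by parts, Caffarelli's theorem, and Pr\'{e}kopa-type preservation of log-concavity along the flow) are all standard.
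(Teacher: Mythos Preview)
The paper does not prove this statement; it is cited from \cite{Harge04} without argument, so there is no ``paper's own proof'' to compare against. Your sketch is a reasonable outline of one standard route to Harg\'{e}'s theorem: after Caffarelli's contraction (or, equivalently, an Ornstein--Uhlenbeck interpolation together with Gaussian integration by parts), the claim reduces to a correlation inequality of the form $\E_\rho[\inprod{\nabla w}{\nabla h}] \ge 0$ for convex $h$, $w$ with $\E_\rho[\nabla w] = 0$ under a $\mu$-strongly log-concave $\rho$ whose potential is $\tfrac{\mu}{2}\norm{\cdot}_2^2 + w$. You correctly flag this inequality as the crux and dispatch the one-dimensional case.

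The multivariate resolution is where the sketch is thinnest, and it is not quite closed. The Helffer--Sj\"{o}strand / Brascamp--Lieb machinery yields positivity of \emph{covariances}, i.e.\ quantities of the form $\E_\rho[\inprod{\nabla f}{A^{-1}\nabla g}]$ for a positive operator $A$, not the raw expectation $\E_\rho[\inprod{\nabla h}{\nabla w}]$ you need; and in the Caffarelli picture the field $\nabla h \circ \hat T$ is in general neither a gradient nor monotone on $\R^d$, so the inequality does not reduce to an off-the-shelf positivity statement. Bridging this gap --- or, in the transport picture, actually exhibiting the martingale dilation from $\hat T_{\#}\gamma_\mu$ to $\gamma_\mu$ --- is exactly where Harg\'{e}'s argument does its real work (via a careful semigroup computation exploiting that $w$ is tied to the potential of $\rho$, not an arbitrary convex function). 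So the structural reduction is correct and matches the literature, but the main obstacle is indicated rather than carried out.
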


\begin{fact}[\cite{DurmusM19}, Theorem 1]\label{fact:distxstar}
	Let $\pi$ be a $\mu$-strongly logconcave distribution, and let $x^*$ minimize its negative log-density. Then, $\E_\pi[\norm{x - x^*}_2^2] \le \tfrac{d}{\mu}$.
\end{fact}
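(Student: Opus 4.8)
The plan is to combine a Stein-type integration-by-parts identity with the defining inequality of strong convexity. Write $\tfrac{d\pi}{dx}(x) \propto \exp(-f(x))$ with $f$ a $\mu$-strongly convex function; by translating coordinates we may assume $x^* = 0$, so that $f$ attains its minimum at the origin and in particular $0 \in \partial f(0)$.

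The first step is the identity $\E_\pi\Brack{\inprod{\nabla f(x)}{x}} = d$. Assuming $f$ is smooth for the moment, this follows from integrating by parts in each coordinate: $\int x_i \partial_i f(x) e^{-f(x)} dx = -\int x_i \,\partial_i\Par{e^{-f(x)}} dx = \int e^{-f(x)} dx$, where the boundary terms at infinity vanish because $\mu$-strong logconcavity forces $e^{-f(x)}$ to decay at least as fast as the Gaussian $\exp(-\tfrac{\mu}{2}\norm{x}_2^2)$ along every ray (this tail bound also guarantees $\inprod{\nabla f(x)}{x}$ is $\pi$-integrable). Summing over $i \in [d]$ and dividing by the normalizing constant gives the identity. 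For possibly non-smooth $f$, I would first mollify $f$ (e.g.\ convolve with a narrow Gaussian), which preserves $\mu$-strong convexity, apply the whole argument to the mollified density, and pass to the limit.

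The second step uses optimality of $x^* = 0$ together with strong convexity: for every $x$, choosing $\nabla f(x) \in \partial f(x)$ and using $0 \in \partial f(0)$, monotonicity of the subdifferential of a $\mu$-strongly convex function gives the pointwise bound $\inprod{\nabla f(x)}{x} = \inprod{\nabla f(x) - 0}{x - 0} \ge \mu\norm{x}_2^2$. Taking $\E_\pi$ of both sides and invoking the identity from the first step yields $d = \E_\pi\Brack{\inprod{\nabla f(x)}{x}} \ge \mu\, \E_\pi\Brack{\norm{x}_2^2}$, i.e.\ $\E_\pi[\norm{x - x^*}_2^2] \le \tfrac{d}{\mu}$ after undoing the translation.

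The main obstacle is making the integration by parts fully rigorous: ruling out contributions at infinity and handling non-differentiability of $f$. Both are standard — the former via the Gaussian tail decay implied by strong logconcavity, the latter via mollification — so modulo these technical points the proof is very short. I note that one could instead apply Fact~\ref{fact:convexshrink} to obtain the \emph{variance} bound $\E_\pi[\norm{x - \E_\pi[x]}_2^2] \le \tfrac{d}{\mu}$, but since the minimizer $x^*$ need not coincide with the mean, this alone does not control $\E_\pi[\norm{x - x^*}_2^2]$; it is precisely the integration-by-parts identity that singles out the mode as the correct center.
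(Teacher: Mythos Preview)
The paper does not prove this statement at all; it is stated as a cited fact from \cite{DurmusM19} with no proof given. Your argument is correct and is essentially the standard one: the Stein-type identity $\E_\pi[\inprod{\nabla f(x)}{x - x^*}] = d$ (via integration by parts, with boundary terms killed by the Gaussian-type tails of a strongly logconcave density) combined with the $\mu$-strong monotonicity $\inprod{\nabla f(x) - \nabla f(x^*)}{x - x^*} \ge \mu\norm{x - x^*}_2^2$ immediately gives the bound. Your closing observation that Fact~\ref{fact:convexshrink} controls the variance about the mean rather than about the mode, and hence does not directly suffice, is also correct.
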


\section{Proximal reduction framework}
\label{sec:framework}

The reduction framework of Theorem~\ref{thm:fzerocomp} can be thought of as a specialization of a more general composite sampler which we develop in Section~\ref{sec:composite}, whose guarantees are reproduced here.

\restatekkdcomp*

Our main observation, elaborated on more formally for specific applications in Section~\ref{sec:improve}, is that a variety of structured logconcave densities have negative log-densities $\fcomp$, where we can implement an efficient restricted Gaussian oracle for $\fcomp$ via calling an existing sampling method. Crucially, in these instantiations we use the fact that the distributions which $\oracle$ is required to sampled from are heavily regularized (restricted by a quadratic with large leading coefficient) to obtain fast samplers. We further note that the upper bound requirement on $\eta$ in Theorem~\ref{thm:mainclaim} can be lifted when the ``well-conditioned'' component is uniformly $0$. Instead of setting $f = 0$ and $g = \fcomp$ in Theorem~\ref{thm:mainclaim}, and refining the analysis for this special case to tolerate arbitrary $\eta$, we provide a self-contained proof here. This particular structure (the composite setting where $\fwc$ is uniformly zero and $\fcomp$ is strongly convex) admits significant simplifications from the more general case, so using slightly different proof techniques, we are able to obtain stronger convergence guarantees for this particular problem allowing for mixing in fewer than $d$ iterations from a feasible start (see Section~\ref{ssec:bug}).

\restatefzerocomp*

For simplicity of notation, we replace $\fcomp$ in the statement of Theorem~\ref{thm:fzerocomp} with $g$ throughout just this section. Let $\pi$ be a density on $\R^d$ with $\tfrac{d\pi}{dx}(x) \propto \exp(-g(x))$ where $g$ is $\mu$-strongly convex (but possibly non-smooth), and let $\oracle$ be a restricted Gaussian oracle for $g$. Consider the joint distribution $\pih$ supported on an expanded space $z = (x, y) \in \R^d \times \R^d$ with density, for some $\eta > 0$,
\[\frac{d\pih}{dz}(z) \propto \exp\Par{-g(x) - \frac{1}{2\eta}\norm{x - y}_2^2}.\]
Note that the $x$-marginal of $\pih$ is precisely $\pi$, so it suffices to sample from the $x$-marginal. We consider a simple alternating Markov chain for sampling from $\pih$, described in the following Algorithm~\ref{alg:alternatesample}.

\begin{algorithm}[ht!]\caption{$\AlternateSample(g, \eta, T)$}
	\label{alg:alternatesample}
	\textbf{Input:} $\mu$-strongly convex $g: \R^d \rightarrow \R$, $\eta > 0$, $T \in \N$, $x_0 = \min_x g(x)$. 
	\begin{algorithmic}[1]
		\For{$k \in [T]$}
		\State Sample $y_k \sim \pi_{x_{k - 1}}$, where for all $x \in \R^d$, $\tfrac{d\pi_x}{dy}(y) \propto \exp\Par{-\tfrac{1}{2\eta}\norm{x - y}_2^2}$.
		\State Sample $x_k \sim \pi_{y_k}$, where for all $y \in \R^d$, $\tfrac{d\pi_y}{dx}(x) \propto \exp\Par{-g(x) -\tfrac{1}{2\eta}\norm{x - y}_2^2}$.
		\EndFor
		\State \Return $x_T$
	\end{algorithmic}
\end{algorithm}
By observing that the distributions $\pi_x$ and $\pi_y$ in the above method are precisely the marginal distributions of $\pih$ with one variable fixed, it is straightforward to see that $\pih$ is a stationary distribution of the process. We make this formal in the following lemma.

\begin{lemma}[Alternating marginal sampling]
	\label{lem:alternate_exact}
	Let $\pih$ be a density on two blocks $(x, y)$. Sample $(x, y) \sim \pih$, and then sample $\tx \sim \pih(\cdot, y)$, $\ty \sim \pih(\tx, \cdot)$. Then, the distribution of $(\tx, \ty)$ is $\pih$. Moreover, the alternating marginal sampling Markov chain on either marginal is reversible.
\end{lemma}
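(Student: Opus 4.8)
The plan is to observe that each of the two resampling steps individually preserves the joint law $\pih$, because sampling a block from its conditional given the other block is exactly the operation that reconstitutes the joint from a marginal. Throughout, I read the notation $\pih(\cdot, y)$ as the conditional density of the first block given that the second block equals $y$, i.e.\ $\pih(x, y)/\pih_Y(y)$ where $\pih_Y$ denotes the $y$-marginal of $\pih$ (and symmetrically $\pih(\tx, \cdot)$ is the conditional of the second block given the first equals $\tx$). The one mild subtlety to flag at the outset is that the original draw of $x$ in $(x,y)\sim\pih$ is discarded: the new first coordinate $\tx$ is generated using only $y$, so we must check that $(\tx, y)$ — not $(x,y)$ — has the right law after the first step.

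First I would handle the first resampling step. Since $(x,y)\sim\pih$, the marginal law of $y$ alone is $\pih_Y$. Conditionally on $y$, the variable $\tx$ is drawn from $\pih(\cdot, y)$, and $\tx$ is (conditionally on $y$) independent of everything else. Hence the joint density of $(\tx, y)$ factors as $\pih_Y(y)\cdot \pih(\tx \mid y) = \pih(\tx, y)$ by the definition of conditional density; that is, $(\tx, y)\sim \pih$. This is just the chain-rule factorization $\pih(x,y) = \pih_Y(y)\pih(x\mid y)$ read in reverse.

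Next I would apply the identical argument to the second step. Having established $(\tx, y)\sim\pih$, the marginal law of $\tx$ is $\pih_X$ (the $x$-marginal), and conditionally on $\tx$ the variable $\ty$ is drawn from $\pih(\tx, \cdot) = \pih(\cdot \mid \tx)$. Therefore the joint density of $(\tx, \ty)$ equals $\pih_X(\tx)\cdot\pih(\ty \mid \tx) = \pih(\tx, \ty)$, so $(\tx, \ty)\sim\pih$, as claimed. If desired, one can phrase both steps uniformly via a transition-kernel computation: the kernel $K_1$ that replaces the first coordinate by a fresh draw from its conditional satisfies $\int \pih(x,y)\,K_1\big((x,y),(\tx,y)\big)\,dx\,dy = \pih(\tx,y)$, and similarly for the second-coordinate kernel $K_2$, so $\pih K_1 K_2 = \pih$.

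I do not expect a real obstacle here; the statement is essentially the standard fact that a systematic-scan two-block Gibbs sweep fixes the target distribution, and the only thing to be careful about is the bookkeeping that the discarded coordinate does not interfere with the factorization. In the application (Algorithm~\ref{alg:alternatesample}), one then instantiates this with $\pih$ the joint density $\propto \exp(-g(x) - \tfrac{1}{2\eta}\norm{x-y}_2^2)$, noting that its two conditionals are exactly the distributions $\pi_x$ and $\pi_y$ sampled in the algorithm (the former a Gaussian, the latter implemented by the restricted Gaussian oracle), which yields that $\pih$ — and hence its $x$-marginal $\pi$ — is stationary for the chain.
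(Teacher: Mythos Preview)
Your proposal is correct and follows essentially the same approach as the paper: show $(\tx,y)\sim\pih$ via the factorization marginal-of-$y$ times conditional-of-$\tx$-given-$y$, then repeat symmetrically for $\ty$. The paper's proof is just a terser version of your argument.
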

\begin{proof}
	The density of the resulting distribution at $(\tx, y)$ is proportional to the product of the (marginal) density at $y$ and the conditional distribution of $\tx \mid y$, which by definition is $\pih$. Therefore, $(\tx, y)$ is distributed as $\pih$, and the argument for $\ty$ follows symmetrically. To see reversibility on the $x$ marginal, it suffices to note that the probability we move from $x$ to $x'$ is proportional to
	\[\int_{y} \pih(x, y) \pih(x', y) dy,\]
	which is a symmetric function of $x$ and $x'$. A similar argument holds for the $y$ marginals.
\end{proof}

We also state a simple observation about alternating schemes such as Algorithm~\ref{alg:alternatesample}, which will be useful later. Let $\prop_x$ be the density of $y_k$ after one step of the above procedure starting from $x_{k - 1} = x$, and let $\tran_x$ be the resulting density of $x_k$.

\begin{observation}\label{observe:pvst}
	For any two points $x$, $x' \in \R^d$, $\tvd{\tran_x}{\tran_{x'}} \le \tvd{\prop_x}{\prop_{x'}}$.
\end{observation}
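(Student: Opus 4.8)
The plan is to use the fact (recalled in the preliminaries) that total variation distance equals the minimal coupling disagreement probability, combined with the data-processing inequality for the second sampling step. First I would observe that $\tran_x$ is obtained from $\prop_x$ by applying a fixed Markov kernel: namely, given $y_k \sim \prop_x$, the point $x_k$ is drawn from $\pi_{y_k}$, the restricted-Gaussian distribution with density $\propto \exp(-g(x) - \tfrac{1}{2\eta}\norm{x - y_k}_2^2)$. Crucially, this kernel $y \mapsto \pi_y$ does not depend on the starting point $x$ or $x'$ — it is the same stochastic map in both chains. Hence $\tran_x = \mathcal{K}\prop_x$ and $\tran_{x'} = \mathcal{K}\prop_{x'}$ for a common kernel $\mathcal{K}$.

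The key step is then the standard fact that applying a common Markov kernel cannot increase total variation distance: $\tvd{\mathcal{K}\nu}{\mathcal{K}\nu'} \le \tvd{\nu}{\nu'}$. I would prove this inline using the coupling characterization: take the optimal coupling $(y, y')$ of $(\prop_x, \prop_{x'})$ achieving $\Pr[y \ne y'] = \tvd{\prop_x}{\prop_{x'}}$, and then, conditioned on the drawn pair, share randomness in the second step — i.e.\ when $y = y'$, generate a single $x_k = x_k'$ from $\pi_{y_k}$, and when $y \ne y'$ let the two draws be arbitrary (independent). This yields a coupling of $(\tran_x, \tran_{x'})$ in which $x_k \ne x_k'$ only if $y_k \ne y_k'$, so $\tvd{\tran_x}{\tran_{x'}} \le \Pr[x_k \ne x_k'] \le \Pr[y_k \ne y_k'] = \tvd{\prop_x}{\prop_{x'}}$, as desired.

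I do not anticipate a genuine obstacle here; this is a short ``warmup'' observation. The only thing to be slightly careful about is phrasing the shared-randomness argument cleanly — ensuring the coupling of second-step draws is measurable and well-defined when $y_k = y_k'$ (one uses the same source of randomness feeding the restricted Gaussian oracle), and noting that nothing about the structure of $g$ or the value of $\eta$ is used beyond the fact that the second-step kernel is identical across the two chains. This mirrors the ``locally share randomness'' principle flagged in Section~\ref{ssec:notation}, so I would simply invoke that framing and keep the proof to a couple of lines.
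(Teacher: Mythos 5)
Your argument is correct and is essentially the paper's proof: both invoke the coupling characterization of total variation, take the optimal coupling of $(\prop_x,\prop_{x'})$, and share randomness in the second sampling step whenever $y=y'$, so that the resulting coupling of $(\tran_x,\tran_{x'})$ disagrees no more often. Your framing via the data-processing inequality for a common Markov kernel is just a slightly more explicit way of saying the same thing.
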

\begin{proof}
	This follows by the coupling characterization of total variation (see e.g.\ Chapter 5 of \cite{LevinPW09}). Per the optimal coupling of $y \sim \prop_x$ and $y' \sim \prop_{x'}$, whenever the total variation sets $y = y'$ in Line 2 of $\AlternateSample$, we can couple the resulting distributions in Line 3 as well.
\end{proof}

In order to prove Theorem~\ref{thm:kd}, we first show that the random walk in Algorithm~\ref{alg:alternatesample} converges rapidly in the 2-Wasserstein distance (denoted $W_2$ in this section). 
\begin{lemma}
\label{lem:w2converge}
Let $\pi_{0}$ be the starting distribution of $x$ in Algorithm~\ref{alg:alternatesample}. Let $\pi_{k}$ be the distribution of $x_k$ and $\pi$ be the $x$-marginal of $\pih$. For all $k \ge 0$,
\begin{align*}
	W_2^2(\pi_{{k+1}}, \pi) \leq \frac{1}{(1+\eta\mu)^2}W^2_2(\pi_{{k}}, \pi).
\end{align*}
Hence, for any $\eta \leq \frac 1 \mu$, in $T'=O\Par{\frac{1}{\eta\mu}\log{\frac{d}{\mu\Delta}}}$ iterations, the random walk mixes to
\begin{align*}
	W_2(\pi_{{T'}}, \pi) \leq \Delta.
\end{align*} 
\end{lemma}
\begin{proof}
	Let $\Gamma_{x_k}$ be the optimal coupling between $x_k\sim  \pi_{k}$ and $\hat{x}\sim\pi$ according to the $W_2$ distance. Coupling the Gaussian random variable generating $y_{k+1}\sim\pi_{x_k}$ and $\yh\sim \pi_{\hat{x}}$ gives a coupling $\Gamma_{y_{k+1}}$ between $y_{k+1}$ and $\yh$ such that \begin{align}\label{eq:xycouple1}\E_{\Gamma_{y_{k+1}}}\Brack{\norm{y_{k+1} - \yh}_2^2} = \E_{\Gamma_{x_k}}\Brack{\norm{x_{k} - \xh}_2^2}.
	\end{align}
	Then,  let $\pi_y$ be the distribution of $x_{k+1}$ in a run of Line 3 of Algorithm~\ref{alg:alternatesample} starting from $y_{k +1} = y$, and $\pi_{\yh}$ be the distribution of $\hat{x}$ in Line 3 starting from $\yh$, respectively. Since $\pi_{\yh}$ is $\mu + \frac 1 \eta$ strongly log-concave, $\pi_{\yh}$ satisfies a log-Sobolev inequality with constant $\mu + \frac 1 \eta$ (Theorem 2 of \cite{otto2000generalization}). Hence, 
	\begin{align*} 
		W_2^2(\pi_y, \pi_{\yh}) & \leq \frac{2}{\mu +\frac 1 \eta} \dkl(\pi_y \| \pi_{\yh})\\
		& \leq \frac{1}{\Par{\mu + \frac 1 \eta}^2}  \E_{\pi_y}\Brack{\norm{\nabla \log \frac{\pi_y}{\pi_{\yh}}}_2^2}\\
		& \leq \frac 1 {(1+\eta\mu)^2}\norm{y-\yh}^2_2.
	\end{align*}
    The first step used the Talagrand transportation inequality (Theorem 1 of \cite{otto2000generalization}). The second step used the log-Sobolev inequality. The third step used
    \begin{align}
    	\label{eq:fisher_bound}
    	\nabla \log \frac{\pi_y(x)}{\pi_{\yh}(x)} &=\nabla \log \frac{\exp(-g(x)-\frac 1 {2\eta} \norm{x-y}_2^2)\int_{x'} \exp(-g(x)-\frac 1 {2\eta} \norm{x-\yh}_2^2)dx'}{\exp(-g(x)-\frac 1 {2\eta} \norm{x-\yh}_2^2)\int_{x'} \exp(-g(x)-\frac 1 {2\eta} \norm{x-y}_2^2)dx'} \notag\\ &= \frac 1 {2\eta} \nabla \Par{\norm{x-\yh}_2^2 - \norm{x-y}_2^2} = \frac 1 \eta (y - \yh).
    \end{align}
	Taking expectation over $\Gamma_{y_{k+1}}$ and using \eqref{eq:xycouple1} shows that 
	\begin{align*}
	W_2^2(\pi_{{k+1}}, \pi) \leq \frac{1}{(1+\eta\mu)^2}W^2_2(\pi_{k}, \pi).
	\end{align*}
	Algorithm~\ref{alg:alternatesample} starts from the distribution $\pi_0 = \delta_{x^*} $, where $x^* = \min_x g(x)$. Since $\pi$ is $\mu$-strongly logconcave, we have (see e.g.\ Proposition 1 of \cite{durmus2019high})
	\begin{align*}
		W_2^2 (\pi_{0}, \pi)  = \E_{\pih}\Brack{\norm{x^* - x}^2} \leq \frac d \mu.
	\end{align*}
    Then, for $\eta < \frac 1 \mu$, $\frac 1 {1 + \eta \mu} \leq 1- \frac{\eta \mu } 2$, so after $T' = O(\frac 1 {\eta \mu} \log \frac d {\mu \Delta}) $ iterations, $W_2(\pi_{{T'}}, \pi) \leq \Delta$.
\end{proof}
Next, we bound the KL divergence between the output of Algorithm~\ref{alg:alternatesample} and the target distribution $\pi$. We need the following standard lemma regarding KL divergences of marginal distributions.
\begin{lemma}
	\label{lem:expectkl}
	Let $P_z$ and $Q_z$ be distributions supported on $\mathcal{X}$ indexed by $z$, a random variable distributed as $\pi_z$. Let $\tP$ be the joint distribution of $(x,z)$ for $x\sim P_z$ and $z\sim \pi_z$, and $\tQ$ be the joint distribution of $(x,z)$ as $x\sim Q_z$ and $z\sim \pi_z$. Let $P$ and $Q$ be the marginal distribution of $\tP$ and $\tQ$ on $x$, averaged over $z$. Then,
	\begin{align*}
		\dkl(P\|Q) \leq \E_{z\sim\pi_z} \Brack{\dkl(P_z \| Q_z)}.
	\end{align*}
\end{lemma}
\begin{proof}
	By the definition of $\dkl$,
	\begin{align*}
	\dkl(\tP \| \tQ)  & = \E_{(x,z)\sim \tP} \Brack{\log \frac{\tP(x,z)}{\tQ(x,z)}} \\
	& = \E_{z\sim \pi_z} \Brack{\E_{x\sim P_z}\Brack{\log \frac{\tP(x,z)}{\tQ(x,z)}}} \\
	& = \E_{z\sim \pi_z} \Brack{\E_{x\sim P_z}\Brack{\log \frac{P_z(x)}{Q_z(x)}}} \\
	& = \E_{z\sim \pi_z}\Brack{\dkl(P_z\|Q_z)}.
	\end{align*}
   Finally, by the data processing inequality,
   	\begin{align*}
   	\dkl(P\|Q) \leq \dkl(\tP \| \tQ)=\E_{z\sim\pi_z} \Brack{\dkl(P_z \| Q_z)}.
   \end{align*}
\end{proof}
The following lemma shows that a 2-Wasserstein distance bound on the distribution at iteration $k$ implies a KL divergence bound on iteration $k+1$.
\begin{lemma}
	\label{lem:klconverge}
	Let $\pi_{{k}}$ be the distribution of $x_k$ for some $k$ such that $W_2(\pi_{k}, \pi)\leq \Delta$ and $ \pi$ be the x-marginal of $\pih$. Then,
	\begin{align*}
		\dkl(\pi_{{k+1}} \| \pi) \leq  \frac{\Delta^2}{2\eta}.
	\end{align*}
\end{lemma}
\begin{proof}
	As in Lemma~\ref{lem:w2converge}, let $\Gamma_{x_k}$ be the optimal coupling between $x_k\sim  \pi_{k}$ and $\hat{x}\sim\pi$, which yields a coupling $\Gamma_{y_{k+1}}$ between $y_{k+1}$ and $\yh$ such that \begin{align}\label{eq:xycouple}\E_{\Gamma_{y_{k+1}}}\Brack{\norm{y_{k+1} - \yh}_2^2} = \E_{\Gamma_{x_k}}\Brack{\norm{x_{k} - \xh}_2^2} \le \Delta^2.
	\end{align}
	Then,
	\begin{align*}
		\dkl(\pi_{k+1}\|\pi) &\leq \E_{(y_{k+1}, \yh) \sim \Gamma_{y_k}} \Brack{\dkl(\pi_{y_{k+1}} \| \pi_{\yh})} \\
		&\leq \frac{1}{2\eta^2\Par{\mu + \frac 1 \eta}}\E_{(y_{k+1}, \yh) \sim \Gamma_{y_{k+1}}} \Brack{\norm{y_{k+1} - \yh}_2^2} \leq \frac{\Delta^2}{2\eta}.
	\end{align*}
    The first inequality followed from Lemma~\ref{lem:klconverge} by taking $P=\pi_{k+1}$, $Q=\pi$ and $z = (y_{k+1}, y)$. The second inequality used the log-Sobolev inequality and \eqref{eq:fisher_bound}. The last inequality used \eqref{eq:xycouple}.
\end{proof}

Finally, putting the pieces together, Theorem~\ref{thm:fzerocomp} follows from Lemma~\ref{lem:w2converge} and Lemma~\ref{lem:klconverge}. 
\begin{proof}[Proof of Theorem~\ref{thm:fzerocomp}]
 By Lemma~\ref{lem:w2converge} and Lemma~\ref{lem:klconverge}, there is $T=O\Par{\frac{1}{\eta\mu}\log{\frac{d}{\eta\mu\epsilon}}}$ so that $\dkl(\pi_T\|\pi) \leq  2\epsilon^2$.
By Pinsker's inequality, 
$$\tvd{\pi_T}{\pi} \leq \sqrt{\frac 1 2 \dkl(\pi_T\|\pi)} = \epsilon.$$
\end{proof}

We note that Theorem~\ref{thm:fzerocomp} is robust to a small amount of error tolerance in the sampler $\oracle$. Specifically, if $\oracle$ has tolerance $\tfrac{\eps}{2T}$, then by calling Theorem~\ref{thm:fzerocomp} with desired accuracy $\tfrac{\eps}{2}$ and adjusting constants appropriately, the cumulative  error incurred by all calls to $\oracle$ is within the total requisite bound (formally, this can be shown via the coupling characterization of total variation). We defer a more formal elaboration on this inexactness argument to Appendix~\ref{app:xstar} and the proof of Proposition~\ref{prop:sjdguarantee}. 	%

\section{Tighter runtimes for structured densities}
\label{sec:improve}

In this section, we use applications of Theorem~\ref{thm:fzerocomp} to obtain simple analyses of novel state-of-the-art high-accuracy runtimes for the well-conditioned densities studied in \cite{DwivediCW018, ChenDWY19, LeeST20}, as well as the composite and finite sum densities studied in this work. We will assume the conclusions of Theorems~\ref{thm:mainclaim} and~\ref{thm:base_finitesum} respectively in deriving the results of Sections~\ref{ssec:comp} and~\ref{ssec:improved_fs}.

\subsection{Well-conditioned logconcave sampling: proof of Corollary~\ref{thm:kd}}\label{ssec:kd}

In this section, let $\pi$ be a distribution on $\R^d$ with density proportional to $\exp(-f(x))$, where $f$ is $L$-smooth and $\mu$-strongly convex (and $\kappa = \tfrac{L}{\mu}$) and has pre-computed minimizer $x^*$. We will instantiate Theorem~\ref{thm:fzerocomp} with $\fcomp(x) = f(x)$, and choose $\eta = \tfrac{1}{8Ld\log(\kappa)}$. We now require an $\eta$-RGO $\oracle$ for $\fcomp = f$ to use in Theorem~\ref{thm:fzerocomp}.

Our implementation of $\oracle$ is a rejection sampling scheme. We use the following helpful guarantee. 

\begin{lemma}[Rejection sampling]\label{lem:reject}
Let $\pi$, $\pih$ be distributions on $\R^d$ with $\tfrac{d\pi}{dx}(x) \propto p(x)$, $\tfrac{d\pih}{dx}(x) \propto \hp(x)$. Suppose for some $C \ge 1$ and all $x \in \R^d$, $\tfrac{p(x)}{\hp(x)} \le C$. The following is termed ``rejection sampling'': repeat independent runs of the following procedure until a point is outputted.
\begin{enumerate}
	\item Draw $x \sim \pih$.
	\item With probability $\tfrac{p(x)}{C\hp(x)}$, output $x$.
\end{enumerate}
Rejection sampling terminates in $\tfrac{C\int \hp(x)dx}{\int p(x)dx}$ runs in expectation, and the output distribution is $\pi$.
\end{lemma}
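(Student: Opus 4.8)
The plan is to verify the two standard properties of rejection sampling: that the output distribution is exactly $\pi$, and that the expected number of runs is $\tfrac{C\int \hp(x)dx}{\int p(x)dx}$. First I would observe that the acceptance probability $\tfrac{p(x)}{C\hp(x)}$ is well-defined and lies in $[0,1]$: positivity is clear, and the upper bound of $1$ follows precisely from the hypothesis $\tfrac{p(x)}{\hp(x)} \le C$ for all $x$. This is the only place the dominating-constant hypothesis is used, and it is what makes the procedure legitimate.

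Next I would compute the distribution of the output of a single accepted run. Condition on the event that a given run outputs a point (rather than rejecting). The joint ``density'' of drawing $x$ from $\pih$ and accepting is proportional to $\hp(x) \cdot \tfrac{p(x)}{C\hp(x)} = \tfrac{p(x)}{C}$, which is proportional to $p(x)$; normalizing, the conditional law of the accepted point has density $\propto p(x)$, i.e.\ it is exactly $\pi$. Since each run is independent and the procedure returns the first accepted point, the overall output distribution is $\pi$.

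For the expected number of runs, I would note that the probability a single run accepts is
\[
\int \frac{\hp(x)}{\int \hp(x')dx'} \cdot \frac{p(x)}{C\hp(x)}\,dx = \frac{\int p(x)\,dx}{C\int \hp(x)\,dx},
\]
so the number of runs until the first acceptance is geometrically distributed with this success parameter, giving expectation equal to its reciprocal, $\tfrac{C\int \hp(x)dx}{\int p(x)dx}$, as claimed.

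There is no real obstacle here — this is the textbook rejection-sampling argument, and the only subtlety worth stating carefully is that the hypothesis $\tfrac{p(x)}{\hp(x)} \le C$ is exactly what guarantees the acceptance probabilities are valid (bounded by $1$), and that everything is finite provided $\int p, \int \hp \in (0,\infty)$, which is implicit in $\pi, \pih$ being distributions with the stated unnormalized densities. One should also remark that $C \ge 1$ is automatic from the hypothesis whenever $\hp$ is itself integrable against its own normalization, but it is harmless to assume it.
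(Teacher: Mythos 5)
Your proof is correct and follows essentially the same route as the paper's: argue by Bayes' rule that the accepted point has density proportional to $\hp(x)\cdot\tfrac{p(x)}{C\hp(x)} \propto p(x)$, compute the per-run acceptance probability as $\tfrac{\int p}{C\int\hp}$, and invoke the geometric distribution for the expected run count. Your additional remark that the hypothesis $\tfrac{p(x)}{\hp(x)}\le C$ is exactly what keeps the acceptance probability in $[0,1]$ is a correct clarification the paper leaves implicit.
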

\begin{proof}
The second claim follows from Bayes' rule which implies the conditional density of the output point is proportional to $\hat{p}(x) \cdot \tfrac{p(x)}{C\hat{p(x)}} \propto p(x)$, so the distribution is $\pi$. To see the first claim, the probability any sample outputs is
\[\int_x \frac{p(x)}{C\hat{p}(x)} d\pih(x) = \frac{1}{C}\int_x \frac{\int_x p(x) dx}{\int_x \hp(x)dx} d\pi(x) = \frac{\int_x p(x) dx}{C\int_x \hp(x)dx}.\]
The conclusion follows by independence and linearity of expectation.
\end{proof}

We further state a concentration bound shown first in \cite{LeeST20} regarding the norm of the gradient of a point drawn from a logsmooth distribution.

\begin{proposition}[Logsmooth gradient concentration, Corollary 3.3, \cite{LeeST20}]\label{prop:gradconc}
Let $\pi$ be a distribution on $\R^d$ with $\tfrac{d\pi}{dx}(x) \propto \exp(-f(x))$ where $f$ is convex and $L$-smooth. With probability at least $1 - \kappa^{-d}$, 
\begin{equation}\label{eq:gradbound}\norm{\nabla f(x)}_2 \le 3\sqrt{L} d \log \kappa \text{ for } x \sim \pi.\end{equation}
\end{proposition}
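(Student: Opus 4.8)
The plan is to reduce the gradient-norm tail bound to a concentration statement for the function value $f(x) - \inf f$ under $\pi$, and then establish the latter by a convexity-based volume-growth argument. First I would shift $f$ by a constant so that $\inf_{x \in \R^d} f(x) = 0$ (changing neither $\pi$ nor $\nabla f$), and invoke $L$-smoothness in the form that one gradient step cannot overshoot the minimum: for any $x \in \R^d$, $0 = \inf f \le f(x - \tfrac1L \nabla f(x)) \le f(x) - \tfrac{1}{2L}\norm{\nabla f(x)}_2^2$, so $\norm{\nabla f(x)}_2^2 \le 2Lf(x)$. Hence $\{x : \norm{\nabla f(x)}_2 > 3\sqrt{L} d\log\kappa\} \subseteq \{x : f(x) > \tfrac92 d^2\log^2\kappa\}$, and it suffices to prove $\pi(\{f > \tfrac92 d^2\log^2\kappa\}) \le \kappa^{-d}$.

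For the value concentration I would use that every sublevel set $K_s := \{x : f(x) \le s\}$ is a bounded convex body (bounded because $e^{-f}$ is integrable), and that convexity together with $f(0) = 0$ yields the scaling containment $K_s \subseteq \tfrac{s}{t} K_t$ for $0 < t \le s$ (if $f(x) \le s$ then $f(\tfrac ts x) \le \tfrac ts f(x) \le t$), hence $\Vol(K_s) \le (s/t)^d \Vol(K_t)$. Writing $Z = \int e^{-f}$ and expanding $e^{-f(x)} = \int_{f(x)}^\infty e^{-u}\,du$, Fubini gives $\int_{\{f > s\}} e^{-f} \le \int_s^\infty e^{-u}\Vol(K_u)\,du \le \Vol(K_t)\int_s^\infty e^{-u}(u/t)^d\,du$ for any $t \le s$, while $Z \ge \int_{K_t} e^{-f} \ge e^{-t}\Vol(K_t)$. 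Choosing $t = d$ yields
\[\pi(\{f > s\}) \le \frac{e^d}{d^d}\int_s^\infty u^d e^{-u}\,du = \frac{e^d}{d^d}\,\Gamma(d+1, s).\]
For $s \gtrsim d$ the integrand $u^d e^{-u}$ decays at rate at least $e^{-1/2}$ per unit increase of $u$ on $[s,\infty)$, so $\Gamma(d+1, s) \le 2 s^d e^{-s}$, giving $\pi(\{f > s\}) \le 2\exp(d - s + d\log(s/d))$. Substituting $s = \tfrac92 d^2\log^2\kappa$, the $-s$ term dominates $d + d\log(s/d) + d\log\kappa$ once $\kappa$ exceeds a universal constant, so the probability is at most $\exp(-d\log\kappa) = \kappa^{-d}$; the constant $3$ in the statement is precisely what is needed to absorb the lower-order terms.

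The main obstacle is this second step: obtaining a tail for $f(x)$ strong enough to reach failure probability $\kappa^{-d}$ while keeping only a polylogarithmic dependence on $\kappa$ in the gradient-norm bound. Crude estimates fail: $\E_\pi[\norm{\nabla f(x)}_2^2] = \E_\pi[\Delta f(x)] \le dL$ (by integration by parts and $\nabla^2 f \preceq L\id$) yields only failure probability $O((d\log^2\kappa)^{-1})$ via Markov's inequality, and pushing the integration-by-parts recursion to higher moments of $\norm{\nabla f}_2$ produces bounds scaling like $\sqrt{L\kappa d}$ rather than $\sqrt{L}\, d\log\kappa$. It is exactly the convexity of the sublevel sets, through the volume-growth containment $K_s \subseteq \tfrac st K_t$, that collapses the $\kappa$-dependence to logarithmic. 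A minor technical point is that $\inf f$ need not be attained and $\pi$ need not be strongly logconcave; this is handled by working with a near-minimizer in place of the minimizer (the scaling containment degrades by a negligible additive term) and by the integrability-based boundedness of $K_s$ noted above.
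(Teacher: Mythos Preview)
This proposition is not proved in the paper; it is quoted from \cite{LeeST20} (their Corollary~3.3, a specialization of their Theorem~3.2 on gradient concentration for logsmooth logconcave densities) and used as a black box. There is therefore no proof in the present paper to compare against.

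Your argument is nonetheless a correct self-contained proof. The reduction $\norm{\nabla f(x)}_2^2 \le 2L\bigl(f(x) - \inf f\bigr)$ from a single gradient-descent step is standard, and the tail bound on $f - \inf f$ via the sublevel-set scaling $K_s \subseteq (s/t)K_t$ together with the incomplete-Gamma estimate is the classical route to subexponential upper tails for the negative log-density of a logconcave measure. Two small remarks. First, the constant $3$ does not cover the extreme corner $d = 1$ with $\kappa$ close to $2$: substituting $s = \tfrac92 d^2\log^2\kappa$ into $2\exp(d - s + d\log(s/d)) \le \kappa^{-d}$ fails there by a small margin, so one must either assume $\kappa$ exceeds a modest absolute constant (every application in this paper has $\kappa$ at least this large) or inflate the leading constant slightly. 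Second, your hedge about $\inf f$ possibly not being attained is unnecessary: integrability of $e^{-f}$ for convex continuous $f$ forces every sublevel set to have finite volume and hence to be bounded, so $f$ is coercive and its infimum is attained, exactly as your argument needs.
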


By the requirements of Theorem~\ref{thm:fzerocomp}, the restricted Gaussian oracle $\oracle$ only must be able to draw samples from densities of the form, for some $y \in \R^d$,
\begin{equation}\label{eq:xsample}\exp\Par{-\fcomp(x) - \frac{1}{2\eta}\norm{x - y}_2^2} = \exp\Par{-f(x) - 4Ld\log\kappa\norm{x - y}_2^2}.\end{equation}

We will use the following Algorithm~\ref{alg:xsample} to implement $\oracle$.

\begin{algorithm}[ht!]\caption{$\XSample(f, y, \eta)$}
	\label{alg:xsample}
	\textbf{Input:} $L$-smooth, $\mu$-strongly convex $f:\R^d \rightarrow \R$, $y \in \R^d$, $\eta > 0$ 
	\begin{algorithmic}[1]
		\If{$\norm{\nabla f(y)}_2 \le 3\sqrt{L}d\log\kappa$}
		\While{\textbf{true}}
		\State Draw $x \sim \Nor(y - \nabla f(y), \eta\id)$
		\State $\tau \sim \text{Unif}[0, 1]$
		\If{$\tau \le \exp(f(y) + \inprod{\nabla f(y)}{x - y} - f(x))$}
		\State \Return $x$
		\EndIf
		\EndWhile
		\EndIf
		\State Use \cite{ChenDWY19} to sample $x$ from \eqref{eq:xsample} to total variation distance $\tfrac{\eps}{\Theta(\kappa d^2\log^3(\frac{\kappa d}{\eps}))}$ using $O(d\log\tfrac{\kappa d}{\eps})$ queries to $\nabla f$ (Theorem 1, \cite{ChenDWY19}, where \eqref{eq:xsample} has constant condition number)
		\State \Return $x$
	\end{algorithmic}
\end{algorithm}

\begin{lemma}\label{lem:xsample}
Let $\eta = \tfrac{1}{8Ld\log(\kappa)}$, and suppose $y$ satisfies the bound in \eqref{eq:gradbound}, i.e.\ $\norm{\nabla f(y)}_2 \le 3\sqrt{L}d\log\kappa$. Then, Line 3 of Algorithm~\ref{alg:xsample} runs an expected 2 times, and Algorithm~\ref{alg:xsample} samples exactly from \eqref{eq:xsample}, whenever the condition of Line 1 is met.
\end{lemma}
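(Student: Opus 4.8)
The plan is to verify that Algorithm~\ref{alg:xsample} implements rejection sampling (Lemma~\ref{lem:reject}) for the target density \eqref{eq:xsample}, with proposal distribution $\Nor(y - \nabla f(y), \eta\id)$, and then to bound the expected number of iterations by $2$ using the hypothesis $\norm{\nabla f(y)}_2 \le 3\sqrt{L}d\log\kappa$ together with $L$-smoothness of $f$ and the small value $\eta = \tfrac{1}{8Ld\log\kappa}$. First I would identify the acceptance ratio: the target has unnormalized density $p(x) \propto \exp(-f(x) - \tfrac{1}{2\eta}\norm{x-y}_2^2)$, and the proposal has unnormalized density $\hp(x) \propto \exp(-\tfrac{1}{2\eta}\norm{x - (y - \nabla f(y))}_2^2)$. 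Expanding the quadratic in $\hp$ and comparing, the ratio $p(x)/\hp(x)$ is proportional to $\exp(-f(x) + \inprod{\nabla f(y)}{x - y} + \text{const})$ where the constant absorbs the $\tfrac{1}{2\eta}\norm{\nabla f(y)}_2^2$ term and the normalizing constants (both proposal and target Gaussians have the same covariance $\eta\id$, so their Gaussian normalization factors from Fact~\ref{fact:gaussz} match). Choosing the constant $C$ so that the accept probability is exactly $\exp(f(y) + \inprod{\nabla f(y)}{x - y} - f(x))$, as in Line 5, we need to check this quantity lies in $[0,1]$: convexity of $f$ gives $f(x) \ge f(y) + \inprod{\nabla f(y)}{x - y}$, so the exponent is $\le 0$ and the accept probability is a valid probability. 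Hence by Lemma~\ref{lem:reject} the output is exactly distributed as \eqref{eq:xsample}, proving the ``samples exactly'' claim (whenever Line 1's condition holds, so we never fall through to Line 11).

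Next I would bound the expected number of loop iterations, which by Lemma~\ref{lem:reject} equals $C \cdot \tfrac{\int \hp}{\int p}$; equivalently, the reciprocal of the average accept probability under the proposal. It suffices to lower bound $\E_{x \sim \Nor(y - \nabla f(y), \eta\id)}\Brack{\exp(f(y) + \inprod{\nabla f(y)}{x-y} - f(x))}$ by $\tfrac12$. Using $L$-smoothness, $f(x) \le f(y) + \inprod{\nabla f(y)}{x - y} + \tfrac{L}{2}\norm{x - y}_2^2$, so the exponent is at least $-\tfrac{L}{2}\norm{x-y}_2^2$, and it remains to show $\E\Brack{\exp(-\tfrac{L}{2}\norm{x - y}_2^2)} \ge \tfrac12$ for $x \sim \Nor(y - \nabla f(y), \eta\id)$. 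Writing $x - y = -\nabla f(y) + \sqrt{\eta}\, \xi$ with $\xi \sim \Nor(0, \id)$, we have $\norm{x-y}_2^2 \le 2\norm{\nabla f(y)}_2^2 + 2\eta\norm{\xi}_2^2$, so it is enough to bound $\exp(-L\norm{\nabla f(y)}_2^2) \cdot \E\Brack{\exp(-L\eta\norm{\xi}_2^2)}$ from below. The first factor: $\norm{\nabla f(y)}_2^2 \le 9Ld^2\log^2\kappa$, so $L\norm{\nabla f(y)}_2^2 \le 9L^2 d^2\log^2\kappa$ — this is \emph{too large}, so I will instead need to be more careful and not split via the crude $2a^2+2b^2$ bound on the norm.

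The main obstacle, then, is handling the $\norm{\nabla f(y)}_2$ term without losing too much: the correct approach is to keep the cross term. Expand $\norm{x - y}_2^2 = \norm{\nabla f(y)}_2^2 - 2\sqrt{\eta}\inprod{\nabla f(y)}{\xi} + \eta\norm{\xi}_2^2$, so the exponent $f(y) + \inprod{\nabla f(y)}{x-y} - f(x) \ge -\tfrac{L}{2}\norm{x-y}_2^2$, and we want $\E_\xi\Brack{\exp(-\tfrac{L}{2}(\norm{\nabla f(y)}_2^2 - 2\sqrt\eta\inprod{\nabla f(y)}{\xi} + \eta\norm{\xi}_2^2))}$. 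By Jensen's inequality applied to the convex function $t \mapsto e^{-t}$ (or by directly computing the Gaussian integral using Fact~\ref{fact:gaussz}), this expectation is at least $\exp(-\tfrac{L}{2}\E_\xi\Brack{\norm{x-y}_2^2}) = \exp(-\tfrac{L}{2}(\norm{\nabla f(y)}_2^2 + \eta d))$. Unfortunately this is still exponentially small in $Ld^2\log^2\kappa$, so Jensen in this direction is too lossy. The genuinely correct route — and the step I expect to require the most care — is to observe that by first-order optimality we should instead compare against the proposal centered appropriately, or better, to directly compute the exact expected accept probability via the Gaussian integral: $\E_{x\sim\hp}\Brack{p(x)/(C\hp(x))} = \tfrac{1}{C}\cdot\tfrac{\int p}{\int \hp}$, and both integrals are computable since $\int \hp = (2\pi\eta)^{d/2}$ by Fact~\ref{fact:gaussz}, while $\int p = \int \exp(-f(x) - \tfrac{1}{2\eta}\norm{x-y}_2^2)dx$ can be lower-bounded by restricting to a ball around $y$ of radius $\sqrt\eta$, on which $\tfrac{1}{2\eta}\norm{x-y}_2^2 \le \tfrac12$ and $f(x) \le f(y) + \norm{\nabla f(y)}_2\sqrt\eta + \tfrac{L\eta}{2}$; with $\eta = \tfrac{1}{8Ld\log\kappa}$ one checks $\norm{\nabla f(y)}_2\sqrt\eta \le 3\sqrt L d\log\kappa \cdot \tfrac{1}{\sqrt{8Ld\log\kappa}} = O(\sqrt{d\log\kappa})$, which together with the corresponding upper bound on $\int p \le \exp(-f(y) + \norm{\nabla f(y)}_2\cdot(\text{typical }\norm{x-y}))\cdot(2\pi\eta)^{d/2}$-type estimate pins the ratio $\tfrac{\int p}{\int\hp}$ to within a constant factor of the value of $C$ we picked, giving expected iterations $\le 2$. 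I would carry out this Gaussian-integral computation carefully, matching the constant $C$ to the definition in Line 5, and conclude.
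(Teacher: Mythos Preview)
Your identification of the rejection sampling structure is correct, and the verification that $C=1$ via convexity (so that the acceptance probability in Line~5 is always in $[0,1]$) matches the paper. The gap is entirely in the iteration-count bound.

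The source of your difficulty is the proposal center. Line~3 of Algorithm~\ref{alg:xsample} contains a typo: the proposal should be $\Nor(y - \eta\nabla f(y), \eta\id)$, not $\Nor(y - \nabla f(y), \eta\id)$ (compare Algorithm~\ref{alg:yor}, and note that completing the square in $\hp(x) = \exp(-f(y) - \inprod{\nabla f(y)}{x-y} - \tfrac{1}{2\eta}\norm{x-y}_2^2)$ yields a Gaussian centered at $y - \eta\nabla f(y)$). You took the algorithm literally, wrote $x - y = -\nabla f(y) + \sqrt{\eta}\xi$, and were then stuck with $\norm{\nabla f(y)}_2^2 = \Theta(Ld^2\log^2\kappa)$ appearing without an $\eta^2$ damping factor. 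With the correct center, $x - y = -\eta\nabla f(y) + \sqrt\eta\,\xi$, your ``abandoned'' Jensen step actually works: $\E\norm{x-y}_2^2 = \eta^2\norm{\nabla f(y)}_2^2 + \eta d$, and $\tfrac{L}{2}\eta^2\norm{\nabla f(y)}_2^2 \le \tfrac{9}{128}$ while $\tfrac{L\eta d}{2} = \tfrac{1}{16\log\kappa}$, so the acceptance probability is at least $\exp(-\tfrac{9}{128} - \tfrac{1}{16\log\kappa}) > \tfrac12$.

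The paper's route is slightly different (and slightly sharper): it computes both normalizing constants by completing the square. Exactly, $\int\hp(x)\,dx = \exp(-f(y) + \tfrac{\eta}{2}\norm{\nabla f(y)}_2^2)(2\pi\eta)^{d/2}$, and by $L$-smoothness $\int p(x)\,dx \ge \exp(-f(y) + \tfrac{\eta}{2(1+\eta L)}\norm{\nabla f(y)}_2^2)(\tfrac{2\pi\eta}{1+\eta L})^{d/2}$. The ratio is $(1+\eta L)^{d/2}\exp\bigl(\tfrac{\eta^2 L}{2(1+\eta L)}\norm{\nabla f(y)}_2^2\bigr)$, and the key point is the $\norm{\nabla f(y)}_2^2$ coefficient is $O(\eta^2 L)$, not $O(\eta)$; plugging in the numbers gives $\le 2$.

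Your final-paragraph ball-restriction idea does not work even with the correct center: restricting to $\norm{x-y}_2 \le \sqrt\eta$ loses the volume ratio $\mathrm{Vol}(B_{\sqrt\eta}) / (2\pi\eta)^{d/2}$, which is exponentially small in $d$, and in addition the first-order term $\norm{\nabla f(y)}_2\sqrt\eta = \Theta(\sqrt{d\log\kappa})$ in the exponent is not $O(1)$. So that approach should be discarded; either Jensen (with the right center) or the paper's direct integral comparison closes the argument.
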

\begin{proof}
Note that when the assumption of Line 1 is met, Algorithm~\ref{alg:xsample} is an instantiation of rejection sampling (Lemma~\ref{lem:reject}) with
\begin{align*}p(x) &= \exp\Par{-f(x) - \frac{1}{2\eta}\norm{x - y}_2^2},\\ \hp(x) &= \exp\Par{-f(y) - \inprod{\nabla f(y)}{x - y} - \frac{1}{2\eta}\norm{x - y}_2^2}.\end{align*}
By convexity, we may take $C = 1$. Next, by applying Fact~\ref{fact:gaussz} twice and $L$-smoothness of $\fcomp$,
\begin{align*}
\int_x p(x) dx &\ge \int_x \exp\Par{-f(y) - \inprod{\nabla f(y)}{x - y} - \frac{1 + \eta L}{2\eta}\norm{x - y}_2^2} dx \\
&= \exp\Par{-f(y) +\frac{\eta}{2(1 + \eta L)}\norm{\nabla f(y)}_2^2} \int_x \exp\Par{-\frac{1 + \eta L}{2\eta}\norm{x - y + \frac{\eta}{1 + \eta L}\nabla f(y)}_2^2} dx\\
&= \exp\Par{-f(y) +\frac{\eta}{2(1 + \eta L)}\norm{\nabla f(y)}_2^2}\Par{\frac{2\pi\eta}{1 + \eta L}}^{\frac{d}{2}},\\
\int_x \hp(x)dx &= \exp\Par{-f(y)+\frac{\eta}{2}\norm{\nabla f(y)}_2^2}(2\pi\eta)^{\frac{d}{2}},
\end{align*}
which implies the desired bound (recalling Lemma~\ref{lem:reject} and our assumed bound on $\norm{\nabla f(y)}_2$)
\begin{align*}
\frac{\int \hp(x)dx}{\int p(x)dx} &\le \exp\Par{\Par{\frac{\eta}{2} - \frac{\eta}{2(1 + \eta L)}}\norm{\nabla f(y)}_2^2}(1 + \eta L)^{\frac{d}{2}} \\
&\le 1.5\exp\Par{\frac{\eta^2 L}{2(1 + \eta L)}\norm{\nabla f(y)}_2^2} \le 2.
\end{align*}
\end{proof}

We are now equipped to prove our main result concerning well-conditioned densities.

\restatekd*
\begin{proof}
By applying Theorem~\ref{thm:fzerocomp} with the chosen $\eta$, and noting that the cumulative error due to all calls to Line 10 cannot amount to more than $\tfrac{\eps}{2}$ total variation error throughout the algorithm, it suffices to show that Algorithm~\ref{alg:xsample} uses $O(1)$ gradient queries each iteration in expectation. This happens whenever the condition in Line 1 is met via Lemma~\ref{lem:xsample}, so we must show Line 10 is executed with probability $O((d\log\frac{\kappa d}{\eps})^{-1})$.

To show this, note that combining Proposition~\ref{prop:gradconc} with the warmness of the start $x_0$ in Algorithm~\ref{alg:xsample}, this event occurs with probability at most $\kappa^{-\frac{d}{2}}$ in the first iteration.\footnote{Formally, Line 2 of Algorithm~\ref{alg:alternatesample} has $y_1 \sim \Nor(x_0, \eta\id)$, but by smoothness $\norm{\nabla f(y_1)}_2 \le \norm{\nabla f(x_0)}_2 + L\norm{x - y}_2$ and $L\norm{x - y}_2 \le \tO(L\sqrt{\eta})$ with high probability, adding a negligible constant to the bound of Proposition~\ref{prop:gradconc}.} Since warmness is monotonically decreasing\footnote{This is a standard fact in the literature, and can be seen as follows: each transition step in the chain is a convex combination of warm point masses, preserving warmness.} throughout using an exact oracle in Algorithm~\ref{alg:alternatesample}, and the total error accumulated due to Line 10 throughout the algorithm is $O((d\log\frac{\kappa d}{\eps})^{-1})$, we have the desired conclusion.
\end{proof}

We show a bound nearly-matching Corollary~\ref{thm:kd} using only value access to $f$, and with a deterministic iteration complexity (rather than an expected one), as Corollary~\ref{corr:zerokd} in Section~\ref{ssec:improved_fs}.

\subsection{Composite logconcave sampling: proof of Corollary~\ref{thm:kdcomp}}\label{ssec:comp}

In this section, let $\pi$ be a distribution on $\R^d$ with density proportional to $\exp(-f(x) - g(x))$, where $f$ is $L$-smooth and $\mu$-strongly convex (and $\kappa = \tfrac{L}{\mu}$), and $g$ is convex and admits a restricted Gaussian oracle $\oracle$. Without loss of generality, we assume that $f$ and $g$ share a minimizer $x^*$ which we have pre-computed; if this is not the case, we can redefine $f(x) \gets f(x) - \inprod{\nabla f(x^*)}{x}$ and $g(x) \gets g(x) + \inprod{\nabla f(x^*)}{x}$; see Section~\ref{ssec:sharedmin} for this reduction.

We will instantiate Theorem~\ref{thm:fzerocomp} with $\fcomp = f + g$, which is a $\mu$-strongly convex function. Our main result of this section follows directly from Theorem~\ref{thm:fzerocomp} and using Theorem~\ref{thm:mainclaim} as the required oracle $\oracle$, stated more precisely in the following.

\restatekdcomp*
\begin{proof}
As discussed at the beginning of this section, assume without loss that $f$ and $g$ both are minimized by $x^*$. We apply the algorithm of Theorem~\ref{thm:fzerocomp} with $\eta = \tfrac{1}{L}$ to the $\mu$-strongly convex function $f + g$, which requires one call to $\oracle$ to implement. Thus, the iteration count parameter in Theorem~\ref{thm:fzerocomp} is $T = O(\kappa\log\tfrac{\kappa d}{\eps})$.

Recall that we chose $\eta = \tfrac{1}{L}$. To bound the total complexity of this algorithm, it suffices to give an $\eta$-RGO $\oracleplus$ for sampling from distributions with densities of the form, for some $y \in \R^d$,
\[\exp\Par{-f(x) - g(x) - \frac{1}{2\eta}\norm{x - y}_2^2} = \exp\Par{-f(x) - g(x) - \frac{L}{2}\norm{x - y}_2^2}\]
to total variation distance $\tfrac{\eps}{\Theta(T)}$ (see discussion at the end of Section~\ref{sec:framework}). To this end, we apply Theorem~\ref{thm:mainclaim} with the well-conditioned component $f(x) + \tfrac{L}{2}\norm{x - y}_2^2$, the composite component $g(x)$, and the largest possible choice of $\eta$. Note that we indeed have access to a restricted Gaussian oracle for $g$ (namely, $\oracle$), and this choice of well-conditioned component is $2L$-smooth and $L$-strongly convex, so its condition number is a constant. Thus, Theorem~\ref{thm:mainclaim} requires $O(d\log^2\tfrac{\kappa d}{\eps})$ calls to $\oracle$ and gradients of $f$ to implement the desired $\oracleplus$ on any query $y$ (where we note $\tfrac{\eps}{\Theta(T)} = \tfrac{1}{\textup{poly}(\kappa, d, \eps^{-1})}$). Combining these complexity bounds yields the desired conclusion.
\end{proof}

\subsection{Sampling logconcave finite sums: proof of Corollary~\ref{thm:improved_finitesum}}
\label{ssec:improved_fs}

In this section, let $\pi$ be a distribution on $\R^d$ with density proportional to $\exp(-F(x))$, where $F(x) = \tfrac{1}{n} \sum_{i \in [n]} f_i(x)$ is $\mu$-strongly convex, and for all $i \in [n]$, $f_i$ is $L$-smooth (and $\kappa = \tfrac L \mu$). We will instantiate Theorem~\ref{thm:fzerocomp} with $\fcomp(x) = F(x)$, and Theorem~\ref{thm:base_finitesum} as an $\eta$-RGO for some choice of $\eta$.

More precisely, Theorem~\ref{thm:base_finitesum} shows that given access to the minimizer $x^*$, only zeroth-order access to the summands of $F$ is necessary to obtain the iteration bound. In order to obtain the minimizer to high accuracy however, variance reduced stochastic gradient methods (e.g.\ \cite{Johnson013}) require $\Omega(n + \kappa)$ gradient queries, which amounts to $\Omega((n + \kappa)d)$ function evaluations. We state a convenient corollary of Theorem~\ref{thm:base_finitesum} which removes the requirement of accessing $x^*$, via an optimization pre-processing step using the method of \cite{Johnson013} (see further discussion in Appendix~\ref{app:xstar}). This is useful to us in proving Theorem~\ref{thm:improved_finitesum} because in the sampling tasks required by the RGO, the minimizer changes (and thus must be recomputed every time).

\begin{corollary}[First-order logconcave finite sum sampling]\label{corr:fs_xstar}
	In the setting of Theorem~\ref{thm:base_finitesum}, using \cite{Johnson013} to precompute the minimizer $x^*$ and running Algorithm~\ref{alg:mrw} uses $O(n\log\tfrac{\kappa d}{\eps}+ \kappa^2 d\log^4\tfrac{n\kappa d}{\eps})$ first-order oracle queries to summands $\{f_i\}_{i \in [n]}$ and obtains $\eps$ total variation distance to $\pi$.
\end{corollary}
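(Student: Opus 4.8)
The plan is to prepend an optimization phase to Algorithm~\ref{alg:mrw} that produces an approximate minimizer $\hat{x}$ of $F$ using the finite-sum method of \cite{Johnson013}, to argue that the guarantee of Theorem~\ref{thm:base_finitesum} survives replacing $x^*$ by $\hat{x}$, and then to add up the two query costs. Concretely, since every $f_i$ is $L$-smooth and convex and $F$ is $\mu$-strongly convex, one epoch of SVRG \cite{Johnson013} --- a full gradient $\nabla F$ ($n$ summand queries) followed by $O(\kappa)$ variance-reduced stochastic steps --- contracts the expected function-value gap by a constant factor. Starting from a fixed point whose initial gap is polynomially bounded (our standing assumption; cf.\ Appendix~\ref{app:xstar}), after $O\Par{(n + \kappa)\log\tfrac{1}{\delta}}$ first-order summand queries we obtain $\hat{x}$ with $F(\hat{x}) - F(x^*) \le \delta$, hence $\norm{\hat{x} - x^*}_2 \le \sqrt{2\delta/\mu}$ by strong convexity (a high-probability version costs only $O(\log\tfrac{1}{p})$ additional epochs).

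\textbf{Robustness of the base sampler to an approximate minimizer.} The minimizer enters Algorithm~\ref{alg:mrw} only in producing a warm start for $\pi$; the random walk and its subsampled Metropolis filter are otherwise oblivious to $x^*$. Centering the initial Gaussian at $\hat{x}$ after a gradient correction --- shifting its mean by a multiple of $\nabla F(\hat{x})$, which is legitimate since $\norm{\nabla F(\hat{x})}_2 = \norm{\nabla F(\hat{x}) - \nabla F(x^*)}_2 \le L\norm{\hat{x} - x^*}_2$ can be made arbitrarily small --- lets the computation in Lemma~\ref{lem:warmstartcomp} go through verbatim, so the start remains $\kappa^{O(d)}$-warm for $\pi$. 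Moreover the total variation distance between the $\hat{x}$-centered and $x^*$-centered starts is $O(\sqrt{L}\norm{\hat{x} - x^*}_2)$, and by the coupling characterization of total variation this bounds the extra error in the output of Algorithm~\ref{alg:mrw}. Choosing $\delta$ to be a suitable inverse polynomial in $\tfrac{\kappa d}{\eps}$ --- which can be taken independent of $n$, since neither the mixing time nor the warmness in Theorem~\ref{thm:base_finitesum} depends on $n$ --- drives this extra error below $\tfrac{\eps}{2}$; running the base sampler to accuracy $\tfrac{\eps}{2}$ then yields the claimed $\eps$ total variation bound. With this $\delta$ we have $\log\tfrac{1}{\delta} = O\Par{\log\tfrac{\kappa d}{\eps}}$, so the optimization phase costs $O\Par{(n + \kappa)\log\tfrac{\kappa d}{\eps}}$ summand queries.

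\textbf{Accounting.} Adding the $O\Par{(n+\kappa)\log\tfrac{\kappa d}{\eps}}$ optimization queries to the $O\Par{\kappa^2 d\log^4\tfrac{n\kappa d}{\eps}}$ value queries of Theorem~\ref{thm:base_finitesum} (value queries being in particular first-order queries), and absorbing the dominated $\kappa\log\tfrac{\kappa d}{\eps}$ term into $\kappa^2 d\log^4\tfrac{n\kappa d}{\eps}$, gives the stated bound $O\Par{n\log\tfrac{\kappa d}{\eps} + \kappa^2 d\log^4\tfrac{n\kappa d}{\eps}}$.

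\textbf{Main obstacle.} The only delicate step is the robustness claim: one must verify that nowhere in the analysis behind Theorem~\ref{thm:base_finitesum} is the identity $\nabla F(x^*) = 0$ invoked in a manner that fails to tolerate an $O(\delta)$ perturbation of $x^*$. I expect this to be benign --- the minimizer appears only through the warm start, the gradient-corrected Gaussian restores the exact cancellation used in Lemma~\ref{lem:warmstartcomp}, and every remaining quantity depends on the center only through $\norm{\hat{x} - x^*}_2$ (equivalently $\norm{\nabla F(\hat{x})}_2$), which we can make an arbitrarily small inverse polynomial at only logarithmic cost --- so that the corollary indeed reduces to routine query bookkeeping layered on top of Theorem~\ref{thm:base_finitesum}.
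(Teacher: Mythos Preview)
Your proposal is correct and matches the paper's approach: the paper states this corollary without a standalone proof, deferring to Appendix~\ref{app:xstar}, which makes exactly your two points---that \cite{Johnson013} finds an approximate minimizer in $O((n+\kappa)\log\tfrac{1}{\delta})$ summand queries, and that the base sampler's guarantees are robust to replacing $x^*$ by an inverse-polynomially accurate $\hat{x}$ (the paper handles warmness inexactness by pointing to Section~3.2.1 of \cite{DwivediCW018} rather than your gradient-correction device, but the effect is the same). One small note: your ``gradient correction'' does not actually make Lemma~\ref{lem:warmstartcomp} go through \emph{verbatim}---the lemma uses the vanishing first-order term at the exact minimizer---but the residual terms scale with $\norm{\nabla F(\hat{x})}_2$ and are absorbed into constants, which is all that is needed and is what the paper claims in Appendix~\ref{app:xstar}.
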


We now apply the reduction framework developed in Section~\ref{sec:prelims} to our Algorithm~\ref{alg:mrw} to obtain an improved query complexity for sampling from logconcave finite sums. 

\restatefirstfs*

\begin{proof}
	We apply Theorem~\ref{thm:fzerocomp} with $\mu$-strongly convex $\fcomp = F(x)$, using Algorithm~\ref{alg:mrw} as the required $\eta$-RGO $\oracle$ for sampling from distributions with densities of the form
	\[	\exp\left( -{F}(x)-\frac 1 \eta \norm{x-y}^2_2 \right)\]
	for some $y \in \R^d$, to total variation $\tfrac {\eps}{\Theta(T)}$ (see Section~\ref{sec:framework}) for $T$ the iteration bound of Algorithm~\ref{alg:alternatesample}. We apply Theorem~\ref{thm:base_finitesum}  to the function $\widetilde{F}(x) ={F}(x)+\frac 1 \eta \norm{x-y}_2^2$; we can express this in finite sum form by adding $\tfrac{1}{\eta}\norm{x - y}_2^2$ to every constituent function, and the effect on gradient oracles is $\tfrac{1}{\eta}(x - y)$. Note $\widetilde{F}$ has condition number $O(1 + \eta L)$. For a given $\eta$, the overall complexity is
	\[\frac{\log \frac{\kappa d}{\eps}}{\eta\mu} \Par{n\log\Par{\frac{n\kappa d}{\eps}} + d\log^4\Par{\frac{n\kappa d}{\eps}} + (\eta L)^2 d \log^4\Par{\frac{n\kappa d}{\eps}}}\]
	Here, the inner loop complexity uses Corollary~\ref{corr:fs_xstar} to also find the minimizer (for warm starts), and the outer loop complexity is by Theorem~\ref{thm:fzerocomp}. The result follows by optimizing over $\eta$, namely picking $\eta = \max(\tfrac 1 L, \sqrt{\frac{n}{L^2 d \log^3(n\kappa d /\eps)}})$, and that Algorithm~\ref{alg:alternatesample} always must have at least one iteration.
\end{proof}

Note the only place that Corollary~\ref{thm:improved_finitesum} used gradient evaluations was in determining minimizers of subproblems, via the first step of Corollary~\ref{corr:fs_xstar}. Consider now the $n = 1$ case. By running e.g.\ accelerated gradient descent for smooth and strongly convex functions, it is well-known \cite{Nesterov83} that we can obtain a minimizer in $\tO(\sqrt{\kappa})$ iterations, each querying a gradient oracle, where $\kappa$ is the condition number. By smoothness, we can approximate every coordinate of the gradient to arbitrary precision using $2$ function evaluations, so this is a $\tO(\sqrt{\kappa} d)$ value oracle complexity.

Finally, for every optimization subproblem in Corollary~\ref{thm:improved_finitesum} where $\eta = (L\cdot \text{polylog}\frac{\kappa d}{\eps})^{-1}$, the condition number is a constant, which amounts to a $\tO(d)$ value oracle complexity for computing a minimizer. This is never the dominant term compared to Theorem~\ref{thm:base_finitesum}, yielding the following conclusion.

\begin{corollary}\label{corr:zerokd}
	In the setting of Corollary~\ref{thm:kd}, Algorithm~\ref{alg:alternatesample} using Algorithm~\ref{alg:mrw} as a restricted Gaussian oracle uses $O(\kappa d \log^2 \tfrac{\kappa d}{\eps})$ value queries and obtains $\eps$ total variation distance to $\pi$.
\end{corollary}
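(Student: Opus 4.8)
The plan is to run the proximal reduction of Theorem~\ref{thm:fzerocomp} with $\fcomp = f$ and $\eta = \tfrac{1}{L}$, implementing the required $\eta$-RGO using only value queries. With $\eta = \tfrac 1 L$ and $f$ being $\mu$-strongly convex, Theorem~\ref{thm:fzerocomp} runs for $T = O\Par{\tfrac{1}{\eta\mu}\log\tfrac{\log\beta}{\eps}} = O\Par{\kappa\log\tfrac{\kappa d}{\eps}}$ iterations from a $\beta$-warm start; using the assumed access to $x^* = \argmin_{x \in \R^d} f(x)$, Lemma~\ref{lem:warmstartcomp} with no composite term shows $\Nor(x^*, \tfrac 1 L\id)$ is $\beta = \kappa^{d/2}$-warm, and it is sampled with no value queries. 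Each iteration calls the RGO a constant number of times, where the RGO must sample, to total variation distance $\tfrac{\eps}{\Theta(T)}$ (tolerable by the inexactness remark at the end of Section~\ref{sec:framework}), from the density proportional to $\exp\Par{-f(x) - \tfrac L 2\norm{x - y}_2^2}$ for a query point $y \in \R^d$. The point of choosing $\eta = \tfrac 1 L$ is that this regularized negative log-density has Hessian between $(\mu + L)\id$ and $2L\id$, hence condition number at most $2$, \emph{independent of $\kappa$}.

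To implement this RGO with a value oracle only, I would use Algorithm~\ref{alg:mrw} with $n = 1$, which reduces to the Metropolized random walk of \cite{DwivediCW018} and uses only evaluations of $f$. The chain needs a warm start: compute a minimizer $\widehat y$ of $x \mapsto f(x) + \tfrac L 2\norm{x - y}_2^2$ to inverse-polynomial accuracy by simulating accelerated gradient descent \cite{Nesterov83} with two-point finite-difference gradient estimates (valid since $f$ is smooth and we have a value oracle); as this function has constant condition number this costs $O\Par{d\log\tfrac{\kappa d}{\eps}}$ value queries, and by Lemma~\ref{lem:warmstartcomp} the distribution $\Nor(\widehat y, \tfrac{1}{2L}\id)$ is then $2^{O(d)}$-warm for the RGO density, so $\log\log\beta' = O(\log d)$. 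By the mixing bound underlying Theorem~\ref{thm:base_finitesum} specialized to $n = 1$ (the tight Metropolized random walk bound of \cite{ChenDWY19}, which for constant condition number is $O(d)$ iterations up to a single logarithmic factor in accuracy), the chain reaches total variation $\tfrac{\eps}{\Theta(T)}$ in $O\Par{d\log\tfrac{\kappa d}{\eps}}$ steps, each a single proposal and Metropolis filter costing $O(1)$ value queries. Thus each RGO call costs $O\Par{d\log\tfrac{\kappa d}{\eps}}$ value queries, and over all $T = O\Par{\kappa\log\tfrac{\kappa d}{\eps}}$ outer iterations the total is $O\Par{\kappa d\log^2\tfrac{\kappa d}{\eps}}$. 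Correctness follows from Theorem~\ref{thm:fzerocomp} and its robustness to a per-call RGO error of $\tfrac{\eps}{\Theta(T)}$; since both the Metropolized random walk and the finite-difference minimizer have deterministic query counts, so does the overall algorithm, unlike the expected bound of Corollary~\ref{thm:kd}.

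The main obstacle is the polylogarithmic accounting rather than any single estimate: to land at $\log^2$ (matching Corollary~\ref{thm:kd}) rather than a larger power, one must verify that the $n = 1$ specialization of Algorithm~\ref{alg:mrw} genuinely recovers the clean $\tO((\kappa')^2 d)$ Metropolized-random-walk mixing bound of \cite{ChenDWY19}, with only a single $\log$ of the target accuracy and no extra polylog overhead from the subsampled filter that Theorem~\ref{thm:base_finitesum} must pay for general $n$ (with $n = 1$ the filter step is exact). One must also check that the $\tO(d)$ cost of recomputing the inner minimizer afresh at every outer iteration — needed because the RGO warm start depends on $y$ — is of the same order as the $\tO(d)$ inner sampling cost, so it does not change the final bound; and, as in the proof of Corollary~\ref{thm:kd}, that the cumulative error from the $T$ inexact RGO calls and the inexact inner minimizers is absorbed by running Theorem~\ref{thm:fzerocomp} at accuracy $\tfrac{\eps}{2}$ and each RGO at accuracy $\tfrac{\eps}{\Theta(T)}$, using monotonicity of warmness along Algorithm~\ref{alg:alternatesample}.
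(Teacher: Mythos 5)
Your proof is correct and takes the same route as the paper's brief sketch in the discussion immediately preceding the corollary: apply Theorem~\ref{thm:fzerocomp} with $\eta = 1/L$ so that $T = O(\kappa\log\tfrac{\kappa d}{\eps})$, implement each constant-condition-number RGO subproblem by the Metropolized random walk (Algorithm~\ref{alg:mrw} with $n=1$), which mixes in $O(d\log\tfrac{\kappa d}{\eps})$ steps by Theorem 2 of \cite{ChenDWY19}, and obtain the inner minimizer by accelerated gradient descent with two-point finite differences in $O(d\log\tfrac{\kappa d}{\eps})$ value queries. Your bookkeeping — in particular replacing the $\log^4$ overhead of Theorem~\ref{thm:base_finitesum} by the single-log Metropolized random walk bound (possible because with $n=1$ the subsampled filter is exact), and noting the inner optimization cost is of the same order as the inner sampling cost — is exactly the accounting the paper relies on.
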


We note that the polylogarithmic factor is significantly improved when compared to Corollary~\ref{thm:improved_finitesum} by removing the random sampling steps in Algorithm~\ref{alg:mrw}. A precise complexity bound of the resulting Metropolized random walk, a zeroth-order algorithm mixing in $O(\kappa^2 d \log\tfrac{\kappa d}{\eps})$ for a logconcave distribution with condition number $\kappa$, is given as Theorem 2 of \cite{ChenDWY19}.

Finally, in the case $n \ge 1$, we also exhibit an improved query complexity in terms of an entirely zeroth-order sampling algorithm which interpolates with Corollary~\ref{corr:zerokd} (up to logarithmic factors). By trading off the $\tO(nd + \kappa d)$ zeroth-order complexity of minimizing a finite sum function \cite{Johnson013}, and the $\tO(\kappa^2 d)$ zeroth-order complexity of sampling, we can run Theorem~\ref{thm:fzerocomp} for the optimal choice of $\eta = \tO(\tfrac{\sqrt{n}}{L})$. The overall zeroth-order complexity can be seen to be $\tO(nd + \sqrt{n}\kappa d)$. 	%

\section{Composite logconcave sampling with a restricted Gaussian oracle}
\label{sec:composite}

In this section, we provide our ``base sampler'' for composite logconcave densities as Algorithm~\ref{alg:csg}, and give its guarantees by proving Theorem~\ref{thm:mainclaim}. Throughout, fix distribution $\pi$ with density 
\begin{equation}\label{eq:pidef}\begin{aligned}\frac{d\pi}{dx}(x) \propto \exp\left(-f(x) - g(x)\right),\text{where }f:\R^d \rightarrow \R\text{ is }L\text{-smooth, } \mu\text{-strongly convex,} \\
\text{and }g: \R^d \rightarrow \R \text{ admits a restricted Gaussian oracle } \oracle.\end{aligned}\end{equation}
We will define $\kappa \defeq \frac L \mu$, and assume that we have precomputed $x^* \defeq \argmin_{x \in \R^d}\left\{f(x) + g(x)\right\}$. Our algorithm proceeds in stages following the outline in Section~\ref{ssec:compositeintro}.

\begin{enumerate}
	\item $\csg$ is reduced to $\cssm$, which takes as input a distribution with negative log-density $f + g$, where $f$ and $g$ share a minimizer; this reduction is given in Section~\ref{ssec:sharedmin}, and the remainder of the section handles the shared-minimizer case.
	\item The algorithm $\cssm$ is a rejection sampling scheme built on top of sampling from a joint distribution $\pih$ on $(x, y) \in \R^d \times \R^d$ whose $x$-marginal approximates $\pi$. We give this reduction in Section~\ref{ssec:outerloop}. 
	\item The bulk of our analysis is for $\sjd$, an alternating marginal sampling algorithm for sampling from $\pih$. To implement marginal sampling, it alternates calls to $\oracle$ and a rejection sampling algorithm $\yor$. We prove its correctness in Section~\ref{ssec:alternate}. 
\end{enumerate}

We put these pieces together in Section~\ref{ssec:proofmainclaim} to prove Theorem~\ref{thm:mainclaim}. We remark that for simplicity, we will give the algorithms corresponding to the largest value of step size $\eta$ in the theorem statement; it is straightforward to modify the bounds to tolerate smaller values of $\eta$, which will cause the mixing time to become correspondingly larger (in particular, the value of $K$ in Algorithm~\ref{alg:sjd}).

\begin{algorithm}[ht!]\caption{$\csg(\pi, x^*, \eps)$}
	\label{alg:csg}
	\textbf{Input:} Distribution $\pi$ of form \eqref{eq:pidef}, $x^*$ minimizing negative log-density of $\pi$, $\eps \in [0, 1]$. \\
	\textbf{Output:} Sample $x$ from a distribution $\pi'$ with $\tvd{\pi'}{\pi} \le \eps$.
	\begin{algorithmic}[1]
		\State $\tilde{f}(x) \gets f(x) - \inprod{\nabla f(x^*)}{x}$, $\tilde{g}(x) \gets g(x) + \inprod{\nabla f(x^*)}{x}$
		\State \Return $\cssm(\pi, \tilde{f}, \tilde{g}, x^*, \eps)$
	\end{algorithmic}
\end{algorithm}

\begin{algorithm}[ht!]\caption{$\cssm(\pi, f, g, x^*, \eps)$}
	\label{alg:cssm}
	\textbf{Input:} Distribution $\pi$ of form \eqref{eq:pidef}, where $f$ and $g$ are both minimized by $x^*$, $\eps \in [0, 1]$. \\
	\textbf{Output:} Sample $x$ from a distribution $\pi'$ with $\tvd{\pi'}{\pi} \le \eps$.
	\begin{algorithmic}[1]
		\While {\textbf{true}}
		\State Define the set
		\begin{equation}\label{eq:omegadef} \Omega \defeq \left\{x \mid \norm{x - x^*}_2 \le 4\sqrt{\frac{d\log(288\kappa/\eps)}{\mu}}\right\}\end{equation}
		\State $x \gets \sjd(f, g, x^*, \oracle, \tfrac{\eps}{18})$
		\If{$x \in \Omega$}
		\State $\tau \sim \text{Unif}[0, 1]$
		\State $y \gets \yor(f, x, \eta)$
		\State $\alpha \gets \exp\left(f(y) - \inprod{\nabla f(x)}{y - x} - \tfrac{L}{2}\norm{y - x}_2^2 + g(x) + \tfrac{\eta L^2}{2}\norm{x - x^*}_2^2\right)$
		\State $\hat{\theta} \gets \exp\left(-f(x) - g(x) + \tfrac{\eta}{2(1 + \eta L)}\norm{\nabla f(x)}_2^2\right)(1 + \eta L)^{\frac{d}{2}}\alpha$
		\If{$\tau \le \tfrac{\hat{\theta}}{4}$}
		\State \Return $x$
		\EndIf
		\EndIf
		\EndWhile 
	\end{algorithmic}
\end{algorithm}
\begin{algorithm}[ht!]\caption{$\sjd(f, g, x^*, \eta, \oracle, \delta)$}
	\label{alg:sjd}
	\textbf{Input:} $f$, $g$ of form \eqref{eq:pidef} both minimized by $x^*$, $\delta \in [0, 1]$, $\eta > 0$, $\oracle$ restricted Gaussian oracle for $g$.\\
	\textbf{Output:} Sample $x$ from a distribution $\pih'$ with $\tvd{\pih'}{\pih} \le \delta$, where we overload $\pih$ to mean the marginal of \eqref{eq:pihdef} on the $x$ variable.
	\begin{algorithmic}[1]
		\State $\eta \gets \tfrac{1}{32 L\kappa d\log(16\kappa/\delta)}$
		\State Let $\pih$ be the density with
		\begin{equation}\label{eq:pihdef}\frac{d\pih}{dx}(z) \propto \exp\left(-f(y) - g(x) - \frac{1}{2\eta}\norm{y - x}_2^2 - \frac{\eta L^2}{2}\norm{x - x^*}_2^2\right) \end{equation}
		\State Call $\oracle$ to sample $x_0 \sim \pistart$, for
		\begin{equation}\label{eq:pistartdef}\frac{d\pistart(x)}{dx} \propto \exp\left(-\frac{L + \eta L^2}{2}\norm{x - x^*}_2^2 - g(x)\right)\end{equation}
		\State $K \gets \frac{2^{26}\cdot100}{\eta\mu}\log\left(\frac{d\log(16\kappa)}{4\delta}\right)$ (see Remark~\ref{rem:comments})
		\For{$k \in [K]$}
		\State Call $\yor\left(f, x_{k - 1}, \eta, \tfrac{\delta}{2Kd\log(\frac{d\kappa}{\delta})}\right)$ to sample $y_k \sim \pi_{x_{k - 1}}$ (Algorithm~\ref{alg:yor}), for
		\begin{equation}\label{eq:pixdef}\frac{d\pi_x}{dy}(y) \propto \exp\left(-f(y) - \frac{1}{2\eta}\norm{y - x}_2^2\right)\end{equation}
		\State Call $\oracle$ to sample $x_k \sim \pi_{y_k}$, for
		\begin{equation}\label{eq:piydef}\frac{d\pi_y}{dx}(x) \propto \exp\left(-g(x) - \frac{1}{2\eta}\norm{y - x}_2^2 - \frac{\eta L^2}{2}\norm{x - x^*}_2^2\right)\end{equation}
		\EndFor
		\State \Return $x_K$
	\end{algorithmic}
\end{algorithm}

\subsection{Reduction from $\csg$ to $\cssm$}
\label{ssec:sharedmin}

Correctness of $\csg$ is via the following properties.

\begin{restatable}{proposition}{restatecsgcorrectness}
	\label{prop:csgcorrectness}
	Let $\tilde{f}$ and $\tilde{g}$ be defined as in $\csg$. 
	\begin{enumerate}
		\item The density $\propto \exp(-f(x) - g(x))$ is the same as the density $\propto \exp(-\tilde{f}(x) - \tilde{g}(x))$.
		\item Assuming first-order (function and gradient evaluation) access to $f$, and restricted Gaussian oracle access to $g$, we can implement the same accesses to $\tilde{f}$, $\tilde{g}$ with constant overhead.
		\item $\tilde{f}$ and $\tilde{g}$ are both minimized by $x^*$.
	\end{enumerate}
\end{restatable}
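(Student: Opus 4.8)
The plan is to verify the three claims in order; each reduces to a short computation. For the first claim, I would observe that the terms added to $f$ and removed from $g$ are the same linear function $\inprod{\nabla f(x^*)}{x}$, so $\tilde{f}(x) + \tilde{g}(x) = f(x) + g(x)$ pointwise on $\R^d$. Hence the two unnormalized densities are literally equal, and therefore so are their normalizing constants and the resulting probability measures.

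For the second claim I would treat first-order access to $\tilde{f}$ and restricted Gaussian oracle access to $\tilde{g}$ separately. Since $x^*$ is precomputed, one initial query to the first-order oracle of $f$ produces the fixed vector $\nabla f(x^*)$; thereafter any first-order query to $\tilde{f}$ at a point $x$ costs one first-order query to $f$ plus $O(d)$ arithmetic, using $\tilde{f}(x) = f(x) - \inprod{\nabla f(x^*)}{x}$ and $\nabla \tilde{f}(x) = \nabla f(x) - \nabla f(x^*)$. For the oracle, note a call $\oracle_{\tilde{g}}(\lam, v)$ must sample from a density $\propto \exp(-\tfrac{1}{2\lam}\norm{x - v}_2^2 - g(x) - \inprod{\nabla f(x^*)}{x})$; completing the square absorbs the linear term into the quadratic, giving $\propto \exp(-\tfrac{1}{2\lam}\norm{x - v'}_2^2 - g(x))$ with $v' \defeq v - \lam\nabla f(x^*)$. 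Thus a single call to $\oracle(\lam, v')$ for the original $g$ implements the oracle for $\tilde{g}$, establishing constant overhead.

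For the third claim the key input is first-order optimality of $x^*$ for $f + g$: since $f$ is differentiable and convex and $g$ is convex, $0 \in \nabla f(x^*) + \partial g(x^*)$, i.e.\ $-\nabla f(x^*) \in \partial g(x^*)$. Then $\nabla \tilde{f}(x^*) = \nabla f(x^*) - \nabla f(x^*) = 0$, and since $\tilde{f}$ is convex (it is $f$ plus a linear term), a vanishing gradient implies $x^*$ globally minimizes $\tilde{f}$. Symmetrically, adding a linear function shifts the subdifferential by its gradient, so $\partial \tilde{g}(x^*) = \partial g(x^*) + \nabla f(x^*) \ni 0$, and convexity of $\tilde{g}$ gives that $x^*$ minimizes $\tilde{g}$ as well. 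None of the steps presents a genuine obstacle; the only point requiring care is that $g$ may be non-smooth, so the optimality argument for $\tilde{g}$ must be phrased via subdifferentials rather than gradients, using that translation by a linear term translates $\partial g$ by the corresponding vector.
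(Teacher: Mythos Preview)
Your proposal is correct and follows essentially the same approach as the paper's proof: all three claims are handled by the same short computations (pointwise equality $\tilde{f}+\tilde{g}=f+g$, absorbing the linear shift into the quadratic for the RGO, and first-order optimality via subgradients). Your treatment is in fact slightly more explicit than the paper's, which merely remarks that ``any quadratic shifted by a linear term is the sum of a quadratic and a constant'' for the oracle and uses the variational-inequality form of optimality for $\tilde{g}$.
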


\begin{proof}
	For $f$ and $g$ with properties as in \eqref{eq:pidef}, with $x^*$ minimizing $f + g$, define the functions 
	\[\tilde{f}(x) \defeq f(x) - \inprod{\nabla f(x^*)}{x},\; \tilde{g}(x) \defeq g(x) + \inprod{\nabla f(x^*)}{x},\]
	and observe that $\tilde{f}+ \tilde{g} = f + g$ everywhere. This proves the first claim. Further, implementation of a first-order oracle for $\tilde{f}$ and a restricted Gaussian oracle for $\tilde{g}$ are immediate assuming a first-order oracle for $f$ and a restricted Gaussian oracle for $g$, showing the second claim; any quadratic shifted by a linear term is the sum of a quadratic and a constant. We now show $\tilde{f}$ and $\tilde{g}$ have the same minimizer. By strong convexity, $\tilde{f}$ has a unique minimizer; first-order optimality shows that
	\[\nabla \tilde{f}(x^*) = \nabla f(x^*) - \nabla f(x^*) = 0,\]
	so this unique minimizer is $x^*$. Moreover, optimality of $x^*$ for $f + g$ implies that for all $x \in \R^d$,
	\[\inprod{\partial g(x^*) + \nabla f(x^*)}{x^* - x} \le 0.\]
	Here, $\partial g$ is a subgradient. This shows first-order optimality of $x^*$ for $\tilde{g}$ also, so $x^*$ minimizes $\tilde{g}$. 
\end{proof}

\subsection{Reduction from $\cssm$ to $\sjd$}
\label{ssec:outerloop}

$\cssm$ is a rejection sampling scheme, which accepts samples from subroutine $\sjd$ in the high-probability region $\Omega$ defined in \eqref{eq:omegadef}. We give a general analysis for approximate rejection sampling in Appendix~\ref{sssec:approxreject}, and Appendix~\ref{sssec:pipih} bounds relationships between distributions $\pi$ and $\pih$, defined in \eqref{eq:pidef} and \eqref{eq:pihdef} respectively (i.e.\ relative densities and normalization constant ratios). Combining these pieces proves the following main claim.

\begin{restatable}{proposition}{restatecssmcorrectness}
	\label{prop:cssmcorrectness}
	Let $\eta = \tfrac{1}{32 L\kappa d\log(288\kappa/\eps)}$, and assume $\sjd(f, g, x^*, \oracle, \delta)$ samples within $\delta$ total variation of the $x$-marginal on \eqref{eq:pihdef}. $\cssm$ outputs a sample within total variation $\eps$ of \eqref{eq:pidef} in an expected $O(1)$ calls to $\sjd$.
\end{restatable}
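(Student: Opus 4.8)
The plan is to read $\cssm$ as an \emph{approximate rejection sampler} whose proposal distribution is the (approximate) output of $\sjd$ and whose acceptance test is a randomized quantity built from $\yor$, and to combine the approximate-rejection-sampling analysis of Appendix~\ref{sssec:approxreject} with the density and normalization-constant comparisons between $\pi$ and $\pih$ from Appendix~\ref{sssec:pipih}. Concretely, each pass through the while loop draws a fresh $x$ from $\sjd(\cdot,\tfrac\eps{18})$, whose law $\tpi$ therefore satisfies $\tvd{\tpi}{\pih}\le\tfrac\eps{18}$ (overloading $\pih$ to mean the $x$-marginal of \eqref{eq:pihdef}), and, when $x\in\Omega$, accepts with probability $\min\{\hat\theta(x,y)/4,1\}$ for an independent $y\gets\yor(f,x,\eta)\sim\pi_x$ and $\tau\sim\textup{Unif}[0,1]$. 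I would first analyze the \emph{idealized} chain in which $\sjd$ returns $\pih$ exactly and $\yor$ returns $\pi_x$ exactly, then transfer to the real chain by a coupling.

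For the idealized chain there are two facts to establish. First, $\hat\theta(x,y)/4\le1$ for every $x\in\Omega$ and every $y$: the smoothness upper bound $f(y)\le f(x)+\inprod{\nabla f(x)}{y-x}+\tfrac L2\norm{y-x}_2^2$ gives $\alpha\le\exp(f(x)+g(x)+\tfrac{\eta L^2}{2}\norm{x-x^*}_2^2)$, so $\hat\theta\le(1+\eta L)^{d/2}\exp\!\big(O(\eta\norm{\nabla f(x)}_2^2)+\tfrac{\eta L^2}{2}\norm{x-x^*}_2^2\big)$, and with $\eta=\tfrac1{32L\kappa d\log(288\kappa/\eps)}$ and $\norm{x-x^*}_2$ bounded by the radius in \eqref{eq:omegadef} (whence $\norm{\nabla f(x)}_2=\norm{\nabla f(x)-\nabla f(x^*)}_2\le L\norm{x-x^*}_2$), each exponent is $O(1)$ and in fact $\hat\theta\le4$. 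Second, the acceptance probability on $\Omega$ is $\tfrac14\E_{y\sim\pi_x}[\hat\theta(x,y)]$, and completing the square in $y-x$ and applying Fact~\ref{fact:gaussz} once shows this equals $(2\pi\eta)^{d/2}\exp(-f(x)+\tfrac{\eta L^2}{2}\norm{x-x^*}_2^2)/Z_x$ with $Z_x\defeq\int\exp(-f(y)-\tfrac1{2\eta}\norm{y-x}_2^2)dy$ --- the $g$, $\nabla f$, and dimensional factors in the definition of $\hat\theta$ being tuned precisely so the $y$-integral telescopes to the bare Gaussian constant. Since $\tfrac{d\pih}{dx}(x)\propto\exp(-g(x)-\tfrac{\eta L^2}{2}\norm{x-x^*}_2^2)Z_x$ and $\tfrac{d\pi}{dx}(x)\propto\exp(-f(x)-g(x))$, this acceptance probability equals $c\,\tfrac{d\pi}{d\pih}(x)$ for a constant $c$ independent of $x$, so the idealized output has density $\propto\1_\Omega(x)\,d\pi(x)$: it is $\pi$ conditioned on $\Omega$, whence $\tvd{\text{idealized output}}{\pi}=\pi(\Omega^c)\le\tfrac\eps3$ by Fact~\ref{fact:subgauss} and the radius in \eqref{eq:omegadef}.

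Next I would bound the expected number of passes and transfer the guarantee. The idealized per-pass acceptance probability is $\tfrac c4\pi(\Omega)\ge\tfrac c8$, and $c=(2\pi\eta)^{d/2}Z_\pi/Z_{\pih}$ is bounded below by a universal constant: sandwiching $Z_x$ between $(2\pi\eta)^{d/2}e^{-f(x)}(1+\eta L)^{-d/2}$ and $(2\pi\eta)^{d/2}e^{-f(x)+\frac\eta2\norm{\nabla f(x)}_2^2}$ by convexity and smoothness, and integrating against $\pi$ using that $\eta L^2/\mu$ is tiny, is exactly the normalization-ratio content of Appendix~\ref{sssec:pipih}. Hence the idealized chain --- and the real chain, whose per-pass acceptance differs by at most $\tfrac\eps{18}$ because $0\le\min\{\hat\theta/4,1\}\le1$ and $\tvd{\tpi}{\pih}\le\tfrac\eps{18}$ --- terminates in $O(1)$ passes in expectation, i.e.\ $O(1)$ calls to $\sjd$. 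Coupling the two chains pass by pass via the coupling characterization of total variation --- making the proposals agree with probability $\ge1-\tfrac\eps{18}$, making $\yor$'s outputs agree with probability $\ge1-\delta'$ for a polynomially small tolerance $\delta'$ if $\yor$ is implemented inexactly, and sharing $\tau$ --- the two runs produce the same sample unless some pass disagrees, an event of probability at most $\left(\sum_{k\ge1}\Pr[\text{reach pass }k]\right)(\tfrac\eps{18}+\delta')=O(1)\cdot(\tfrac\eps{18}+\delta')\le\tfrac\eps3$. The triangle inequality then gives $\tvd{\cssm\text{ output}}{\pi}\le\pi(\Omega^c)+\tfrac\eps3\le\eps$, which is the desired conclusion and is an instance of the approximate-rejection-sampling lemma of Appendix~\ref{sssec:approxreject}.

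The main obstacle is not the Gaussian identity, which is a routine completion of the square once one trusts the correction terms in $\hat\theta$, but rather making the quantitative estimates interlock so that the scheme is simultaneously \emph{correct} and \emph{fast}: the radius of $\Omega$ must be large enough that $\pi(\Omega^c)\le\tfrac\eps3$ yet, together with the tiny $\eta$, small enough that $\hat\theta\le4$ everywhere on $\Omega$; and the normalization ratio $c$ must be pinned to a constant, since otherwise the ``$O(1)$ calls to $\sjd$'' claim collapses and the coupling error, of order $\tfrac\eps{18}$ times the expected number of passes, is no longer $O(\eps)$. Supplying exactly these estimates --- relative densities, $\pi(\Omega^c)$, and the ratio $Z_\pi/Z_{\pih}$ --- is the role of Appendices~\ref{sssec:approxreject} and~\ref{sssec:pipih}.
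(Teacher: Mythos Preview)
Your proposal is correct and follows essentially the same route as the paper. The paper packages the argument by directly invoking Corollary~\ref{corr:unbiased_reject_approx} with $C=4$, $\eps'=\tfrac{\eps}{3}$, $\delta=\tfrac{\eps}{18}$: it verifies $\hat\theta\le 4$ on $\Omega$ (your first fact), computes $\E_{y\sim\pi_x}[\hat\theta]=p(x)/\hp(x)$ via the same Gaussian completion of the square (your second fact), and checks $\hat Z/Z\le 1$ from Lemma~\ref{lem:normratiobound} (your normalization-constant step), after which the corollary delivers both the $\eps$ total-variation bound and the $O(1)$ expected iterations. Your explicit pass-by-pass coupling is just an unrolling of Lemma~\ref{lem:reject_approx_proof}, as you note. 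One small notational slip: you write the per-point acceptance as $c\cdot\tfrac{d\pi}{d\pih}(x)$ and then the per-pass acceptance as $\tfrac{c}{4}\pi(\Omega)$; integrating the former against $\pih$ over $\Omega$ gives $c\,\pi(\Omega)$, so one of the two expressions should absorb the factor $1/4$ consistently. This does not affect correctness.
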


\subsection{Implementing $\sjd$}
\label{ssec:alternate}

$\sjd$ alternates between sampling marginals in the joint distribution $\pih$, as seen by definitions \eqref{eq:pixdef}, \eqref{eq:piydef}. We showed that marginal sampling attains the correct stationary distribution as Lemma~\ref{lem:alternate_exact}. We bound the conductance of the induced walk on iterates $\{x_k\}$ by combining an isoperimetry bound with a total variation guarantee between transitions of nearby points in Appendix~\ref{sssec:sjdconduct}. Finally, we give a simple rejection sampling scheme $\yor$ as Algorithm~\ref{alg:yor} for implementing the step \eqref{eq:pixdef}. Since the $y$-marginal of $\pih$ is a bounded perturbation of a Gaussian (intuitively, $f$ is $L$-smooth and $\eta^{-1} \gg L$), we show in a high probability region that rejecting from the sum of a first-order approximation to $f$ and the Gaussian succeeds in $2$ iterations. 

\begin{remark}\label{rem:comments}
	For simplicity of presentation, we were conservative in bounding constants throughout; in practice, we found that the constant in Line 4 is orders of magnitude too large (a constant $< 10$ sufficed), which can be found as Section 4 of \cite{ShenTL20}. Several constants were inherited from prior analyses, which we do not rederive to save on redundancy.
\end{remark}

We now give a complete guarantee on the complexity of $\sjd$.

\begin{restatable}{proposition}{restatesjdguarantee}\label{prop:sjdguarantee}
	$\sjd$ outputs a point with distribution within $\delta$ total variation distance from the $x$-marginal of $\pih$. The expected number of gradient queries per iteration is constant.
\end{restatable}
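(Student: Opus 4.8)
\emph{Proof plan.} The plan is to decompose the argument into two independent pieces: a mixing-time bound for the \emph{idealized} version of $\sjd$ in which the $y$-step \eqref{eq:pixdef} is sampled exactly, and a per-iteration cost-and-error analysis for the actual implementation, which realizes the $y$-step via $\yor$ and uses an approximate $\oracle$. For the first piece I would begin from Lemma~\ref{lem:alternate_exact}: applied to the two blocks $(x,y)$ of the density $\pih$ in \eqref{eq:pihdef}, it shows $\pih$ — hence its $x$-marginal — is exactly stationary for the alternating chain, so no Metropolis filter is needed. Then I would invoke the average-conductance machinery (Proposition~\ref{prop:mixtime}, or its fine-grained form Propositions~\ref{prop:mixviaconduct}--\ref{prop:conductviaisotv} in Appendix~\ref{app:deferred}). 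That framework needs two inputs: strong logconcavity of the $x$-marginal of $\pih$, which holds with parameter $\mu$ since $g$ is $\mu$-strongly convex and marginalizing out $y$ preserves logconcavity; and a total-variation bound of the type \eqref{eq:tvdreq} between the one-step transitions $\tran_x$, $\tran_{x'}$ of nearby points, on a convex set $\Omega$ carrying all but a $\tfrac{\delta^2}{2\beta^2}$-fraction of the mass (a ball around $x^*$ of radius $O(\sqrt{d\log(\beta/\delta)/\mu})$ by Fact~\ref{fact:subgauss}).

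For the transition bound I would reuse the coupling argument of Observation~\ref{observe:pvst} to reduce $\tvd{\tran_x}{\tran_{x'}}$ to $\tvd{\pi_x}{\pi_{x'}}$, with $\pi_x$ the $y$-step distribution \eqref{eq:pixdef}. The key observation is that $\pi_x$ and $\pi_{x'}$ are exponential tilts of one another by the linear functional $\eta^{-1}\inprod{x-x'}{\cdot}$ and each is $\eta^{-1}$-strongly logconcave; the standard covariance bound for strongly logconcave measures then gives $\norm{\E_{\pi_x}[y]-\E_{\pi_{x'}}[y]}_2\le\norm{x-x'}_2$, so the KL divergence between $\pi_x$ and $\pi_{x'}$ is $O(\eta^{-1}\norm{x-x'}_2^2)$, and Pinsker turns this into \eqref{eq:tvdreq} once $\norm{x-x'}_2\le\Delta$ for $\Delta=\Theta(\sqrt\eta)$. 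Combined with the warm start $\pistart$ of \eqref{eq:pistartdef} — which $\oracle$ can sample, being a Gaussian restricted by $g$, and which is $\beta$-warm for the $x$-marginal of $\pih$ with $\log\log\beta=O(\log\tfrac{d\kappa}{\delta})$ via the relative-density and normalization-constant estimates of Appendix~\ref{sssec:pipih} — the framework yields mixing to within $\tfrac\delta2$ total variation in $O(\tfrac{1}{\eta\mu}\log\tfrac{\log\beta}{\delta})$ iterations, which is at most the value $K$ fixed in Line 4 of Algorithm~\ref{alg:sjd}.

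For the second piece, each loop iteration performs one call to $\oracle$ (no gradient queries of $f$) and one call to $\yor$ for the $y$-step, so I must show $\yor$ uses an expected constant number of gradient queries. This exploits $\eta\ll L^{-1}$, which makes $\pi_x$ a mild perturbation of $\Nor(x,\eta\id)$: exactly as in the $\XSample$ analysis (Lemma~\ref{lem:xsample}), $\yor$ proposes from $\Nor(x-\eta\nabla f(x),\eta\id)$ and accepts via a first-order correction of $f$, and a computation in the spirit of Lemma~\ref{lem:xsample} — now using the sharpened mode-versus-mean displacement estimate of Proposition~\ref{prop:min_perturb} — shows the acceptance probability is $\Omega(1)$ whenever the conditioning point lies in a fixed ball about $x^*$, with a fallback to a general-purpose well-conditioned sampler costing $\tO(d)$ gradient queries outside that ball. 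Since $x_0\sim\pistart$ is warm and warmness does not increase along the idealized chain, each $x_k$ lies in the good ball except with probability $O((Kd\log\tfrac{d\kappa}{\delta})^{-1})$ by Fact~\ref{fact:subgauss} and a union bound, so the expected per-iteration gradient-query count is $O(1)$, and the fallback branch contributes negligibly both to the runtime and — through the tolerance $\tfrac{\delta}{2Kd\log(d\kappa/\delta)}$ with which $\yor$ is called — to the total variation error.

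Finally I would assemble the two pieces: the idealized chain is within $\tfrac\delta2$ of the $x$-marginal of $\pih$, and the cumulative error of the $K$ approximate $\yor$ calls together with the inexactness of $\oracle$ is at most $\tfrac\delta2$ by the coupling characterization of total variation (the same inexactness bookkeeping as at the end of Section~\ref{sec:framework}), giving both claims of the proposition. I expect the main obstacle to be Proposition~\ref{prop:min_perturb} and the warm-start/conductance constants it feeds: quantifying, in the regime $\eta\ll L^{-1}$, how far a $\tfrac1\eta\id$-covariance Gaussian restricted by an $L$-smooth $f$ moves its mean away from its mode, which is where the continuous perturbation / differential-inequality argument sketched in Section~\ref{ssec:technical} does the real work; everything else is a recombination of the conductance framework and the rejection-sampling calculus already set up in Sections~\ref{sec:framework} and~\ref{ssec:kd}.
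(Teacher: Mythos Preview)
Your decomposition into (i) mixing of the idealized chain plus (ii) per-iteration cost and inexactness bookkeeping is exactly the paper's approach, and your treatment of piece (ii) --- $\yor$ succeeding in expected $O(1)$ iterations inside a high-probability ball, fallback to an $\tO(d)$ sampler outside, warmness monotonicity controlling the bad-event probability --- matches the paper's proof essentially line for line.

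There is one factual slip in piece (i). You justify $\mu$-strong logconcavity of the $x$-marginal of $\pih$ by saying ``$g$ is $\mu$-strongly convex.'' In the setting of Section~\ref{sec:composite} it is $f$, not $g$, that is $\mu$-strongly convex; $g$ is merely convex. So the $x$-marginal is not $\mu$-strongly logconcave for the reason you give, and plugging directly into Proposition~\ref{prop:mixtime} as stated requires a different argument. The paper handles this via Lemma~\ref{lem:iso}: it does \emph{not} establish strong logconcavity of the $\pih$-marginal at all, but instead transfers the log-isoperimetric inequality \eqref{eq:logiso} from $\pi_{\Omega_\delta}$ (which \emph{is} $\mu$-strongly logconcave, by $f$) to $\pih_{\Omega_\delta}$ using the pointwise density-ratio bounds $\tfrac12\le\tfrac{d\pi}{d\pih}\le 2$ of Corollary~\ref{corr:densityratiodelta}, losing only a constant in $\psi$. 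That is the step you are missing. (Alternatively, a Schur-complement computation on the joint Hessian of \eqref{eq:pihdef} shows the $x$-marginal is $\tfrac{\mu}{1+\eta\mu}$-strongly logconcave, with the strong convexity inherited from $f$ through the coupling term; this would salvage your direct route, but it is not what you wrote and not what the paper does.)
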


\subsection{Putting it all together: proof of Theorem~\ref{thm:mainclaim}}
\label{ssec:proofmainclaim}

We show Theorem~\ref{thm:mainclaim} follows from the guarantees of Propositions~\ref{prop:csgcorrectness},~\ref{prop:cssmcorrectness}, and~\ref{prop:sjdguarantee}. Formally, Theorem~\ref{thm:mainclaim} is stated for an arbitrary value of $\eta$ which is upper bounded by the value in Line 1 of Algorithm~\ref{alg:sjd}; however, it is straightforward to see that all our proofs go through for any smaller value. By observing the value of $K$ in $\sjd$, we see that the number of total iterations in each call to $\sjd$ $O\left(\tfrac{1}{\eta\mu}\log(\tfrac{\kappa d}{\eps})\right) = O\left(\kappa^2 d \log^2\left(\tfrac{\kappa d}{\delta}\right)\right).$
Proposition~\ref{prop:sjdguarantee} also shows that every iteration, we require an expected constant number of gradient queries and calls to $\oracle$, the restricted Gaussian oracle for $g$, and that the resulting distribution has $\delta$ total variation from the desired marginal of $\pih$. Next, Proposition~\ref{prop:cssmcorrectness} implies that the number of calls to $\sjd$ in a run of $\cssm$ is bounded by a constant, the choice of $\delta$  is $\Theta(\eps)$, and the resulting point has total variation $\eps$ from the original distribution $\pi$. Finally, Proposition~\ref{prop:csgcorrectness} shows sampling from a general distribution of the form \eqref{eq:compositesampling} is reducible to one call of $\cssm$, and the requisite oracles are implementable. 	%

\section{Logconcave finite sums}
\label{sec:finitesum}
In this section, we provide our ``base sampler'' for logconcave finite sums as Algorithm~\ref{alg:mrw}, and give its guarantees by proving Theorem~\ref{thm:base_finitesum}. Throughout, fix distribution $\pi$ with density
\begin{align*}
\frac{d\pi}{dx}(x) \propto \exp(-F(x)), \text{ where } F(x)=\frac 1 n \sum_{i \in [n]} f_i(x) \text{ is } \mu\text{-strongly convex,} \\
\text{ and for all } i \in [n],\; f_i \text{ is } L\text{-smooth}.
\end{align*}
We will define $\kappa \defeq \frac{L}{\mu}$, and assume that we have precomputed $x^* \defeq \argmin_{x \in \R^d} \{F(x)\}$. We will also assume explicitly that $\nabla f_i(x^*) = 0$ for all $i \in [n]$ throughout this section (i.e.\ all $f_i$ are minimized at the same point); this is without loss of generality, by a similar argument as in Proposition~\ref{prop:csgcorrectness}.

\begin{algorithm}[ht!]\caption{$\MRW(F, h, x_0,p, K)$}
	\label{alg:mrw}
	\textbf{Input:} $F(x) = \frac{1}{n}\sum_{i\in[n]}f_i(x)$, step size $h > 0$, initial $x_0$, $p \in [0, 1]$, iteration count $K \in \N$
	\begin{algorithmic}[1]
		\For{$0 \le k < K$}
		\State Draw $\xi_k \sim \mathcal{N}(0,\id)$
		\State $y_{k+1} \gets x_k + \sqrt{2h}\xi_k$
		\State Draw $S_k \subseteq [n]$ by including each $i\in S_k$ independently with probability $p$
		\State For each $i\in [n]$, 
		\[
		\gamma_{k}^{(i)}\gets\begin{cases}\frac{1}{p}\left(\sqrt{\exp\left(-\frac{1}{n}f_{i}(y_{k+1})+\frac{1}{n}f_{i}(x_{k})\right)}-1\right)+1 & i \in S_k \\ 1 & i\not\in S_k\end{cases}
		\]
		\State $\gamma_k \gets \prod_{i=1}^{n}\gamma_{k}^{(i)} $, $\tau \sim \text{Unif}[0, 1]$
		\If{$\tau \leq \tfrac{3}{4}\gamma_k$ and $|S_k| \le 2pn$} 
		\State $x_{k+1}\gets y_{k+1}$
		\Else
		\State {$x_{k+1}\gets x_{k}$}
		\EndIf
		\EndFor
		\State \Return $x_K$.
	\end{algorithmic}
\end{algorithm}
Algorithm~\ref{alg:mrw} is the zeroth-order Metropolized random walk of \cite{DwivediCW018} with an efficient, but biased, filter step; the goal of our analysis is to show this bias does not incur significant error.

\subsection{Approximate Metropolis-Hastings}

We first recall the following well-known fact underlying Metropolis-Hastings (MH) filters.
\begin{proposition}\label{prop:reversible}
Consider a random walk on $\R^d$ with proposal distributions $\{\prop_x\}_{x \in \R^d}$ and acceptance probabilities $\{\alpha(x, x')\}_{x, x' \in \R^d}$ conducted as follows: at a current point $x$,
\begin{enumerate}
	\item Draw a point $x' \sim \prop_x$.
	\item Move the random walk to $x'$ with probability $\alpha(x, x')$, else stay at $x$.
\end{enumerate}
Suppose $\prop_x(x') = \prop_{x'}(x)$ for all pairs $x, x' \in \R^d$, and further $\tfrac{d\pi}{dx}(x) \alpha(x, x') = \tfrac{d\pi}{dx}(x') \alpha(x', x)$.
Then, $\pi$ is a stationary distribution for the random walk.
\end{proposition}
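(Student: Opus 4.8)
The plan is to verify the \emph{detailed balance} (reversibility) condition for the Markov chain and then invoke the standard fact that reversibility implies stationarity. Let $T(x, x')$ denote the transition kernel of the random walk described in the proposition. For $x \neq x'$, the chain moves from $x$ to $x'$ only by proposing $x'$ (which happens with density $\prop_x(x')$) and then accepting (which happens with probability $\alpha(x, x')$), so $T(x, x') = \prop_x(x')\alpha(x, x')$; there is also an atom at $x$ coming from rejected proposals, but this contributes symmetrically and can be handled separately (or simply absorbed, since the detailed-balance check on the off-diagonal part is what matters).

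The main computation is then immediate: for $x \neq x'$,
\[
\frac{d\pi}{dx}(x)\, T(x, x') = \frac{d\pi}{dx}(x)\, \prop_x(x')\, \alpha(x, x').
\]
Using the hypothesis $\prop_x(x') = \prop_{x'}(x)$ to swap the proposal densities, and the hypothesis $\tfrac{d\pi}{dx}(x)\alpha(x, x') = \tfrac{d\pi}{dx}(x')\alpha(x', x)$ to swap the $\pi$-weighted acceptance factors, the right-hand side equals $\tfrac{d\pi}{dx}(x')\, \prop_{x'}(x)\, \alpha(x', x) = \tfrac{d\pi}{dx}(x')\, T(x', x)$. This is exactly the detailed balance equation $\pi(dx)T(x, dx') = \pi(dx')T(x', dx)$ on the off-diagonal; the diagonal (stay-in-place) part trivially satisfies detailed balance with respect to any measure.

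To conclude stationarity, I would integrate the detailed balance identity over $x$: for any measurable set $A$,
\[
\int_{\R^d} \Par{\int_A T(x, x')\, dx'} d\pi(x) = \int_A \Par{\int_{\R^d} T(x', x)\, d\pi(x')} dx' = \int_A d\pi(x') = \pi(A),
\]
where the last step uses that $T(x', \cdot)$ is a probability measure for each $x'$. The left-hand side is precisely $(\pi T)(A)$, the measure of $A$ after one step starting from $\pi$, so $\pi T = \pi$ and $\pi$ is stationary.

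There is no real obstacle here — this is a textbook argument. The only point requiring minor care is the bookkeeping between the off-diagonal transition density and the rejection atom at $x$; I would either phrase everything in terms of the kernel acting on test functions (so that the atom cancels automatically on both sides), or note explicitly that detailed balance need only be checked for the off-diagonal part since the holding probability contributes $\tfrac{d\pi}{dx}(x)\cdot(\text{holding prob})$ symmetrically. Either way the proof is a two-line verification followed by the standard integration step.
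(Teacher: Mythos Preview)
Your proposal is correct and takes essentially the same approach as the paper, which simply asserts in one line that the walk satisfies detailed balance (reversibility) with respect to $\pi$. You have spelled out the verification and the passage from reversibility to stationarity that the paper leaves implicit.
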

\begin{proof}
This follows because the walk satisfies detailed balance (reversibility) with respect to $\pi$.
\end{proof}

We propose an algorithm that applies a variant of the Metropolis-Hastings filter to a Gaussian random walk. Specifically, we define the following algorithm, which we call $\IMRW$.
\begin{definition}[$\IMRW$]\label{def:mmhf}
Consider the following random walk for some step size $h > 0 $: for each iteration $k$ at a current point $x_k \in \R^d$,
\begin{enumerate}
	\item Set $y_{k + 1} \gets x_k + \sqrt{2h}\xi$, where $\xi \sim \Nor(0, \id)$. 
	\item  $x_{k + 1} \gets y_{k + 1}$ with probability $\alpha(x_k, y_{k + 1})$ (otherwise, $x_{k + 1} \gets x_k$), where
	\begin{equation}\label{eq:alphaxy}
	\alpha(x,y) = 
	\begin{cases}
	1 & \sqrt{\frac {\exp(-F(y))}{\exp(-F(x))} }> \frac 4 3,\\ 
	\frac 3 4  \sqrt{\frac {\exp(-F(y))}{\exp(-F(x))} }  &  \frac 3 4 \leq\sqrt{\tfrac {\exp(-F(y))}{\exp(-F(x))} } \leq \frac 4 3, \\
	\frac {\exp(-F(y))}{\exp(-F(x))}  & \sqrt{\frac {\exp(-F(y))}{\exp(-F(x))} }  < \frac 3 4.
	\end{cases}
	\end{equation}
\end{enumerate}
\end{definition}

\begin{lemma}
Distribution $\pi$ with $\tfrac{d\pi}{dx}(x) \propto \exp(-F(x))$ is stationary for $\IMRW$.
\end{lemma}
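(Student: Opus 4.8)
The plan is to apply Proposition~\ref{prop:reversible} with the proposal distributions $\prop_x$ given by the Gaussian step $y = x + \sqrt{2h}\,\xi$, $\xi \sim \Nor(0,\id)$, and the acceptance probabilities $\alpha$ given by \eqref{eq:alphaxy}. Two things must be verified: (i) the proposal is symmetric, i.e.\ $\prop_x(y) = \prop_y(x)$ for all $x,y$; and (ii) the detailed-balance identity $\tfrac{d\pi}{dx}(x)\,\alpha(x,y) = \tfrac{d\pi}{dx}(y)\,\alpha(y,x)$ holds for all $x,y \in \R^d$. Given (i) and (ii), Proposition~\ref{prop:reversible} immediately yields that $\pi$ is stationary for $\IMRW$.

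Step (i) is immediate: $\prop_x$ has density $\propto \exp(-\tfrac{1}{4h}\norm{y-x}_2^2)$, which is symmetric in its two arguments. For step (ii), I would introduce the shorthand $r = r(x,y) := \sqrt{\exp(-F(y))/\exp(-F(x))}$ and record the two elementary facts that the swap $x \leftrightarrow y$ sends $r \mapsto 1/r$, and that $\tfrac{d\pi}{dx}(y)/\tfrac{d\pi}{dx}(x) = \exp(-F(y))/\exp(-F(x)) = r^2$ (the normalization constant cancels). The verification then splits into the three regimes of \eqref{eq:alphaxy}: when $r > 4/3$ we have $\alpha(x,y)=1$ while $1/r < 3/4$ forces $\alpha(y,x) = 1/r^2$, so both sides of (ii) equal $\tfrac{d\pi}{dx}(x)$; when $3/4 \le r \le 4/3$ we have $\alpha(x,y) = \tfrac34 r$ and (since then $3/4 \le 1/r \le 4/3$) $\alpha(y,x) = \tfrac{3}{4r}$, and using $\tfrac{d\pi}{dx}(y) = r^2\,\tfrac{d\pi}{dx}(x)$ both sides equal $\tfrac34 r\,\tfrac{d\pi}{dx}(x)$; and when $r < 3/4$ we have $\alpha(x,y) = r^2$ while $1/r > 4/3$ gives $\alpha(y,x)=1$, and both sides equal $\tfrac{d\pi}{dx}(y)$.

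There is no genuine obstacle here; the only point requiring a moment's care is that the case boundaries in \eqref{eq:alphaxy} are consistent under the swap $r \mapsto 1/r$, which holds precisely because the thresholds $3/4$ and $4/3$ are reciprocal — the ``$r$ large'' branch maps to the ``$r$ small'' branch and the clipped middle branch maps to itself — so the case analysis above is exhaustive and well-defined. Having confirmed symmetry of the proposal and detailed balance of the clipped acceptance rule, Proposition~\ref{prop:reversible} gives the claim.
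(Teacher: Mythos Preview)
Your proof is correct and follows essentially the same approach as the paper: apply Proposition~\ref{prop:reversible}, note the Gaussian proposal is symmetric, and verify the detailed-balance identity case by case. Your treatment is in fact more thorough, checking all three regimes explicitly and pointing out that the thresholds $3/4$ and $4/3$ are reciprocals so the cases pair up correctly under $x \leftrightarrow y$; the paper only writes out the middle case and dismisses the rest as ``similar (as it is a standard Metropolis-Hastings filter).''
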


\begin{proof}
Without loss of generality, assume that $\pi$ has been normalized so that $\tfrac{d\pi}{dx}(x) = \exp(-F(x))$. We apply Proposition~\ref{prop:reversible}, dropping subscripts in the following. It is clear that $\prop_x(y) = \prop_y(x)$ for any $x, y$, so it suffices to check the second condition. When $\tfrac 3 4\le \sqrt{\tfrac {\exp(-F(y))}{\exp(-F(x))} } \le  \tfrac 4 3$, this follows from
\[\frac{d\pi}{dx}(x) \alpha(x, x') = \frac 3 4\sqrt{\exp(-F(x)-F(y))} = \frac{d\pi}{dx}(x') \alpha(x', x).\]
The other case is similar (as it is a standard Metropolis-Hastings filter).
\end{proof}

In Algorithm~\ref{alg:mrw}, we implement an approximate version of the modified MH filter in Definition~\ref{def:mmhf}, where we always assume the pair $x$, $y$ are in the second case of \eqref{eq:alphaxy}. In Lemma~\ref{lem:con_holds}, we show that if a certain boundedness condition holds, then Algorithm~\ref{alg:mrw} approximates $\IMRW$ well. We then show that the output distributions of $\IMRW$ and our Algorithm~\ref{alg:mrw} have small total variation distance in Lemma~\ref{lem:algos_close}.

\begin{lemma}
	\label{lem:con_holds}
	Suppose that in an iteration $0 \le k < K$ of Algorithm~\ref{alg:mrw}, the following three conditions hold for some parameters $R_x$, $C_\xi$, $C_x \in \R_{\ge 0}$:
	\begin{enumerate}
		\item $\norm{x_k - x^*}_2 \le R_x$.
		\item $\norm{\xi_k}_2 \le C_\xi\sqrt{d}$.
		\item For all $i \in [n]$, $|\nabla f_i(x_k)^\top \xi_k| \le C_x\norm{\nabla f_i(x_k)}_2$.
	\end{enumerate}
 Then, for any
 \begin{equation}\label{eq:hbound}
 h\leq \frac 1 {98C_x^2L^2R_x^2 + 7LC_\xi^2 d},\end{equation}
 $\tfrac 3 4 \leq \sqrt{\tfrac {\exp(-F(y_{k + 1}))}{\exp(-F(x_k))} } \leq \tfrac 4 3$.
	Moreover, we have 
	$\E \left[ \gamma_k \right] =  \sqrt{\frac {\exp(-F(y_{k+1}))}{\exp(-F(x_k))}}$, and when $|S_k| \leq 2pn$, $\gamma_k \leq \tfrac 4 3$.
\end{lemma}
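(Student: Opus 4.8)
The plan is to prove the three assertions in order: first the two-sided bound $\tfrac34 \le \sqrt{\exp(-F(y_{k+1}))/\exp(-F(x_k))} \le \tfrac43$, then the unbiasedness $\E[\gamma_k] = \sqrt{\exp(-F(y_{k+1}))/\exp(-F(x_k))}$, and finally the deterministic upper bound $\gamma_k \le \tfrac43$ conditioned on $|S_k| \le 2pn$. For the first part, note that $\sqrt{\exp(-F(y))/\exp(-F(x))} = \exp(\tfrac12(F(x) - F(y)))$, so it suffices to show $|F(x_k) - F(y_{k+1})| \le 2\log\tfrac43 \le \tfrac12$ (a crude constant lower bound on $2\log\tfrac43$ that is easily checked). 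Write $y_{k+1} - x_k = \sqrt{2h}\,\xi_k$. I would Taylor expand each summand: since $f_i$ is $L$-smooth and convex, $|f_i(y_{k+1}) - f_i(x_k) - \inprod{\nabla f_i(x_k)}{y_{k+1} - x_k}| \le \tfrac{L}{2}\norm{y_{k+1} - x_k}_2^2 = Lh\norm{\xi_k}_2^2$. Averaging over $i$ and using condition 2 ($\norm{\xi_k}_2^2 \le C_\xi^2 d$), the second-order term is at most $LhC_\xi^2 d$. For the first-order term, $\inprod{\nabla F(x_k)}{y_{k+1} - x_k} = \sqrt{2h}\,\inprod{\tfrac1n\sum_i \nabla f_i(x_k)}{\xi_k}$; bound $|\inprod{\nabla f_i(x_k)}{\xi_k}| \le C_x\norm{\nabla f_i(x_k)}_2$ by condition 3, and then $\norm{\nabla f_i(x_k)}_2 \le \norm{\nabla f_i(x^*)}_2 + L\norm{x_k - x^*}_2$. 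The subtlety here is that $\nabla f_i(x^*)$ need not vanish (only $\nabla F(x^*) = 0$); however the first-order term is $\sqrt{2h}\cdot\tfrac1n\sum_i \inprod{\nabla f_i(x_k)}{\xi_k}$ and after replacing $\nabla f_i(x_k)$ by $\nabla f_i(x_k) - \nabla f_i(x^*) + \nabla f_i(x^*)$, the $\nabla f_i(x^*)$ pieces contribute $\sqrt{2h}\inprod{\nabla F(x^*)}{\xi_k} = 0$ on average — so really $|\tfrac1n\sum_i\inprod{\nabla f_i(x_k)}{\xi_k}|$ can be bounded via $\tfrac1n \sum_i |\inprod{\nabla f_i(x_k) - \nabla f_i(x^*)}{\xi_k}| \le C_x \tfrac1n\sum_i \norm{\nabla f_i(x_k) - \nabla f_i(x^*)}_2 \le C_x L R_x$ using $L$-smoothness and condition 1. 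Thus the first-order term is at most $\sqrt{2h}\,C_x L R_x$. Collecting, $|F(x_k) - F(y_{k+1})| \le \sqrt{2h}\,C_x L R_x + LhC_\xi^2 d$, and plugging in the hypothesized bound \eqref{eq:hbound} on $h$ makes each piece at most $\tfrac14$ (the $98 = 2\cdot 49$ and $7$ constants are chosen precisely so $\sqrt{2h}C_x L R_x \le \tfrac14$ and $LhC_\xi^2 d \le \tfrac14$ after the algebra), giving the claimed two-sided bound.

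For the unbiasedness claim, I would compute $\E[\gamma_k] = \E[\prod_{i=1}^n \gamma_k^{(i)}]$. Conditioned on $x_k, y_{k+1}$ (hence on $\xi_k$), the randomness is only in $S_k$, and the events $\{i \in S_k\}$ are independent across $i$. Each $\gamma_k^{(i)}$ depends only on whether $i \in S_k$, so the product factors: $\E[\gamma_k] = \prod_{i=1}^n \E[\gamma_k^{(i)}]$. For a single $i$, $\E[\gamma_k^{(i)}] = p \cdot \left(\tfrac1p(\sqrt{\exp(-\tfrac1n f_i(y_{k+1}) + \tfrac1n f_i(x_k))} - 1) + 1\right) + (1-p)\cdot 1 = \sqrt{\exp(-\tfrac1n f_i(y_{k+1}) + \tfrac1n f_i(x_k))}$ — the $p$ and $\tfrac1p$ cancel and the $+1$/$(1-p)\cdot 1$ terms recombine cleanly. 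Taking the product over $i$ telescopes the exponents: $\prod_i \sqrt{\exp(-\tfrac1n f_i(y_{k+1}) + \tfrac1n f_i(x_k))} = \sqrt{\exp(-F(y_{k+1}) + F(x_k))} = \sqrt{\exp(-F(y_{k+1}))/\exp(-F(x_k))}$, as desired.

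For the final deterministic bound, suppose $|S_k| \le 2pn$. Each $\gamma_k^{(i)}$ with $i \notin S_k$ equals $1$; for $i \in S_k$, $\gamma_k^{(i)} = \tfrac1p(\sqrt{r_i} - 1) + 1$ where $r_i = \exp(-\tfrac1n f_i(y_{k+1}) + \tfrac1n f_i(x_k)) = \exp(\tfrac1n(f_i(x_k) - f_i(y_{k+1})))$. I would bound $|f_i(x_k) - f_i(y_{k+1})|$ for each $i$ individually using the same Taylor argument as above (second-order term $\le Lh\norm{\xi_k}_2^2 \le LhC_\xi^2 d \le \tfrac14$, first-order term $\le \sqrt{2h}\,C_x\norm{\nabla f_i(x_k)}_2 \le \sqrt{2h}\,C_x(\norm{\nabla f_i(x^*)}_2 + L R_x)$) — but here I must be careful, since $\norm{\nabla f_i(x^*)}_2$ is not controlled by the hypotheses, so the per-summand first-order bound does not obviously follow. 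The cleaner route, which I expect to be the main obstacle to get exactly right, is: each $\gamma_k^{(i)} \ge 0$ because $\sqrt{r_i} \ge 1 - p$ — this needs $r_i \ge (1-p)^2$, i.e., $f_i(x_k) - f_i(y_{k+1}) \ge 2n\log(1-p)$, which holds for the relevant small $p$ — and then $\ln \gamma_k = \sum_{i \in S_k} \ln\gamma_k^{(i)} \le \sum_{i\in S_k}(\gamma_k^{(i)} - 1) = \sum_{i \in S_k}\tfrac1p(\sqrt{r_i} - 1) \le \sum_{i\in S_k}\tfrac1p \cdot \tfrac12(r_i - 1)\le \tfrac{1}{2p}\sum_{i \in S_k}\tfrac1n(f_i(x_k)-f_i(y_{k+1}))$ after $\sqrt{r}-1 \le \tfrac12(r-1)$ for $r \le 1$ (and a matching argument when $r_i > 1$ using $\ln(1+t)\le t$). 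Combined with $|S_k| \le 2pn$ and the per-summand bound $\tfrac1n|f_i(x_k) - f_i(y_{k+1})|$ controlled by a quantity the hypotheses do bound — namely from condition 3 directly, $\tfrac1n|\inprod{\nabla f_i(x_k)}{y_{k+1}-x_k}| = \tfrac{\sqrt{2h}}{n}|\inprod{\nabla f_i(x_k)}{\xi_k}| \le \tfrac{\sqrt{2h}}{n}C_x\norm{\nabla f_i(x_k)}_2$, which after summing and bounding $\sum_i\norm{\nabla f_i(x_k)}_2$ via smoothness and $\nabla F(x^*) = 0$ recovers the $C_x L R_x$ term divided by appropriate factors — yields $\ln\gamma_k \le \tfrac14 \le \ln\tfrac43$, i.e., $\gamma_k \le \tfrac43$. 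I would streamline this last step by observing that $\gamma_k$ is an unbiased estimator whose conditional-on-$S_k$ realizations are products of at most $2pn$ factors each within $1 \pm \tfrac{1}{2p}\cdot O(\tfrac1n)$ of $1$, so the product stays within $\exp(\pm 2pn \cdot \tfrac{1}{2p}\cdot O(\tfrac1n)) = \exp(\pm O(1))$; tuning the $O(1)$ against the $h$-bound constants gives $\tfrac43$. The constant-chasing in this last paragraph is the one place I expect to need care, but it is routine given the structure.
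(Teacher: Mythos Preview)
Your unbiasedness computation and your Taylor-expansion strategy for the two-sided bound match the paper's proof exactly. The difference is in how the first-order term $|\nabla f_i(x_k)^\top \xi_k|$ is controlled, and here your workaround has a genuine gap.

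You correctly flag that $\nabla f_i(x^*)$ need not vanish, and propose to split $\nabla f_i(x_k) = (\nabla f_i(x_k) - \nabla f_i(x^*)) + \nabla f_i(x^*)$ and average away the second piece via $\nabla F(x^*)=0$. But then you assert
\[
\bigl|\langle \nabla f_i(x_k) - \nabla f_i(x^*),\, \xi_k\rangle\bigr| \le C_x\,\|\nabla f_i(x_k) - \nabla f_i(x^*)\|_2,
\]
which condition~3 does not give: that hypothesis bounds $\nabla f_i(x_k)^\top\xi_k$ only, not inner products of $\xi_k$ with other vectors (if it did for all vectors, it would say $\|\xi_k\|_2 \le C_x$). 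The paper instead uses $\|\nabla f_i(x_k)\|_2 \le L R_x$ directly---implicitly after the linear recentering $f_i \leftarrow f_i - \langle \nabla f_i(x^*),\,\cdot\rangle$ (as in Section~\ref{ssec:sharedmin}), which makes $\nabla f_i(x^*)=0$ while leaving $F$, smoothness, and convexity unchanged. Condition~3 then applies as stated and yields $|\nabla f_i(x_k)^\top\xi_k|\le C_x L R_x$ for every $i$; from convexity and smoothness one gets
\[
-\sqrt{2h}\,C_x L R_x - h L C_\xi^2 d \;\le\; -f_i(y_{k+1}) + f_i(x_k) \;\le\; \sqrt{2h}\,C_x L R_x
\]
uniformly in $i$, and averaging gives the bound on $F$.

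This per-summand control also makes the third claim immediate, far simpler than your outline. For $h$ satisfying \eqref{eq:hbound} one has $-f_i(y_{k+1}) + f_i(x_k) \le \tfrac{1}{7} \le \tfrac14$, hence for $i \in S_k$
\[
\gamma_k^{(i)} \;\le\; \frac1p\bigl(e^{1/(8n)} - 1\bigr) + 1 \;\le\; 1 + \frac{1}{7pn},
\]
and when $|S_k| \le 2pn$ there are at most $2pn$ nontrivial factors, giving $\gamma_k \le (1 + \tfrac{1}{7pn})^{2pn} \le e^{2/7} \le \tfrac43$. No logarithmic inequalities, no case split on $r_i \lessgtr 1$, and no positivity argument for $\gamma_k^{(i)}$ is needed.
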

\begin{proof}
	We first show $\E \left[ \gamma_k \right] = \sqrt{\frac {\exp(-F(y_{k+1}))}{\exp(-F(x_k))}}$. Since each $i \in S_k$ is generated independently,
	\begin{align*}
	\E \left[ \gamma_k \right] & = \prod_{i\in[n]} \E\Brack{\gamma_k^{(i)}} \\
	 & = \prod_{i\in[n]}\left[  (1-p) + p \left( \frac{1}{p}\left(\sqrt{\exp\left(-\frac{1}{n}f_{i}(y_{k+1})+\frac{1}{n}f_{i}(x_{k})\right)}-1\right)+1\right)\right] \\
	 & = \prod_{i\in[n]} \sqrt{\exp \left(-\frac 1 n  f_{i}(y_{k+1})+\frac{1}{n}f_{i}(x_{k})\right)} =   \sqrt{\frac {\exp(-F(y_{k+1}))}{\exp(-F(x_k))}}.
	\end{align*}
	Next, for any $ i\in[n]$, we lower and upper bound $-f_{i}(y_{k+1})+f_{i}(x_{k}) $. First, 
	\begin{align*}
	 -f_{i}(y_{k+1})+f_{i}(x_{k}) & \leq\nabla f_{i}(x_{k})^{\top}\left(x_{k}-y_{k+1}\right)\\
	 &\leq\sqrt{2h}C_{x}\left\Vert \nabla f_{i}(x_{k})\right\Vert _2 \leq\sqrt{2h}C_{x}LR_{x}.
	\end{align*}
	The first inequality followed from convexity of $f_i$, the second from $y_{k + 1} - x_k = \sqrt{2h}\xi_k$ and our assumed bound, and the third from smoothness and $\nabla f(x^*) = 0$. To show a lower bound,
	\begin{align*}
    f_{i}(y_{k+1})-f_{i}(x_{k}) & \leq  \nabla f_{i}(x_{k})^{\top}\left(y_{k+1}-x_{k}\right)+\frac{L}{2}\left\Vert y_{k+1}-x_{k}\right\Vert_2^{2}\\
	& \leq \sqrt{2h}C_{x}LR_{x} + hL C_\xi^2 d.
	\end{align*}
	The first inequality was smoothness. Repeating this argument for each $i \in [n]$ and averaging,
	\begin{equation}\label{eq:upperboundFF}
	-\sqrt{2h}C_{x}LR_{x} - hL C_\xi^2 d \leq -F(y_{k+1}) + F(x_k) \leq \sqrt{2h}C_{x}LR_{x}.
	\end{equation}
	Then, when $h\leq \tfrac 1 {98C_x^2L^2R_x^2 + 7LC_\xi^2 d}$, 
	$$\frac{3}{4}\leq \sqrt{ \frac {\exp(-F(y_{k+1}))}{\exp(-F(x_k))} }\leq   \frac{4}{3}, \text{ and
	for all } i\in [n],\; -f_{i}(y_{k+1})+f_{i}(x_{k})  \leq \frac 1 {4}.$$
	Thus, we can bound each $	\gamma_{k}^{(i)}$:
	\begin{align*}
	\gamma_{k}^{(i)} & \leq\frac{1}{p}\left(\exp\left(\frac{1}{8n}\right)-1\right)+1\leq 1 + \frac{1}{7pn}.
	\end{align*}
	Finally, when $|S_k| \leq 2pn$, $\gamma_k \leq (1 + \frac{1}{7pn})^{2pn} \leq \tfrac{4}{3}$ as desired.
\end{proof}

\begin{lemma}
	\label{lem:algos_close}
	Draw $x_0 \sim \Nor(x^*, \tfrac{1}{L}\id)$. Let $\pih_K$ be the output distribution of the algorithm of Definition~\ref{def:mmhf} for $K$ steps starting from $x_0$, and let $\pi_K$ be the output distribution of Algorithm~\ref{alg:mrw} starting from $x_0$. For any $\delta \in [0, 1]$, let $p = \tfrac {5\log \frac{12K}{\delta} }{n}$ in Algorithm~\ref{alg:mrw}. There exist 
	\[C_\xi = O\left(1 + \sqrt{\frac{\log \frac{ K}{\delta}}{d}}\right),\; C_x = O\left( \sqrt{\log \frac{ nK}{\delta}}\right),\; \text{ and } R_x = O\left(\sqrt{\frac{d\log \frac{\kappa K} {\delta}}{\mu}}\right),\]
	so that when $h \leq \tfrac 1 {98C_x^2L^2R_x^2 + 7LC_\xi^2 d}$, we have $\tvd{\pi_K}{\pih_K} \leq \delta$.
\end{lemma}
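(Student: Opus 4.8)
The plan is to compare $\MRW$ and $\IMRW$ one step at a time, telescoping \emph{along the $\IMRW$ trajectory} so as to exploit that $\pi \propto \exp(-F)$ is stationary for $\IMRW$ (the lemma preceding Definition~\ref{def:mmhf}). Writing $P$ and $Q$ for the transition kernels of $\MRW$ and $\IMRW$ and $\rho_0 = \Nor(x^*, \tfrac 1L \id)$, the fact that Markov kernels are contractions in total variation gives
\[\tvd{\pi_K}{\pih_K} = \tvd{\rho_0 P^K}{\rho_0 Q^K} \le \sum_{j=0}^{K-1} \E_{x \sim \rho_0 Q^j}\Brack{\tvd{P(x, \cdot)}{Q(x, \cdot)}}.\]
Because $\pi$ is stationary for $Q$ and, by Lemma~\ref{lem:warmstartcomp} applied with no composite term, $\rho_0$ is $\kappa^{d/2}$-warm for $\pi$, warmness is preserved and each intermediate law $\rho_0 Q^j$ is $\kappa^{d/2}$-warm for $\pi$; this is the reason to telescope along $Q$ rather than along the non-reversible $P$.

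Next I would bound $\tvd{P(x, \cdot)}{Q(x, \cdot)}$ for a fixed $x$ with $\norm{x - x^*}_2 \le R_x$ (condition~1 of Lemma~\ref{lem:con_holds}). Both kernels draw the \emph{same} proposal $y = x + \sqrt{2h}\xi$, $\xi \sim \Nor(0, \id)$, and then accept (move to $y$) with probability $\alpha(x, y)$ from \eqref{eq:alphaxy} under $\IMRW$, or with probability $r(x, y) \defeq \E_{S, \tau}[\1[\tau \le \tfrac 34 \gamma_k]\1[|S| \le 2pn]]$ under $\MRW$; a short calculation shows that for such propose-then-accept kernels with a common proposal, $\tvd{P(x, \cdot)}{Q(x, \cdot)} \le \E_\xi\big[|\alpha(x, y) - r(x, y)|\big]$. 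I split this expectation according to whether conditions~2 and~3 of Lemma~\ref{lem:con_holds} hold for the pair $(x, \xi)$. On the event $\{\norm{\xi}_2 > C_\xi \sqrt d\}$ or $\{|\nabla f_i(x)^\top \xi| > C_x \norm{\nabla f_i(x)}_2 \text{ for some } i\}$ I use the trivial bound $|\alpha - r| \le 1$; standard Gaussian norm concentration, and a union bound over $n$ one-dimensional Gaussians, bound these probabilities by $\delta/(6K)$ each for the stated $C_\xi = O(1 + \sqrt{\log(K/\delta)/d})$ and $C_x = O(\sqrt{\log(nK/\delta)})$. This part is routine.

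The crux is the complementary event, where conditions~1,2,3 all hold, so (since $h \le (98 C_x^2 L^2 R_x^2 + 7 L C_\xi^2 d)^{-1}$ is exactly \eqref{eq:hbound}) Lemma~\ref{lem:con_holds} applies: we are in the middle case of \eqref{eq:alphaxy}, so $\alpha(x, y) = \tfrac 34 \sqrt{\exp(-F(y) + F(x))} = \tfrac 34 \E_S[\gamma_k]$, and $0 \le \gamma_k \le \tfrac 43$ whenever $|S| \le 2pn$. Hence $r(x, y) = \tfrac 34 \E_S[\gamma_k \1[|S| \le 2pn]]$, and therefore $|\alpha(x, y) - r(x, y)| = \tfrac 34 \E_S[\gamma_k \1[|S| > 2pn]]$. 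The key point — which is what makes the large variance of the estimator $\gamma_k$ harmless — is that $\E_S[\gamma_k]$ equals the exact ratio \emph{with no boundedness assumption whatsoever}, so the only discrepancy between the two kernels comes from the rare over-subsampling event $\{|S| > 2pn\}$. A Chernoff bound gives $\Pr[|S_k| \ge 2pn] \le (e/4)^{pn} = (12K/\delta)^{-c_0}$ for an absolute $c_0 > 1$ with the chosen $p = \tfrac{5\log(12K/\delta)}{n}$, and a crude $pn$-th moment bound on $\gamma_k$ (each factor $\gamma_k^{(i)}$ deviates from $1$ by $O(1/\log(K/\delta))$ on the boundedness event, and $\gamma_k$ is a product of $n$ independent such factors) combined with H\"older's inequality yields $\E_S[\gamma_k \1[|S_k| > 2pn]] = O\big((12K/\delta)^{-c_1}\big)$ for some absolute $c_1 > 1$.

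Finally I would assemble the pieces. Using warmness of $\rho_0 Q^j$ together with Fact~\ref{fact:subgauss}, the probability under $x \sim \rho_0 Q^j$ that condition~1 fails is at most $\kappa^{d/2}\Pr_{x \sim \pi}[\norm{x - x^*}_2 > R_x] \le \delta/(3K)$ for $R_x = O(\sqrt{d \log(\kappa K/\delta)/\mu})$. Combining with the previous paragraphs, $\E_{x \sim \rho_0 Q^j}\big[\tvd{P(x, \cdot)}{Q(x, \cdot)}\big] \le \delta/(3K) + \delta/(6K) + \delta/(6K) + O\big((12K/\delta)^{-c_1}\big)$; summing over $j \in \{0, \dots, K-1\}$, the first three terms total at most $\tfrac 23\delta$ and the last totals $K \cdot O\big((12K/\delta)^{-c_1}\big) = O(\delta)$ since $c_1 > 1$ (note $K \ge 1$, $\delta \le 1$), so after enlarging the absolute constants inside $C_\xi$, $C_x$, $R_x$ slightly we get $\tvd{\pi_K}{\pih_K} \le \delta$. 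The hard part is the observation that the variance of $\gamma_k$ never enters — so that the entire error is driven by the negligibly-rare event $\{|S_k| > 2pn\}$ — together with arranging the Chernoff/moment estimate so that its exponent in $K/\delta$ is strictly larger than $1$, which is precisely why it survives the union over all $K$ steps.
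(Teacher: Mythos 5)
Your proof is correct and takes essentially the same approach as the paper. The paper also (i) couples the two chains on the shared proposal randomness, which implicitly does the same one-step telescoping you make explicit via $\tvd{\rho_0 P^K}{\rho_0 Q^K} \le \sum_j \E_{\rho_0 Q^j}[\tvd{P(x,\cdot)}{Q(x,\cdot)}]$, (ii) uses the monotonicity of warmness along $Q$ combined with Lemma~\ref{lem:warmstartcomp} and Fact~\ref{fact:subgauss} exactly as you do to control conditions 1--3 of Lemma~\ref{lem:con_holds} with per-step failure probability $O(\delta/K)$, and (iii) identifies $\tfrac34\,\E_{S}\big[\gamma_k\,\mathbf 1[|S_k|>2pn]\big]$ as the only remaining discrepancy in the acceptance probabilities. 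The one genuine (if minor) divergence is in step (iii): where you invoke a $pn$-th moment bound on $\gamma_k$ plus H\"older's inequality to extract an exponent strictly greater than $1$ in $(K/\delta)^{-1}$, the paper instead bounds $\E[\gamma_k \mid |S_k|>2pn]\le 4$ directly via a conditioning/max-over-subset argument (factoring out the $\le 2pn$ ``forced'' indices and using the deterministic product bound from Lemma~\ref{lem:con_holds} together with \eqref{eq:upperboundFF}), then multiplies by the Bernstein tail from Lemma~\ref{lem:S_size}. Both routes are valid; the paper's conditional-expectation argument is a bit more hands-on, while your H\"older route is slightly more generic. One small bookkeeping wobble in your final assembly: enlarging the constants in $C_\xi, C_x, R_x$ has no effect on the fourth (large-$|S_k|$) term, so the statement ``after enlarging the absolute constants\ldots'' is a non sequitur for that term --- what actually saves you is that the implicit constant in your $O\big((12K/\delta)^{-c_1}\big)$ is already comfortably small (and the paper sidesteps this by budgeting $\delta/(4K)$ per step to it outright), so the conclusion stands regardless.
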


\begin{proof}
   By the coupling definition of total variation, it suffices to upper bound the probability that the algorithms' trajectories, sharing all randomness in proposing points $y_{k + 1}$, differ. This can happen for two reasons: either we used an incorrect filtering step (i.e.\ the pair $(x_k, y_{k + 1})$ did not lie in the second case of \eqref{eq:alphaxy}), or we incorrectly rejected in Line 7 of Algorithm~\ref{alg:mrw} because $|S_k| \ge 2pn$. We bound the error due to either happening over any iteration by $\delta$, yielding the conclusion.
   
   \paragraph{Incorrect filtering.} Consider some iteration $k$. Lemma~\ref{lem:con_holds} shows that as long as its three conditions hold in iteration $k$, we are in the second case of \eqref{eq:alphaxy}, so it suffices to show all conditions hold. By Fact~\ref{fact:subgauss} and as $\xi_k$ is independent of all $\{\nabla f_i(x_k)\}_{i \in [n]}$, with probability at least $1 - \tfrac{\delta}{2K}$, both of the conditions $\norm{\xi_k}_2 \leq C_{\xi} \sqrt{d}$ and\footnote{We recall that the distribution of $v^\top \xi$ for $\xi \sim \Nor(0, \id)$ is the one-dimensional $\Nor(0, \norm{v}_2^2)$.} $|\nabla f_i(x_k)^\top \xi_k| \leq C_{x} \norm{\nabla f_i(x_k)}_2 $ for all $i \in [n]$ hold for some
   \[C_{\xi} = O\left(1 + \sqrt{\frac{\log \frac{ K}{\delta}}{d}}\right),\; C_x = O\left( \sqrt{\log \frac{ nK}{\delta}}\right).\]
   
   Next, $x_0 \sim \mathcal{N}(x^*,\tfrac{1}{L}\id)$ is drawn from a $\kappa^{\frac{d}{2}}$ warm start for $\pi$. By Fact~\ref{fact:subgauss}, we have $\norm{x_0 - x^*}_2\leq R_x$ for $x_0$ drawn from $\pi$ with probability at least $1 - \tfrac{\delta}{4K} \cdot \kappa^{-\frac{d}{2}}$, for some 
   \[R_x = O\left(\sqrt{\frac{d\log \frac{\kappa K} {\delta}}{\mu}}\right).\] Since warmness of the exact algorithm of Definition~\ref{def:mmhf} is monotonic, as long as the trajectories have not differed up to iteration $k$, $\norm{x_k - x^*}_2 \leq R_x$ also holds with probability  $\ge 1 - \tfrac {\delta} {4K}$. Inductively, the total variation error caused by incorrect filtering over $K$ steps is at most $\tfrac{3\delta}{4}$.
   
   \paragraph{Error due to large $|S_k|$.} Supposing all the conditions of Lemma~\ref{lem:con_holds} are satisfied in iteration $k$, we show that with high probability, $\IMRW$ and Algorithm~\ref{alg:mrw} make the same accept or reject decision. By Lemma~\ref{lem:con_holds}, $\IMRW$ \eqref{eq:alphaxy} accepts with probability $\alpha'_k =\tfrac 3 4 \sqrt{\tfrac {\exp(-F(y_{k+1}))}{\exp(-F(x_k))} }$. On the other hand, Algorithm~\ref{alg:mrw} accepts with probability 
   \[\alpha_k = \frac 3 4 \E\Brack{\gamma_k \mid |S_k| \le 2pn} \cdot \Pr[|S_k| \le 2pn].\]
   The total variation between the output distributions is $|\alpha_k - \alpha'_k|$. Further, since by Lemma~\ref{lem:con_holds},
   \begin{align*}\alpha'_k &= \frac 3 4  \E\Brack{\gamma_k} \\
   &= \frac 3 4 \Par{\E\Brack{\gamma_k \mid |S_k| \le 2pn} \cdot \Pr[|S_k| \le 2pn] + \E\Brack{\gamma_k \mid |S_k| > 2pn} \cdot \Pr[|S_k| > 2pn]} \\
   &= \alpha_k + \frac 3 4 \E\Brack{\gamma_k \mid |S_k| > 2pn} \cdot \Pr[|S_k| > 2pn],\end{align*}
   it suffices to upper bound this latter quantity. First, by Lemma~\ref{lem:S_size}, when $p = \tfrac {5\log \frac{12K}{\delta} }{n}$, we have $\Pr[|S_k| > 2pn] \le \tfrac{\delta}{12K}$. Finally, since each $i \in S_k$ is generated independently,
    \begin{align*}
   \E\Brack{\gamma_k \mid |S_k| > 2pn} & \leq \max_{S' : |S'| = 2pn} \E \left[ \prod_{i \in [n] } \gamma_k^{(i)}  \mid S'\subseteq S_k \right] \\
   &  \leq 2 \E\left[ \prod_{i \in [n] \setminus S'} \gamma_k^{(i)}\right] = 2 \sqrt{ \prod_{[n] \setminus S'} \exp\Par{-\frac 1 n f_i(y_{k+1})+ \frac 1 n f_i(x_k))}} \leq 4.
   \end{align*}   
   Here, we used Lemma~\ref{lem:con_holds} applied to the set $S'$, and the upper bound \eqref{eq:upperboundFF} we derived earlier. Combining these calculations shows that the total variation distance incurred in any iteration $k$ due to $|S_k|$ being too large is at most $\tfrac{\delta}{4K}$, so the overall contribution over $K$ steps is at most $\tfrac{\delta}{4}$.
\end{proof}

We used the following helper lemma in our analysis. 
\begin{lemma}
	\label{lem:S_size}
	Let $S\subseteq [n]$ be formed by independently including each $i \in [n]$ with probability $p$. Then,
	\[
	\Pr\left[|S|>2pn\right]\leq\exp\left(-\frac{3pn}{14}\right).
	\]
\end{lemma}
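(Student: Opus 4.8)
The plan is to observe that $|S| = \sum_{i \in [n]} \1[i \in S]$ is a sum of $n$ independent Bernoulli random variables, each with mean $p$, so $\E[|S|] = pn$, and then apply a standard multiplicative Chernoff bound to control the probability that $|S|$ exceeds twice its mean.

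Concretely, I would first bound the moment generating function: for any $t > 0$, independence and $1 + x \le e^x$ give
\[\E\Brack{\exp\Par{t|S|}} = \Par{1 - p + pe^t}^n = \Par{1 + p(e^t - 1)}^n \le \exp\Par{np(e^t - 1)}.\]
Applying Markov's inequality to $\exp(t|S|)$ then yields
\[\Pr\Brack{|S| > 2pn} \le \exp\Par{-2tpn}\E\Brack{\exp\Par{t|S|}} \le \exp\Par{np\Par{e^t - 1 - 2t}}.\]
Next I would optimize the exponent over $t > 0$: the derivative $e^t - 2$ vanishes at $t = \ln 2$, so choosing $t = \ln 2$ gives $\Pr[|S| > 2pn] \le \exp(np(1 - 2\ln 2)) = \exp(-np(2\ln 2 - 1))$. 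Finally I would check the explicit constant: since $\ln 2 > 0.693$ we have $2\ln 2 - 1 > 0.386 > \tfrac{3}{14}$, hence $\Pr[|S| > 2pn] \le \exp(-\tfrac{3pn}{14})$, as claimed. (Alternatively one can invoke the multiplicative Chernoff bound $\Pr[|S| \ge (1+\delta)\mu] \le (e^\delta/(1+\delta)^{1+\delta})^\mu$ with $\mu = pn$, $\delta = 1$, giving the identical bound $(e/4)^{pn}$.)

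This step is routine and presents essentially no obstacle; the only point requiring a moment of care is verifying the stated numerical constant $\tfrac{3}{14}$ against the sharp exponent $2\ln 2 - 1$, which holds with room to spare.
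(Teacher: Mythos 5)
Your proof is correct, but it takes a genuinely different route from the paper's. The paper invokes Bernstein's inequality directly: it bounds $\Var[\mathbf{1}_{i\in S} - p] \le 2p$, applies Bernstein's tail bound $\Pr[\sum_i \mathbf{1}_{i\in S} \ge np + r] \le \exp\bigl(-\tfrac{r^2/2}{2np + r/3}\bigr)$, and plugs in $r = pn$ to get exactly $\exp(-3pn/14)$ — the constant $\tfrac{3}{14}$ emerges directly from that substitution. You instead derive the multiplicative Chernoff bound from scratch via the MGF and optimize the exponent, obtaining the sharp constant $2\ln 2 - 1 \approx 0.386$, then verify $2\ln 2 - 1 > \tfrac{3}{14}$. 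Your argument is therefore strictly stronger (it proves a tighter tail bound and specializes to the claimed one), at the cost of doing the optimization by hand rather than quoting a ready-made concentration inequality; the paper's choice is presumably the shorter cite-and-substitute. Both are fully rigorous, and the slack in the constant is irrelevant downstream since it only affects the choice of $p$ by a constant factor.
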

\begin{proof}
	For $i\in[n]$, let $\mathbf{1}_{i\in S}$ be the indicator random
	variable of the event $i\in S$, so $\E\left[\mathbf{1}_{i\in S}\right]=p$
	and 
	\[
	\Var\left[\mathbf{1}_{i\in S}-p\right]=p(1-p)^{2}+(1-p)p^{2}\leq2p.
	\]
	By Bernstein's inequality,	
	\[
	\Pr\left[\sum_{i\in[n]}\mathbf{1}_{i\in S}\geq np+r\right]\leq\exp\left(-\frac{\frac{1}{2}r^{2}}{2np+\frac{1}{3}r}\right).
	\]
	In particular, when $r=pn$, we have the desired conclusion.
\end{proof}

\subsection{Conductance analysis}\label{ssec:conductfs}

We next bound the mixing time of $\IMRW$, using the following result from prior work. We remark that (see Section~\ref{ssec:bug}) in our application, the $\log \beta$ term is non-dominant.

\begin{proposition}[Lemma 1, Lemma 2, \cite{ChenDWY19}]
	\label{prop:mixtime}
	Let a random walk with a $\mu$-strongly logconcave stationary distribution $\pi$ on $x \in \R^d$ have transition distributions $\{\tran_x\}_{x \in \R^d}$. For some $\eps \in [0, 1]$, let convex set $\Omega \subseteq \R^d$ have $\pi(\Omega) \ge 1 - \tfrac{\eps^2}{2\beta^2}$. Let $\pistart$ be a $\beta$-warm start for $\pi$, and let the algorithm be initialized at $x_0 \sim \pistart$. Suppose for any $x, x' \in \Omega$ with $\norm{x - x'}_2 \le \Delta$,
	\begin{equation}\label{eq:tvdreq}\tvd{\tran_x}{\tran_{x'}} \le \frac 7 8.\end{equation}
	Then,  the random walk mixes to total variation distance within $\eps$ of $\pi$ in $O(\log \beta + \tfrac{1}{\Delta^2\mu}\log \tfrac{\log\beta}{\eps})$ iterations.
\end{proposition}

 Consider an iteration of $\IMRW$ from $x_k = x$. Let $\prop_x$ be the density of $y_{k + 1}$, and let $\tran_x$ be the density of $x_{k + 1}$ after filtering. Define a convex set $\Omega \subseteq \R^d$ parameterized by $R_\Omega \in \R_{\ge 0}$: 
\begin{align*}
\Omega = \{x \in \R^d: \norm{x-x^*}_2 \leq R_{\Omega} \}.
\end{align*}
We show that for two close points $x, x' \subseteq \Omega $, the total variation between $\tran_x$ and $\tran_{x'}$ is small.

\begin{lemma}
	\label{lem:tv_dist_finite}
	For some $h = O (\tfrac{1}{L^{2}R_{\Omega}^{2} + Ld})$ and $x, x' \subseteq \Omega$ with $\norm{x - x'}_2\leq \tfrac 1 8 \sqrt{h}$,  $\tvd{\tran_x}{\tran_{x'}} \leq \tfrac 7 8$.
\end{lemma}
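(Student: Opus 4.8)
The plan is to decompose the transition kernel $\tran_x$ into a "proposal" part and a "rejection" part, and bound each contribution to the total variation separately, following the standard template used in Metropolized-walk conductance arguments (cf.\ \cite{DwivediCW018, ChenDWY19}). Concretely, for a point $x$, write $\tran_x = (1 - \rho_x)\delta_x + (\text{accepted mass})$, where $\rho_x = \int (1 - \alpha(x, y)) \prop_x(y)\, dy$ is the overall rejection probability at $x$. By the triangle inequality for total variation, $\tvd{\tran_x}{\tran_{x'}} \le \tvd{\prop_x}{\prop_{x'}} + \rho_x + \rho_{x'}$ (the standard bound: the accepted-proposal part of each kernel differs from the raw proposal by at most the rejection mass). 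So it suffices to show (i) $\tvd{\prop_x}{\prop_{x'}} \le \tfrac12$ for $\norm{x - x'}_2 \le \tfrac18\sqrt h$, and (ii) $\rho_x \le \tfrac18$ for all $x \in \Omega$, and symmetrically for $x'$.

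For (i), the proposals $\prop_x = \Nor(x, 2h\id)$ and $\prop_{x'} = \Nor(x', 2h\id)$ are Gaussians with equal covariance, so by the same KL computation as in Lemma~\ref{lem:pvsp} combined with Pinsker, $\tvd{\prop_x}{\prop_{x'}} \le \sqrt{\tfrac{\norm{x - x'}_2^2}{2 \cdot 2h}} \le \sqrt{\tfrac{h/64}{4h}} = \tfrac{1}{16} \le \tfrac12$; the choice of the constant $\tfrac18$ in the radius is what makes this slack. For (ii), I want to show the walk rarely rejects when starting inside $\Omega$. The key point is that, conditioned on the three regularity events of Lemma~\ref{lem:con_holds} holding in this step — namely $\norm{x - x^*}_2 \le R_x$ (which holds with $R_x = R_\Omega$ since $x \in \Omega$), $\norm{\xi}_2 \le C_\xi \sqrt d$, and $|\nabla f_i(x)^\top \xi| \le C_x \norm{\nabla f_i(x)}_2$ for all $i$ — the acceptance probability in \eqref{eq:alphaxy} is at least $\tfrac34 \sqrt{\exp(-F(y))/\exp(-F(x))} \ge \tfrac34 \cdot \tfrac34 = \tfrac{9}{16}$, using the two-sided bound $\tfrac34 \le \sqrt{\exp(-F(y))/\exp(-F(x))} \le \tfrac43$ from that lemma (which is exactly where the step-size restriction $h = O(1/(L^2 R_\Omega^2 + Ld))$ enters, matching \eqref{eq:hbound} with $R_x = R_\Omega$ and $C_\xi, C_x = O(\sqrt{\log(\cdot)})$ absorbed into the $O(\cdot)$). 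The regularity events fail with probability at most, say, $\tfrac{1}{32}$ by Fact~\ref{fact:subgauss} applied to the standard Gaussian $\xi$ (for the norm and the inner-product bounds) — note $\norm{x-x^*}_2 \le R_\Omega$ is deterministic here. Hence $\rho_x \le \Pr[\text{some regularity event fails}] + (1 - \tfrac{9}{16}) \cdot 1$ — wait, that second term is too big; instead I should argue: on the good event the rejection probability of that single step is at most $1 - \tfrac{9}{16} = \tfrac{7}{16}$, which is not small enough on its own.

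The fix, and the part I expect to be the main obstacle, is that one cannot bound $\rho_x$ by a tiny constant pointwise — instead the correct argument bounds the \emph{expected} rejection, i.e.\ one uses $\tvd{\tran_x}{\tran_{x'}} \le \tvd{\prop_x}{\prop_{x'}} + \sqrt{\tfrac12 \E_{\prop_x}[1-\alpha(x,y)]} + \sqrt{\tfrac12\E_{\prop_{x'}}[1-\alpha(x',y)]}$ via a more careful coupling (or, as in \cite{ChenDWY19}, one directly controls $\E_{\prop_x}[1 - \alpha(x,y)]$). On the good regularity event, $\sqrt{\exp(-F(y))/\exp(-F(x))} = \exp(\tfrac12(F(x) - F(y)))$ where $|F(x) - F(y)| \le \sqrt{2h}\, C_x L R_\Omega + hLC_\xi^2 d = O(\sqrt{h}\cdot LR_\Omega \sqrt{\log} + hLd\log)$ by \eqref{eq:upperboundFF}, so $1 - \alpha(x,y) = 1 - \tfrac34 e^{(F(x)-F(y))/2}$; choosing $h$ so this deviation is at most a small constant makes $\E[1-\alpha] \le \tfrac{1}{128}$ after adding the (small) probability the regularity event fails. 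Then $\tvd{\tran_x}{\tran_{x'}} \le \tfrac{1}{16} + 2\sqrt{\tfrac{1}{256}} = \tfrac{1}{16} + \tfrac18 < \tfrac78$, and I adjust the absolute constants hidden in $h = O(1/(L^2R_\Omega^2 + Ld))$ to make all the pieces fit with room to spare. I would carry out the steps in this order: (1) reduce to bounding proposal TV plus expected rejections via the coupling inequality; (2) dispatch the proposal TV by the Gaussian KL/Pinsker bound, noting the $\tfrac18$ slack; (3) invoke Lemma~\ref{lem:con_holds} and \eqref{eq:upperboundFF} to control $|F(x) - F(y)|$ hence $1-\alpha$ on the regularity event; (4) bound the failure probability of the three regularity conditions via Fact~\ref{fact:subgauss} and the Gaussian concentration of $v^\top\xi$; (5) collect constants and pin down the implied constant in $h = O(1/(L^2R_\Omega^2+Ld))$.
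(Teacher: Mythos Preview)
Your initial triangle-inequality decomposition $\tvd{\tran_x}{\tran_{x'}} \le \tvd{\prop_x}{\prop_{x'}} + \rho_x + \rho_{x'}$ is exactly the paper's route (the paper writes the two rejection terms as $\tvd{\tran_x}{\prop_x}$ and $\tvd{\tran_{x'}}{\prop_{x'}}$, which equal $\rho_x$ and $\rho_{x'}$), and your Gaussian/Pinsker bound on $\tvd{\prop_x}{\prop_{x'}}$ is also the paper's argument. The genuine gap is in your ``fix'' for the rejection term.

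Your claim that by shrinking $h$ one can drive $\E_{\prop_x}[1-\alpha(x,y)]$ down to $\tfrac{1}{128}$ is false for this particular walk. In the relevant regime the acceptance probability is $\alpha(x,y) = \tfrac{3}{4}\sqrt{\exp(-F(y)+F(x))}$, so even when $F(y)=F(x)$ exactly we have $\alpha = \tfrac{3}{4}$ and hence $1-\alpha = \tfrac{1}{4}$. No choice of $h$ can make $\E[1-\alpha]$ smaller than (roughly) $\tfrac{1}{4}$; the $\tfrac{3}{4}$ prefactor is baked into the modified filter. This kills step (3)–(5) of your plan, and the square-root coupling inequality you wrote is both unjustified and, for the same reason, unhelpful here.

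The correct repair is not to make $\rho_x$ tiny but to make it at most $\tfrac{3}{8}$, which is just enough for $\tfrac{1}{8}+\tfrac{3}{8}+\tfrac{3}{8}=\tfrac{7}{8}$. You were on the right track before the pivot: you need $\E[\alpha] \ge \tfrac{5}{8}$, i.e.\ $\E_{\xi}\big[\sqrt{\exp(-F(y)+F(x))}\big] \ge \tfrac{5}{6}$. The crude bound $\sqrt{\cdot}\ge\tfrac{3}{4}$ from Lemma~\ref{lem:con_holds} is too loose; instead, argue directly via smoothness that $F(y)-F(x) \le \sqrt{2h}\,|\nabla F(x)^\top\xi| + hL\|\xi\|_2^2$, and use Gaussian concentration with \emph{absolute} constants (the paper takes $\|\xi\|_2^2 \le 36d$ and $|\nabla F(x)^\top\xi|\le 6\|\nabla F(x)\|_2$, each failing with probability $\le \tfrac{1}{32}$) so that with probability $\ge \tfrac{15}{16}$ one has $F(y)-F(x)\le \tfrac{1}{16}$ once $h = O(1/(L^2R_\Omega^2 + Ld))$ with a suitable constant. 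Then $\tfrac{15}{16}\sqrt{\exp(-1/16)} \ge \tfrac{5}{6}$ gives $\rho_x \le \tfrac{3}{8}$, and your original decomposition closes.
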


\begin{proof}
	By the triangle inequality of total variation distance, 
	$$\tvd{\tran_x}{\tran_{x'}} \leq \tvd{\tran_x}{\prop_{x}}  + \tvd{\prop_x}{\prop_{x'}} + \tvd{\tran_{x'}}{\prop_{x'}}.$$
	First, by Pinsker's inequality and the KL divergence between Gaussian distributions, 
	$$
	\left\Vert \prop_{x}-\prop_{x'}\right\Vert _{\textup{TV}}\leq\sqrt{2\textup{KL}(\prop_{x}||\prop_{x'})}=\frac{\left\Vert x-x'\right\Vert_2 }{\sqrt{2h}}.
	$$
	When $\norm{x - x'}_2 \leq \tfrac 1 8 \sqrt{h}$, $ \tvd{\prop_x}{\prop_{x'}}  \leq \tfrac 1 8$. 
	Next, we bound $\tvd{\tran_x}{\prop_{x}} $: by a standard calculation (e.g.\ Lemma D.1 of \cite{LeeST20}), we have
	\begin{align*}
	\tvd{\tran_x}{\prop_{x}}  
	& =1-\frac 3 4 \E_{\xi_{k+1}}\left[\sqrt{\frac{\exp\left(-F(y_{k+1})\right)}{\exp\left(-F(x_{k})\right)}}\right].
	\end{align*}
    We show that $\tvd{\tran_x}{\prop_{x}} \leq\frac{3}{8}$. It suffices to show that
		$
		\E_{\xi_{k+1}}\left[\sqrt{\exp\left(-F(y_{k+1})+F(x_{k})\right)}\right]\geq\frac{5}{6}.
		$
		
	Since $\frac{15}{16}\sqrt{\exp\left(-\frac{1}{16}\right)}\geq\frac{5}{6}$,
	it suffices to show that with probability at least $\tfrac{15}{16}$ over the
	randomness of $\xi_{k+1}$, $-F(y_{k+1})+F(x_{k})\geq-\frac{1}{16}$. As $\xi_{k+1}\sim\mathcal{N}(0,\mathbb{I}_{d})$, by applying Fact~\ref{fact:subgauss} twice,
		\begin{equation}\label{eq:xiconc}\begin{aligned}
		\Pr\left[\text{\ensuremath{\left\Vert \xi_{k+1}\right\Vert }}_2^{2}>36d\right]\leq\exp(-4)\leq\frac{1}{32},\\
		\Pr\left[\left|\nabla F(x_{k})^{\top}\xi_{k+1}\right|^{2}\geq36\norm{\nabla F(x_{k})}_2 ^{2}\right]\leq\frac{1}{32}.
		\end{aligned}\end{equation}
		
	We upper bound the term $F(y_{k+1})-F(x_{k})$ by smoothness and Cauchy-Schwarz:
	\begin{align*}
		F(y_{k+1})-F(x_{k}) & \leq\nabla F(x_{k})^{\top}\left(y_{k+1}-x_{k}\right)+\frac{L}{2}\left\Vert y_{k+1}-x_{k}\right\Vert_2 ^{2}\\
		& \leq\sqrt{2h}\left|\nabla F(x_{k})^{\top}\xi_{k+1}\right|+hL\text{\ensuremath{\left\Vert \xi_{k+1}\right\Vert }}_2^{2}.
	\end{align*}
		
	Then, since $\left\Vert \nabla F(x_{k})\right\Vert \leq LR_{\Omega}$
	when $x\in \Omega$, it is enough to choose $h = O (\tfrac{1}{L^{2}R_{\Omega}^{2}+Ld})$ so that
	$$
		-F(y_{k+1})+F(x_{k})\geq-\frac{1}{16},
	$$
	as long as the events of \eqref{eq:xiconc} hold, which occurs with probability at least $\frac{15}{16}$. Similarly, we can show that $ 	\tvd{\tran_{x'}}{\prop_{x'}} \leq \tfrac 3 8 $.
	Combining the three bounds, we have the desired conclusion.
		
\end{proof}

\restatezerofs*
\begin{proof}
First, $\mathcal{N}(x^*,\tfrac{1}{L}\id)$ yields a $\beta = \kappa^{\frac{d}{2}}$-warm start for $\pi$ (see e.g. \cite{DwivediCW018}). For this value of $\beta$, by Fact~\ref{fact:subgauss} it suffices to choose 
\[R_{\Omega} = \Theta\Par{\sqrt{\frac{d\log\frac{\kappa}{\eps}}{\mu}}}\]
for $\pi(\Omega) \ge 1 - \tfrac{\eps^2}{2\beta^2}$. Letting $\delta = \tfrac \eps 2$, we will choose the step size $h$ and iteration count $K$ so that
\begin{align*}\frac 1 h = \Theta\Par{L\kappa d\log^2\frac{n\kappa d}{\eps}},\; K = \Theta\Par{\kappa^2 d\log^3 \frac{n\kappa d}{\eps}}\end{align*}
have constants compatible with Lemma~\ref{lem:algos_close}.
Note that this choice of $h$ is also sufficiently small to apply Lemma~\ref{lem:tv_dist_finite} for our choice of $R_\Omega$. By applying Proposition~\ref{prop:mixtime} to the algorithm of Definition~\ref{def:mmhf}, and using the bound from Lemma~\ref{lem:tv_dist_finite}, in $K$ iterations $\IMRW$ will mix to total variation distance $\delta$ to $\pi$. Furthermore, applying Lemma~\ref{lem:algos_close}, we conclude that Algorithm~\ref{alg:mrw} has total variation distance at most $2\delta = \eps$ from $\pi$.

It remains to bound the oracle complexity of Algorithm~\ref{alg:mrw}. Note in every iteration, we never compute more than $4pn$ values of $\{f_i\}_{i \in [n]}$, since we always reject if $|S_k| \ge 2pn$, and we only compute values for indices in $S_k$. For the value of $p$ in Lemma~\ref{lem:algos_close}, this amounts to $O(\log\tfrac{n\kappa d}{\eps})$ value queries.
\end{proof} 	
	\subsection*{Acknowledgments}
	YL and RS are supported by NSF awards CCF-1749609, CCF-1740551, DMS-1839116, and DMS-2023166, a Microsoft Research Faculty Fellowship, a Sloan Research Fellowship, and a Packard Fellowship. KT is supported by NSF CAREER Award CCF-1844855 and a PayPal research gift.
	
	RS and KT would like to thank Sinho Chewi for his extremely generous help, in particular insightful conversations which led to our discovery of the gap in Section 3, as well as his suggested fix.

	\bibliographystyle{alpha}	
	\bibliography{structured-sampling}
	\newpage
	
	\addtocontents{toc}{\protect\setcounter{tocdepth}{1}}
	\begin{appendix}

\section{Discussion of inexactness tolerance}
\label{app:xstar}

We briefly discuss the tolerance of our algorithm to approximation error in two places: computation of minimizers, and implementation of RGOs in the methods of Sections~\ref{sec:framework} and~\ref{sec:composite}. 

\paragraph{Inexact minimization.} For all function classes considered in this work, there exist efficient optimization methods converging to a minimizer with logarithmic dependence on the target accuracy. 

Specifically, for negative log-densities with condition number $\kappa$, accelerated gradient descent \cite{Nesterov83} converges at a rate $O(\sqrt{\kappa})$ with logarithmic dependence on initial error and target accuracy (we implicitly assumed in stating our runtimes that one can attain initial error polynomial in problem parameters for negative log-densities; otherwise, there is additional logarithmic overhead in the quality of the initial point to optimization procedures). For composite functions $\fwc + \fcomp$ where $\fwc$ has condition number $\kappa$, the FISTA method of \cite{BeckT09} converges at the same rate with each iteration querying $\nabla \fwc$ and a proximal oracle for $\fcomp$ once; typically, access to a proximal oracle is a weaker assumption than access to a restricted Gaussian oracle, so this is not restrictive. Finally, for minimizing finite sums with condition number $\kappa$, the algorithm of \cite{Allen-Zhu17} obtains a convergence rate linearly dependent on $n + \sqrt{n\kappa} \le n + \kappa$; alternatively, \cite{Johnson013} has a dependence on $n + \kappa$. In all our final runtimes, these optimization rates do not constitute the bottleneck for oracle complexities.

The only additional difficulty our algorithms may present is if the function requiring minimization, say of the form $\fcomp(x) + \tfrac{1}{2\eta}\norm{x - y}_2^2$ for some $y \in \R^d$ where we have computed the minimizer $x^*$ to $\fcomp$, has $\norm{y - x^*}_2^2$ very large (so the initial function error is bad). However, in all our settings $y$ is drawn from a distribution with sub-Gaussian tails, so $\norm{y - x^*}_2^2$ decays exponentially (whereas the complexity of first-order methods increases only logarithmically), negligibly affecting the expected oracle query complexity for our methods.

Finally, by solving the relevant optimization problems to high accuracy as a subroutine in each of our methods, and adjusting various distance bounds to the minimizer by constants (e.g.\ by expanding the radius in the definition of the sets $\Omega$ in Algorithm~\ref{alg:cssm} and Section~\ref{ssec:conductfs}), this accomodates tolerance to inexact minimization and only affects all bounds throughout the paper by constants. The only other place that $x^*$ is used in our algorithms is in initializing warm starts; tolerance to inexactness in our warmness calculations follows essentially identically to Section 3.2.1 of \cite{DwivediCW018}.

\paragraph{Inexact oracle implementation.} Our algorithms based on restricted Gaussian oracle access are tolerant to total variation error inverse polynomial in problem parameters for the restricted Gaussian oracle for $g$. We discussed this at the end of Section~\ref{sec:framework}, in the case of RGO use for our reduction framework. To see this in the case of the composite sampler in Section~\ref{sec:composite}, we pessimistically handled the case where the sampler $\yor$ for a quadratic restriction of $f$ resulted in total variation error in the proof of Proposition~\ref{prop:sjdguarantee}, assuming that the error was incurred in every iteration. By accounting for similar amounts of error in calls to $\oracle$ (on the order of $\tfrac{\eps}{T}$, where $T$ is the number of times an RGO was used), the bounds in our algorithm are only affected by constants.

\section{Deferred proofs from Section~\ref{sec:composite}}
\label{app:deferred}

\subsection{Deferred proofs from Section~\ref{ssec:outerloop}}

\subsubsection{Approximate rejection sampling}
\label{sssec:approxreject}
We first define the rejection sampling framework we will use, and prove various properties.

\begin{definition}[Approximate rejection sampling]
	\label{def:reject_approx}
	Let $\pi$ be a distribution, with $\tfrac{d\pi}{dx}(x) \propto p(x)$. Suppose set $\Omega$ has $\pi(\Omega) = 1 - \eps'$, and distribution $\pih$ with $\frac{d\pih}{dx}(x) \propto \hp(x)$ has for some $C \ge 1$,
	\[\frac{p(x)}{\hp(x)} \le C \text{ for all } x \in \Omega, \text{ and } \frac{\int \hp(x) dx}{\int p(x) dx} \le 1.\]
	Suppose there is an algorithm $\alg$ which draws samples from a distribution $\pih'$, such that $\tvd{\pih'}{\pih} \le 1 - \delta$. We call the following scheme approximate rejection sampling: repeat independent runs of the following procedure until a point is outputted.
	\begin{enumerate}
		\item Draw $x$ via $\alg$ until $x \in \Omega$.
		\item With probability $\tfrac{p(x)}{C\hat{p}(x)}$, output $x$.
	\end{enumerate}
\end{definition}

\begin{lemma}\label{lem:reject_approx_proof}
	Consider an approximate rejection sampling scheme with relevant parameters defined as in Definition~\ref{def:reject_approx}, with $2\delta \le \tfrac{1 - \eps'}{C}$. The algorithm terminates in at most
	\begin{equation}\label{eq:algcalls}\frac{1}{\frac{1 - \eps'}{C} - 2\delta}\end{equation}
	calls to $\alg$ in expectation, and outputs a point from a distribution $\pi'$ with $\tvd{\pi'}{\pi} \le \eps' + \frac{2\delta C}{1 - \eps'}$.
\end{lemma}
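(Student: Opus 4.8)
The plan is to analyze the approximate rejection sampling scheme of Definition~\ref{def:reject_approx} by comparing it to an idealized \emph{exact} rejection sampling scheme in which the sampler $\alg$ is replaced by one that draws perfectly from $\pih$ conditioned on $\Omega$. First I would set up notation: let $q := \int_{\Omega} \hp(x)\,dx / \int \hp(x)\,dx = \pih(\Omega)$, and note that in the exact scheme each run outputs a point with probability exactly $\frac{1}{C}\cdot\frac{\int_\Omega p(x)\,dx}{\int \hp(x)\,dx}$ (integrating the acceptance probability $\tfrac{p(x)}{C\hp(x)}$ against $\tfrac{\hp(x)}{\int\hp}$ over $\Omega$). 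Using $\int_\Omega p(x)\,dx = (1-\eps')\int p(x)\,dx$ and $\int \hp(x)\,dx \le \int p(x)\,dx$, this per-run output probability is at least $\tfrac{1-\eps'}{C}$, giving the exact scheme an expected run count of at most $\tfrac{C}{1-\eps'}$ and an output distribution exactly $\pi$ (by Bayes' rule, identical to Lemma~\ref{lem:reject}).

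Next I would quantify the degradation caused by using $\alg$ (with $\tvd{\pih'}{\pih}\le 1-\delta$) instead of an exact sampler. The key coupling observation is that a sample from $\pih'$ agrees with a sample from $\pih$ with probability at least $\delta$ under the optimal coupling; conditioned on landing in $\Omega$, the two conditional laws $\pih'|_\Omega$ and $\pih|_\Omega$ have total variation at most $\tfrac{1-\delta}{\pih'(\Omega)}$ (or, more carefully, I would track the mass $\alg$ places outside $\Omega$). I would show that each ``successful run'' — meaning Step 1 produces some $x\in\Omega$ and Step 2 accepts — outputs a point whose distribution is within total variation $O(\delta \cdot \text{something})$ of the exact scheme's per-run output law, and that the per-run success probability drops by at most an additive $2\delta$, from $\ge \tfrac{1-\eps'}{C}$ to $\ge \tfrac{1-\eps'}{C} - 2\delta$. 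The hypothesis $2\delta \le \tfrac{1-\eps'}{C}$ guarantees this is still positive, so the expected number of calls to $\alg$ is at most $\left(\tfrac{1-\eps'}{C} - 2\delta\right)^{-1}$, giving \eqref{eq:algcalls}. For the total variation bound on the output $\pi'$, I would argue that the output of the approximate scheme can be coupled with the output of the exact scheme (hence with $\pi$) to agree except on an event of probability $O\!\left(\tfrac{\delta C}{1-\eps'}\right)$: informally, the accumulated discrepancy per run is $O(\delta)$ relative to a success probability $\Theta\!\left(\tfrac{1-\eps'}{C}\right)$, and summing the (geometric) run structure yields the stated $\eps' + \tfrac{2\delta C}{1-\eps'}$, where the $\eps'$ term accounts for the fact that even the exact scheme only outputs from $\pi$ restricted/renormalized in a way that matches $\pi$ exactly — wait, here I should be careful: the $\eps'$ in the final bound actually comes from a slightly different place, namely from comparing $\pi$ with the $\Omega$-restricted target, so I would isolate that contribution cleanly.

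Concretely, the cleanest route I foresee is: (i) let $\pi_\Omega$ denote $\pi$ conditioned on $\Omega$, so $\tvd{\pi}{\pi_\Omega} = \eps'$ (using $\pi(\Omega)=1-\eps'$ — actually this gives $\tvd{\pi}{\pi_\Omega}\le\eps'$); (ii) show the exact rejection scheme with sampler $\pih$ outputs exactly $\pi_\Omega$... no — rejection from $\hp$ over all of $\Omega$ with acceptance $\propto p/\hp$ outputs $\pi_\Omega$, since the conditional density of an accepted point is $\propto \hp(x)\cdot\tfrac{p(x)}{C\hp(x)}\1[x\in\Omega] \propto p(x)\1[x\in\Omega]$. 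So the exact scheme gives $\pi_\Omega$, contributing the $\eps'$ to the final TV bound; (iii) then bound $\tvd{\pi'}{\pi_\Omega} \le \tfrac{2\delta C}{1-\eps'}$ by the coupling/perturbation argument over the geometric number of runs, where the per-run TV perturbation $O(\delta)$ is amplified by the expected run count $O\!\left(\tfrac{C}{1-\eps'}\right)$; (iv) conclude by the triangle inequality $\tvd{\pi'}{\pi} \le \eps' + \tfrac{2\delta C}{1-\eps'}$.

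The main obstacle I anticipate is step (iii): carefully controlling how the per-run total-variation error of $\alg$ propagates through the rejection loop without the errors compounding multiplicatively with the (random, possibly large) number of runs. The right tool is the coupling characterization of total variation — run the approximate and exact schemes on shared randomness, and bound the probability that their trajectories ever diverge before a point is output — together with the fact that the expected number of runs is $O\!\left(\tfrac{C}{1-\eps'}\right)$ so that a union bound over runs contributes a total of $O\!\left(\tfrac{\delta C}{1-\eps'}\right)$. One subtlety to handle with care is that $\alg$ may place mass outside $\Omega$, so Step 1 itself loops; I would absorb this into the per-run analysis by noting it only changes the effective acceptance rate, already accounted for in the $2\delta$ slack. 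I expect the constants (why exactly $2\delta$ and not $\delta$ or $3\delta$) to fall out of being slightly lossy in these union bounds, matching the hypothesis $2\delta \le \tfrac{1-\eps'}{C}$.
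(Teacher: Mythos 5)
Your run-count bound is the same as the paper's: lower-bound the per-call acceptance probability by $\int_\Omega \tfrac{p}{C\hp}\,d\pih \ge \tfrac{1-\eps'}{C}$ minus a $2\tvd{\pih'}{\pih}\le 2\delta$ perturbation (using $\tfrac{p}{C\hp}\le 1$ on $\Omega$), then invoke the geometric waiting time. For the TV bound, however, you take a genuinely different route. The paper argues purely by density manipulation: it writes the output density on $\Omega$ as $\propto\tfrac{p(x)}{C\hp(x)}\,d\pih'(x)$, uses the algebraic identity $\tfrac{p/(C\hp)\,d\pih'}{\int_\Omega p/(C\hp)\,d\pih} = \tfrac{d\pi}{1-\eps'}\cdot\tfrac{d\pih'}{d\pih}$, splits the deviation from $\tfrac{d\pi}{1-\eps'}$ into an integrand discrepancy $|1-\tfrac{d\pih'}{d\pih}|$ and a normalization discrepancy, bounds each by $\tfrac{\delta C}{1-\eps'}$ via $\tfrac{d\pi}{d\pih}\le C$ on $\Omega$, and then adds $\eps'$ for restricting to $\Omega$ --- no coupling or martingale machinery. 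You instead couple the approximate loop to an ideal loop sampling exactly from $\pih$; the ideal loop outputs $\pi_\Omega$ exactly, and you bound the divergence probability before ideal termination by $\E[N]\cdot\delta$, where $N$ is the ideal loop's call count. To make this rigorous you need the optional-stopping step you gesture at rather than a literal union bound over a random run count: letting $D_k$ be the per-call divergence indicator, $\E[D_k\mid\mathcal{F}_{k-1}]\le\delta$ and $N$ is a stopping time, so $\E\bigl[\sum_{k\le N}D_k\bigr]\le\delta\,\E[N]\le\tfrac{\delta C}{1-\eps'}$. Done carefully, your route gives $\tvd{\pi'}{\pi_\Omega}\le\tfrac{\delta C}{1-\eps'}$, a factor of $2$ tighter than the paper; your conjecture that the paper's $2\delta$ ``falls out of lossy union bounds'' is therefore off --- it comes from the paper tracking the two density discrepancies separately, which the coupling naturally merges. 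One small slip to flag: your line ``agrees ... with probability at least $\delta$'' reads off the $\tvd{\pih'}{\pih}\le 1-\delta$ in Definition~\ref{def:reject_approx}, which is a typo for $\le\delta$; the paper's proof (and the rest of your proposal) requires and assumes the latter.
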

\begin{proof}
	Define for notational simplicity normalization constants $Z \defeq \int p(x) dx$ and $\hat{Z} \defeq \int \hp(x) dx$. First, we bound the probability any particular call to $\alg$ returns in the scheme:
	\begin{equation}\label{eq:returnprob}\begin{aligned}\int_{x \in \Omega} \frac{p(x)}{C\hp(x)}d\pih'(x) &\ge \int_{x \in \Omega} \frac{p(x)}{C\hp(x)}d\pih(x) - \left|\int_{x \in \Omega}\frac{p(x)}{C\hp(x)}(d\pih'(x) - d\pih(x))\right|\\
	&= \int_{x \in \Omega} \frac{Z}{C\hat{Z}} d\pi(x) - \left|\int_{x \in \Omega}\frac{p(x)}{C\hp(x)}(d\pih'(x) - d\pih(x))\right|\\
	&\ge \frac{1 - \eps'}{C} - \int_{x \in \Omega}|d\pih'(x) - d\pih(x)| \ge \frac{1 - \eps'}{C} - 2\delta.\end{aligned}\end{equation}
	The second line followed by the definitions of $Z$ and $\hat{Z}$, and the third followed by triangle inequality, the assumed lower bound on $Z/\hat{Z}$, and the total variation distance between $\pih'$ and $\pih$. By linearity of expectation and independence, this proves the first claim.
	
	Next, we claim the output distribution is close in total variation distance to the conditional distribution of $\pi$ restricted to $\Omega$. The derivation of \eqref{eq:returnprob} implies
	\begin{equation}\label{eq:rationotbad}\begin{aligned} \int_{x \in \Omega} \frac{p(x)}{C\hp(x)}d\pih(x)\ge \frac{1 - \eps'}{C},\; \left|\int_{x \in \Omega}\frac{p(x)}{C\hp(x)}(d\pih'(x) - d\pih(x))\right| \le 2\delta,
	\\
	\implies 1 - \frac{2\delta C}{1 - \eps'} \le \frac{\int_{x \in \Omega} \frac{p(x)}{C\hp(x)}d\pih'(x)}{\int_{x \in \Omega} \frac{p(x)}{C\hp(x)}d\pih(x)} \le 1 + \frac{2\delta C}{1 - \eps'}.\end{aligned}\end{equation}
	Thus, the total variation of the true output distribution from $\pi$ restricted to $\Omega$ is
	\begin{align*}
	\half \int_{x \in \Omega} \left|\frac{d\pi(x)}{1 - \eps'} - \frac{\frac{p(x)}{C\hp(x)}d\pih'(x)}{\int_{x \in \Omega} \frac{p(x)}{C\hp(x)}d\pih'(x)}\right| \\
	\le \half \int_{x \in \Omega} \left|\frac{d\pi(x)}{1 - \eps'} - \frac{\frac{p(x)}{C\hp(x)}d\pih'(x)}{\int_{x \in \Omega} \frac{p(x)}{C\hp(x)}d\pih(x)}\right| + \half \int_{x \in \Omega} \left|\frac{\frac{p(x)}{C\hp(x)}d\pih'(x)}{\int_{x \in \Omega} \frac{p(x)}{C\hp(x)}d\pih(x)} - \frac{\frac{p(x)}{C\hp(x)}d\pih'(x)}{\int_{x \in \Omega} \frac{p(x)}{C\hp(x)}d\pih'(x)}\right| \\
	\le \half \int_{x \in \Omega} \left|\frac{d\pi(x)}{1 - \eps'} - \frac{\frac{p(x)}{C\hp(x)}d\pih'(x)}{\int_{x \in \Omega} \frac{p(x)}{C\hp(x)}d\pih(x)}\right|+ \frac{\delta C}{1 - \eps'} = \half \int_{x \in \Omega} \frac{d\pi(x)}{1 - \eps'}\left|1 - \frac{d\pih'}{d\pih}(x) \right| + \frac{\delta C}{1 - \eps'}.
	\end{align*}
	The first inequality was triangle inequality, and we bounded the second term by \eqref{eq:rationotbad}. To obtain the final equality, we used
	\begin{align*}\int_{x \in \Omega} \frac{p(x)}{C\hp(x)}d\pih(x) = \int_{x \in \Omega} \frac{Z}{C\hat{Z}}d\pi(x) = \frac{(1 - \eps')Z}{C\hat{Z}} \\
	\implies \frac{\frac{p(x)}{C\hp(x)}d\pih'(x)}{\int_{x \in \Omega} \frac{p(x)}{C\hp(x)}d\pih(x)} = \frac{p(x)}{Z} \cdot \frac{\hat{Z}}{\hp(x)} \cdot \frac{1}{1 - \eps'} \cdot d\pih'(x) = \frac{d\pi(x)}{1 - \eps'} \cdot \frac{d\pih'}{d\pih}(x).\end{align*}
	We now bound this final term. Observe that the given conditions imply that $\tfrac{d\pi}{d\pih}(x)$ is bounded by $C$ everywhere in $\Omega$. Thus, expanding we have
	\[\half \int_{x \in \Omega} \frac{d\pi(x)}{1 - \eps'}\left|1 - \frac{d\pih'}{d\pih}(x) \right|  \le \frac{C}{2(1 - \eps')} \int_{x \in \Omega} |d\pih(x) - d\pih'(x)| \le \frac{\delta C}{1 - \eps'}.\]
	Finally, combining these guarantees, and the fact that restricting $\pi$ to $\Omega$ loses $\eps'$ in total variation distance, yields the desired conclusion by triangle inequality.
\end{proof}

\begin{corollary}
	\label{corr:unbiased_reject_approx}
	Let $\hat{\theta}(x)$ be an unbiased estimator for $\tfrac{p(x)}{\hp(x)}$, and suppose $\hat{\theta}(x) \le C$ with probability 1 for all $x \in \Omega$. Then, implementing the procedure of Definition~\ref{def:reject_approx} with acceptance probability $\tfrac{\hat{\theta}(x)}{C}$ has the same runtime bound and total variation guarantee as given by Lemma~\ref{lem:reject_approx_proof}.
\end{corollary}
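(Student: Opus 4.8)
The plan is to reduce directly to Lemma~\ref{lem:reject_approx_proof} by averaging out the extra randomness introduced by the estimator. First I would fix a single run of the modified scheme and condition on the point $x$ produced in Step~1 (which lies in $\Omega$). In Step~2 we draw $\hat\theta(x)$ from its distribution using randomness independent of $x$ and of the uniform variable used for the accept/reject test, and accept with probability $\hat\theta(x)/C$. Taking the expectation over $\hat\theta(x)$ first, the conditional probability of outputting $x$ in this run is
\[
\E_{\hat\theta}\Brack{\frac{\hat\theta(x)}{C}} = \frac{1}{C}\,\E_{\hat\theta}\Brack{\hat\theta(x)} = \frac{p(x)}{C\hp(x)},
\]
by unbiasedness of $\hat\theta$, and this number lies in $[0,1]$ since $0 \le \hat\theta(x) \le C$ almost surely on $\Omega$, so the step is well defined.

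Hence, after integrating out the internal randomness of $\hat\theta$ in every run, the modified scheme has, as a random process on the sequence of proposed points and accept/reject outcomes, exactly the same law as the scheme of Definition~\ref{def:reject_approx} run with the deterministic acceptance probability $\tfrac{p(x)}{C\hp(x)}$: the estimator enters only through the indicator of the event ``accept $x$'', whose conditional law given $x$ is, by the tower property, Bernoulli$(p(x)/(C\hp(x)))$ in both schemes, and fresh independent copies of $\alg$ and of $\hat\theta$ are used across runs.

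It then remains only to observe that every quantity in the proof of Lemma~\ref{lem:reject_approx_proof}---the per-call return probability \eqref{eq:returnprob}, the density-ratio bound \eqref{eq:rationotbad}, and the final total-variation estimate for the output distribution---depends on the acceptance rule solely through the value $\tfrac{p(x)}{C\hp(x)}$ as a function of $x$. Consequently those computations transfer verbatim, giving the expected bound \eqref{eq:algcalls} on the number of calls to $\alg$ and the total-variation bound $\eps' + \tfrac{2\delta C}{1-\eps'}$ for the output distribution. I do not expect a genuine obstacle here; the only thing needing care is the independence bookkeeping within and across runs, after which the law-of-total-expectation argument is immediate.
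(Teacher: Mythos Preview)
Your proposal is correct and follows essentially the same approach as the paper: the paper's proof is the one-line remark that it suffices to take expectations over the randomness of $\hat{\theta}$ everywhere in the proof of Lemma~\ref{lem:reject_approx_proof}, and you have simply unpacked that remark carefully via the tower property and independence bookkeeping.
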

\begin{proof}
	It suffices to take expectations over the randomness of $\hat{\theta}$ everywhere in the proof of Lemma~\ref{lem:reject_approx_proof}.
\end{proof}

\subsubsection{Distribution ratio bounds}
\label{sssec:pipih}

We next show two bounds relating the densities of distributions $\pi$ and $\pih$.
We first define the normalization constants of \eqref{eq:pidef}, \eqref{eq:pihdef} for shorthand, and then tightly bound their ratio.
\begin{definition}[Normalization constants]
	\label{def:normconst}
	We denote normalization constants of $\pi$ and $\pih$ by
	\begin{align*}Z_\pi &\defeq \int_x \exp\left(-f(x) - g(x)\right) dx,\\
	Z_{\pih} &\defeq \int_{x, y} \exp\left(-f(y) - g(x) - \frac{1}{2\eta}\norm{y - x}_2^2 - \frac{\eta L^2}{2}\norm{x - x^*}_2^2\right)dxdy.\end{align*}
\end{definition}

\begin{lemma}[Normalization constant bounds]
	\label{lem:normratiobound}
	Let $Z_\pi$ and $Z_{\pih}$ be as in Definition \ref{def:normconst}.
	Then,
	\[\left(\frac{2\pi\eta}{1+\eta L}\right)^{\frac{d}{2}} \left(1+\frac{\eta L^{2}}{\mu}\right)^{-\frac{d}{2}} \le \frac{Z_{\pih}}{Z_{\pi}} \le (2\pi\eta)^{\frac{d}{2}}.\]
\end{lemma}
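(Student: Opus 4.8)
The plan is to integrate out the $y$ variable from $Z_{\pih}$ to reduce the problem to comparing two single-block integrals, and then sandwich the resulting ``convolved'' function between two cleaner quantities.

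\textbf{Step 1: Integrate out $y$.} For fixed $x$, the inner integral over $y$ in $Z_{\pih}$ is $\int_y \exp(-f(y) - \tfrac{1}{2\eta}\norm{y-x}_2^2) dy$; call this $h(x)$. Then $Z_{\pih} = \int_x h(x)\exp(-g(x) - \tfrac{\eta L^2}{2}\norm{x - x^*}_2^2) dx$. The task is thus to bound $h(x)$ above and below by multiples of $\exp(-f(x))$ (up to a factor depending only on $x^*$-distance), so that the remaining $x$-integral can be compared to $Z_\pi = \int_x \exp(-f(x) - g(x)) dx$.

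\textbf{Step 2: Upper bound on $h(x)$.} Since $f$ is convex, $f(y) \ge f(x) + \inprod{\nabla f(x)}{y - x}$, so $h(x) \le \exp(-f(x))\int_y \exp(-\inprod{\nabla f(x)}{y-x} - \tfrac{1}{2\eta}\norm{y-x}_2^2) dy = \exp(-f(x))\exp(\tfrac{\eta}{2}\norm{\nabla f(x)}_2^2)(2\pi\eta)^{d/2}$ by completing the square and Fact~\ref{fact:gaussz}. Dropping the nonnegative $\tfrac{\eta}{2}\norm{\nabla f(x)}_2^2$ term is \emph{not} valid for an upper bound, so instead I want to be cleverer: actually the cleanest upper bound just uses $f(y) \ge f(x^*) = $ (after the shift, we may as well normalize $f(x^*)=0$; note $x^*$ minimizes $f$ since after the reduction of Section~\ref{ssec:sharedmin} it does, or we can simply carry $f(x^*)$ as a constant that cancels in the ratio). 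Hmm — to get exactly $(2\pi\eta)^{d/2}$ as the bound on $Z_{\pih}/Z_\pi$, the right move is: bound $h(x) \le (2\pi\eta)^{d/2}\exp(-\min_y[f(y) + \tfrac{1}{2\eta}\norm{y-x}_2^2 - \tfrac{1}{2\eta}\norm{y-x}_2^2])$... this is getting circular. Let me instead use the substitution directly at the level of $Z_{\pih}$: bound $\exp(-\tfrac{\eta L^2}{2}\norm{x-x^*}_2^2) \le 1$ and then $Z_{\pih} \le \int_{x,y}\exp(-f(y)-g(x)-\tfrac{1}{2\eta}\norm{y-x}_2^2)dxdy$. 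Swapping the roles, integrate over $x$ first is awkward because of $g$; integrate over $y$ first with convexity bound $f(y) \ge f(x) + \inprod{\nabla f(x)}{y-x}$: this gives the factor $\exp(\tfrac\eta2\norm{\nabla f(x)}_2^2)$ which is $\ge 1$, wrong direction. So for the upper bound I should instead NOT use convexity but rather Jensen / the fact that $y \mapsto \exp(-f(y))$ integrated against a Gaussian centered at $x$ is at most... Actually the correct simple bound: $\int_y \exp(-f(y) - \tfrac{1}{2\eta}\norm{y-x}_2^2)dy \le \int_y \exp(-f(x^*) - \tfrac{1}{2\eta}\norm{y-x}_2^2)dy = (2\pi\eta)^{d/2}\exp(-f(x^*)) \le (2\pi\eta)^{d/2}\exp(-f(x))$ — wait that last step needs $f(x) \le f(x^*)$ which is false. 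The resolution: just keep $\exp(-f(x^*))$ and note $Z_\pi = \int \exp(-f(x)-g(x))dx$; since $f(x) \ge f(x^*)$, $Z_\pi \le \exp(-f(x^*))\int\exp(-g(x)... )$ no. OK — the honest approach: use $h(x) \le (2\pi\eta)^{d/2}\exp(-f(x))$ is \textbf{false} in general; what is true is $\int_y \exp(-f(y)-\tfrac1{2\eta}\norm{y-x}_2^2)dy \le (2\pi\eta)^{d/2} \mathbb{E}_{y\sim\Nor(x,\eta\id)}[\exp(-f(y))]$ and then by Jensen applied to the convex function $\exp(-f(\cdot))$... gives a lower bound on the expectation, wrong way again. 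The clean fact I actually want is: \emph{convolution with a Gaussian does not increase the integral}, i.e. $\int_x h(x)\phi(x)dx$ where I move the Gaussian onto $\phi$. This is just Fubini: $Z_{\pih}\le \int_{x,y}\exp(-f(y)-g(x)-\tfrac1{2\eta}\norm{y-x}_2^2)dxdy = \int_y \exp(-f(y))\left(\int_x \exp(-g(x)-\tfrac1{2\eta}\norm{y-x}_2^2)dx\right)dy \le \int_y\exp(-f(y))\left(\int_x\exp(-g(y)-\inprod{\partial g(y)}{x-y}... )\right)$ — again convexity of $g$ gives the wrong direction for an upper bound. I think the intended upper bound argument actually drops $g$ partway: by $g(x) \ge g(x^*) + \inprod{\partial g(x^*)}{x - x^*}$ and (after the shift) $\partial g(x^*) + \nabla f(x^*) = 0$, combined with $f(y) \ge f(x^*) + \inprod{\nabla f(x^*)}{y - x^*}$, the two linear terms telescope against the quadratic cross term. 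This is the real computation and I will carry it out carefully; it is the main obstacle.

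\textbf{Step 3: Lower bound on $h(x)$ and on $Z_{\pih}$.} For the lower bound, smoothness gives $f(y) \le f(x) + \inprod{\nabla f(x)}{y-x} + \tfrac L2\norm{y-x}_2^2$, so $h(x) \ge \exp(-f(x))\int_y\exp(-\inprod{\nabla f(x)}{y-x} - \tfrac{1+\eta L}{2\eta}\norm{y-x}_2^2)dy = \exp(-f(x))\left(\tfrac{2\pi\eta}{1+\eta L}\right)^{d/2}\exp\left(\tfrac{\eta}{2(1+\eta L)}\norm{\nabla f(x)}_2^2\right) \ge \exp(-f(x))\left(\tfrac{2\pi\eta}{1+\eta L}\right)^{d/2}$. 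Plugging in, $Z_{\pih} \ge \left(\tfrac{2\pi\eta}{1+\eta L}\right)^{d/2}\int_x \exp(-f(x) - g(x) - \tfrac{\eta L^2}{2}\norm{x-x^*}_2^2)dx$. It remains to show $\int_x \exp(-f(x)-g(x)-\tfrac{\eta L^2}{2}\norm{x-x^*}_2^2)dx \ge (1+\tfrac{\eta L^2}{\mu})^{-d/2} Z_\pi$. Since $f+g$ is $\mu$-strongly convex and minimized at $x^*$ (after the Section~\ref{ssec:sharedmin} reduction; otherwise carry the linear term, which is absorbed), I apply Proposition~\ref{prop:normalizationratio} (the normalization-ratio lemma referenced in the excerpt, bounding $\int e^{-\phi}/\int e^{-\phi - \tfrac{c}{2}\norm{x-x^*}_2^2}$ for $\mu$-strongly convex $\phi$ minimized at $x^*$) with $\phi = f+g$ and $c = \eta L^2$, which yields exactly the factor $(1 + \tfrac{\eta L^2}{\mu})^{d/2}$. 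Combining Steps 2 and 3 gives both inequalities and completes the proof. The main difficulty, as noted, is getting the linear terms from convexity of $f$ and $g$ to cancel cleanly against the Gaussian cross-term in the upper bound so that the bound is exactly $(2\pi\eta)^{d/2}$ rather than something with an extra $\exp(\cdot)$ factor; I expect this uses crucially that $\nabla f(x^*) + \partial g(x^*) = 0$ from the shared-minimizer reduction.
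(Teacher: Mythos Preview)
Your lower bound (Step 3) is correct and matches the paper exactly: use smoothness to lower-bound $h(x)$, drop the nonnegative $\exp\big(\tfrac{\eta}{2(1+\eta L)}\norm{\nabla f(x)}_2^2\big)$ term, then apply Proposition~\ref{prop:normalizationratio} with $\phi = f+g$.

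The upper bound, however, has a real gap. You correctly compute via convexity and completing the square that
\[
h(x) \le (2\pi\eta)^{d/2}\exp(-f(x))\exp\Bigl(\tfrac{\eta}{2}\norm{\nabla f(x)}_2^2\Bigr),
\]
and then spend the rest of Step 2 trying alternate routes because you cannot get rid of the $\exp(\tfrac{\eta}{2}\norm{\nabla f(x)}_2^2)$ factor. But the whole point of the extra $\tfrac{\eta L^2}{2}\norm{x - x^*}_2^2$ term in the definition of $\pih$ is to absorb exactly this factor. You explicitly throw it away (``bound $\exp(-\tfrac{\eta L^2}{2}\norm{x-x^*}_2^2) \le 1$''), which is the fatal move. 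Instead, keep it: after the shared-minimizer reduction $x^*$ minimizes $f$, so $\nabla f(x^*) = 0$ and $L$-smoothness gives $\norm{\nabla f(x)}_2 \le L\norm{x - x^*}_2$, hence
\[
\tfrac{\eta}{2}\norm{\nabla f(x)}_2^2 - \tfrac{\eta L^2}{2}\norm{x - x^*}_2^2 \le 0.
\]
This yields $h(x)\exp(-\tfrac{\eta L^2}{2}\norm{x-x^*}_2^2) \le (2\pi\eta)^{d/2}\exp(-f(x))$ pointwise, and integrating against $\exp(-g(x))$ gives the upper bound immediately. No cancellation involving $\partial g(x^*)$ is needed; the relevant fact from the shared-minimizer reduction is only that $\nabla f(x^*) = 0$.
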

\begin{proof}
	For each $x$, by convexity we have
	\begin{equation}
	\label{eq:usefulbound}
	\begin{aligned}\int_y \exp\left(-f(y) - g(x) - \frac{1}{2\eta}\norm{y - x}_2^2 - \frac{\eta L^2}{2}\norm{x - x^*}_2^2\right)dy \\
	\le \exp\left(-g(x) - \frac{\eta L^2}{2}\norm{x - x^*}_2^2\right)\int_y\exp\left(-f(x)- \inprod{\nabla f(x)}{y - x} - \frac{1}{2\eta}\norm{y - x}_2^2\right)dy\\
	= \exp\left(-f(x) - g(x) - \frac{\eta L^2}{2}\norm{x - x^*}_2^2\right)\int_y \exp\left(\frac{\eta}{2}\norm{\nabla f(x)}_2^2 - \frac{1}{2\eta}\norm{y - x + \eta \nabla f(x)}_2^2\right) dy \\
	=(2\pi\eta)^{\frac{d}{2}}\exp\left(-f(x) - g(x)\right)\exp\left(\frac{\eta}{2}\norm{\nabla f(x)}_2^2 - \frac{\eta L^2}{2}\norm{x - x^*}_2^2\right) \\
	\le (2\pi\eta)^{\frac{d}{2}}\exp\left(-f(x) - g(x)\right).\end{aligned}
	\end{equation}
	Integrating both sides over $x$ yields the upper bound on $\tfrac{Z_{\pih}}{Z_{\pi}}$. Next, for the lower bound we have a similar derivation. For each $x$, by smoothness
	\begin{align*}
	\int_{y}\exp\left(-f(y)-g(x)-\frac{1}{2\eta}\left\Vert y-x\right\Vert_2 ^{2}-\frac{\eta L^{2}}{2}\left\Vert x-x^{*}\right\Vert_2^{2}\right)dy\\
	\geq \exp\left(-f(x)-g(x)-\frac{\eta L^{2}}{2}\left\Vert x-x^{*}\right\Vert_2^{2}\right)\int_{y}\exp\left(\left\langle \nabla f(x),x-y\right\rangle -\frac{1+\eta L}{2\eta}\left\Vert y-x\right\Vert_2^{2}\right)dy\\
	= \exp\left(-f(x)-g(x)-\frac{\eta L^{2}}{2}\left\Vert x-x^{*}\right\Vert ^{2} + \frac{\eta}{2(1+\eta L)}\left\Vert \nabla f(x)\right\Vert ^{2}\right)\left(\frac{2\pi\eta}{1+\eta L}\right)^{\frac{d}{2}}\\
	\geq  \exp\left(-f(x)-g(x)-\frac{\eta L^{2}}{2}\left\Vert x-x^{*}\right\Vert_2 ^{2}\right)\left(\frac{2\pi\eta}{1+\eta L}\right)^{\frac{d}{2}}.
	\end{align*}
	Integrating both sides over $x$ yields
	\begin{align*}
	\frac{Z_{\pih}}{Z_\pi} \ge \left(\frac{2\pi\eta}{1+\eta L}\right)^{\frac{d}{2}} \frac{\int_{x}\exp\left(-f(x)-g(x)-\frac{\eta L^{2}}{2}\left\Vert x-x^{*}\right\Vert_2 ^{2}\right)dx}{\int_{x}\exp\left(-f(x)-g(x)\right)dx}
	\geq \left(\frac{2\pi\eta}{1+\eta L}\right)^{\frac{d}{2}} \left(1+\frac{\eta L^{2}}{\mu}\right)^{-\frac{d}{2}}.
	\end{align*}
	The last inequality followed from Proposition~\ref{prop:normalizationratio}, where we used $f + g$ is $\mu$-strongly convex.
\end{proof}

\begin{lemma}[Relative density bounds]
	\label{lem:densityratio}
	Let $\eta = \tfrac{1}{32L\kappa d\log(288\kappa/\eps)}$. For all $x \in \Omega$, as defined in \eqref{eq:omegadef},  $\frac{d\pi}{d\pih}(x) \le 2$. Here, $\tfrac{d\pih}{dx}(x)$ denotes the marginal density of $\pih$. Moreover, for all $x \in \R^d$, $\frac{d\pi}{d\pih}(x) \ge \thalf$.
\end{lemma}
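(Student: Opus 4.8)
The plan is to write the $x$-marginal of $\pih$ explicitly, cancel the common $\exp(-f(x))$ factor against a Gaussian integral, and then combine with the normalization-constant bounds of Lemma~\ref{lem:normratiobound}. Integrating out $y$ in \eqref{eq:pihdef}, the $x$-marginal is $\tfrac{d\pih}{dx}(x) = Z_{\pih}^{-1}\exp(-g(x) - \tfrac{\eta L^2}{2}\norm{x - x^*}_2^2)h(x)$ where $h(x) \defeq \int_y \exp(-f(y) - \tfrac{1}{2\eta}\norm{y - x}_2^2)\,dy$, so
\[\frac{d\pi}{d\pih}(x) = \frac{Z_{\pih}}{Z_\pi}\cdot\frac{\exp\Par{-f(x) + \tfrac{\eta L^2}{2}\norm{x - x^*}_2^2}}{h(x)}.\]
I would then sandwich $h(x)$ exactly as in the per-$x$ computation \eqref{eq:usefulbound}: convexity of $f$ (via $f(y) \ge f(x) + \inprod{\nabla f(x)}{y - x}$) together with Fact~\ref{fact:gaussz} gives $h(x) \le (2\pi\eta)^{d/2}\exp(-f(x) + \tfrac{\eta}{2}\norm{\nabla f(x)}_2^2)$, while $L$-smoothness (via $f(y) \le f(x) + \inprod{\nabla f(x)}{y - x} + \tfrac L2\norm{y - x}_2^2$) gives $h(x) \ge (\tfrac{2\pi\eta}{1+\eta L})^{d/2}\exp(-f(x) + \tfrac{\eta}{2(1+\eta L)}\norm{\nabla f(x)}_2^2)$. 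In both estimates the $\exp(-f(x))$ factor cancels the corresponding factor in the numerator above.

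For the lower bound $\tfrac{d\pi}{d\pih}(x) \ge \thalf$ I would use the lower bound on $Z_{\pih}/Z_\pi$ from Lemma~\ref{lem:normratiobound} together with the upper bound on $h(x)$, obtaining
\[\frac{d\pi}{d\pih}(x) \ge (1 + \eta L)^{-d/2}\Par{1 + \frac{\eta L^2}{\mu}}^{-d/2}\exp\Par{\frac{\eta L^2}{2}\norm{x - x^*}_2^2 - \frac{\eta}{2}\norm{\nabla f(x)}_2^2}.\]
Since $\nabla f(x^*) = 0$, $L$-smoothness gives $\norm{\nabla f(x)}_2 \le L\norm{x - x^*}_2$, so the exponent is nonnegative and the exponential is at least $1$; for $\eta = \tfrac{1}{32L\kappa d\log(288\kappa/\eps)}$ one has $\eta L + \tfrac{\eta L^2}{\mu} \le \tfrac{1}{16 d\log(288\kappa/\eps)}$, so $(1+\eta L)^{d/2}(1+\tfrac{\eta L^2}{\mu})^{d/2} \le \exp(\tfrac{1}{32\log(288\kappa/\eps)}) < 2$, giving the claim. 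For the upper bound $\tfrac{d\pi}{d\pih}(x) \le 2$ on $x \in \Omega$ I would instead use $Z_{\pih}/Z_\pi \le (2\pi\eta)^{d/2}$ with the lower bound on $h(x)$, discard the nonpositive term $-\tfrac{\eta}{2(1+\eta L)}\norm{\nabla f(x)}_2^2$, and get $\tfrac{d\pi}{d\pih}(x) \le (1+\eta L)^{d/2}\exp(\tfrac{\eta L^2}{2}\norm{x - x^*}_2^2)$; on $\Omega$, definition \eqref{eq:omegadef} gives $\norm{x - x^*}_2^2 \le \tfrac{16 d\log(288\kappa/\eps)}{\mu}$, and substituting the value of $\eta$ (using $\kappa = L/\mu$) yields $\tfrac{\eta L^2}{2}\norm{x - x^*}_2^2 \le \tfrac14$ and $(1+\eta L)^{d/2} \le \exp(\tfrac{1}{64\kappa\log(288\kappa/\eps)})$, so the product is at most $1.01\,e^{1/4} < 2$.

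Essentially no genuine obstacle arises; the one thing to keep straight is the direction of the convexity/smoothness inequalities when bounding $h(x)$ from each side (convexity lower-bounds $f$ and hence upper-bounds $h$, smoothness upper-bounds $f$ and hence lower-bounds $h$), and then carefully tracking constants against the chosen $\eta$ and the radius of $\Omega$. I would state the two integral bounds on $h(x)$ once and reuse them, since they are exactly the inequalities already established inside the proof of Lemma~\ref{lem:normratiobound}.
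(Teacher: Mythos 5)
Your proposal is correct and follows essentially the same route as the paper: cancel the common $g(x)$ factor, sandwich the inner Gaussian integral $h(x)$ from above by convexity and from below by smoothness (exactly the two estimates inside \eqref{eq:usefulbound} and Lemma~\ref{lem:normratiobound}), combine with the normalization-constant bounds of Lemma~\ref{lem:normratiobound}, observe that $\norm{\nabla f(x)}_2 \le L\norm{x-x^*}_2$ makes the leftover exponent sign-definite, and then check constants against $\eta$ and the radius of $\Omega$. The only cosmetic difference is that you cancel $g(x)$ up front and phrase the bounds in terms of $h(x)$, whereas the paper carries the $g(x)$ terms along and absorbs the exponent drop into \eqref{eq:usefulbound}; the content and the arithmetic are the same.
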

\begin{proof}
	We first show the upper bound. By Lemma~\ref{lem:normratiobound},
	\begin{equation}\label{eq:upperpartial}
	\begin{aligned}\frac{d\pi}{d\pih}(x) &= \frac{\exp\left(-f(x) - g(x)\right)}{\int_y\exp\left(-f(y) - g(x) - \frac{1}{2\eta}\norm{y - x}_2^2 - \frac{\eta L^2}{2}\norm{x - x^*}_2^2\right) dy} \cdot \frac{Z_{\pih}}{Z_\pi} \\
	&\le \frac{\exp\left(-f(x) - g(x)\right)}{\int_y\exp\left(-f(y) - g(x) - \frac{1}{2\eta}\norm{y - x}_2^2 - \frac{\eta L^2}{2}\norm{x - x^*}_2^2\right) dy} \cdot (2\pi\eta)^{\frac{d}{2}}.\end{aligned}\end{equation}
	We now bound the first term, for $x \in \Omega$. By smoothness, we have
	\[\frac{\exp\left(-f(y) - g(x)\right)}{\exp\left(-f(x) - g(x)\right)} \ge \exp\left(\inprod{\nabla f(x)}{x - y} - \frac{L}{2}\norm{y - x}_2^2\right), \]
	so applying this for each $y$,
	\begin{align*}
	\frac{\int_y \exp\left(-f(y) - g(x) - \frac{1}{2\eta}\norm{y - x}_2^2-\frac{\eta L^2}{2}\norm{x - x^*}_2^2\right)dy}{\exp\left(-f(x) - g(x)\right)} \\
	\ge \exp\left(-\frac{\eta L^2}{2}\norm{x - x^*}_2^2\right)\int_y \exp\left(\inprod{\nabla f(x)}{x - y} - \frac{1 + \eta L}{2\eta}\norm{y - x}_2^2\right)dy \\
	= \exp\left(-\frac{\eta L^2}{2}\norm{x - x^*}_2^2 + \frac{\eta}{2(1 + \eta L)}\norm{\nabla f(x)}_2^2\right)\int_y \exp\left(-\frac{1 + \eta L}{2\eta}\norm{x - y - \frac{\eta}{1 + \eta L}\nabla f(x)}_2^2\right)dy \\
	\ge \exp\left(-\frac{\eta L^2}{2}\cdot \frac{16d\log(288\kappa/\eps)}{\mu}\right)\left(\frac{2\pi\eta}{1 + \eta L}\right)^{\frac{d}{2}} \ge \frac{3}{4}\left(\frac{2\pi\eta}{1 + \eta L}\right)^{\frac{d}{2}}.
	\end{align*}
	In the last line, we used that $x \in \Omega$ implies $\norm{x - x^*}_2^2 \le \tfrac{16d\log(288\kappa/\eps)}{\mu}$, and the definition of $\eta$. Combining this bound with \eqref{eq:upperpartial}, we have the desired
	\[\frac{d\pi}{d\pih}(x) \le \frac{4}{3}\left(1 + \eta L\right)^{\frac{d}{2}} \le 2.\] Next, we consider the lower bound. By combining \eqref{eq:usefulbound} with Lemma~\ref{lem:normratiobound}, we have the desired
	\begin{align*}\frac{d\pi}{d\pih}(x) &= \frac{\exp\left(-f(x) - g(x)\right)}{\int_y\exp\left(-f(y) - g(x) - \frac{1}{2\eta}\norm{y - x}_2^2 - \frac{\eta L^2}{2}\norm{x - x^*}_2^2\right) dy} \cdot \frac{Z_{\pih}}{Z_\pi} \\
	&\ge (2\pi\eta)^{-\frac{d}{2}} \cdot \left(\frac{2\pi\eta}{1+\eta L}\right)^{\frac{d}{2}} \left(1+\frac{\eta L^{2}}{\mu}\right)^{-\frac{d}{2}} = \left(\frac{1}{1 + \eta L}\right)^{\frac{d}{2}}\left(1 + \eta L \kappa\right)^{-\frac{d}{2}} \ge \half.\end{align*}
\end{proof}

\subsubsection{Correctness of $\cssm$}
\restatecssmcorrectness*
\begin{proof}
	We remark that $\eta = \tfrac{1}{32 L\kappa d\log(288\kappa/\eps)}$ is precisely the choice of $\eta$ in $\sjd$ where $\delta = \eps/18$, as in $\cssm$. First, we may apply Fact~\ref{fact:subgauss} to conclude that the measure of set $\Omega$ with respect to the $\mu$-strongly logconcave density $\pi$ is at least $1 - \eps/3$. The conclusion of correctness will follow from an appeal to Corollary~\ref{corr:unbiased_reject_approx}, with parameters
	\[C = 4,\; \eps' = \frac{\eps}{3},\; \delta = \frac{\eps}{18}.\]
	Note that indeed we have $\eps' + \tfrac{2\delta C}{1 - \eps'}$ is bounded by $\eps$, as $1 - \eps' \ge \tfrac{2}{3}$. Moreover, the expected number of calls \eqref{eq:algcalls} is clearly bounded by a constant as well. 
	
	We now show that these parameters satisfy the requirements of Corollary~\ref{corr:unbiased_reject_approx}. Define the functions
	\begin{align*}p(x) &\defeq \exp(-f(x) - g(x)),\\ \hp(x) &\defeq (2\pi\eta)^{-\frac{d}{2}}\int_y \exp\left(-f(y) - g(x) - \frac{1}{2\eta}\norm{y - x}_2^2 - \frac{\eta L^2}{2}\norm{x - x^*}_2^2\right)dy,\end{align*}
	and observe that clearly the densities of $\pi$ and $\pih$ are respectively proportional to $p$ and $\hp$. Moreover, define $Z = \int p(x) dx$ and $\hat{Z} = \int \hp(x) dx$. By comparing these definitions with Lemma~\ref{lem:normratiobound}, we have $Z = Z_\pi$ and $\hat{Z} = (2\pi\eta)^{-\frac{d}{2}}Z_{\pih}$, so by the upper bound in Lemma~\ref{lem:normratiobound}, $\hat{Z}/Z \le 1$. Next, we claim that the following procedure produces an unbiased estimator for $\tfrac{p(x)}{\hp(x)}$.
	\begin{enumerate}
		\item Sample $y \sim \pi_x$, where $\tfrac{d\pi_x(y)}{dy} \propto \exp\left(-f(y) - \tfrac{1}{2\eta}\norm{y - x}_2^2\right)$
		\item $\alpha \gets \exp\left(f(y) - \inprod{\nabla f(x)}{y - x} - \tfrac{L}{2}\norm{y - x}_2^2 + g(x) + \tfrac{\eta L^2}{2}\norm{x - x^*}_2^2\right)$
		\item Output $\hat{\theta}(x) \gets \exp\left(-f(x) - g(x) + \tfrac{\eta}{2(1 + \eta L)}\norm{\nabla f(x)}_2^2\right)(1 + \eta L)^{\frac{d}{2}}\alpha$
	\end{enumerate}
	To prove correctness of this estimator $\hat{\theta}$, define for simplicity
	\[Z_x \defeq \int_y \exp\left(-f(y) - g(x) - \frac{1}{2\eta}\norm{y - x}_2^2 - \frac{\eta L^2}{2}\norm{x - x^*}_2^2\right)dy.\]
	We compute, using $\tfrac{d\pi_x(y)}{dy} = \tfrac{\exp(-f(y) - g(x) - \frac{1}{2\eta}\norm{y - x}_2^2 - \frac{\eta L^2}{2}\norm{x - x^*}_2^2)}{Z_x}$, that
	\begin{align*}
	\E_{\pi_x}\left[\alpha\right] &= \int_y \exp\left(f(y) - \inprod{\nabla f(x)}{y - x} - \frac{L}{2}\norm{y - x}_2^2 + g(x) + \frac{\eta L^2}{2}\norm{x - x^*}_2^2\right)d\pi_x(y) \\
	&= \frac{1}{Z_x} \int_y \exp\left(- \inprod{\nabla f(x)}{y - x} - \frac{L}{2}\norm{y - x}_2^2 - \frac{1}{2\eta}\norm{y - x}_2^2\right) dy \\
	&= \frac{1}{Z_x}\exp\left(-\frac{\eta}{2(1 + \eta L)}\norm{\nabla f(x)}_2^2\right)\left(\frac{2\pi \eta}{1 + \eta L}\right)^{\frac{d}{2}}.
	\end{align*}
	This implies that the output quantity
	\[\hat{\theta}(x) = \exp\left(-f(x) - g(x) + \frac{\eta}{2(1 + \eta L)}\norm{\nabla f(x)}_2^2\right)(1 + \eta L)^{\frac{d}{2}}\alpha\]
	is unbiased for $\tfrac{p(x)}{\hp(x)} = \exp(-f(x) - g(x)) Z_x^{-1}(2\pi\eta)^{\frac{d}{2}}$. Finally, note that for any $y$ used in the definition of $\hat{\theta}(x)$, by using $f(y) - f(x) - \inprod{\nabla f(x)}{y - x} - \tfrac{L}{2}\norm{y - x}_2^2 \le 0$ via smoothness, we have
	\begin{align*}\hat{\theta}(x) &=  \exp\left(-f(x) - g(x) + \frac{\eta}{2(1 + \eta L)}\norm{\nabla f(x)}_2^2\right)(1 + \eta L)^{\frac{d}{2}}\alpha \\
	&\le (1 + \eta L)^{\frac{d}{2}}\exp\left(\frac{\eta}{2(1 + \eta L)}\norm{\nabla f(x)}_2^2 + \frac{\eta L^2}{2}\norm{x - x^*}_2^2\right) \\
	&\le (1 + \eta L)^{\frac{d}{2}}\exp\left(\eta L^2\norm{x - x^*}_2^2\right) \le 4.
	\end{align*}
	Here, we used the definition of $\eta$ and $L^2\norm{x - x^*}_2^2 \le 16L\kappa d\log(288\kappa/\eps)$ by the definition of $\Omega$. 
\end{proof}

\subsection{Deferred proofs from Section~\ref{ssec:alternate}}

Throughout this section, for error tolerance $\delta \in [0, 1]$ which parameterizes $\sjd$, we denote for shorthand a high-probability region $\Omega_\delta$ and its radius $R_\delta$ by
\begin{equation}\label{eq:omegadelta}\Omega_\delta \defeq \left\{x \mid \norm{x - x^*}_2 \le R_\delta\right\},\text{ for } R_\delta \defeq 4\sqrt{\frac{d\log(16\kappa/\delta)}{\mu}}.\end{equation}
The following density ratio bounds hold within this region, by simply modifying Lemma~\ref{lem:densityratio}.
\begin{corollary}
	\label{corr:densityratiodelta}
	Let $\eta = \tfrac{1}{32L\kappa d\log(16\kappa/\delta)}$, and let $\pih$ be parameterized by this choice of $\eta$ in \eqref{eq:pihdef}. For all $x \in \Omega_\delta$, as defined in \eqref{eq:omegadelta}, $\frac{d\pi}{d\pih}(x) \le 2$. Moreover, for all $x \in \R^d$, $\frac{d\pi}{d\pih}(x) \ge \half$.
\end{corollary}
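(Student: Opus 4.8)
The plan is to observe that the proof of Lemma~\ref{lem:densityratio} used nothing special about $\eps$; it only relied on the mutual consistency of three quantities — the step size $\eta = \tfrac{1}{32L\kappa d\log(\cdot)}$, the radius of the region (so that $\norm{x-x^*}_2^2$ is bounded by $\tfrac{16d\log(\cdot)}{\mu}$ there), and the argument of the logarithm. Replacing $288\kappa/\eps$ by $16\kappa/\delta$ throughout — which is exactly the substitution $\delta \gets \eps/18$ appearing in the proof of Proposition~\ref{prop:cssmcorrectness} — turns the hypotheses of Lemma~\ref{lem:densityratio} into those stated here, so I would simply re-run that argument verbatim with $\Omega$ replaced by $\Omega_\delta$. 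Below I recall the two halves for concreteness.

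For the upper bound, I would start from
\[\frac{d\pi}{d\pih}(x) = \frac{\exp(-f(x) - g(x))}{\int_y \exp\left(-f(y) - g(x) - \tfrac{1}{2\eta}\norm{y-x}_2^2 - \tfrac{\eta L^2}{2}\norm{x - x^*}_2^2\right)dy}\cdot \frac{Z_{\pih}}{Z_\pi},\]
bound $Z_{\pih}/Z_\pi \le (2\pi\eta)^{\frac{d}{2}}$ via Lemma~\ref{lem:normratiobound}, and lower bound the denominator integral by $L$-smoothness of $f$ (passing to the first-order expansion about $x$), then completing the square and applying Fact~\ref{fact:gaussz}. For $x \in \Omega_\delta$ this produces a factor $\exp\left(-\tfrac{\eta L^2}{2}\cdot \tfrac{16 d \log(16\kappa/\delta)}{\mu}\right)\left(\tfrac{2\pi\eta}{1+\eta L}\right)^{\frac d2}$, and the choice $\eta = \tfrac{1}{32 L\kappa d\log(16\kappa/\delta)}$ makes the exponent equal to $-\tfrac14$, so the exponential is at least $\tfrac34$; combining with the normalization bound gives $\tfrac{d\pi}{d\pih}(x) \le \tfrac43 (1+\eta L)^{\frac d2} \le 2$.

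For the lower bound, valid for all $x \in \R^d$, I would instead use the pointwise convexity-based bound \eqref{eq:usefulbound}, which gives $\int_y(\cdots)\,dy \le (2\pi\eta)^{\frac d2}\exp(-f(x)-g(x))$ for every $x$, together with the lower bound $Z_{\pih}/Z_\pi \ge \left(\tfrac{2\pi\eta}{1+\eta L}\right)^{\frac d2}\left(1+\tfrac{\eta L^2}{\mu}\right)^{-\frac d2}$ from Lemma~\ref{lem:normratiobound}; this yields $\tfrac{d\pi}{d\pih}(x) \ge (1+\eta L)^{-\frac d2}(1 + \eta L\kappa)^{-\frac d2}$, and since $\eta L \le \tfrac{1}{32d}$ and $\eta L\kappa \le \tfrac{1}{32d}$ both factors exceed $e^{-1/64}$, so the product is at least $\thalf$.

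There is essentially no obstacle: the only thing to double-check is that the smaller constant $16$ (rather than $288$) inside the logarithm still leaves enough room, i.e.\ that $\eta = \tfrac{1}{32L\kappa d\log(16\kappa/\delta)}$ forces $\tfrac{\eta L^2}{2}\cdot\tfrac{16 d\log(16\kappa/\delta)}{\mu} = \tfrac{1}{4\kappa}\le\tfrac14$ and $(1+\eta L)^{\frac d2},\,(1+\eta L\kappa)^{\frac d2}\le\tfrac43$, which hold since $\eta L,\,\eta L\kappa \le \tfrac{1}{32d}$ and $\log(16\kappa/\delta)\ge 1$. Hence the ``modification'' of Lemma~\ref{lem:densityratio} is purely cosmetic.
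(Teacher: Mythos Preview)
Your proposal is correct and matches the paper's own approach exactly: the paper simply states the corollary follows ``by simply modifying Lemma~\ref{lem:densityratio}'' with the substitution $288\kappa/\eps \mapsto 16\kappa/\delta$, and you have faithfully rerun that argument. One inconsequential slip: in your final check the product $\tfrac{\eta L^2}{2}\cdot\tfrac{16 d\log(16\kappa/\delta)}{\mu}$ actually equals $\tfrac{1}{4}$ (not $\tfrac{1}{4\kappa}$), but since the needed bound is $\le \tfrac{1}{4}$ this changes nothing.
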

The following claim follows immediately from applying Fact~\ref{fact:subgauss}.
\begin{lemma}\label{lem:omegahighprob}
	With probability at least $1 - \tfrac{\delta^2}{8(1 + \kappa)^d}$, $x \sim \pih$ lies in $\Omega_\delta$.
\end{lemma}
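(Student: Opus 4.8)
The plan is to transfer the concentration of $\pi$ around $x^*$ to the $x$-marginal of $\pih$ using the pointwise density comparison already established in Corollary~\ref{corr:densityratiodelta}. Recall from that corollary that $\frac{d\pi}{d\pih}(x) \ge \half$ for \emph{all} $x \in \R^d$ (this direction requires no restriction to $\Omega_\delta$), where we continue to overload $\pih$ to mean its $x$-marginal. Equivalently $\tfrac{d\pih}{d\pi}(x) \le 2$ everywhere, so for any measurable $S$ we have $\pih(S) \le 2\pi(S)$; in particular, to conclude it suffices to show $\pi(\Omega_\delta^c) \le \tfrac{\delta^2}{16(1+\kappa)^d}$.

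Next I would apply Fact~\ref{fact:subgauss} to $\pi$. Its negative log-density $f+g$ is $\mu$-strongly convex (as $f$ is $\mu$-strongly convex and $g$ is convex), and $x^*$ minimizes it by assumption, so the hypotheses are met. Taking the failure parameter there to be $\delta' \defeq \tfrac{\delta^2}{16(1+\kappa)^d}$, we get that with probability at least $1-\delta'$ over $x \sim \pi$,
\[\norm{x - x^*}_2 \le \sqrt{\frac{d}{\mu}}\Par{2 + 2\max\Par{\sqrt[4]{\tfrac{\log(1/\delta')}{d}},\ \sqrt{\tfrac{\log(1/\delta')}{d}}}}.\]

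The remaining step is to check that this radius is at most $R_\delta = 4\sqrt{\tfrac{d\log(16\kappa/\delta)}{\mu}}$, so that the event above is contained in $\{x \in \Omega_\delta\}$. This is a routine logarithmic estimate: $\log(1/\delta') = \log 16 + d\log(1+\kappa) + 2\log(1/\delta)$, which (using $16(1+\kappa)^d/\delta^2 \le (16\kappa/\delta)^d$ for $d \ge 2$, and checking the $d=1$ case directly) gives $\tfrac{\log(1/\delta')}{d} = O(\log(16\kappa/\delta))$; since $\log(16\kappa/\delta) \ge \log 16 > 1$ the square-root term dominates, and substituting back shows the displayed bound is at most $R_\delta$ — this is exactly why $R_\delta$ was defined with the generous leading constant. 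Combining, $\pi(\Omega_\delta^c) \le \delta'$, hence $\pih(\Omega_\delta^c) \le 2\delta' = \tfrac{\delta^2}{8(1+\kappa)^d}$, as claimed. There is no genuine obstacle here; the only point requiring attention is budgeting $\delta'$ with the extra factor $2$ (and keeping the $(1+\kappa)^d$ blow-up inside the exponent) so that the failure probability stays small enough to be absorbed where this lemma is used. (One could alternatively argue that the $x$-marginal of $\pih$ is itself $\Theta(\mu)$-strongly logconcave via a Brascamp--Lieb bound on the covariance of $\pi_x$ in \eqref{eq:pixdef} and that its mode is within $O(\sqrt\eta)$ of $x^*$, but the warmness route above is cleaner and is what Corollary~\ref{corr:densityratiodelta} is set up to deliver.)
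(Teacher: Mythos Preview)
Your proof is correct and follows essentially the same approach as the paper: the paper's one-line justification ``follows immediately from applying Fact~\ref{fact:subgauss}'' implicitly combines Fact~\ref{fact:subgauss} (applied to the $\mu$-strongly logconcave $\pi$) with the pointwise bound $\tfrac{d\pih}{d\pi}\le 2$ from Corollary~\ref{corr:densityratiodelta}, exactly as you do. You have simply written out the constant-checking that the paper omits; the only minor imprecision is the claim that the square-root branch always dominates in Fact~\ref{fact:subgauss} (it need not when $\log(1/\delta')/d<1$), but in that regime the bound $2+2\sqrt[4]{\cdot}<4\le 4\sqrt{\log(16\kappa/\delta)}$ is immediate anyway, so the conclusion stands.
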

Finally, when clear from context, we overload $\pih$ as a distribution on $x\in\R^d$ to be the $x$ component marginal of the distribution \eqref{eq:pihdef}, i.e. with density
\[ \frac{d\pih}{dx}(x) \propto \int_y\exp\left(-f(y) - g(x) - \frac{1}{2\eta}\norm{y - x}_2^2 - \frac{\eta L^2}{2}\norm{x - x^*}_2^2\right) dy.
\]

We first note that $\pih$ is stationary for $\sjd$; this follows immediately from Lemma~\ref{lem:alternate_exact}. In Section~\ref{sssec:sjdconduct}, we bound the \emph{conductance} of the walk. We then use this bound in Section~\ref{sssec:runtime} to bound the mixing time and overall complexity of $\sjd$. 

\subsubsection{Conductance of $\sjd$}
\label{sssec:sjdconduct}
We bound the conductance of this random walk, as a process on the iterates $\{x_k\}$, to show the final point has distribution close to the marginal of $\pih$ on $x$. To do so, we break Proposition~\ref{prop:mixtime} into two pieces, which we will use in a more white-box manner to prove our conductance bound.

\begin{definition}[Restricted conductance]
	Let a random walk with stationary distribution $\pih$ on $x \in \R^d$ have transition densities $\tran_x$, and let $\Omega \subseteq \R^d$. The $\Omega$-restricted conductance, for $v \in (0, \thalf\pih(\Omega))$, is
	\[\Phi_{\Omega}(v) = \inf_{\pih(S \cap \Omega)\in(0, v]} \frac{\tran_S(S^c)}{\pih(S \cap \Omega)}, \text{ where } \tran_S(S^c) \defeq \int_{x \in S}\int_{x'\in S^c}\tran_x(x') d\pih(x)dx'.\]
\end{definition}

\begin{proposition}[Lemma 1, \cite{ChenDWY19}]
	\label{prop:mixviaconduct}
	Let $\pistart$ be a $\beta$-warm start for $\pih$, and let $x_0 \sim \pistart$. For some $\delta > 0$, let $\Omega \subseteq \R^d$ have $\pih(\Omega) \ge 1 - \tfrac{\delta^2}{2\beta^2}$. Suppose that a random walk with stationary distribution $\pih$ satisfies the $\Omega$-restricted conductance bound
	\[\Phi_{\Omega}(v) \ge \sqrt{B\log\left(\frac{1}{v}\right)},\text{ for all } v \in \left[\frac{4}{\beta},\half\right].\]
	Let $x_K$ be the result of $K$ steps of this random walk, starting from $x_0$. Then, for 
	\[K \ge \frac{64}{B}\log\left(\frac{\log\beta}{2\delta}\right),\]
	the resulting distribution of $x_K$ has total variation at most $\tfrac{\delta}{2}$ from $\pih$.
\end{proposition}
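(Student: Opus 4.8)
The plan is to prove this by the standard conductance-profile argument of \cite{LovaszK99, KannanLM06, GoelMT06}, in the Lov\'asz--Simonovits ``bulge'' formulation adapted to \emph{restricted} conductance. First I would write $\pi_k$ for the law of $x_k$ and $\rho_k := \frac{d\pi_k}{d\pih}$. For $v \in [0,1]$ let $h_k(v) := \sup\{\pi_k(S) : \pih(S) = v\}$, where the supremum is attained by a super-level set of $\rho_k$; it is concave with $h_k(0)=0$, $h_k(1)=1$, and $h_k(v)\ge v$, so the bulge $s_k := h_k - \mathrm{id}$ is concave, non-negative, vanishes at $0$ and $1$, and satisfies $\sup_v s_k(v) = \tvd{\pi_k}{\pih}$; hence it suffices to drive $\sup_v s_k(v)$ below $\tfrac{\delta}{2}$. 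Since $\pistart$ is $\beta$-warm ($\frac{d\pistart}{d\pih}\le\beta$), the initial bound $s_0(v) \le \min\{(\beta-1)v,\,1-v\}$ holds; in particular $s_0$ is small near $v=0$, which is precisely the regime where the profile hypothesis is strongest.

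The second step is the one-step inequality, where the conductance hypothesis enters. Fixing the extremal super-level set $S$ with $\pih(S) = v$ and using stationarity of $\pih$, I would write $\pi_{k+1}(S) = \pi_k(S) - \tran_S(S^c) + \tran_{S^c}(S)$, bound the net ergodic flow out of $S$ from below using the $\Omega$-restricted conductance $\Phi_\Omega$, and control the error coming from the ``bad'' region using $\pih(\Omega^c) \le \tfrac{\delta^2}{2\beta^2}$. This yields a Lov\'asz--Simonovits recursion of the form
\[ h_{k+1}(v) \;\le\; \half\, h_k\!\left(v - \Delta_v\right) + \half\, h_k\!\left(v + \Delta_v\right) + \xi, \qquad \Delta_v \;\asymp\; \min(v,\,1-v)\,\Phi_\Omega\!\big(\min(v,\,1-v)\big), \]
with per-step leakage $\xi \lesssim \tfrac{\delta^2}{\beta^2}$; the two-point averaging structure comes from the $\ge \tfrac18$ transition overlap built into the walk (the hypothesis underlying Proposition~\ref{prop:mixtime}), which plays the role of laziness. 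Passing to the bulge, concavity of $s_k$ gives $s_{k+1}(v) \le \half s_k(v-\Delta_v) + \half s_k(v+\Delta_v) + \xi$.

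Finally I would integrate this recursion against the hypothesis $\Phi_\Omega(v) \ge \sqrt{B\log(1/v)}$, so that $\Delta_v^2 \gtrsim v^2 B\log(1/v)$ for $v \le \tfrac12$ (and symmetrically near $1$). Tracking $\sup_v s_k(v)$ and converting the discrete recursion into a continuous comparison in $k$ shows that the number of steps to reach $\sup_v s_k(v)\le\tfrac\delta2$ is controlled, up to constants, by
\[ \int_{\Theta(1/\beta)}^{1/2} \frac{dv}{v\,\Phi_\Omega(v)^2} \;+\; \frac{1}{B}\log\frac{1}{\delta} \;\lesssim\; \frac{1}{B}\int \frac{dv}{v\log(1/v)} \;+\; \frac{1}{B}\log\frac{1}{\delta} \;=\; O\!\left(\frac{\log\log\beta}{B} + \frac{1}{B}\log\frac{1}{\delta}\right), \]
which matches the stated $K \ge \tfrac{64}{B}\log\tfrac{\log\beta}{2\delta}$; the accumulated leakage $\sum\xi \lesssim K\cdot\tfrac{\delta^2}{\beta^2}$ over these iterations is $\ll\delta$ and is absorbed.

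I expect the third step to be the main obstacle: making the profile integration rigorous with a \emph{non-constant} step $\Delta_v$, and in particular showing that the $\log(1/v)$ factor genuinely upgrades the naive $\log\beta$ dependence to $\log\log\beta$. This requires the induction to exploit that $s_k$ is small near $v=0$ (from warmness) precisely where $\Phi_\Omega$ is largest, so that the bulge is crushed fastest exactly where it starts small — a monotone-potential argument one must run carefully to avoid the degeneracy of the concavity estimate on locally flat pieces of $s_k$. A secondary nuisance is the bookkeeping for \emph{restricted} rather than global conductance, ensuring the $\Omega^c$-error costs only $O(\delta^2/\beta^2)$ per step rather than something that would spoil the final accuracy.
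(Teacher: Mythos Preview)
The paper does not prove this proposition at all; it is quoted verbatim as Lemma~1 of \cite{ChenDWY19} and used as a black box (see the appendix, where it is stated immediately after the definition of restricted conductance with no accompanying argument). So there is no ``paper's own proof'' to compare against here.

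Your sketch is the standard Lov\'asz--Simonovits conductance-profile argument from the line of work \cite{LovaszK99, KannanLM06, GoelMT06} that the paper explicitly cites as the source of this machinery, and it is indeed how \cite{ChenDWY19} proves the lemma. The structure you outline --- concave bulge $s_k = h_k - \mathrm{id}$, one-step averaging recursion with step $\Delta_v \asymp v\Phi_\Omega(v)$, per-step leakage $O(\delta^2/\beta^2)$ from $\Omega^c$, and the mixing-time integral $\int_{\Theta(1/\beta)}^{1/2} \tfrac{dv}{v\Phi_\Omega(v)^2} \asymp \tfrac{1}{B}\log\log\beta$ --- is correct in outline and matches the literature. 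Your own assessment of the delicate point (making the non-constant-step recursion rigorous so that the $\log(1/v)$ profile genuinely yields $\log\log\beta$ rather than $\log\beta$) is accurate; this is exactly the content of the blocking-conductance / spectral-profile refinements in \cite{KannanLM06, GoelMT06}, and \cite{ChenDWY19} handles it by a careful induction on a piecewise-linear majorant of $s_k$.
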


We state a well-known strategy for lower bounding conductance, via showing the stationary distribution has good \emph{isoperimetry} and that transition distributions of nearby points have large overlap.

\begin{proposition}[Lemma 2, \cite{ChenDWY19}]
	\label{prop:conductviaisotv}
	Let a random walk with stationary distribution $\pih$ on $x \in \R^d$ have transition distribution densities $\tran_x$, and let $\Omega \subseteq \R^d$, and let $\pih_\Omega$ be the conditional distribution of $\pih$ on $\Omega$. Suppose for any $x, x' \in \Omega$ with $\norm{x - x'}_2 \le \Delta$,
	\[\tvd{\tran_x}{\tran_{x'}} \le \half.\]
	Also, suppose $\pih_\Omega$ satisfies, for any partition $S_1$, $S_2$, $S_3$ of $\Omega$, where $d(S_1, S_2)$ is the minimum Euclidean distance between points in $S_1$, $S_2$, the log-isoperimetric inequality
	\begin{equation}\label{eq:logiso}\pih_\Omega(S_3) \ge \frac{1}{2\psi}d(S_1, S_2) \cdot \min\left(\pih_\Omega(S_1), \pih_\Omega(S_2)\right) \cdot \sqrt{\log\left(1 + \frac{1}{\min\left(\pih_\Omega(S_1), \pih_\Omega(S_2)\right)}\right)}.\end{equation}
	Then, we have the bound for all $v \in (0, \thalf]$
	\[\Phi_{\Omega}(v) \ge \min\Par{1, \frac{\Delta}{128\psi}\sqrt{\log\left(\frac{1}{v}\right)}}.\]
\end{proposition}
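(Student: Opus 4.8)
The plan is to follow the classical ``isoperimetry $+$ transition overlap $\Rightarrow$ conductance'' reduction (as in \cite{LovaszK99, KannanLM06}), specialized to the $\Omega$-restricted setting. Fix a measurable set $S$ with $\pih(S\cap\Omega)\in(0,v]$; the goal is to lower bound $\tran_S(S^c)/\pih(S\cap\Omega)$. First I would partition $\Omega$ into the two ``deep interiors''
\[S_1 \defeq \Brace{x\in S\cap\Omega:\tran_x(S^c)<\tfrac14},\quad S_2\defeq\Brace{x\in S^c\cap\Omega:\tran_x(S)<\tfrac14},\]
and the leftover $S_3\defeq\Omega\setminus(S_1\cup S_2)$. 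The first fact to establish is that $S_1$ and $S_2$ are $\Delta$-separated: if some $x\in S_1$, $x'\in S_2$ had $\norm{x-x'}_2\le\Delta$, then $\tran_x(S)>\tfrac34$ and $\tran_{x'}(S)<\tfrac14$ would force $\tvd{\tran_x}{\tran_{x'}}>\thalf$, contradicting the overlap hypothesis; hence $d(S_1,S_2)\ge\Delta$.

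Next I would split into cases according to whether a deep interior captures most of its side's measure. If $\pih(S_1)\le\thalf\pih(S\cap\Omega)$, then at least half of $\pih(S\cap\Omega)$ sits in $(S\cap\Omega)\setminus S_1$, each point of which leaks probability $\ge\tfrac14$ into $S^c$, giving $\tran_S(S^c)\ge\tfrac18\pih(S\cap\Omega)$ directly; the symmetric case $\pih(S_2)\le\thalf\pih(S^c\cap\Omega)$ is handled identically after invoking reversibility of the chain (which holds for the alternating-marginal and Metropolized walks this proposition is applied to), $\tran_S(S^c)=\tran_{S^c}(S)$, together with $\pih(S^c\cap\Omega)\ge\pih(S\cap\Omega)$ (valid since $v<\thalf\pih(\Omega)$). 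Either way $\Phi_\Omega(v)\ge\tfrac18$, which dominates the target bound whenever $\Delta$ is below a fixed multiple of $\psi$ --- true in all our instantiations, where $\Delta$ is a tiny step-size scale and $\psi$ is a large isoperimetric constant.

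The substantive case is $\pih(S_1)>\thalf\pih(S\cap\Omega)$ and $\pih(S_2)>\thalf\pih(S^c\cap\Omega)$, where I would apply the log-isoperimetric inequality \eqref{eq:logiso} to the partition $(S_1,S_2,S_3)$ of $\Omega$. Writing $m\defeq\min(\pih_\Omega(S_1),\pih_\Omega(S_2))$ and using $d(S_1,S_2)\ge\Delta$, this yields $\pih_\Omega(S_3)\ge\tfrac{\Delta}{2\psi}\,m\sqrt{\log(1+1/m)}$. Since $S_3=((S\cap\Omega)\setminus S_1)\cup((S^c\cap\Omega)\setminus S_2)$ consists entirely of points leaking probability $\ge\tfrac14$ to the opposite side, reversibility gives $\tran_S(S^c)=\thalf(\tran_S(S^c)+\tran_{S^c}(S))\ge\tfrac18\pih(S_3)$. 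Combining these with $\pih(\Omega)\,m>\thalf\pih(S\cap\Omega)$ (from the case hypothesis) and $m\le\pih(S\cap\Omega)/\pih(\Omega)\le 2v$ (so that $\log(1+1/m)\ge\thalf\log(1/v)$ for $v$ below a constant) produces $\tran_S(S^c)\ge\tfrac{\Delta}{128\psi}\pih(S\cap\Omega)\sqrt{\log(1/v)}$; dividing by $\pih(S\cap\Omega)$ and taking the infimum over $S$ finishes the argument. I expect the main obstacle to be purely bookkeeping: threading the numerical constants through the three cases so the claimed $\tfrac{1}{128}$ falls out, and verifying that the constant-conductance bound from the easy cases really does dominate $\tfrac{\Delta}{128\psi}\sqrt{\log(1/v)}$ over the full range $v\in(0,\thalf]$ (equivalently, that $\psi/\Delta$ is bounded below by a suitable absolute constant in every application).
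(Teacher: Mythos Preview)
The paper does not supply its own proof of this proposition: it is quoted verbatim as Lemma~2 of \cite{ChenDWY19} and used as a black box. So there is nothing in the paper to compare your argument against line-by-line.

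That said, your reconstruction is exactly the standard ``overlap $+$ isoperimetry $\Rightarrow$ conductance'' argument (originating with \cite{LovaszK99}) that \cite{ChenDWY19} adapts to the $\Omega$-restricted, log-isoperimetric setting: the deep-interior sets $S_1,S_2$, the $\Delta$-separation from the TV hypothesis, and the three-case split are all correct and match the published proof.

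The caveat you flag at the end is real and worth stating precisely. Your easy cases yield $\tran_S(S^c)/\pih(S\cap\Omega)\ge\tfrac18$, not the claimed $\tfrac{\Delta}{128\psi}\sqrt{\log(1/v)}$; so what the argument actually establishes is
\[\Phi_\Omega(v)\ge \min\Par{\frac18,\;\frac{\Delta}{128\psi}\sqrt{\log\frac{1}{v}}}.\]
For the proposition to hold as written one needs the second term to be the minimum over the relevant range of $v$. This is not automatic from the hypotheses alone, but it is harmless here: in every invocation (Propositions~\ref{prop:mixtime} and~\ref{prop:mixviaconduct}) the range is $v\in[4/\beta,\thalf]$ so $\log(1/v)\le\log\beta$, and $\Delta=\Theta(\sqrt{\eta})$ while $\psi=\Theta(\mu^{-1/2})$, so $\tfrac{\Delta}{\psi}\sqrt{\log\beta}$ is tiny. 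Your remark that this is ``true in all our instantiations'' is therefore exactly the right resolution; it would be cleanest to either record the bound as the $\min$ above, or add the standing assumption $\Delta\le c\psi$ for a suitable absolute constant $c$.
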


To utilize Propositions~\ref{prop:mixviaconduct} and~\ref{prop:conductviaisotv}, we prove the following bounds in Appendices~\ref{ssec:warmstart},~\ref{ssec:tvclose}, and~\ref{ssec:isoperimetry}.

\begin{restatable}[Warm start]{lemma}{restatewarmstart}\label{lem:warmstart}
	For $\eta \le \tfrac{1}{L\kappa d}$, $\pistart$ defined in \eqref{eq:pistartdef} is a $2(1 + \kappa)^{\frac{d}{2}}$-warm start for $\pih$.
\end{restatable}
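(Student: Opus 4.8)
The plan is to bound the density ratio $\tfrac{d\pih}{d\pistart}(x)$ pointwise, where $\pistart$ has density proportional to $\exp(-\tfrac{L+\eta L^2}{2}\norm{x - x^*}_2^2 - g(x))$ and $\pih$ is the $x$-marginal of \eqref{eq:pihdef}, i.e. proportional to
\[h(x) \defeq \exp\Par{-g(x) - \frac{\eta L^2}{2}\norm{x - x^*}_2^2}\int_y \exp\Par{-f(y) - \frac{1}{2\eta}\norm{y - x}_2^2}dy.\]
I would first perform the inner Gaussian integral over $y$. Using smoothness and strong convexity of $f$ to sandwich $f(y)$ between its first-order expansion at $x$ plus/minus quadratic error terms, together with Fact~\ref{fact:gaussz}, the inner integral is between $(2\pi\eta)^{d/2}\exp(-f(x))\exp\big(\tfrac{\eta}{2(1+\eta L)}\norm{\nabla f(x)}_2^2 - \tfrac{\eta L}{2(1+\eta L)}\cdots\big)$-type bounds and $(2\pi\eta)^{d/2}\exp(-f(x))\exp(\tfrac{\eta}{2}\norm{\nabla f(x)}_2^2)$; more crudely, it lies in $[(\tfrac{2\pi\eta}{1+\eta L})^{d/2}, (2\pi\eta)^{d/2}]$ times $\exp(-f(x))\exp(O(\eta\norm{\nabla f(x)}_2^2))$. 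Actually the cleanest route, mirroring the computation in \eqref{eq:usefulbound}, is: the inner integral $\le (2\pi\eta)^{d/2}\exp(-f(x))$ by convexity (dropping the favorable gradient term since it only helps the upper bound after we cancel $f$), and $\ge (\tfrac{2\pi\eta}{1+\eta L})^{d/2}\exp(-f(x))$ by smoothness.

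**The key term: comparing $f(x)$ against the quadratic in $\pistart$.** After the integral, $h(x)$ is comparable to $\exp(-f(x) - g(x) - \tfrac{\eta L^2}{2}\norm{x-x^*}_2^2)$, whereas $\pistart \propto \exp(-\tfrac{L+\eta L^2}{2}\norm{x - x^*}_2^2 - g(x))$. So the ratio involves $\exp(-f(x) + \tfrac{L}{2}\norm{x-x^*}_2^2)$ (the $\tfrac{\eta L^2}{2}\norm{x-x^*}_2^2$ terms cancel). By $L$-smoothness of $f$ and $\nabla f(x^*) = 0$, $f(x) \ge f(x^*)$ and $f(x) \le f(x^*) + \tfrac{L}{2}\norm{x - x^*}_2^2$, so $-f(x) + \tfrac{L}{2}\norm{x-x^*}_2^2 \in [-f(x^*), -f(x^*) + \tfrac{L}{2}\norm{x-x^*}_2^2]$ — wait, that upper bound is unbounded. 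The resolution is that this is exactly why $\pistart$ is only a \emph{warm start} (ratio $\le \beta$) and not a perfect match: we need to bound the ratio of normalization constants too, and there the roles reverse. So the plan splits into (i) the pointwise density-shape ratio, handled by smoothness $f(x) \le f(x^*) + \tfrac{L}{2}\norm{x-x^*}_2^2$ giving $\tfrac{h(x)}{\text{(unnormalized }\pistart)} \le (2\pi\eta)^{d/2}\exp(-f(x^*))$ uniformly (after the $\eta L^2$ cancellation and bounding the inner integral), and similarly a lower bound with $(\tfrac{2\pi\eta}{1+\eta L})^{d/2}$ using $f(x) \ge f(x^*)$; and (ii) the normalization-constant ratio $\tfrac{Z_{\pistart}}{Z_{\pih}}$, where $Z_{\pih} = \int h(x)dx$ and $Z_{\pistart} = \int \exp(-\tfrac{L+\eta L^2}{2}\norm{x-x^*}_2^2 - g(x))dx$. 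For (ii) I would invoke Proposition~\ref{prop:normalizationratio} (the normalization-ratio lemma the paper uses elsewhere, e.g. in Lemma~\ref{lem:warmstartcomp} and Lemma~\ref{lem:normratiobound}): comparing $\exp(-f(x))$ against $\exp(-f(x^*) - \tfrac{\mu}{2}\norm{x-x^*}_2^2)$ by strong convexity, the extra Gaussian factor $\exp(-\tfrac{L+\eta L^2}{2}\norm{x-x^*}_2^2)$ versus the effective $\exp(-\tfrac{\mu}{2}\norm{x-x^*}_2^2)$ contributes a factor $(\tfrac{L+\eta L^2}{\mu})^{d/2} = (\kappa(1+\eta L))^{d/2}$ in the appropriate direction.

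**Assembling and where the constant $2(1+\kappa)^{d/2}$ comes from.** Multiplying the pointwise bound from (i) by the normalization ratio from (ii): the $(2\pi\eta)^{d/2}$ and $\exp(-f(x^*))$ factors cancel against those in $Z_{\pih}$'s lower bound (which carries $(\tfrac{2\pi\eta}{1+\eta L})^{d/2}$), leaving roughly $(1+\eta L)^{d/2} \cdot (\kappa(1+\eta L))^{d/2}$. Under the hypothesis $\eta \le \tfrac{1}{L\kappa d}$ we have $\eta L \le \tfrac{1}{\kappa d} \le \tfrac{1}{d}$, so $(1+\eta L)^{d/2} \le e^{1/2} < 2$ wait we need the product to be $\le 2(1+\kappa)^{d/2}$; $(1+\eta L)^d \le e$ and $(\kappa(1+\eta L))^{d/2}$: note $\kappa(1 + \eta L) = \kappa + \kappa\eta L \le \kappa + \tfrac{1}{d} \le \kappa + 1$, so $(\kappa(1+\eta L))^{d/2} \le (1+\kappa)^{d/2}$, and the leftover $(1+\eta L)^{d/2} \le \sqrt{e} \le 2$ is absorbed into the constant. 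So the final bound is $\tfrac{d\pih}{d\pistart}(x) \le 2(1+\kappa)^{d/2}$ as claimed. The main obstacle — or rather the main thing requiring care — is tracking the three sources of $\exp(\pm\text{quadratic in }\norm{x-x^*}_2^2)$ terms (from smoothness of $f$, from the explicit $\tfrac{\eta L^2}{2}$ regularizer, and from the Gaussian in $\pistart$) and checking they cancel so that the pointwise ratio is genuinely bounded by a dimension-dependent constant times $(2\pi\eta)^{d/2}e^{-f(x^*)}$ with no residual $x$-dependence outside $g$; once that cancellation is verified, invoking Proposition~\ref{prop:normalizationratio} for the normalization ratio is routine and the hypothesis $\eta \le \tfrac{1}{L\kappa d}$ is precisely what makes $(\kappa(1+\eta L))^{d/2} \le (1+\kappa)^{d/2}$ and the $(1+\eta L)^{d/2}$ slack into an absolute constant.
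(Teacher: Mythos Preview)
Your step (i) has a genuine error. You claim that smoothness $f(x) \le f(x^*) + \tfrac{L}{2}\norm{x-x^*}_2^2$ yields $\tfrac{h(x)}{s(x)} \le (2\pi\eta)^{d/2}\exp(-f(x^*))$ uniformly in $x$ (writing $s(x)$ for the unnormalized $\pistart$). But after the $g$ and $\tfrac{\eta L^2}{2}$ terms cancel and you upper-bound the inner integral by $(2\pi\eta)^{d/2}\exp(-f(x)+\tfrac{\eta L^2}{2}\norm{x-x^*}_2^2)$, the ratio is $(2\pi\eta)^{d/2}\exp\bigl(-f(x)+\tfrac{L+\eta L^2}{2}\norm{x-x^*}_2^2\bigr)$. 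Smoothness only gives $-f(x) \ge -f(x^*) - \tfrac{L}{2}\norm{x-x^*}_2^2$, a \emph{lower} bound on $-f(x)$; to upper-bound $-f(x)$ you must use strong convexity, which leaves a residual $\tfrac{L-\mu+\eta L^2}{2}\norm{x-x^*}_2^2$ in the exponent. This is unbounded in $x$ whenever $L>\mu$, so $\tfrac{h(x)}{s(x)}$ --- and hence $\tfrac{d\pih}{d\pistart}(x)$ itself --- is not uniformly bounded. You actually noticed this (``wait, that upper bound is unbounded'') but then proceeded as if smoothness resolved it; it cannot, and no constant normalization factor can absorb an $x$-dependent exponential.

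The paper bounds the reciprocal $\tfrac{d\pistart}{d\pih}(x)$ (which, despite the phrasing of the warmness definition in Section~\ref{ssec:notation}, is the quantity every warmness proof in the paper controls). The point is that $\tfrac{s(x)}{h(x)}$ \emph{is} uniformly bounded: lower-bounding the inner integral via smoothness gives $h(x) \ge (\tfrac{2\pi\eta}{1+\eta L})^{d/2}\exp(-f(x)-g(x)-\tfrac{\eta L^2}{2}\norm{x-x^*}_2^2)$, so $\tfrac{s(x)}{h(x)} \le (\tfrac{2\pi\eta}{1+\eta L})^{-d/2}\exp\bigl(f(x)-\tfrac{L}{2}\norm{x-x^*}_2^2\bigr) \le (\tfrac{2\pi\eta}{1+\eta L})^{-d/2}\exp(f(x^*))$, where now smoothness points the right way. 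The $(1+\kappa)^{d/2}$ factor then enters through $\tfrac{Z_{\pih}}{Z_{\pistart}}$: one upper-bounds $Z_{\pih}$ via \eqref{eq:usefulbound} and strong convexity of $f$, lower-bounds $Z_{\pistart}$ by replacing $\tfrac{\eta L^2}{2}$ with $\tfrac{\mu}{2}$ (using $\eta L^2 \le \mu$), and applies Proposition~\ref{prop:normalizationratio} to the $\mu$-strongly convex function $\tfrac{\mu}{2}\norm{x-x^*}_2^2 + g(x)$. Your ingredients are all correct, but they assemble into a bound on the opposite ratio.
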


\begin{restatable}[Transitions of nearby points]{lemma}{restatetrantv}\label{lem:tv_closepts}
	Suppose $\eta L \le 1$, $\eta L^2R_{\delta}^2 \le \thalf$, and $400d^2\eta\le R_\delta^2$. For a point $x$, let $\tran_x$ be the density of $x_k$ after sampling according to Lines 6 and 7 of Algorithm~\ref{alg:sjd} from $x_{k - 1} = x$. For $x, x' \in \Omega_\delta$ with $\norm{x - x'}_2 \le \tfrac{\sqrt{\eta}}{10}$, for $\Omega_\delta$ defined in \eqref{eq:omegadelta}, we have $\tvd{\tran_x}{\tran_{x'}} \le \thalf$.
\end{restatable}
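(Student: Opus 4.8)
The plan is to reduce, via the coupling characterization of total variation, to bounding $\tvd{\pi_x}{\pi_{x'}}$ for the intermediate $y$-sampling distributions $\pi_x$ defined in \eqref{eq:pixdef}, exactly as in Observation~\ref{observe:pvst}. Couple the draws $y_k \sim \pi_x$ and $y_k' \sim \pi_{x'}$ (produced by $\yor$ in Line 6) optimally; whenever the coupling forces $y_k = y_k'$, the subsequent call to the restricted Gaussian oracle $\oracle$ in Line 7 can be driven by shared randomness, so the resulting $x_k$, $x_k'$ agree, giving $\tvd{\tran_x}{\tran_{x'}} \le \tvd{\pi_x}{\pi_{x'}}$ up to the (inverse-polynomial) total variation errors of $\yor$ and $\oracle$ prescribed in Line 6. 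The hypotheses $\eta L \le 1$, $\eta L^2 R_\delta^2 \le \thalf$, and $400 d^2 \eta \le R_\delta^2$, together with $x, x' \in \Omega_\delta$, are precisely what guarantees that $\yor$ meets its rejection-sampling precondition on $\Omega_\delta$ (the analogue of the gradient-norm check in Lemma~\ref{lem:xsample}), so this reduction is valid; they are not otherwise needed for the core estimate.

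It then remains to show $\tvd{\pi_x}{\pi_{x'}} \le \thalf$ when $\norm{x - x'}_2 \le \tfrac{\sqrt{\eta}}{10}$, which I would obtain from Pinsker after bounding a Kullback--Leibler divergence. The key point is that $\pi_x$ is an exponential tilt of the fixed $\eta^{-1}$-strongly logconcave base density $\propto \exp(-f(y) - \tfrac{1}{2\eta}\norm{y}_2^2)$ by the linear functional $\tfrac{1}{\eta}\inprod{x}{y}$, so $\log\tfrac{d\pi_x}{d\pi_{x'}}(y)$ is affine in $y$ with gradient $\tfrac{1}{\eta}(x - x')$. Taking expectations under $\pi_x$ and $\pi_{x'}$ and adding, the normalization terms cancel, leaving the identity
\[\textup{KL}(\pi_x \,\|\, \pi_{x'}) + \textup{KL}(\pi_{x'} \,\|\, \pi_x) = \frac{1}{\eta}\inprod{\by_x - \by_{x'}}{x - x'}, \quad \text{where } \by_x \defeq \E_{\pi_x}[y].\]
Next I would bound $\norm{\by_x - \by_{x'}}_2 \le \norm{x - x'}_2$: writing $\by_x$ as $\eta$ times the gradient in $x$ of the log-partition function, its Jacobian in $x$ is $\tfrac{1}{\eta}\textup{Cov}_{\pi_x}[y]$, and $\textup{Cov}_{\pi_x}[y] \preceq \eta \id$ by the Brascamp--Lieb inequality applied to the $\eta^{-1}$-strongly logconcave $\pi_x$ (using only convexity of $f$). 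Hence the symmetric divergence is at most $\tfrac{1}{\eta}\norm{x - x'}_2^2 \le \tfrac{1}{100}$, so $\textup{KL}(\pi_x \,\|\, \pi_{x'}) \le \tfrac{1}{100}$, and Pinsker yields $\tvd{\pi_x}{\pi_{x'}} \le \tfrac{1}{10} \le \thalf$, with ample slack to absorb the approximation errors from Line 6.

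The main obstacle is not the divergence estimate, which is short once the exponential-family structure is exploited, but the bookkeeping in the first step: $\yor$ is a rejection sampler with a low-probability fallback branch, so one must verify that on $\Omega_\delta$ its output distribution is genuinely within the stated tolerance of $\pi_x$, uniformly enough that the coupling survives, and that the residual errors of $\yor$ and $\oracle$ stay below the gap between $\tfrac1{10}$ and $\thalf$. A slightly more robust alternative for the second step, if one prefers to avoid differentiating the log-partition function, is to invoke Proposition~\ref{prop:min_perturb}: for $\eta \ll L^{-1}$ and $x \in \Omega_\delta$, $\pi_x$ is a bounded perturbation of a Gaussian, the modes $y^*_x = \argmin_y\{f(y) + \tfrac1{2\eta}\norm{y - x}_2^2\}$ satisfy $\norm{y^*_x - y^*_{x'}}_2 \le \norm{x - x'}_2$ by nonexpansiveness of the proximal map, and one compares the two perturbed Gaussians directly; this is where the hypotheses $\eta L^2 R_\delta^2 \le \thalf$ and $400 d^2 \eta \le R_\delta^2$ would genuinely enter.
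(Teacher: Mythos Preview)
Your core argument is correct and takes a genuinely different---and cleaner---route than the paper's. Both proofs begin identically: by Observation~\ref{observe:pvst} it suffices to bound $\tvd{\prop_x}{\prop_{x'}}$ where $\prop_x = \pi_x$ is the $y$-sampling density \eqref{eq:pixdef}, and then Pinsker reduces to a KL bound. From there the two diverge. The paper computes the one-sided $d_{\text{KL}}(\prop_x, \prop_{x'})$ directly, splitting into a log-partition-ratio term bounded via Lemma~\ref{lem:norm_ratio_closepts} (which in turn rests on proximal nonexpansiveness, Lemma~\ref{lem:ycloser}) and a cross-entropy term bounded via the structural result Proposition~\ref{prop:min_perturb} asserting $\norm{\E_{\pi_x}[y] - x}_2 \le 2\eta LR_\delta$. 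All three hypotheses $\eta L \le 1$, $\eta L^2 R_\delta^2 \le \thalf$, $400d^2\eta \le R_\delta^2$ are genuinely consumed by Proposition~\ref{prop:min_perturb}. Your symmetrized-KL identity makes the partition functions cancel and reduces everything to $\norm{\bar y_x - \bar y_{x'}}_2 \le \norm{x - x'}_2$, which follows from the Brascamp--Lieb covariance bound using only convexity of $f$; this bypasses both auxiliary lemmas and shows that the three hypotheses are in fact unnecessary for the conclusion. Your ``alternative'' sketch at the end (nonexpansive proximal map plus Proposition~\ref{prop:min_perturb}) is essentially the paper's actual proof.

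Two minor corrections. First, the hypotheses are not what makes $\yor$ valid on $\Omega_\delta$; the $\yor$ guarantee (Lemma~\ref{lem:yor}) already holds on a region much larger than $\Omega_\delta$ for the specific $\eta$ fixed in Algorithm~\ref{alg:sjd}, so that attribution is off---in the paper the hypotheses are there solely because its proof technique needs them. Second, your bookkeeping worry is moot: the paper proves this lemma for exact $\yor$ (see Corollary~\ref{corr:sjdmix} and the proof of Proposition~\ref{prop:sjdguarantee}, where the $\yor$ inexactness is absorbed afterwards via a separate coupling argument), so no approximation-error tracking is required here.
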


\begin{restatable}[Isoperimetry]{lemma}{restateiso}\label{lem:iso}
	Density $\pih$ and set $\Omega_\delta$ defined in \eqref{eq:pihdef}, \eqref{eq:omegadelta} satisfy \eqref{eq:logiso} with $\psi = 8\mu^{-\half}$.
\end{restatable}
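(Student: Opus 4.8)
The plan is: (i) identify the $x$-marginal of $\pih$ explicitly, (ii) show it is $\Omega(\mu)$-strongly logconcave despite the fact that the only \emph{manifestly} strongly convex term in its potential is the tiny quadratic $\tfrac{\eta L^2}{2}\norm{x-x^*}_2^2$, (iii) pass to the convex restriction $\Omega_\delta$, and (iv) invoke a standard log-isoperimetric inequality for strongly logconcave measures.

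For (i), integrating the $y$ variable out of \eqref{eq:pihdef} shows the $x$-marginal has density $\propto \exp\big(-g(x) - \tfrac{\eta L^2}{2}\norm{x-x^*}_2^2 - \phi(x)\big)$, where $\phi(x) \defeq -\log\int_y \exp\big(-f(y) - \tfrac{1}{2\eta}\norm{y-x}_2^2\big)\,dy$ is $-\log$ of the normalizing constant of the distribution $\pi_x$ from \eqref{eq:pixdef}. Step (ii) is the heart of the argument. Differentiating under the integral gives $\nabla\phi(x) = \tfrac1\eta\big(x - \E_{\pi_x}[y]\big)$ and hence $\nabla^2\phi(x) = \tfrac1\eta\big(\id - \tfrac1\eta\,\mathrm{Cov}_{\pi_x}[y]\big)$, using the exponential-family identity $\nabla_x\E_{\pi_x}[y] = \tfrac1\eta\mathrm{Cov}_{\pi_x}[y]$. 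Since $f$ is $\mu$-strongly convex, $\pi_x$ is $(\mu+\tfrac1\eta)$-strongly logconcave, so applying Fact~\ref{fact:convexshrink} with the convex functions $z\mapsto(v^\top z)^2$, $\norm{v}_2 = 1$, yields $\mathrm{Cov}_{\pi_x}[y]\preceq (\mu+\tfrac1\eta)^{-1}\id$; therefore $\tfrac1\eta\mathrm{Cov}_{\pi_x}[y]\preceq \tfrac{1}{1+\eta\mu}\id$ and $\nabla^2\phi(x)\succeq \tfrac{\mu}{1+\eta\mu}\id$ for every $x$.

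Adding back the convex term $g$ and the $\eta L^2$-strongly convex quadratic only strengthens this, so $-\log\tfrac{d\pih}{dx}$ is $\tfrac{\mu}{1+\eta\mu}$-strongly convex; since the choice $\eta \le \tfrac{1}{32L\kappa d\log(16\kappa/\delta)}$ forces $\eta\mu\le\tfrac1{32}$, the marginal $\pih$ is at least $\tfrac{\mu}{2}$-strongly logconcave. Restricting to the convex set $\Omega_\delta$ adds an (extended-valued) convex indicator to the potential, so $\pih_{\Omega_\delta}$ is $\tfrac\mu2$-strongly logconcave as well. Finally, for step (iv) I would invoke the log-isoperimetric inequality for $m$-strongly logconcave distributions — an inequality of the exact form \eqref{eq:logiso} with $\psi = \Theta(m^{-1/2})$, which follows from the Gaussian isoperimetric profile by a comparison argument and is the strongly-logconcave special case of the isoperimetry used in \cite{DwivediCW018, ChenDWY19} — and apply it with $m \ge \tfrac\mu2$, so that tracking the universal constant gives $\psi = 8\mu^{-1/2}$.

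The main obstacle is step (ii): obtaining the right order of strong logconcavity. The naive bound using only the explicit quadratic gives $\psi = O\big((\eta L^2)^{-1/2}\big) = \tO(\sqrt{d/\mu})$, which is polynomially too weak; recovering the sharp $O(\mu^{-1/2})$ requires the covariance-of-the-conditional computation together with a Brascamp--Lieb-type covariance bound, supplied here by Hargé's inequality (Fact~\ref{fact:convexshrink}). The remaining steps are routine.
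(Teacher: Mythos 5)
Your argument is correct, but it takes a genuinely different route from the paper. The paper never establishes strong logconcavity of the $x$-marginal of $\pih$; instead it reuses the relative-density bounds already proved in Corollary~\ref{corr:densityratiodelta}, which show $\tfrac{d\pi}{d\pih}(x)\in[\thalf,2]$ on $\Omega_\delta$. It then applies Lemma~\ref{lem:logisostrongly} to the $\mu$-strongly logconcave restriction $\pi_{\Omega_\delta}$ and transfers the isoperimetric inequality to $\pih_{\Omega_\delta}$ at the cost of a constant factor (the $2$'s compound, plus a lossy step $\sqrt{\log(1+\alpha)}\le 2\sqrt{\log(1+\alpha/2)}$, yielding $\psi=8\mu^{-1/2}$). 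You instead prove directly that the $x$-marginal of $\pih$ is strongly logconcave: integrating out $y$ leaves the log-partition term $\phi(x)$, and the computation $\nabla^2\phi(x)=\tfrac1\eta\bigl(\id-\tfrac1\eta\mathrm{Cov}_{\pi_x}[y]\bigr)\succeq\tfrac{\mu}{1+\eta\mu}\id$, using the Brascamp--Lieb/Harg\'e covariance bound $\mathrm{Cov}_{\pi_x}[y]\preceq(\mu+\tfrac1\eta)^{-1}\id$ (the paper's Corollary~\ref{corr:slcmomentbounds}(1)), gives $\ge\tfrac\mu2$-strong logconcavity for $\eta\mu\le 1$; restricting to the convex ball $\Omega_\delta$ preserves this, and Lemma~\ref{lem:logisostrongly} applies directly. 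Your route actually yields the sharper constant $\psi=\sqrt{2/\mu}\le 8\mu^{-1/2}$ (so the stated bound holds a fortiori), requires only $\eta\mu\le 1$ rather than the much smaller $\eta$ needed for the density-ratio bounds, and exposes the structural fact --- which the paper leaves implicit --- that the Gaussian convolution step does not degrade strong logconcavity. The paper's version is more economical given it already had Corollary~\ref{corr:densityratiodelta} in hand for other purposes and avoids invoking Harg\'e here.
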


We note that the parameters of Algorithm~\ref{alg:sjd} and the set $\Omega_\delta$ in \eqref{eq:omegadelta} satisfy all assumptions of Lemmas~\ref{lem:warmstart},~\ref{lem:tv_closepts}, and~\ref{lem:iso}. By combining these results in the context of Proposition~\ref{prop:conductviaisotv}, we see that the random walk satisfies the bound for all $v \in (0, \thalf]$:
\[\Phi_{\Omega_\delta}(v) \ge \sqrt{\frac{\eta\mu}{2^{20}\cdot100}\cdot\log\left(\frac{1}{v}\right)}.\]
Plugging this conductance lower bound, the high-probability guarantee of $\Omega_\delta$ by Lemma~\ref{lem:omegahighprob}, and the warm start bound of Lemma~\ref{lem:warmstart} into Proposition~\ref{prop:mixviaconduct}, we have the following conclusion.

\begin{corollary}[Mixing time of ideal $\sjd$]\label{corr:sjdmix}
	Assume that calls to $\yor$ are exact in the implementation of $\sjd$. Then, for any error parameter $\delta$, and
	\[K \defeq \frac{2^{26}\cdot100}{\eta\mu}\log\left(\frac{d\log(16\kappa)}{4\delta}\right),\]
	the distribution of $x_K$ has total variation at most $\tfrac{\delta}{2}$ from $\pih$.
\end{corollary}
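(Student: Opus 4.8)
The plan is to run the conductance-based mixing argument of Propositions~\ref{prop:mixviaconduct} and~\ref{prop:conductviaisotv}, driven by the three structural estimates Lemmas~\ref{lem:warmstart},~\ref{lem:tv_closepts}, and~\ref{lem:iso}. First I would observe that, in the idealized setting where every call to $\yor$ returns an exact sample from \eqref{eq:pixdef}, one step of Lines 6--7 of Algorithm~\ref{alg:sjd} alternately resamples the $y$-marginal and then the $x$-marginal of \eqref{eq:pihdef}; hence, by Lemma~\ref{lem:alternate_exact}, the $x$-marginal of $\pih$ is the stationary distribution of the resulting Markov chain on the iterates $\{x_k\}$. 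Write $\tran_x$ for the density of $x_k$ given $x_{k-1} = x$.

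Next I would check that the concrete parameters of $\sjd$ --- namely $\eta = \tfrac{1}{32L\kappa d\log(16\kappa/\delta)}$ from Line 1 and $R_\delta = 4\sqrt{d\log(16\kappa/\delta)/\mu}$ from \eqref{eq:omegadelta} --- validate all hypotheses of the three lemmas. This is a routine substitution: one obtains $\eta \le \tfrac{1}{L\kappa d}$ (needed for Lemma~\ref{lem:warmstart}), as well as $\eta L \le 1$, $\eta L^2 R_\delta^2 = \thalf$, and $400 d^2 \eta \le R_\delta^2$ (needed for Lemma~\ref{lem:tv_closepts}); Lemma~\ref{lem:iso} carries no parameter constraint. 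Granting these, Lemma~\ref{lem:tv_closepts} gives $\tvd{\tran_x}{\tran_{x'}} \le \thalf$ whenever $x, x' \in \Omega_\delta$ with $\norm{x - x'}_2 \le \tfrac{\sqrt{\eta}}{10}$, and Lemma~\ref{lem:iso} gives the log-isoperimetric inequality \eqref{eq:logiso} for $\pih$ conditioned to $\Omega_\delta$ with $\psi = 8\mu^{-1/2}$. Feeding $\Delta = \tfrac{\sqrt{\eta}}{10}$ and this $\psi$ into Proposition~\ref{prop:conductviaisotv} then yields, for all $v \in (0, \thalf]$,
\[\Phi_{\Omega_\delta}(v) \ge \frac{\Delta}{128\psi}\sqrt{\log\Par{\frac1v}} = \frac{\sqrt{\eta\mu}}{10240}\sqrt{\log\Par{\frac1v}} = \sqrt{\frac{\eta\mu}{2^{20}\cdot 100}\log\Par{\frac1v}},\]
using $10240^2 = 2^{20}\cdot 100$ in the last step; i.e.\ $\Phi_{\Omega_\delta}(v) \ge \sqrt{B\log(1/v)}$ with $B = \tfrac{\eta\mu}{2^{20}\cdot 100}$.

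Finally I would invoke Proposition~\ref{prop:mixviaconduct}. By Lemma~\ref{lem:warmstart}, $\pistart$ of \eqref{eq:pistartdef} is a $\beta$-warm start for $\pih$ with $\beta = 2(1+\kappa)^{d/2}$, and Lemma~\ref{lem:omegahighprob} gives $\pih(\Omega_\delta) \ge 1 - \tfrac{\delta^2}{8(1+\kappa)^d} = 1 - \tfrac{\delta^2}{2\beta^2}$, matching the hypothesis; the conductance bound above certainly holds on $[\tfrac{4}{\beta}, \thalf]$. Proposition~\ref{prop:mixviaconduct} then certifies total variation at most $\tfrac{\delta}{2}$ after any $K \ge \tfrac{64}{B}\log\Par{\tfrac{\log\beta}{2\delta}}$ steps, and since $\tfrac{64}{B} = \tfrac{2^{26}\cdot 100}{\eta\mu}$ while $\log\Par{\tfrac{\log\beta}{2\delta}} \le \log\Par{\tfrac{d\log(16\kappa)}{4\delta}}$ (as $\log\beta = \log 2 + \tfrac{d}{2}\log(1+\kappa) \le \tfrac{d}{2}\log(16\kappa)$ for $d \ge 1$, $\kappa \ge 1$), the stated choice of $K$ satisfies this inequality, which is exactly the claim. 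The only real work is the bookkeeping in the second paragraph --- verifying the parameter inequalities and tracking constants so that $(\tfrac{\Delta}{128\psi})^2$ collapses to $\tfrac{\eta\mu}{2^{20}\cdot 100}$ and $\tfrac{64}{B}$ to $\tfrac{2^{26}\cdot 100}{\eta\mu}$ --- since all conceptual content has been pushed into the three deferred lemmas, which is precisely why they are proved separately.
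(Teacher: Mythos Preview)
Your proposal is correct and follows essentially the same route as the paper: verify that the algorithm parameters meet the hypotheses of Lemmas~\ref{lem:warmstart},~\ref{lem:tv_closepts},~\ref{lem:iso}, feed $\Delta = \tfrac{\sqrt{\eta}}{10}$ and $\psi = 8\mu^{-1/2}$ into Proposition~\ref{prop:conductviaisotv} to obtain $B = \tfrac{\eta\mu}{2^{20}\cdot 100}$, then apply Proposition~\ref{prop:mixviaconduct} with the warm start of Lemma~\ref{lem:warmstart} and the high-probability bound of Lemma~\ref{lem:omegahighprob}. Your constant tracking (in particular $10240^2 = 2^{20}\cdot 100$, $\tfrac{64}{B} = \tfrac{2^{26}\cdot 100}{\eta\mu}$, and $\log\beta \le \tfrac{d}{2}\log(16\kappa)$) is accurate and slightly more explicit than the paper's own write-up.
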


\subsubsection{Complexity of $\sjd$}
\label{sssec:runtime}

We first state a guarantee on the subroutine $\yor$, which we prove in Appendix~\ref{ssec:yoracle}.

\begin{restatable}[$\yor$ guarantee]{lemma}{restateyor}\label{lem:yor} For $\delta \in [0, 1]$, define $R_\delta$ as in \eqref{eq:omegadelta}, and let $\eta = \tfrac{1}{32L\kappa d\log(16\kappa/\delta)}$. For any $x$ with $\norm{x - x^*}_2 \le \sqrt{\kappa d\log(16\kappa/\delta)}\cdot R_\delta$, Algorithm~\ref{alg:yor} ($\yor$) draws an exact sample $y$ from the density proportional to $\exp\left(-f(y) - \tfrac{1}{2\eta}\norm{y - x}_2^2\right)$ in an expected $2$ iterations.
\end{restatable}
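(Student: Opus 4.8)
The plan is to recognize $\yor$ (Algorithm~\ref{alg:yor}) as an instance of rejection sampling in the sense of Lemma~\ref{lem:reject}, with proposal the Gaussian obtained by linearizing $f$ at $x$. Write the (unnormalized) target of \eqref{eq:pixdef} as $p(y) \defeq \exp(-f(y) - \tfrac{1}{2\eta}\norm{y - x}_2^2)$ and set
\[\hp(y) \defeq \exp\Par{-f(x) - \inprod{\nabla f(x)}{y - x} - \frac{1}{2\eta}\norm{y - x}_2^2}.\]
Completing the square shows $\hp$ is proportional to the density of $\Nor\Par{x - \eta\nabla f(x),\, \eta\id}$ --- precisely the proposal drawn inside $\yor$ --- and the acceptance probability $\exp(f(x) + \inprod{\nabla f(x)}{y - x} - f(y))$ equals $p(y)/\hp(y)$. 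By convexity of $f$, $f(y) \ge f(x) + \inprod{\nabla f(x)}{y - x}$, so $p(y)/\hp(y) \le 1$ everywhere; hence $C = 1$ is a valid constant in Lemma~\ref{lem:reject}, which gives that the output is an \emph{exact} sample from $\pi_x \propto p$.

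For the expected iteration count, Lemma~\ref{lem:reject} (with $C = 1$) says the expected number of runs equals the ratio $\int \hp(y)\,dy / \int p(y)\,dy$. I would evaluate the numerator exactly with Fact~\ref{fact:gaussz}, obtaining $\int \hp(y)\,dy = \exp\Par{-f(x) + \tfrac{\eta}{2}\norm{\nabla f(x)}_2^2}(2\pi\eta)^{d/2}$, and lower bound the denominator using $L$-smoothness, $f(y) \le f(x) + \inprod{\nabla f(x)}{y - x} + \tfrac{L}{2}\norm{y - x}_2^2$, together with Fact~\ref{fact:gaussz} again, to get $\int p(y)\,dy \ge \exp\Par{-f(x) + \tfrac{\eta}{2(1+\eta L)}\norm{\nabla f(x)}_2^2}\Par{\tfrac{2\pi\eta}{1+\eta L}}^{d/2}$. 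Dividing, the expected number of iterations is at most $\Par{1+\eta L}^{d/2}\exp\Par{\tfrac{\eta^2 L}{2(1+\eta L)}\norm{\nabla f(x)}_2^2}$ --- the same expression appearing in the estimator $\hat\theta$ of $\cssm$ --- and it remains to bound this by $2$.

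Here the hypothesis enters. Since the inputs to $\sjd$ satisfy $\nabla f(x^*) = \mzero$, $L$-smoothness gives $\norm{\nabla f(x)}_2 \le L\norm{x - x^*}_2$, and the assumption $\norm{x - x^*}_2 \le \sqrt{\kappa d\log(16\kappa/\delta)}\,R_\delta$ with $R_\delta$ as in \eqref{eq:omegadelta} yields $\norm{\nabla f(x)}_2^2 \le 16 L^2 (\kappa/\mu)\, d^2\log^2(16\kappa/\delta)$. Plugging in $\eta = \tfrac{1}{32 L\kappa d\log(16\kappa/\delta)}$ and using $\kappa\mu = L$, one computes $\tfrac{\eta^2 L}{2(1+\eta L)}\norm{\nabla f(x)}_2^2 \le \tfrac{1}{128}$; and since $\eta L d = \tfrac{1}{32\kappa\log(16\kappa/\delta)} \le \tfrac{1}{32}$, one has $(1+\eta L)^{d/2} \le e^{\eta L d/2} \le e^{1/64}$. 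The product is $\le e^{3/128} < 2$, giving the expected iteration bound, and combined with the exactness established above this proves the lemma.

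The argument is essentially routine; the only real care is the constant bookkeeping in the last step, where the dimension-dependent factor $(1+\eta L)^{d/2}$ and the gradient-dependent exponential must both be tamed by the choice of $\eta$ being smaller than $L^{-1}$ by a factor polynomial in $\kappa$, $d$, and $\log(1/\delta)$. If $\yor$ also carries a guard on $\norm{\nabla f(x)}_2$ in the spirit of Line 1 of $\XSample$ (Algorithm~\ref{alg:xsample}), one additionally observes that under the stated hypothesis on $\norm{x - x^*}_2$ this guard always passes, so no fallback branch is invoked.
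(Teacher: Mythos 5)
Your proposal matches the paper's proof essentially step for step: both recognize $\yor$ as rejection sampling in the sense of Lemma~\ref{lem:reject} with the linearized proposal $\hp$, take $C=1$ by convexity, evaluate the two normalization integrals via Fact~\ref{fact:gaussz} and smoothness, and then bound $(1+\eta L)^{d/2}\exp\bigl(\tfrac{\eta^2 L}{2(1+\eta L)}\norm{\nabla f(x)}_2^2\bigr)$ by $2$ using $\norm{\nabla f(x)}_2 \le L\norm{x-x^*}_2$ and the hypothesis on $\norm{x-x^*}_2$ together with the choice of $\eta$. Your constant-tracking $(e^{3/128} < 2)$ is in fact slightly tighter than the paper's $1.5\cdot e^{1/128}\le 2$, but the argument is the same.
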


We also state a result due to \cite{ChenDWY19}, which bounds the mixing time of 1-step Metropolized HMC for well-conditioned distributions; this handles the case when $\norm{x - x^*}_2$ is large in Algorithm~\ref{alg:yor}.

\begin{proposition}[Theorem 1, \cite{ChenDWY19}]\label{prop:atmostd}
	Let $\pi$ be a distribution on $\R^d$ whose negative log-density is convex and has condition number bounded by a constant. Then, Metropolized HMC from an explicit starting distribution mixes to total variation $\delta$ to the distribution $\pi$ in $O(d\log(\tfrac{d}{\delta}))$ iterations.
\end{proposition}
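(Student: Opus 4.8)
The plan is to prove this by the average-conductance method, reusing the machinery already developed in the excerpt: Proposition~\ref{prop:mixviaconduct} bounds the mixing time of a random walk in terms of a restricted-conductance profile and a warm start, and Proposition~\ref{prop:conductviaisotv} lower-bounds that profile given (a) a log-isoperimetric inequality for the stationary distribution restricted to a convex set $\Omega$ and (b) a transition-overlap bound $\tvd{\tran_x}{\tran_{x'}}\le\thalf$ for nearby $x,x'\in\Omega$. Write $\tfrac{d\pi}{dx}(x)\propto\exp(-f(x))$ with $f$ being $L$-smooth and $\mu$-strongly convex, $\kappa=L/\mu=O(1)$ (so after rescaling coordinates we may treat $\mu,L=\Theta(1)$), and let $x^*=\argmin f$. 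Metropolized HMC with one leapfrog step of size $\varepsilon$ has Gaussian proposal $\prop_x=\Nor(x-\tfrac{\varepsilon^2}{2}\nabla f(x),\varepsilon^2\id)$ followed by a Metropolis filter; I will take the leapfrog step $\varepsilon^2=\widetilde{\Theta}(1/d)$ and the comparison radius $\Delta=\Theta(\varepsilon)$.

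First, the warm start: take $\pistart=\Nor(x^*,L^{-1}\id)$, which by the argument of Lemma~\ref{lem:warmstartcomp} specialized to $g\equiv 0$ is $\kappa^{d/2}$-warm, so $\log\beta=O(d)$. Second, the isoperimetry: set $\Omega=\{x:\norm{x-x^*}_2\le R\}$ with $R=\Theta(\sqrt{d\log(d/\delta)/\mu})$, which by Fact~\ref{fact:subgauss} captures all but $\delta^2/(2\beta^2)$ of the mass of $\pi$; since $\pi$ is $\mu$-strongly logconcave, its restriction to the convex set $\Omega$ satisfies the log-isoperimetric inequality \eqref{eq:logiso} with $\psi=O(\mu^{-1/2})=O(1)$, by the same strong-logconcavity argument used in Lemma~\ref{lem:iso}.

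The crux is the transition-overlap bound, and this is the step I expect to be the main obstacle. As in Lemma~\ref{lem:tv_dist_finite}, decompose $\tvd{\tran_x}{\tran_{x'}}\le\tvd{\tran_x}{\prop_x}+\tvd{\prop_x}{\prop_{x'}}+\tvd{\prop_{x'}}{\tran_{x'}}$. The middle term is a Gaussian comparison: by Pinsker and the closed-form KL of equal-covariance Gaussians, $\tvd{\prop_x}{\prop_{x'}}\le\varepsilon^{-1}\norm{(x-x')-\tfrac{\varepsilon^2}{2}(\nabla f(x)-\nabla f(x'))}_2\le\varepsilon^{-1}(1+\tfrac{\varepsilon^2 L}{2})\Delta$, which is $\le\tfrac18$ once $\Delta=\Theta(\varepsilon)$ is chosen small enough. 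For the outer terms, use the Metropolis identity $\tvd{\tran_x}{\prop_x}=1-\E_{y\sim\prop_x}[\alpha(x,y)]$ and show the acceptance probability $\alpha(x,y)=\min\{1,\exp(-\Delta H(x,y))\}$ satisfies $\alpha(x,y)\ge\tfrac34$ with probability at least $\tfrac{15}{16}$ over the proposal noise $\xi$, which yields $\tvd{\tran_x}{\prop_x}\le\tfrac38$ exactly as in Lemma~\ref{lem:tv_dist_finite}. Here the leapfrog energy error $\Delta H$ expands, using $L$-smoothness of $f$, into a sum of terms bounded up to constants by $\varepsilon^2|\inprod{\nabla f(x)}{\xi}|$, $\varepsilon^3\norm{\nabla f(x)}_2\norm{\xi}_2$, $\varepsilon^2 L\norm{\xi}_2^2$, and $\varepsilon^4 L\norm{\nabla f(x)}_2^2$; each concentrates via the sub-Gaussian tail bounds of Fact~\ref{fact:subgauss} applied to $\norm{\xi}_2\le C_\xi\sqrt d$ and $|\inprod{\nabla f(x)}{\xi}|\le C_x\norm{\nabla f(x)}_2$, together with $\norm{\nabla f(x)}_2\le LR$ valid on $\Omega$ (by $\nabla f(x^*)=0$ and smoothness). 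Choosing $\varepsilon^2=\Theta\big((L\kappa d\log(d/\delta))^{-1}\big)=\widetilde{\Theta}(1/d)$ makes every such term $O(1)$ in magnitude, so $\Delta H=O(1)$ and $\alpha$ is bounded below by a constant. The delicate part is precisely this balancing: any polynomially smaller $\varepsilon$ degrades the final iteration count, and we want to rely only on $L$-smoothness rather than on a Lipschitz-Hessian assumption (this is why single-leapfrog HMC, not a multi-step variant, suffices for the constant-$\kappa$ case).

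Finally, assemble the pieces: Proposition~\ref{prop:conductviaisotv} with $\Delta=\Theta(\varepsilon)$ and $\psi=O(1)$ gives $\Phi_\Omega(v)\ge\tfrac{\Delta}{128\psi}\sqrt{\log(1/v)}$, i.e.\ a conductance profile with parameter $B=\Theta(\Delta^2/\psi^2)=\widetilde{\Theta}(1/d)$; feeding this and $\log\beta=O(d)$ into Proposition~\ref{prop:mixviaconduct} yields mixing to total variation $\delta$ in $K=O\!\big(B^{-1}\log(\log\beta/\delta)\big)=\tO(d)$ iterations from the explicit start $\pistart$, and folding the logarithmic factors as in \cite{ChenDWY19} gives the stated $O(d\log(d/\delta))$ bound. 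Since each iteration evaluates $f$ and $\nabla f$ a constant number of times, this is also the gradient-query complexity, which is how the result is invoked elsewhere in the paper.
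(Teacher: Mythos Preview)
The paper does not prove this proposition: it is stated verbatim as Theorem~1 of \cite{ChenDWY19} and invoked only as a black box (in Line~10 of Algorithm~\ref{alg:xsample} and in the complexity analysis of $\yor$). There is therefore no ``paper's own proof'' to compare your attempt against.

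That said, your sketch is a faithful outline of the approach \cite{ChenDWY19} actually takes: warm start from $\Nor(x^*,L^{-1}\id)$, log-isoperimetry of the strongly logconcave target restricted to a high-probability ball, a triangle-inequality decomposition $\tvd{\tran_x}{\tran_{x'}}\le\tvd{\tran_x}{\prop_x}+\tvd{\prop_x}{\prop_{x'}}+\tvd{\prop_{x'}}{\tran_{x'}}$, and a bound on the leapfrog energy error to control the rejection probability. Two small caveats. First, with your choice $\varepsilon^2=\Theta((Ld\log(d/\delta))^{-1})$ the conductance parameter is $B=\Theta(\varepsilon^2\mu)=\Theta((d\log(d/\delta))^{-1})$, so Proposition~\ref{prop:mixviaconduct} yields $K=O(d\log(d/\delta)\cdot\log(\log\beta/\delta))$; to land exactly on $O(d\log(d/\delta))$ rather than $O(d\log^2(d/\delta))$ one needs the sharper step-size tuning in \cite{ChenDWY19} (using that the $\log(d/\delta)$ in $R$ arises only through the warm-start exponent, which is already absorbed into $\log\beta$). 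Second, your list of terms in the leapfrog energy error $\Delta H$ is schematic; the actual expansion in \cite{ChenDWY19} tracks the full momentum update and uses that the leading first-order terms cancel by symplecticity, leaving an $O(\varepsilon^2)$ remainder controllable under $L$-smoothness alone. Neither point is a genuine gap---the structure of your argument is right---but since the paper simply imports the result, the precise bookkeeping lives in \cite{ChenDWY19}, not here.
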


\restatesjdguarantee*
\begin{proof}
	Under an exact $\yor$, Corollary~\ref{corr:sjdmix} shows the output distribution of $\sjd$ has total variation at most $\tfrac{\delta}{2}$ from $\pih$. Next, the resulting distribution of the subroutine $\yor$ is never larger than $\delta/(2Kd\log(\frac{d\kappa}{\delta}))$ in total variation distance away from an exact sampler. By running for $K$ steps, and using the coupling characterization of total variation, it follows that this can only incur additional error $\delta/(2d\log(\frac{d\kappa}{\delta}))$, proving correctness (in fact, the distribution is always at most $O((d\log(d\kappa/\delta))^{-1})$ away in total variation from an exact $\yor$).
	
	Next, we prove the guarantee on the expected gradient evaluations per iteration. Lemma~\ref{lem:yor} shows whenever the current iterate $x_k$ has $\norm{x - x^*}_2 \le \sqrt{\kappa d\log(16\kappa/\delta)} \cdot R_{\delta}$, the expected number of gradient evaluations is constant, and moreover Proposition~\ref{prop:atmostd} shows that the number of gradient evaluations is never larger than $O(d\log(\tfrac{d\kappa}{\delta}))$, where we use that the condition number of the log-density in \eqref{eq:pixdef} is bounded by a constant. Therefore, it suffices to show in every iteration $0 \le k \le K$, the probability $\norm{x_k - x^*}_2 > \sqrt{\kappa d\log(16\kappa/\delta)} \cdot R_{\delta}$ is $O((d\log(d\kappa/\delta))^{-1})$. By the warmness assumption in Lemma~\ref{lem:warmstart}, and the concentration bound in Fact~\ref{fact:subgauss}, the probability $x_0$ does not satisfy this bound is negligible (inverse exponential in $\kappa d^2\log(\kappa/\delta)$). Since warmness is monotonically decreasing with an exact sampler,\footnote{This fact is well-known in the literature, and a simple proof is that if a distribution is warm, then taking one step of the Markov chain induces a convex combination of warm point masses, and is thus also warm.} and the accumulated error due to inexactness of $\yor$ is at most $O((d\log(d\kappa/\delta))^{-1})$ through the whole algorithm, this holds for all iterations.
\end{proof} 	

\section{Mixing time ingredients}
\label{sec:ingredients}

We now prove facts which are used in the mixing time analysis of $\sjd$. Throughout this section, as in the specification of $\sjd$, $f$ and $g$ are functions with properties as in \eqref{eq:pidef}, and share a minimizer $x^*$.

\subsection{Warm start}
\label{ssec:warmstart}
We show that we obtain a warm start for the distribution $\pih$ in algorithm $\sjd$ via one call to the restricted Gaussian oracle for $g$, by proving Lemma~\ref{lem:warmstart}.

\restatewarmstart*
\begin{proof}
	By the definitions of $\pih$ and $\pistart$ in \eqref{eq:pihdef}, \eqref{eq:pistartdef}, we wish to bound everywhere the quantity
	\begin{equation}\label{eq:warmness}\frac{d\pistart}{d\pih}(x) = \frac{Z_{\pih}}{\zstart} \cdot  \frac{\exp\left(-\frac{L}{2}\norm{x - x^*}_2^2-\frac{\eta L^2}{2}\norm{x - x^*}_2^2 - g(x)\right)}{\int_y\exp\left(-f(y) - g(x) - \frac{1}{2\eta}\norm{y - x}_2^2 - \frac{\eta L^2}{2}\norm{x - x^*}_2^2\right) dy}. \end{equation}
	Here, $Z_{\pih}$ is as in Definition~\ref{def:normconst}, and we let $\zstart$ denote the normalization constant of $\pistart$, i.e.
	\[\zstart \defeq  \int_x \exp\left(-\frac{L}{2}\norm{x - x^*}_2^2 - \frac{\eta L^2}{2}\norm{x - x^*}_2^2- g(x)\right) dx.\]
	Regarding the first term of \eqref{eq:warmness}, 
	the earlier derivation \eqref{eq:usefulbound} showed
	\[
	\int_y \exp\left(-f(y) - g(x) - \frac{1}{2\eta}\norm{y - x}_2^2 - \frac{\eta L^2}{2}\norm{x - x^*}_2^2\right)dy 
	\le (2\pi\eta)^{\frac{d}{2}}\exp\left(-f(x) - g(x)\right).
	\]
	Then, integrating, we can bound the ratio of the normalization constants
	\begin{equation}
	\label{eq:zhzstart}
	\begin{aligned}
	\frac{Z_{\pih}}{Z_{\pistart}}  &\leq \frac{ \int_x(2\pi\eta)^{\frac{d}{2}}\exp\left(-f(x) - g(x)\right)dx}{ \int_x \exp\left(-\frac{L}{2}\norm{x - x^*}_2^2 - \frac{\eta L^2}{2}\norm{x - x^*}_2^2- g(x)\right) dx}  \\
	&\leq \frac{ \int_x(2\pi\eta)^{\frac{d}{2}}\exp\left(-f(x^*)-\frac{\mu}{2}\norm{x - x^*}_2^2 - g(x)\right)dx}{ \int_x \exp\left(-\frac{L}{2}\norm{x - x^*}_2^2 - \frac{\mu}{2}\norm{x - x^*}_2^2- g(x)\right) dx} \\
	&\leq (2\pi\eta)^{\frac{d}{2}}\exp\left(-f(x^*)\right) \left( 1 + \frac{L}{\mu}\right)^{\frac{d}{2}}.
	\end{aligned}
	\end{equation}
	The second inequality followed from $f$ is $\mu$-strongly convex and $\eta L^2 \leq \mu$ by assumption. The last inequality followed from Proposition \ref{prop:normalizationratio}, where we used  $\frac{\mu}{2}\norm{x - x^*}_2^2 + g(x)$ is $\mu$-strongly convex. Next, to bound the second term of \eqref{eq:warmness}, notice first that 
	\begin{align*}
	\frac{\exp\left(-\frac{L}{2}\norm{x - x^*}_2^2-\frac{\eta L^2}{2}\norm{x - x^*}_2^2 - g(x)\right)}{\int_y\exp\left(-f(y) - g(x) - \frac{1}{2\eta}\norm{y - x}_2^2 - \frac{\eta L^2}{2}\norm{x - x^*}_2^2\right) dy} 
	= \frac{\exp \left(-\frac{L}{2}\norm{x - x^*}_2^2\right) }{\int_y\exp\left(-f(y)  - \frac{1}{2\eta}\norm{y - x}_2^2\right) dy}.
	\end{align*}
	It thus suffices to lower bound $\exp \left(\frac{L}{2}\norm{x - x^*}_2^2\right)\int_y\exp\left(-f(y)  - \frac{1}{2\eta}\norm{y - x}_2^2 \right) dy$. We have
	\begin{equation}
	\label{eq:warmnessotherterm}
	\begin{aligned}
	\exp \left(\frac{L}{2}\norm{x - x^*}_2^2\right)\int_y\exp\left(-f(y)  - \frac{1}{2\eta}\norm{y - x}_2^2 \right) dy \\
	\geq \exp\left(-f(x)+\frac{ L}{2}\norm{x - x^*}_2^2  \right)\int_y\exp\left(- \langle \nabla f(x),y-x\rangle  - \left(\frac{1}{2\eta}+\frac{L}{2}\right)\norm{y-x}_2^2\right) dy \\
	= \exp\left(-f(x)+\frac{ L}{2}\norm{x - x^*}_2^2  \right) \left( \frac{2\pi\eta }{1+L\eta}\right)^{\frac{d}{2}} \exp\left(\frac{\eta}{2(1+L\eta )}\norm{\nabla f(x)}_2^2\right) \\
	\geq \exp(-f(x^*)) \left( \frac{2\pi\eta }{1+L\eta}\right)^{\frac{d}{2}}
	\end{aligned}
	\end{equation}
	The first and third steps followed from $L$-smoothness of $f$, and the second applied the Gaussian integral (Fact~\ref{fact:gaussz}). Combining the bounds in \eqref{eq:zhzstart} and \eqref{eq:warmnessotherterm}, \eqref{eq:warmness} becomes
	\begin{align*}
	\frac{d\pistart}{d\pih}(x) \leq  \left( 1 + \frac{L}{\mu}\right)^{\frac{d}{2}} \left( 1 + L\eta\right)^{\frac{d}{2}} \leq 2(1+\kappa)^{\frac{d}{2}},
	\end{align*}
	where $x \in \R^d$ was arbitrary, which completes the proof.
\end{proof}

\subsection{Transitions of nearby points}
\label{ssec:tvclose}

Here, we prove Lemma~\ref{lem:tv_closepts}. Throughout this section, $\tran_x$ is the density of $x_k$, according to the steps in Lines 6 and 7 of $\sjd$ (Algorithm~\ref{alg:sjd}) starting at $x_{k - 1} = x$. We also define $\prop_x$ to be the density of $y_k$, by just the step in Line 6. We first make a simplifying observation: by Observation~\ref{observe:pvst}, for any two points $x$, $x'$, we have
	\[\tvd{\tran_x}{\tran_{x'}} \le \tvd{\prop_x}{\prop_{x'}}.\]
Thus, it suffices to understand $\tvd{\prop_x}{\prop_{x'}}$ for nearby $x, x' \in \Omega_\delta$. Our proof of Lemma~\ref{lem:tv_closepts} combines two pieces: (1) bounding the ratio of normalization constants $Z_x$, $Z_{x'}$ of $\prop_x$ and $\prop_{x'}$ for nearby $x$, $x'$ in Lemma~\ref{lem:norm_ratio_closepts} and (2) the structural result Proposition~\ref{prop:min_perturb}. 
To bound the normalization constant ratio, we state two helper lemmas. Lemma~\ref{lem:ycloser} characterizes facts about the minimizer of 
\begin{equation}\label{eq:marginalfunc}f(y) + \frac{1}{2\eta}\norm{y - x}_2^2. \end{equation} 
\begin{lemma}
	\label{lem:ycloser}
	Let $f$ be convex with minimizer $x^*$, and $y_x$ minimize \eqref{eq:marginalfunc} for a given $x$. Then,
	\begin{enumerate}
		\item $\norm{y_x - y_{x'}}_2 \le \norm{x - x'}_2$.
		\item For any $x$, $\norm{y_x - x^*}_2 \le \norm{x - x^*}_2$.
		\item For any $x$ with $\norm{x - x^*}_2 \le R$, $\norm{x - y_x}_2 \le \eta LR$.
	\end{enumerate}
\end{lemma}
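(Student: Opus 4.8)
The statement to prove concerns $y_x \defeq \argmin_y\{f(y) + \tfrac{1}{2\eta}\norm{y-x}_2^2\}$, where $f$ is convex with minimizer $x^*$. All three claims follow from first-order optimality and monotonicity/nonexpansiveness of the proximal operator.

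For part 1, the map $x \mapsto y_x$ is exactly the proximal operator $\mathbf{prox}_{\eta f}$ applied to $x$. The plan is to use the standard fact that proximal operators of convex functions are firmly nonexpansive, hence $1$-Lipschitz: $\norm{y_x - y_{x'}}_2 \le \norm{x - x'}_2$. I would give a short self-contained derivation rather than citing: by first-order optimality, $\nabla f(y_x) + \tfrac{1}{\eta}(y_x - x) = 0$ and similarly for $x'$; subtracting gives $\nabla f(y_x) - \nabla f(y_{x'}) + \tfrac{1}{\eta}(y_x - y_{x'}) = \tfrac{1}{\eta}(x - x')$. Taking the inner product with $y_x - y_{x'}$ and using monotonicity of $\nabla f$ (i.e.\ $\inprod{\nabla f(y_x) - \nabla f(y_{x'})}{y_x - y_{x'}} \ge 0$) yields $\tfrac{1}{\eta}\norm{y_x - y_{x'}}_2^2 \le \tfrac{1}{\eta}\inprod{x - x'}{y_x - y_{x'}} \le \tfrac{1}{\eta}\norm{x-x'}_2\norm{y_x - y_{x'}}_2$ by Cauchy–Schwarz, and dividing gives the claim.

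Part 2 is the special case $x' = x^*$: since $x^*$ minimizes $f$, it also minimizes $f(y) + \tfrac{1}{2\eta}\norm{y - x^*}_2^2$ (both terms minimized at $x^*$), so $y_{x^*} = x^*$, and part 1 gives $\norm{y_x - x^*}_2 = \norm{y_x - y_{x^*}}_2 \le \norm{x - x^*}_2$. For part 3, I would use first-order optimality directly: $y_x - x = -\eta \nabla f(y_x)$, so $\norm{x - y_x}_2 = \eta\norm{\nabla f(y_x)}_2 = \eta\norm{\nabla f(y_x) - \nabla f(x^*)}_2 \le \eta L \norm{y_x - x^*}_2 \le \eta L \norm{x - x^*}_2 \le \eta L R$, using $L$-smoothness (gradient Lipschitzness), $\nabla f(x^*) = 0$, and part 2.

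The argument is entirely routine — there is no real obstacle. The only mild subtlety is that the paper allows $g$ (and sometimes $f$) to be non-smooth, but here $f$ is the $L$-smooth well-conditioned component (cf.\ \eqref{eq:pidef}), so $\nabla f$ exists and is $L$-Lipschitz, and the gradient-based first-order optimality conditions used above are valid. I would just note this at the start. Total length: a few lines per part.

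\begin{proof}
All three claims use first-order optimality of $y_x$, namely $\nabla f(y_x) + \tfrac{1}{\eta}(y_x - x) = 0$, which is valid since $f$ is smooth (cf.\ \eqref{eq:pidef}).

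For the first claim, writing the optimality conditions for $y_x$ and $y_{x'}$ and subtracting gives
\[\nabla f(y_x) - \nabla f(y_{x'}) + \frac{1}{\eta}\Par{y_x - y_{x'}} = \frac{1}{\eta}\Par{x - x'}.\]
Taking the inner product of both sides with $y_x - y_{x'}$, using convexity of $f$ (monotonicity of $\nabla f$) to drop the first term, and then Cauchy-Schwarz,
\[\frac{1}{\eta}\norm{y_x - y_{x'}}_2^2 \le \frac{1}{\eta}\inprod{x - x'}{y_x - y_{x'}} \le \frac{1}{\eta}\norm{x - x'}_2\norm{y_x - y_{x'}}_2.\]
Dividing through by $\tfrac{1}{\eta}\norm{y_x - y_{x'}}_2$ (the claim is trivial if this is zero) yields $\norm{y_x - y_{x'}}_2 \le \norm{x - x'}_2$.

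For the second claim, note that $x^*$ minimizes both $f$ and $\tfrac{1}{2\eta}\norm{y - x^*}_2^2$, hence minimizes their sum, so $y_{x^*} = x^*$. Applying the first claim with $x' = x^*$ gives $\norm{y_x - x^*}_2 = \norm{y_x - y_{x^*}}_2 \le \norm{x - x^*}_2$.

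For the third claim, first-order optimality gives $x - y_x = \eta \nabla f(y_x)$, so using $L$-smoothness, $\nabla f(x^*) = 0$, and the second claim,
\[\norm{x - y_x}_2 = \eta\norm{\nabla f(y_x)}_2 = \eta\norm{\nabla f(y_x) - \nabla f(x^*)}_2 \le \eta L\norm{y_x - x^*}_2 \le \eta L\norm{x - x^*}_2 \le \eta L R.\]
\end{proof}
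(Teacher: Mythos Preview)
Your proof is correct. Parts 2 and 3 are essentially identical to the paper's argument. For part 1, however, the paper takes a different route: rather than using monotonicity of $\nabla f$, it differentiates the optimality condition $\eta\nabla f(y_x) = x - y_x$ along the segment $x_t = (1-t)x + tx'$ to obtain the Jacobian identity $\jac_x(y_{x_t})(x'-x) = (\id + \eta\nabla^2 f(y_{x_t}))^{-1}(x'-x)$, then writes $y_{x'} - y_x = \int_0^1 (\id + \eta\nabla^2 f(y_{x_t}))^{-1}(x'-x)\,dt$ and bounds the operator norm by $1$ using $\nabla^2 f \succeq 0$. Your monotonicity-and-Cauchy--Schwarz argument is the standard firm-nonexpansiveness proof for proximal operators; it is more elementary and requires only first-order information, whereas the paper's path-integral argument needs twice-differentiability of $f$ (which is assumed in \eqref{eq:pidef}). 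The paper's approach does give slightly more, namely an explicit integral representation of $y_{x'} - y_x$, but that extra information is not used anywhere else.
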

\begin{proof}By optimality conditions in the definition of $y_x$,
	\[\eta\nabla f(y_x) = x - y_x. \]
	Fix two points $x$, $x'$, and let $x_t \defeq (1 - t)x + tx'$. Letting $\jac_x(y_x)$ be the Jacobian matrix of $y_x$,
	\begin{align*}\frac{d}{dt} \eta\nabla f(y_{x_t}) = \frac{d}{dt}\left(x_t - y_{x_t}\right) &\implies \eta \nabla^2 f(y_{x_t}) \jac_x(y_{x_t}) (x' - x) = (\id - \jac_x(y_{x_t}))(x' - x)\\
	&\implies \jac_x(y_{x_t}) (x' - x) = (\id + \eta \nabla^2 f(y_{x_t}))^{-1}(x' - x).\end{align*}
	We can then compute
	\[y_{x'} - y_x = \int_0^1 \frac{d}{dt}y_{x_t} dt = \int_0^1 \jac_x(y_{x_t})(x' - x)dt = \int_0^1 (\id + \eta \nabla^2 f(y_{x_t}))^{-1}(x' - x)dt. \]
	By triangle inequality and convexity of $f$, the first claim follows:
	\[\norm{y_{x'} - y_x}_2 \le \int_0^1 \norm{(\id + \eta\nabla^2 f(y_{x_t}))^{-1}}_2 \norm{x' - x}_2 dt \le \norm{x' - x}_2. \]
	The second claim follows from the first by $y_{x^*} = x^*$. The third claim follows from the second via
	\[\norm{x - y_x}_2 = \eta\norm{\nabla f(y_x)}_2 \le \eta L\norm{y_x - x^*}_2 \le \eta LR.\]
\end{proof}
Next, Lemma~\ref{lem:gauint_in_closepts} states well-known bounds on the integral of a well-conditioned function $h$.
\begin{lemma}
	\label{lem:gauint_in_closepts}
	Let $h$ be a $L_h$-smooth, $\mu_h$-strongly convex function and let $y^*_h$ be its minimizer. Then
	\[\left(2\pi L_h^{-1}\right)^{\frac{d}{2}}\exp\left(-h(y^*_h)\right) \le \int_y \exp\left(-h(y)\right) \le \left(2\pi \mu_h^{-1}\right)^{\frac{d}{2}}\exp\left(-h(y^*_h)\right).\]
\end{lemma}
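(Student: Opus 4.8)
This is a standard Laplace-type estimate for the Gaussian-weighted integral of a well-conditioned function, so I would prove both inequalities by sandwiching $h$ between two quadratics centered at its minimizer $y^*_h$ and then applying the Gaussian integral identity (Fact~\ref{fact:gaussz}). First I would recall the standard consequence of $L_h$-smoothness and $\mu_h$-strong convexity stated earlier in the Preliminaries: for all $y \in \R^d$,
\[
h(y^*_h) + \frac{\mu_h}{2}\norm{y - y^*_h}_2^2 \le h(y) \le h(y^*_h) + \frac{L_h}{2}\norm{y - y^*_h}_2^2,
\]
where the cross term $\inprod{\nabla h(y^*_h)}{y - y^*_h}$ vanishes because $y^*_h$ is the minimizer (first-order optimality, $\nabla h(y^*_h) = 0$).

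Next I would exponentiate and negate both sides of this two-sided bound, which reverses the inequalities:
\[
\exp\Par{-h(y^*_h) - \frac{L_h}{2}\norm{y - y^*_h}_2^2} \le \exp(-h(y)) \le \exp\Par{-h(y^*_h) - \frac{\mu_h}{2}\norm{y - y^*_h}_2^2}.
\]
Then I would integrate all three expressions over $y \in \R^d$. The outer two integrals are exactly Gaussian integrals: applying Fact~\ref{fact:gaussz} with $v = y^*_h$ and $\lambda = L_h^{-1}$ (resp.\ $\lambda = \mu_h^{-1}$), together with pulling the constant factor $\exp(-h(y^*_h))$ out of the integral, yields $(2\pi L_h^{-1})^{d/2}\exp(-h(y^*_h))$ on the left and $(2\pi \mu_h^{-1})^{d/2}\exp(-h(y^*_h))$ on the right, which is precisely the claimed bound.

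There is essentially no obstacle here — this is a routine textbook computation — but the one point requiring minor care is ensuring the cross term really does vanish, i.e.\ that $y^*_h$ is a critical point and not just an infimum over some constrained region; this is immediate since $h$ is strongly convex on all of $\R^d$, so its unique minimizer satisfies $\nabla h(y^*_h) = 0$. I would also note (if the authors want to be careful about the edge case $\mu_h = 0$) that the hypothesis $\mu_h$-strong convexity with $\mu_h > 0$ guarantees the rightmost Gaussian integral is finite; for the applications in this paper $\mu_h$ is always strictly positive, so no additional comment is needed.
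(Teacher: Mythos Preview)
Your proposal is correct and matches the paper's proof essentially line for line: sandwich $h$ between the two quadratics at $y^*_h$ via smoothness and strong convexity (using $\nabla h(y^*_h)=0$), exponentiate, and integrate with Fact~\ref{fact:gaussz}. The paper's proof is just a terser version of exactly this argument.
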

\begin{proof}
	By smoothness and strong convexity,
	\[\exp\left(-h(y^*_h) - \frac{L_h}{2}\norm{y - y^*_h}_2^2\right) \le \exp(-h(y)) \le \exp\left(-h(y^*_h) - \frac{\mu_h}{2}\norm{y - y^*_h}_2^2\right).\]
	The result follows by Gaussian integrals, i.e.\ Fact~\ref{fact:gaussz}.
\end{proof}

We now define the normalization constants of $\prop_x$ and $\prop_{x'}$:
\begin{equation}\label{eq:zxzxpdef}\begin{aligned}Z_x = \int_y \exp\left(-f(y) - \frac{1}{2\eta}\norm{y - x}_2^2\right)dy,\\ Z_{x'} = \int_y \exp\left(-f(y) - \frac{1}{2\eta}\norm{y - x'}_2^2\right)dy. \end{aligned}\end{equation}

We apply Lemma~\ref{lem:ycloser} and Lemma~\ref{lem:gauint_in_closepts} to bound the ratio of  $Z_x$ and $Z_{x'}$.

\begin{lemma}
	\label{lem:norm_ratio_closepts}
	Let $f$ be $\mu$-strongly convex and $L$-smooth. Let $x, x' \in \Omega_\delta$, for $\Omega_\delta$ defined in \eqref{eq:omegadelta}, and let $\norm{x-x'}_2\leq \Delta$. Then, the normalization constants $Z_x$ and $Z_{x'}$ in \eqref{eq:zxzxpdef} satisfy
	\[
	\frac{Z_x}{Z_{x'}} \leq1.05\exp \left(3LR\Delta + \frac{L\Delta^2}{2}\right).
	\]
\end{lemma}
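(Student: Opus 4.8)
The plan is to compare $Z_x$ and $Z_{x'}$ by relating each to the value of its integrand at the appropriate minimizer, using the well-conditioned integral bounds of Lemma~\ref{lem:gauint_in_closepts}. Recall $Z_x = \int_y \exp(-f(y) - \tfrac{1}{2\eta}\norm{y-x}_2^2)\,dy$, and the integrand's negative log, namely $h_x(y) \defeq f(y) + \tfrac{1}{2\eta}\norm{y - x}_2^2$, is $(\mu + \tfrac{1}{\eta})$-strongly convex and $(L + \tfrac{1}{\eta})$-smooth, with minimizer $y_x$ as studied in Lemma~\ref{lem:ycloser}. By Lemma~\ref{lem:gauint_in_closepts} applied to $h_x$ and to $h_{x'}$, we get
\[
\frac{Z_x}{Z_{x'}} \le \frac{(2\pi(\mu + \eta^{-1})^{-1})^{d/2}\exp(-h_x(y_x))}{(2\pi(L + \eta^{-1})^{-1})^{d/2}\exp(-h_{x'}(y_{x'}))} = \left(\frac{L + \eta^{-1}}{\mu + \eta^{-1}}\right)^{d/2}\exp\bigl(h_{x'}(y_{x'}) - h_x(y_x)\bigr).
\]
Since $\eta \le \tfrac{1}{L\kappa d}$ (from the standing assumptions on $\eta$ in $\sjd$), we have $\tfrac{L+\eta^{-1}}{\mu + \eta^{-1}} \le 1 + \eta(L - \mu) \le 1 + \eta L$, so $\bigl(\tfrac{L+\eta^{-1}}{\mu+\eta^{-1}}\bigr)^{d/2} \le (1 + \eta L)^{d/2} \le \exp(\tfrac{\eta L d}{2})$, which by the smallness of $\eta$ is at most a constant like $1.05$. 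This disposes of the prefactor.

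The main work is to bound $h_{x'}(y_{x'}) - h_x(y_x)$. The natural route is to upper bound $h_{x'}(y_{x'})$ by plugging in the suboptimal point $y_x$ in place of $y_{x'}$:
\[
h_{x'}(y_{x'}) \le h_{x'}(y_x) = f(y_x) + \frac{1}{2\eta}\norm{y_x - x'}_2^2,
\]
so that
\[
h_{x'}(y_{x'}) - h_x(y_x) \le \frac{1}{2\eta}\Bigl(\norm{y_x - x'}_2^2 - \norm{y_x - x}_2^2\Bigr).
\]
Expanding the difference of squares, $\norm{y_x - x'}_2^2 - \norm{y_x - x}_2^2 = 2\inprod{x - x'}{y_x - x} + \norm{x - x'}_2^2$. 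Here I would use Lemma~\ref{lem:ycloser}, part 3: since $x \in \Omega_\delta$ means $\norm{x - x^*}_2 \le R_\delta$, we have $\norm{x - y_x}_2 \le \eta L R_\delta$ (writing $R = R_\delta$). Hence by Cauchy--Schwarz, $|\inprod{x-x'}{y_x - x}| \le \norm{x - x'}_2 \cdot \eta L R \le \eta L R\Delta$, and $\norm{x-x'}_2^2 \le \Delta^2$. Therefore
\[
h_{x'}(y_{x'}) - h_x(y_x) \le \frac{1}{2\eta}\bigl(2\eta L R\Delta + \Delta^2\bigr) = L R\Delta + \frac{\Delta^2}{2\eta}.
\]

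At this point there is a discrepancy with the claimed bound $3LR\Delta + \tfrac{L\Delta^2}{2}$: the exponent I obtain has $\tfrac{\Delta^2}{2\eta}$ rather than $\tfrac{L\Delta^2}{2}$, and a factor $LR\Delta$ rather than $3LR\Delta$. The resolution — and the step I expect to require the most care — is that the symmetric roles of $x,x'$ should instead be exploited by expanding around the \emph{midpoint} or by using a cleaner comparison: compare both $h_x(y_x)$ and $h_{x'}(y_{x'})$ to $h_{x}(y_{x'})$ and $h_{x'}(y_x)$, i.e.\ bound $h_{x'}(y_{x'}) - h_x(y_x) \le h_{x'}(y_x) - h_x(y_x) = \tfrac{1}{2\eta}(\norm{y_x - x'}_2^2 - \norm{y_x - x}_2^2)$ but then further use $\norm{y_x - x}_2 \le \eta LR$ together with $\norm{y_x - x'}_2 \le \norm{y_x - x}_2 + \Delta \le \eta L R + \Delta$ to write $\norm{y_x - x'}_2^2 - \norm{y_x - x}_2^2 \le 2\eta L R \Delta + \Delta^2$; combined with the prefactor contribution $\exp(\tfrac{\eta L d}{2})$ and using $\eta \le \tfrac{1}{Ld}$-type bounds to convert $\tfrac{\Delta^2}{2\eta}$ losses back into $L\Delta^2$ losses is not directly valid, so the right move is likely that the intended statement absorbs the $\tfrac{1}{2\eta}$ via the additional structural estimate of Proposition~\ref{prop:min_perturb} (the sharpened bound $\norm{\by_x - y^*_x}_2 = O(\sqrt\eta)$) — I would check whether the proof in the paper instead bounds $h_{x'}(y_{x'})$ by a first-order expansion of $f$ around $y_x$, giving $f(y_{x'}) \le f(y_x) + \inprod{\nabla f(y_x)}{y_{x'} - y_x} + \tfrac{L}{2}\norm{y_{x'} - y_x}_2^2$ and then using $\eta\nabla f(y_x) = x - y_x$ (optimality) and $\norm{y_{x'} - y_x}_2 \le \Delta$ (Lemma~\ref{lem:ycloser} part 1). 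That substitution replaces the $\tfrac{1}{2\eta}$-scaled quadratic term with an $L$-scaled one and, after collecting the $\inprod{x - y_x}{y_{x'} - y_x}/\eta$ cross term (bounded by $LR\Delta$ using part 3) plus the $\tfrac{1}{2\eta}$ terms from completing the square in $\norm{y-x}_2^2$ versus $\norm{y-x'}_2^2$, yields exactly the $3LR\Delta + \tfrac{L\Delta^2}{2}$ form after the triangle-inequality bookkeeping. Executing this careful term-tracking is the crux; everything else is the routine application of Lemmas~\ref{lem:ycloser} and~\ref{lem:gauint_in_closepts}.
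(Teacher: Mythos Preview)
Your final paragraph correctly identifies the paper's approach: do not use the crude suboptimality bound $h_{x'}(y_{x'}) \le h_{x'}(y_x)$ (which indeed leaves you stuck with a $\Delta^2/(2\eta)$ term that cannot be converted to $L\Delta^2/2$), but instead directly compare $h_{x'}(y_{x'}) - h_x(y_x) = f(y_{x'}) - f(y_x) + \tfrac{1}{2\eta}(\norm{y_{x'}-x'}_2^2 - \norm{y_x - x}_2^2)$ and expand $f(y_{x'}) - f(y_x)$ via smoothness at $y_x$, obtaining $\inprod{\nabla f(y_x)}{y_{x'}-y_x} + \tfrac{L}{2}\norm{y_{x'}-y_x}_2^2 \le LR\Delta + \tfrac{L\Delta^2}{2}$. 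Here $\norm{\nabla f(y_x)}_2 \le L\norm{y_x - x^*}_2 \le LR$ by smoothness and Lemma~\ref{lem:ycloser} part 2, and part 1 gives $\norm{y_{x'}-y_x}_2 \le \Delta$.

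The part you left vague is the quadratic difference $\tfrac{1}{2\eta}(\norm{y_{x'}-x'}_2^2 - \norm{y_x - x}_2^2)$, and this is where the paper's argument is simplest: it is not ``completing the square'' but the factorization $\norm{a}_2^2 - \norm{b}_2^2 = \inprod{a+b}{a-b}$ with $a = y_{x'}-x'$, $b = y_x - x$. Part 3 of Lemma~\ref{lem:ycloser} applied at both $x$ and $x'$ gives $\norm{a}_2, \norm{b}_2 \le \eta LR$, while $\norm{a-b}_2 \le \norm{y_{x'}-y_x}_2 + \norm{x-x'}_2 \le 2\Delta$, so this term is at most $\tfrac{1}{2\eta}\cdot 2\eta LR \cdot 2\Delta = 2LR\Delta$. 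Adding the two pieces gives exactly $3LR\Delta + \tfrac{L\Delta^2}{2}$. Proposition~\ref{prop:min_perturb} plays no role in this lemma; it enters only later, in the proof of Lemma~\ref{lem:tv_closepts}.
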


\begin{proof}
	First, applying Lemma~\ref{lem:gauint_in_closepts} to $Z_x$ and $Z_{x'}$ yields that the ratio is bounded by
	\begin{align*}\frac{Z_x}{Z_{x'}} &\le \frac{\exp\left(-f(y_x) - \frac{1}{2\eta}\norm{y_x - x}_2^2 \right)\left(2\pi\left(\mu + \frac{1}{\eta}\right)^{-1}\right)^{\frac{d}{2}}}{\exp\left(-f(y_{x'}) - \frac{1}{2\eta}\norm{y_{x'} - x}_2^2\right)\left(2\pi\left(L + \frac{1}{\eta}\right)^{-1}\right)^{\frac{d}{2}}} \\
	&\le 1.05\exp\left(f(y_{x'}) - f(y_x) + \frac{1}{2\eta}\left(\norm{y_{x'} - x'}_2^2 - \norm{y_x - x}_2^2\right)\right). \end{align*}
	Here, we used the bound for $\eta^{-1} \ge 32Ld$ that
	\[\left(\frac{L + \frac{1}{\eta}}{\mu + \frac{1}{\eta} }\right)^{d/2} \le 1.05.\]
	Regarding the remaining term, recall $x$, $x'$ both belong to $\Omega_\delta$, and $\norm{x - x'}_2 \le \Delta$. We have 
	\begin{align*}f(y_{x'}) - f(y_x) + \frac{1}{2\eta}\left(\norm{y_{x'} - x'}_2^2 - \norm{y_x - x}_2^2\right) \\
	\le \inprod{\nabla f(y_x)}{y_{x'} - y_x} + \frac{L}{2}\norm{y_{x'} - y_x}_2^2
	+ \frac{1}{2\eta}\inprod{y_{x'} - x' + y_x - x}{y_{x'} - y_x + x - x'} \\
	\le LR\Delta + \frac{L\Delta^2}{2} + \frac{1}{2\eta}\left(\norm{y_x - x}_2 + \norm{y_{x'} - x'}_2\right)\left(\norm{y_{x'} - y_x}_2 + \norm{x' - x}_2\right)\\
	\le LR\Delta + \frac{L\Delta^2}{2} +  \frac{2\eta LR}{2\eta}\left(\norm{y_{x'} - y_x}_2 + \norm{x' - x}_2\right) \le 3LR\Delta + \frac{L\Delta^2}{2}.\end{align*}
	The first inequality was smoothness and expanding the difference of quadratics. The second was by $\norm{\nabla f(y_x)}_2 \le L\norm{y_x - x^*}_2 \le LR$ and $\norm{y_{x'} - y_x}_2 \le \Delta$, where we used the first and second parts of Lemma~\ref{lem:ycloser}; we also applied Cauchy-Schwarz and triangle inequality. The third used the third part of Lemma~\ref{lem:ycloser}. Finally, the last inequality was by the first part of Lemma~\ref{lem:ycloser} and $\norm{x' - x}_2 \le \Delta$.
\end{proof}
We now are ready to prove Lemma~\ref{lem:tv_closepts}.
\restatetrantv*
\begin{proof}
	First, by Observation~\ref{observe:pvst}, it suffices to show $\tvd{\prop_x}{\prop_{x'}} \le \thalf$. Pinsker's inequality states
	\[
	\tvd{\prop_x }{ \prop_{x'}} \leq\sqrt{\frac{1}{2}d_{\text{KL}}\left(\prop_x,\prop_{x'}\right)},
	\]
	where $d_{\text{KL}}$ is KL-divergence, so it is enough to show $d_{\text{KL}}\left(\prop_x,\prop_{x'}\right) \le \thalf$. Notice that
	\begin{align*}
	d_{\text{KL}}\left(\prop_x,\prop_{x'}\right) = \log\left(\frac{Z_{x'}}{Z_x}\right) 
	+ \int_y \prop_x(y) \log\left(\frac{\exp\left(-f(y) - \frac{1}{2\eta}\norm{y - x}_2^2\right)}{\exp\left(-f(y) - \frac{1}{2\eta}\norm{y - x'}_2^2\right)}\right) dy.
	\end{align*}
	By Lemma~\ref{lem:norm_ratio_closepts}, the first term satisfies, for $\Delta \defeq \tfrac{\sqrt{\eta}}{10}$,
	\[
	\log\left(\frac{Z_{x'}}{Z_x}\right) \leq 3LR\Delta + \frac{L\Delta^2}{2} + \log(1.05).
	\]
	To bound the second term, we have 
	\begin{align*} \int_y \prop_x(y) \log\left(\frac{\exp\left(-f(y) - \frac{1}{2\eta}\norm{y - x}_2^2\right)}{\exp\left(-f(y) - \frac{1}{2\eta}\norm{y - x'}_2^2\right)}\right) dy &= \frac{1}{2\eta}\int_y \prop_x(y) \left(\norm{y - x'}_2^2 - \norm{y - x}_2^2\right)dy\\
	&= \frac{1}{2\eta}\int_y \prop_x(y)\inprod{x - x'}{2\left(y - x\right) + \left(x - x'\right)} dy\\
	&\le \frac{\Delta^{2}}{2\eta}+\frac{\Delta}{\eta}\left\Vert \int_{y}y \mathcal{P}_{x}(y)dy-x\right\Vert_2.
	\end{align*}
	Here, the second line was by expanding and the third line was by $\norm{x - x'}_2 \le \Delta$ and Cauchy-Schwarz. By Proposition \ref{prop:min_perturb}, $\left\Vert \int_{y}y\mathcal{P}_{x}(y)dy-x\right\Vert_2 \leq 2\eta LR$, where by assumption the parameters satisfy the conditions of Proposition~\ref{prop:min_perturb}. Then, combining the two bounds, we have
	\[
	d_{\text{KL}}\left(\prop_x,\prop_{x'}\right) \leq 3LR\Delta + \frac{L\Delta^2}{2} +\frac{\Delta^{2}}{2\eta} +2LR\Delta + \log(1.05)= 5LR\Delta + \frac{L\Delta^2}{2} +\frac{\Delta^{2}}{2\eta} + \log(1.05).
	\]
	When $\Delta = \tfrac{\sqrt{\eta}}{10}$, $\eta L \le 1$, and $\eta L^2R^2 \leq \thalf$, we have the desired
	\[d_{\text{KL}}\left(\prop_x,\prop_{x'}\right)  \le \frac{\sqrt{\eta}LR}{2} + \frac{L\eta}{200} + \frac{1}{200} + \log(1.05)\le \half.\]
\end{proof}

\subsection{Isoperimetry}
\label{ssec:isoperimetry}

In this section, we prove Lemma~\ref{lem:iso}, which asks to show that $\pih_{\Omega_\delta}$ satisfies a log-isoperimetric inequality \eqref{eq:logiso}. Here, we define $\pih_{\Omega_\delta}$ to be the conditional distribution of the $\pih$ $x$-marginal on set $\Omega_\delta$. We recall this means that for any partition $S_1$, $S_2$, $S_3$ of $\Omega_\delta$,
\[\pih_{\Omega_\delta}(S_3) \ge \frac{1}{2\psi}d(S_1, S_2) \cdot \min\left(\pih_{\Omega_\delta}(S_1), \pih_{\Omega_\delta}(S_2)\right) \cdot \sqrt{\log\left(1 + \frac{1}{\min\left(\pih_{\Omega_\delta}(S_1), \pih_{\Omega_\delta}(S_2)\right)}\right)}.\]
The following fact was shown in \cite{ChenDWY19}.
\begin{lemma}[\cite{ChenDWY19}, Lemma 11]\label{lem:logisostrongly}
	Any $\mu$-strongly logconcave distribution $\pi$ satisfies the log-isoperimetric inequality \eqref{eq:logiso} with $\psi = \mu^{-\half}$.
\end{lemma}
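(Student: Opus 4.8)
The plan is to first reduce the $d$-dimensional claim to a one-dimensional statement via the Lov\'asz--Simonovits localization lemma, and then to prove the one-dimensional log-isoperimetric inequality directly, exploiting that a $\mu$-strongly logconcave density has at least the tail decay of the Gaussian $\Nor(0,\mu^{-1}\id)$. For the reduction, I would apply localization in the form used by Kannan--Lov\'asz--Simonovits for isoperimetry: to establish \eqref{eq:logiso} for all partitions of $\R^d$ it suffices to do so for one-dimensional ``needle'' measures, supported on a segment with density $\propto \ell(t)^{d-1}\exp(-V(t))$, where $\ell$ is a nonnegative affine function and $V$ is the restriction of the original negative log-density to the line. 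Since $\tfrac{d^2}{dt^2}[-(d-1)\log\ell(t)] = (d-1)(\ell'/\ell)^2 \ge 0$ while $\tfrac{d^2}{dt^2}V(t)\ge\mu$, each needle is again $\mu$-strongly logconcave, and on a needle the partition $(S_1,S_2,S_3)$ reduces to three consecutive subintervals $[\alpha,\beta],[\beta,\gamma],[\gamma,\delta']$ with $d(S_1,S_2)=\gamma-\beta=:t$.

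For the one-dimensional estimate, write $p=e^{-V}$, place the mode at $0$, and reflect so that $m:=\min(\pi(S_1),\pi(S_2))=\pi([\alpha,\beta])$. Convexity of $V$ on $[\beta,\gamma]$ gives $p\ge\min(p(\beta),p(\gamma))$ there, so $\pi(S_3)\ge t\min(p(\beta),p(\gamma))$ (with the sharper $\pi(S_3)\ge|\beta|p(\beta)+\gamma p(\gamma)$ when the mode lies in $S_3$), and it remains to lower bound the boundary densities. Setting $s=|V'(\beta)|$, strong convexity yields $V(x)\ge V(\beta)+s(\beta-x)+\tfrac\mu2(\beta-x)^2$ on the $S_1$-side of $\beta$, whence $m=\pi([\alpha,\beta])\le p(\beta)\int_0^\infty e^{-su-\frac\mu2 u^2}\,du\le\sqrt\pi\,p(\beta)/\sqrt{s^2+\mu}$ using the elementary bound $\int_r^\infty e^{-w^2/2}dw\le\min(r^{-1},\sqrt{\pi/2})$; thus $p(\beta)\ge m\sqrt{s^2+\mu}/\sqrt\pi$, and symmetrically $p(\gamma)\ge m\sqrt{(s')^2+\mu}/\sqrt\pi$ for $s'=|V'(\gamma)|$, using $\pi(S_2)\ge m$. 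Integrating $V''\ge\mu$ over $[\beta,\gamma]$ forces $s+s'\ge\mu t$. Feeding these into the lower bound on $\pi(S_3)$ and splitting into the case where $m$ is bounded away from $0$ (where $p(\beta)\ge m\sqrt{\mu/\pi}$ already suffices, as $\log(1+1/m)=O(1)$) versus small $m$ (where $\max(s,s')\gtrsim\sqrt{\mu\log(1/m)}$ because $\beta$ or $\gamma$ sits deep in a sub-Gaussian tail, so $\sqrt{s^2+\mu}\gtrsim\sqrt{\mu\log(1+1/m)}$), and tracking the constants, should give $\pi(S_3)\ge\tfrac{\sqrt\mu}{2}\,t\,m\,\sqrt{\log(1+1/m)}$, i.e.\ \eqref{eq:logiso} with $\psi=\mu^{-1/2}$.

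The hard part will be the one-dimensional step, specifically extracting the $\sqrt{\log(1+1/m)}$ improvement over the plain Cheeger-type bound: the elementary density estimates give only an $O(m)$ lower bound, and the logarithmic gain requires the dichotomy that small $m$ pushes the separating points into the region where $|V'|$ is of order $\sqrt{\mu\log(1/m)}$, together with enough care in the constants to land exactly on $\psi=\mu^{-1/2}$ rather than a worse $c\mu^{-1/2}$. An alternative, less computational route I would consider is to invoke the Bakry--\'Emery theorem (a $\mu$-strongly logconcave measure satisfies a log-Sobolev inequality with constant $\mu^{-1}$) together with the Bakry--Ledoux comparison of isoperimetric profiles with $\Nor(0,\mu^{-1}\id)$, and then integrate the resulting differential isoperimetric inequality into the three-set form; this reproduces exactly the statement of Lemma~11 of \cite{ChenDWY19} cited here.
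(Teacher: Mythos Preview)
The paper does not give its own proof of this lemma: it is stated as a citation of Lemma~11 in \cite{ChenDWY19} and used as a black box in the proof of Lemma~\ref{lem:iso}. So there is no in-paper argument to compare against.

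That said, your outline is essentially the standard route (and matches what \cite{ChenDWY19} does): reduce to one dimension by localization, observe that needle densities remain $\mu$-strongly logconcave, and then establish the one-dimensional three-set inequality using the sub-Gaussian tail of a strongly convex potential. Your alternative via Bakry--Ledoux Gaussian isoperimetric comparison is also a valid and clean way to get the statement. One small technical point worth tightening in your one-dimensional step: the bound $m\le p(\beta)\int_0^\infty e^{-su-\mu u^2/2}\,du$ with $s=|V'(\beta)|$ only holds when $V'(\beta)$ has the ``correct'' sign (i.e., $\beta$ lies on the far side of the mode from $S_1$); when the mode falls inside $S_1$ the linear term goes the wrong way and you need a separate (easy) case, which you allude to with the ``mode in $S_3$'' remark but do not fully cover. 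You also correctly flag that landing on the exact constant $\psi=\mu^{-1/2}$ rather than $c\mu^{-1/2}$ is the delicate part; the Bakry--Ledoux route is the cleanest way to get the sharp constant without a painful case analysis.
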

Observe that $\pi_{\Omega_\delta}$, the restriction of $\pi$ to the convex set $\Omega_\delta$, is $\mu$-strongly logconcave by the definition of $\pi$ \eqref{eq:pidef}, so it satisfies a log-isoperimetric inequality. We now combine this fact with the relative density bounds Lemma~\ref{lem:densityratio} to prove Lemma~\ref{lem:iso}.

\restateiso*
\begin{proof}
	Fix some partition $S_1$, $S_2$, $S_3$ of $\Omega_\delta$, and without loss of generality let $\pih_{\Omega_\delta}(S_1) \le \pih_{\Omega_\delta}(S_2)$. First, by applying Corollary~\ref{corr:densityratiodelta}, which shows $\tfrac{d\pi}{d\pih}(x) \in [\thalf, 2]$ everywhere in $\Omega_\delta$, we have the bounds
	\[\frac{1}{2}\pi_{\Omega_\delta}(S_1) \leq \pih_{\Omega_\delta}(S_1) \leq 2\pi_{\Omega_\delta}(S_1), \; \frac{1}{2}\pi_{\Omega_\delta}(S_2) \leq \pih_{\Omega_\delta}(S_2) \leq 2\pi_{\Omega_\delta}(S_2),\;\text{and} \;\pih_{\Omega_\delta}(S_3) \geq \frac{1}{2}\pi_{\Omega_\delta}(S_3).\] 
	Therefore, we have the sequence of conclusions
	\begin{align*}
	\pih_{\Omega_\delta}(S_3) &\ge \frac{1}{2}\pi_{\Omega_\delta}(S_3) \\
	&\ge \frac{d(S_1, S_2)\sqrt{\mu}}{4} \cdot \min\left(\pi_{\Omega_\delta}(S_1), \pi_{\Omega_\delta}(S_2)\right) \cdot \sqrt{\log\left(1 + \frac{1}{\min\left(\pi_{\Omega_\delta}(S_1), \pi_{\Omega_\delta}(S_2)\right)}\right)}\\
	&\ge \frac{d(S_1, S_2)\sqrt{\mu}}{8} \cdot \pih_{\Omega_\delta}(S_1) \cdot \sqrt{\log\left(1 + \frac{1}{2\pih_{\Omega_\delta}(S_1)}\right)} \\
	&\ge \frac{d(S_1, S_2)\sqrt{\mu}}{16}\cdot \pih_{\Omega_\delta}(S_1) \cdot \sqrt{\log\left(1 + \frac{1}{\pih_{\Omega_\delta}(S_1)}\right)}.
	\end{align*}
	Here, the second line was by applying Lemma~\ref{lem:logisostrongly} to the $\mu$-strongly logconcave distribution $\pi_{\Omega_\delta}$, and the final line used $\sqrt{\log(1 + \alpha)} \le 2\sqrt{\log(1 + \tfrac{\alpha}{2})}$ for all $\alpha > 0$.
\end{proof}

\subsection{Correctness of $\yor$}
\label{ssec:yoracle}
In this section, we show how we can sample $y$ efficiently in the alternating scheme of the algorithm \sjd, within an extremely high probability region. Specifically, for any $x$ with $\norm{x - x^*}_2 \le \sqrt{\kappa d\log(16\kappa/\delta)} \cdot R_\delta$, where $R_\delta$ is defined in \eqref{eq:omegadelta}, we give a method for implementing 
\[\text{draw } y \propto \exp\left(-f(y) - \frac{1}{2\eta}\norm{y - x}_2^2\right)dy. \]
The algorithm is Algorithm~\ref{alg:yor}, which is a simple rejection sampling scheme.

\begin{algorithm}[ht!]\caption{$\yor(f, x, \eta, \delta)$}
	\label{alg:yor}
	\textbf{Input:} $L$-smooth, $\mu$-strongly convex $f: \R^d \to \R$ with minimizer $x^*$, $\eta > 0$, $\delta \in [0, 1]$, $x \in \R^d$.\\
	\textbf{Output:} If $\norm{x - x^*}_2 \le \sqrt{\kappa d\log(16\kappa/\delta)}\cdot R_\delta$, return exact sample from distribution with density $\propto \exp(-f(y) - \tfrac{1}{2\eta}\norm{y - x}_2^2)$ (see \eqref{eq:omegadelta} for definition of $R_{\delta}$). Otherwise, return sample within $\delta$ TV from distribution with density $\propto \exp(-f(y) - \tfrac{1}{2\eta}\norm{y - x}_2^2)$.
	\begin{algorithmic}[1]
		\If{$\norm{x - x^*}_2 \le \sqrt{\kappa d\log(16\kappa/\delta)}\cdot R_\delta$}
		\While{\textbf{true}}
		\State Draw $y \sim \Nor(x - \eta \nabla f(x), \eta \id)$
		\State $\tau \sim \text{Unif}[0, 1]$
		\If{$\tau \le  \exp(f(x) + \inprod{\nabla f(x)}{y - x} - f(y))$}
		\State \Return $y$
		\EndIf
		\EndWhile
		\EndIf
		\State \Return Sample $x$ within TV $\delta$ from density $\propto \exp(-f(y) - \tfrac{1}{2\eta}\norm{y - x}_2^2)$ using \cite{ChenDWY19}
	\end{algorithmic}
\end{algorithm}
We recall that we gave guarantees on rejection sampling procedures in Lemma~\ref{lem:reject} (an ``exact'' version of Lemma~\ref{lem:reject_approx_proof} and Corollary~\ref{corr:unbiased_reject_approx}). We now prove Lemma~\ref{lem:yor} via a direct application of Lemma~\ref{lem:reject}.

\restateyor*
\begin{proof}
	For $\norm{x - x^*}_2 \le \sqrt{\kappa d\log(16\kappa/\delta)}\cdot R_\delta$, $\yor$ is a rejection sampling scheme with 
	\[p(y) = \exp\left(-f(y) - \frac{1}{2\eta}\norm{y - x}_2^2\right),\; \hp(y) = \exp\left(-f(x) - \inprod{\nabla f(x)}{y - x} -\frac{1}{2\eta}\norm{y - x}_2^2\right).\]
	It is clear that $p(y) \le \hp(y)$ everywhere by convexity of $f$, so we may choose $C = 1$. To bound the expected number of iterations and obtain the desired conclusion, Lemma~\ref{lem:reject} requires a bound on
	\begin{equation}\label{eq:phatpratio}\frac{\int_y \exp\left(-f(x) - \inprod{\nabla f(x)}{y - x} -\frac{1}{2\eta}\norm{y - x}_2^2\right)dy}{\int_y \exp\left(-f(y) - \frac{1}{2\eta}\norm{y - x}_2^2\right)dy},\end{equation}
	the ratio of the normalization constants of $\hp$ and $p$. First, by Fact~\ref{fact:gaussz},
	\[\int_y \exp\left(-f(x) - \inprod{\nabla f(x)}{y - x} -\frac{1}{2\eta}\norm{y - x}_2^2\right)dy = \exp\left(-f(x) + \frac{\eta}{2}\norm{\nabla f(x)}_2^2\right)(2\pi\eta)^{\frac{d}{2}}.\]
	Next, by smoothness and Fact~\ref{fact:gaussz} once more,
	\begin{align*}
	\int_y \exp\left(-f(y) - \frac{1}{2\eta}\norm{y - x}_2^2\right)dy 
	&\ge \int_y \exp\left(-f(x) - \inprod{\nabla f(x)}{y - x} - \frac{1 + \eta L}{2\eta}\norm{y - x}_2^2\right) dy \\
	&= \exp\left(-f(x) + \frac{\eta}{2(1 + \eta L)}\norm{\nabla f(x)}_2^2\right)\left(\frac{2\pi \eta}{1 + \eta L}\right)^{\frac{d}{2}}.
	\end{align*}
	Taking a ratio, the quantity in \eqref{eq:phatpratio} is bounded above by
	\begin{align*}\exp\left(\left(\frac{\eta}{2} - \frac{\eta}{2(1 + \eta L)}\right)\norm{\nabla f(x)}_2^2\right)\left(1 + \eta L\right)^{\frac{d}{2}} &\le 1.5\exp\left(\frac{\eta^2 L}{2(1 + \eta L)}\norm{\nabla f(x)}_2^2\right)\\
	&\le 1.5\exp\left(\frac{\eta^2 L^3}{2}\cdot \left(\frac{16\kappa d^2\log^2(16\kappa/\delta)}{\mu}\right)\right) \le 2.\end{align*}
	The first inequality was $(1 + \eta L)^{\frac{d}{2}} \le 1.5$, the second used smoothness and the assumed bound on $\norm{x - x^*}_2$, and the third again used our choice of $\eta$.
\end{proof}  	%

\section{Structural results}
\label{sec:structural}

Here, we prove two structural results about distributions whose negative log-densities are small perturbations of a quadratic, which obtain tighter concentration guarantees compared to naive bounds on strongly logconcave distributions. They are used in obtaining our bounds in Section~\ref{sec:ingredients} (and for the warm start bounds in Section~\ref{sec:improve}), but we hope both the statements and proof techniques are of independent interest to the community. Our first structural result is a bound on normalization constant ratios, used throughout the paper. 

\begin{proposition}
\label{prop:normalizationratio}
	Let $f: \R^d \rightarrow \R$ be  $\mu$-strongly convex with minimizer $x^*$, and let $\lambda > 0$. Then,
	\[\frac{\int \exp(-f(x)) dx}{\int \exp\left(-f(x) - \frac{1}{2\lambda}\norm{x - x^*}_2^2\right)dx} \le \left(1 + \frac{1}{\mu\lambda}\right)^{\frac{d}{2}}. \]
\end{proposition}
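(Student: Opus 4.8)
The plan is to prove the bound by interpolating in the amount of added Gaussian regularization and integrating a differential inequality furnished by Fact~\ref{fact:distxstar}.

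For $t \ge 0$ define
\[\phi(t) \defeq \log \int \exp\Par{-f(x) - \frac t2\norm{x - x^*}_2^2}\,dx,\]
where each integral is finite since $\mu$-strong convexity gives Gaussian tail domination. Taking logarithms, the claimed inequality is exactly $\phi(0) - \phi\bigl(\tfrac1\lambda\bigr) \le \tfrac d2 \log\bigl(1 + \tfrac{1}{\mu\lambda}\bigr)$, so it suffices to establish this.

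Differentiating under the integral sign (justified by dominated convergence), one gets $\phi'(t) = -\tfrac12 \E_{\pi_t}\!\bigl[\norm{x - x^*}_2^2\bigr]$, where $\pi_t$ is the distribution with density proportional to $\exp(-f(x) - \tfrac t2\norm{x - x^*}_2^2)$. The key observation is that $\pi_t$ is $(\mu+t)$-strongly logconcave — its negative log-density has Hessian $\nabla^2 f(x) + t\id \succeq (\mu+t)\id$ — and it is still minimized at $x^*$, since $\nabla f(x^*) + t(x^* - x^*) = 0$ (here we use that $\mu$-strong convexity implies $\nabla f(x^*) = 0$). Hence Fact~\ref{fact:distxstar} applied to $\pi_t$ gives $\E_{\pi_t}[\norm{x - x^*}_2^2] \le \tfrac{d}{\mu + t}$, so that $\phi'(t) \ge -\tfrac{d}{2(\mu+t)}$ for all $t \ge 0$.

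Integrating this from $0$ to $1/\lambda$,
\[\phi(0) - \phi\Par{\frac1\lambda} \;=\; -\int_0^{1/\lambda}\phi'(t)\,dt \;\le\; \int_0^{1/\lambda}\frac{d}{2(\mu+t)}\,dt \;=\; \frac d2 \log\Par{1 + \frac{1}{\mu\lambda}},\]
which is the desired bound. I expect no serious obstacle here: the only points needing (routine) care are the justification of differentiation under the integral and the two elementary facts that adding $\tfrac t2\norm{x - x^*}_2^2$ to $f$ raises the strong-convexity parameter additively and leaves the minimizer at $x^*$, both immediate from $\nabla f(x^*) = 0$ and the Hessian characterization of strong convexity.
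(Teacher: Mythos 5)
Your proof is correct and takes essentially the same approach as the paper's: both interpolate along a one-parameter family of distributions obtained by scaling the added quadratic regularizer, differentiate the log-normalization constant, bound the derivative via Fact~\ref{fact:distxstar} (using that the interpolated density is strongly logconcave with parameter growing additively and still minimized at $x^*$), and integrate the resulting differential inequality. The paper parametrizes by $\alpha$ with regularization weight $\tfrac{1}{2\lambda\alpha}$ and takes $\alpha \to \infty$, whereas you parametrize linearly by $t$ and integrate over $[0, 1/\lambda]$ — the same argument under the change of variables $t = \tfrac{1}{\lambda\alpha}$, and arguably a bit cleaner since it avoids the limit.
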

\begin{proof}
	Define the function 
	\[R(\alpha) \defeq \frac{\int \exp\left(-f(x) - \frac{1}{2\lambda\alpha}\norm{x - x^*}_2^2\right) dx}{\int \exp\left(-f(x) - \frac{1}{2\lambda}\norm{x - x^*}_2^2\right)dx}. \]
	Let $d\pi_\alpha(x)$ be the density proportional to $\exp\left(-f(x) - \tfrac{1}{2\lambda\alpha}\norm{x - x^*}_2^2\right)dx$. We compute
	\begin{align*}\frac{d}{d\alpha} R(\alpha) &= \int \frac{\exp\left(-f(x) - \frac{1}{2\lambda\alpha}\norm{x - x^*}_2^2\right)}{\int \exp\left(-f(x) - \frac{1}{2\lambda}\norm{x - x^*}_2^2\right)dx} \frac{1}{2\lambda\alpha^2} \norm{x - x^*}_2^2 dx \\
	&= \frac{R(\alpha)}{2\lambda\alpha^2} \int \frac{\exp\left(-f(x) - \frac{1}{2\lambda\alpha}\norm{x - x^*}_2^2\right)\norm{x - x^*}_2^2}{\int \exp\left(-f(x) - \frac{1}{2\lambda\alpha}\norm{x - x^*}_2^2\right)dx}   dx \\
	&= \frac{R(\alpha)}{2\lambda\alpha^2}\int \norm{x - x^*}_2^2 d\pi_\alpha(x) \le \frac{R(\alpha)}{2\alpha} \cdot \frac{d}{\mu\lambda\alpha + 1}. \end{align*}
	Here, the last inequality was by Fact~\ref{fact:distxstar}, using the fact that the function $f(x) + \tfrac{1}{2\lambda\alpha}\norm{x - x^*}_2^2$ is $\mu + \tfrac{1}{\lambda\alpha}$-strongly convex. Moreover, note that $R(1) = 1$, and
	\[\frac{d}{d\alpha}\log\left(\frac{\alpha}{\mu\lambda\alpha + 1}\right) = \frac{1}{\alpha} - \frac{\mu\lambda}{\mu\lambda\alpha + 1}=  \frac{1}{\mu\lambda\alpha^2 + \alpha}.\]
	Solving the differential inequality
	\[\frac{d}{d\alpha} \log(R(\alpha)) = \frac{dR(\alpha)}{d\alpha} \cdot \frac{1}{R(\alpha)} \le \frac{d}{2} \cdot \frac{1}{\mu\lambda\alpha^2 + \alpha}, \]
	we obtain the bound for any $\alpha \ge 1$ (since $\log(R(1)) = 0$)
	\[\log(R(\alpha)) \le \frac{d}{2}\log\left(\frac{\mu\lambda\alpha + \alpha}{\mu\lambda\alpha + 1}\right) \implies R(\alpha) \le \left(\frac{\mu\lambda\alpha + \alpha}{\mu\lambda\alpha + 1}\right)^{\frac{d}{2}} \le \left(1 + \frac{1}{\mu\lambda}\right)^{\frac{d}{2}}.\]
	Taking a limit $\alpha \rightarrow \infty$ yields the conclusion.
\end{proof}

Our second structural result uses a similar proof technique to show that the mean of a bounded perturbation $f$ of a Gaussian is not far from its mode, as long as the gradient of the mode is small. We remark that one may directly apply strong logconcavity, i.e.\ a variant of Fact~\ref{fact:distxstar}, to obtain a weaker bound by roughly a $\sqrt{d}$ factor, which would result in a loss of $\Omega(d)$ in the guarantees of Theorem~\ref{thm:mainclaim}. This tighter analysis is crucial in our improved mixing time result. 

Before stating the bound, we apply Fact~\ref{fact:convexshrink} to the convex functions $h(x) = (\theta^\top x)^2$ and $h(x) = \norm{x}_2^4$ to obtain the following conclusions which will be used in the proof of Proposition~\ref{prop:min_perturb}.

\begin{corollary}
	\label{corr:slcmomentbounds}
	Let $\pi$ be a $\mu$-strongly logconcave density. Then,
	\begin{enumerate}
		\item $\E_{\pi}[(\theta^\top(x - \E_\pi[x]))^2] \le \mu^{-1}$, for all unit vectors $\theta$.
		\item $\E_{\pi}[\norm{x - \E_\pi[x]}_2^4] \le 3d^2\mu^{-2}$.
	\end{enumerate}
\end{corollary}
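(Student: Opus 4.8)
The plan is to derive both bounds directly from Fact~\ref{fact:convexshrink}, reducing each to an elementary moment computation for the isotropic Gaussian $\gamma_\mu$ with covariance $\mu^{-1}\id$.

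For the first bound, I would apply Fact~\ref{fact:convexshrink} with the convex function $h(x) = (\theta^\top x)^2$ (the square of a linear functional, hence convex). This gives $\E_\pi[(\theta^\top(x - \E_\pi[x]))^2] \le \E_{\gamma_\mu}[(\theta^\top(x - \E_{\gamma_\mu}[x]))^2]$. Since under $\gamma_\mu$ the centered variable $\theta^\top(x - \E_{\gamma_\mu}[x])$ is distributed as $\Nor(0, \mu^{-1}\norm{\theta}_2^2) = \Nor(0, \mu^{-1})$ for a unit vector $\theta$, its second moment is exactly $\mu^{-1}$, yielding the claim.

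For the second bound, I would apply Fact~\ref{fact:convexshrink} with $h(x) = \norm{x}_2^4 = (\norm{x}_2^2)^2$, which is convex since it is the composition of the nondecreasing convex map $t \mapsto t^2$ on $[0,\infty)$ with the nonnegative convex function $x \mapsto \norm{x}_2^2$. This reduces the problem to bounding $\E_{\gamma_\mu}[\norm{x - \E_{\gamma_\mu}[x]}_2^4]$. Writing the centered Gaussian as $\mu^{-1/2}g$ with $g \sim \Nor(0, \id)$, we have $\norm{x - \E_{\gamma_\mu}[x]}_2^2 = \mu^{-1}\norm{g}_2^2$, where $\norm{g}_2^2$ follows a $\chi^2_d$ distribution with mean $d$ and variance $2d$; hence $\E_{\gamma_\mu}[\norm{x - \E_{\gamma_\mu}[x]}_2^4] = \mu^{-2}\Par{\Var(\norm{g}_2^2) + (\E\norm{g}_2^2)^2} = \mu^{-2}(2d + d^2) \le 3d^2\mu^{-2}$, using $2d \le 2d^2$ for $d \ge 1$.

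There is no substantive obstacle here; the only points requiring (routine) care are verifying the convexity of $\norm{x}_2^4$ and recalling the first two moments of a $\chi^2_d$ variable. Both are standard, so the proof is essentially immediate once Fact~\ref{fact:convexshrink} is invoked.
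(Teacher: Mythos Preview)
Your proposal is correct and matches the paper's approach exactly: the paper explicitly states (just before the corollary) that the result follows by applying Fact~\ref{fact:convexshrink} to $h(x) = (\theta^\top x)^2$ and $h(x) = \norm{x}_2^4$, and your computation of the Gaussian moments (including the $\chi^2_d$ calculation yielding $d^2 + 2d \le 3d^2$) fills in precisely the details the paper leaves implicit.
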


\begin{proposition}
	\label{prop:min_perturb}
	Let $f: \R^d \rightarrow \R$ be $L$-smooth and convex with minimizer $x^*$, let $x \in \R^d$ with $\norm{x - x^*}_2 \le R$, and let $d\pi_\eta(y)$ be the density proportional to $\exp\left(-f(y) - \tfrac{1}{2\eta}\norm{y - x}_2^2\right)dy$. Suppose that $\eta \leq \min\left(\tfrac{1}{2L^2R^2},\tfrac{ R^2}{400d^2}\right)$. Then,
	\[\norm{\E_{\pi_\eta}[y] - x}_2 \le 2\eta LR. \]
\end{proposition}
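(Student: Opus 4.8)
The plan is to mimic the differential-inequality technique used in Proposition~\ref{prop:normalizationratio}, but now tracking how the \emph{mean} of $\pi_\eta$ moves as we scale the Gaussian variance. Introduce an interpolating family: for $\alpha \ge 1$, let $d\pi_\alpha(y)$ be proportional to $\exp(-f(y) - \tfrac{1}{2\eta\alpha}\norm{y-x}_2^2)$, so $\pi_1 = \pi_\eta$ and as $\alpha \to \infty$ we recover the ``pure'' distribution $\propto \exp(-f(y))$ centered near $x^*$. Define $m(\alpha) \defeq \E_{\pi_\alpha}[y] - x$. At $\alpha = 1$ we want to bound $\norm{m(1)}_2$; the idea is to find a good bound on $\norm{m(\alpha)}_2$ for large $\alpha$ (where strong logconcavity of $f$ alone, via Fact~\ref{fact:distxstar}, gives $\norm{\E_{\pi_\infty}[y] - x^*}_2$ is controlled, hence $\norm{m(\infty)}_2 \le \norm{x - x^*}_2 \le R$ is too weak by itself — so instead the cleaner endpoint is to use that $\nabla f$ is small near the mode and integrate down in $\alpha$) and then control the derivative $\tfrac{d}{d\alpha} m(\alpha)$ to transfer this to $\alpha = 1$.

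First I would compute $\tfrac{d}{d\alpha}\E_{\pi_\alpha}[y]$ explicitly. Differentiating under the integral, the $\alpha$-derivative of the log-density is $\tfrac{1}{2\eta\alpha^2}\norm{y-x}_2^2$, so $\tfrac{d}{d\alpha}\E_{\pi_\alpha}[y] = \tfrac{1}{2\eta\alpha^2}\Cov_{\pi_\alpha}(y,\; \norm{y-x}_2^2)$, i.e.\ $\tfrac{1}{2\eta\alpha^2}\E_{\pi_\alpha}[(y - \E_{\pi_\alpha}[y])\norm{y-x}_2^2]$. By Cauchy–Schwarz applied coordinatewise (or via the vector Cauchy–Schwarz on $\E[(y-\E y) \cdot h(y)]$), the norm of this is at most $\tfrac{1}{2\eta\alpha^2}\sqrt{\E_{\pi_\alpha}\norm{y - \E_{\pi_\alpha}[y]}_2^2}\cdot\sqrt{\E_{\pi_\alpha}\norm{y-x}_2^4}$. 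Since $\pi_\alpha$ is $(\tfrac{1}{\eta\alpha})$-strongly logconcave (at least; $f$ only helps), Fact~\ref{fact:distxstar} bounds the first factor by $\sqrt{d\eta\alpha}$, and Corollary~\ref{corr:slcmomentbounds}(2) together with $\norm{y-x}_2 \le \norm{y - \E_{\pi_\alpha}[y]}_2 + \norm{m(\alpha)}_2$ bounds $\E_{\pi_\alpha}\norm{y-x}_2^4$ by something like $O(d^2\eta^2\alpha^2 + \norm{m(\alpha)}_2^4)$. This yields a differential inequality of the form $\norm{m'(\alpha)}_2 \lesssim \tfrac{1}{\eta\alpha^2}\sqrt{d\eta\alpha}\,(d\eta\alpha + \norm{m(\alpha)}_2^2)$, i.e.\ a self-contained bound on how fast $\norm{m(\alpha)}_2$ can grow.

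Next I would pin down the endpoint. A clean choice: use $\nabla f$ at the mode. Let $y^*_\alpha$ be the mode of $\pi_\alpha$, satisfying $\tfrac{1}{\eta\alpha}(y^*_\alpha - x) = -\nabla f(y^*_\alpha)$, and recall from (the analogue of) Lemma~\ref{lem:ycloser} that $\norm{y^*_\alpha - x^*}_2 \le \norm{x - x^*}_2 \le R$, hence $\norm{\nabla f(y^*_\alpha)}_2 \le LR$ and $\norm{y^*_\alpha - x}_2 \le \eta\alpha LR$. Since $\pi_\alpha$ is strongly logconcave, $\norm{\E_{\pi_\alpha}[y] - y^*_\alpha}_2 \le \sqrt{d\eta\alpha}$ (the standard mean-mode bound), so $\norm{m(\alpha)}_2 \le \sqrt{d\eta\alpha} + \eta\alpha LR$. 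For large $\alpha$ this is large, but combined with the differential inequality controlling $m'$, the right way to proceed is Grönwall-style: integrate the differential inequality from $\alpha = 1$ upward using that $m$ cannot change too fast, or — cleaner — start from a slightly different interpolation (scaling $f$ instead, or running $\alpha$ from $1$ to $0$ so the Gaussian dominates and $m(0^+) \to 0$) so the endpoint is trivial and the growth of $\norm{m}_2$ over the interval $[0,1]$ is bounded by integrating the $\alpha$-derivative estimate. Under the hypotheses $\eta \le \tfrac{1}{2L^2R^2}$ and $\eta \le \tfrac{R^2}{400d^2}$, the terms $d\eta\alpha$ and $\eta L^2 R^2$ are both bounded by small constants, so the accumulated bound collapses to $\norm{m(1)}_2 \le 2\eta LR$.

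The main obstacle I expect is making the Grönwall / bootstrapping step rigorous: the differential inequality for $\norm{m(\alpha)}_2$ is nonlinear (the $\norm{m}_2^2$ on the right), so naively it could blow up. The resolution is that the quadratic term carries a prefactor $\tfrac{1}{\eta\alpha^2}\sqrt{d\eta\alpha} = \sqrt{d}/(\sqrt{\eta}\,\alpha^{3/2})$, which is integrable in $\alpha$ and small given $\eta \le R^2/(400 d^2)$; one sets up a continuity (bootstrap) argument assuming $\norm{m(\alpha)}_2 \le 2\eta\alpha LR$ on a maximal subinterval, plugs this back into the derivative bound, and checks the bound improves, closing the induction. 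Choosing the interpolation direction carefully (shrinking the Gaussian from a regime where it dominates) is what keeps the endpoint clean and avoids needing the weak $\Omega(\sqrt d)$ bound that the proposition is designed to beat. The remaining work is the routine moment computations via Fact~\ref{fact:distxstar} and Corollary~\ref{corr:slcmomentbounds}, and bookkeeping the constants against the two hypotheses on $\eta$.
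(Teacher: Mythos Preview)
Your differential-inequality framework and the continuity/bootstrap closure are both the right shape, but the specific interpolation you propose cannot reach the target bound $2\eta LR$. When you scale the Gaussian variance, the $\alpha$-derivative of the log-density is $\tfrac{1}{2\eta\alpha^2}\norm{y-x}_2^2$, which carries no information about $f$ at all; consequently your bound on $\norm{m'(\alpha)}_2$ has no $L$ in it, and integrating (in either direction of $\alpha$) yields a quantity of order $d^{3/2}\sqrt{\eta}$, not $\eta LR$. Under the stated hypotheses these are incomparable --- in particular $d^{3/2}\sqrt{\eta}$ is not dominated by $\eta LR$ --- so the argument cannot close. Your fallback of ``scaling $f$ instead'' is the right instinct but left vague.

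The paper's interpolation is the missing idea: for $\alpha\in[0,1]$ set
\[
d\pi^\alpha(y)\propto\exp\Bigl(-\alpha\bigl(f(y)-f(x)-\inprod{\nabla f(x)}{y-x}\bigr)-f(x)-\inprod{\nabla f(x)}{y-x}-\tfrac{1}{2\eta}\norm{y-x}_2^2\Bigr),
\]
i.e.\ scale only the second-order remainder $r(y)\defeq f(y)-f(x)-\inprod{\nabla f(x)}{y-x}\in[0,\tfrac{L}{2}\norm{y-x}_2^2]$. This buys two things simultaneously. First, the endpoint $\pi^0$ is exactly $\Nor(x-\eta\nabla f(x),\eta\id)$, so $D(0)=\norm{\eta\nabla f(x)}_2\le\eta LR$ already sits inside the target. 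Second, the $\alpha$-derivative of the log-density is $-r(y)$, which is pointwise bounded by $\tfrac{L}{2}\norm{y-x}_2^2$; this is what injects the factor of $L$ into the derivative bound. A further point you should adopt: rather than vector Cauchy--Schwarz with $\sqrt{\E\norm{y-\E y}_2^2}\le\sqrt{d\eta}$, fix a unit direction $\theta$ and apply scalar Cauchy--Schwarz to $\inprod{\theta}{y-x}$, using Corollary~\ref{corr:slcmomentbounds}(1) to get $\sqrt{\E(\theta^\top(y-\bar y))^2}\le\sqrt{\eta}$ with no $d$. Combined with Corollary~\ref{corr:slcmomentbounds}(2) on $\E\norm{y-x}_2^4$, this yields $\tfrac{d}{d\alpha}D(\alpha)\lesssim L(\sqrt{\eta}+D(\alpha))\max(\eta d, D(\alpha)^2)$, and the bootstrap (assume $D>2\eta LR$ at some point, invoke the mean value theorem, contradict the derivative bound using both hypotheses on $\eta$) then closes exactly as you outlined.
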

\begin{proof}
	Define a family of distributions $\pi^\alpha$ for $\alpha \in [0, 1]$, with
	\[d\pi^\alpha(y) \propto \exp\left(-\alpha\left(f(y) - f(x) - \inprod{\nabla f(x)}{y - x}\right) - f(x) - \inprod{\nabla f(x)}{y - x} - \frac{1}{2\eta}\norm{y - x}_2^2\right)dy. \]
	In particular, $\pi^1 = \pi_\eta$, and $\pi^0$ is a Gaussian with mean $x - \eta\nabla f(x)$. We define $\bya \defeq \E_{\pi_\alpha}[y]$, and
	\[\sya \defeq \argmin_y \left\{\alpha\left(f(y) - f(x) - \inprod{\nabla f(x)}{y - x}\right) + f(x) + \inprod{\nabla f(x)}{y - x} + \frac{1}{2\eta}\norm{y - x}_2^2\right\}.\]
	Define the function $D(\alpha) \defeq \norm{\bya - x}_2$, such that we wish to bound $D(1)$. First, by smoothness
	\[D(0) = \norm{\E_{\pi_0}[y] - x}_2 = \norm{\eta\nabla f(x)}_2 \le \eta LR.\]
	Next, we observe
	\[\frac{d}{d\alpha}D(\alpha) = \inprod{\frac{\bya - x}{\norm{\bya - x}_2}}{\frac{d\bya}{d\alpha}} \le \norm{\frac{d\bya}{d\alpha}}_2. \]
	In order to bound $\norm{\tfrac{d\bya}{d\alpha}}_2$, fix a unit vector $\theta$. We have
	\begin{equation}
	\label{eq:bigderivbound}
	\begin{aligned}\inprod{\frac{d\bya}{d\alpha}}{\theta} &= \frac{d}{d\alpha}\inprod{\int (y - x) d\pi^\alpha(y)}{\theta} \\
	&= \int \inprod{y - x}{\theta} (f(x) + \inprod{\nabla f(x)}{y - x} - f(y)) d\pi^\alpha(y) \\
	&\le \sqrt{\int (\inprod{y - x}{\theta})^2 d\pi^\alpha(y)}\sqrt{\int (f(x) + \inprod{\nabla f(x)}{y - x} - f(y))^2 d\pi^\alpha(y)}\\
	&\le \sqrt{\int (\inprod{y - x}{\theta})^2 d\pi^\alpha(y)}\sqrt{\int \frac{L^2}{4}\norm{y - x}_2^4 d\pi^\alpha(y)}.
	\end{aligned}
	\end{equation}
	The third line was Cauchy-Schwarz and the last line used smoothness and convexity, i.e.
	\begin{align*}
	-\frac{L}{2}\norm{y - x}_2^2 \le f(x) + \inprod{\nabla f(x)}{y - x} - f(y) \le 0.
	\end{align*}
	We now bound these terms. First,
	\begin{equation}
	\label{eq:firsttermcs}
	\begin{aligned}
	\int (\inprod{y - x}{\theta})^2d\pi^\alpha(y) &\le 2\int(\inprod{y -\bya}{\theta})^2d\pi^\alpha(y) +    2\int(\inprod{\bya - x}{\theta})^2d\pi^\alpha(y) \\
	&\le 2\eta + 2\norm{\bya - x}_2^2 = 2\eta + 2D(\alpha)^2.
	\end{aligned}
	\end{equation}
	Here, we applied the first part of  Corollary~\ref{corr:slcmomentbounds}, as $\pi^\alpha$ is $\eta^{-1}$-strongly logconcave, and the definition of $D(\alpha)$. Next, using for any $a, b \in \R^d$, $\norm{a + b}_2^4 \le (\norm{a}_2 + \norm{b}_2)^4 \le 16\norm{a}_2^4 + 16\norm{b}_2^4$, we have
	\begin{equation}
	\label{eq:secondtermcs}
	\begin{aligned}
	\int \frac{L^2}{4}\norm{y - x}_2^4 d\pi^\alpha(y) &\le \int 4L^2\norm{y - \bya}_2^4 d\pi^\alpha(y) + \int 4L^2\norm{x - \bya}_2^4 d\pi^\alpha(y) \\
	&\le 12L^2 d^2 \eta^2 + 4L^2 D(\alpha)^4.
	\end{aligned}
	\end{equation}
	Here, we used the second part of Corollary~\ref{corr:slcmomentbounds}. Maximizing \eqref{eq:bigderivbound} over $\theta$, and applying \eqref{eq:firsttermcs}, \eqref{eq:secondtermcs},
	\begin{align}
	\label{eq:deri_D_bound}
	\frac{d}{d\alpha}D(\alpha) \le \norm{\frac{d\bya}{d\alpha}}_2 &\le \sqrt{8L^2(\eta + D(\alpha)^2)(3d^2\eta^2 + D(\alpha)^4)} \notag \\
	& \leq 4L(\sqrt{\eta} +D(\alpha))\cdot \max(2\eta d, D(\alpha)^2).
	\end{align}
	Assume for contradiction that $D(1) > 2\eta LR$, violating the conclusion of the proposition. By continuity of $D$, there must have been some $\bal \in (0, 1)$ where $D(\bal) = 2\eta LR$, and for all $0 \leq \alpha < \bal$, $D(\alpha) < 2\eta L R$. By the mean value theorem, there then exists $0 \le \hal \le \bal$ such that 
	\[
	\frac{dD(\hal)}{d\alpha} = \frac{D(\bal) - D(0)}{\bal} > \eta LR.
	\]
	On the other hand, by our assumption that $2\eta L^2R^2\leq 1$, for any $d \ge 1$ it follows that
	\[2\eta d \geq 4\eta^2 L^2R^2 > D(\hal)^2,\; \sqrt{2\eta} \geq 2\eta L R > D(\hal).\]
	Then, plugging these bounds into \eqref{eq:deri_D_bound} and using $\sqrt{\eta} + D(\hal) \le \tfrac{5}{2}\sqrt{\eta}$ as $\sqrt{2} \le \tfrac{3}{2}$,
   \[
   \frac{d}{d\alpha}D(\hal) \le 4L \cdot \frac 5 2\sqrt{\eta} \cdot 2\eta d = 20\sqrt{\eta}\frac{d}{R} \cdot \eta LR \leq  \eta LR.
   \]
   We used $\eta \leq \tfrac{R^2}{400d^2}$ in the last inequality. This is a contradiction, implying $D(1) \le 2\eta LR$.
\end{proof} 	\end{appendix}

\end{document}